\journal{Journal of Multivariate Analysis}
\theoremstyle{plain}
\newtheorem{theorem}{Theorem}
\newtheorem{lemma}{Lemma}
\newtheorem{corollary}{Corollary}
\theoremstyle{definition}
\newtheorem{assumption}{Assumption}
\numberwithin{equation}{section}
\numberwithin{theorem}{section}
\numberwithin{corollary}{section}
\numberwithin{definition}{section}
\begin{document}

\begin{frontmatter}

\title{Statistical inference for large-dimensional tensor factor model by iterative projections}

\author[1]{Matteo Barigozzi}
\author[2]{Yong He}
\author[3]{Lingxiao Li}
\author[4]{Lorenzo Trapani\corref{mycorrespondingauthor}}

\address[1]{Department of Economics, University of Bologna}
\address[2]{Institute for Financial Studies, Shandong University}
\address[3]{Department of Data Science and Artificial Intelligence, the Hong Kong Polytechnic University}
\address[4]{University of Leicester, and Universita' di Pavia}

\cortext[mycorrespondingauthor]{Corresponding author. Email address: \url{lt285@leicester.ac.uk}}

\begin{abstract}
	Tensor Factor Models (TFM) are appealing dimension reduction tools for high-order large-dimensional tensor time series, and have
	wide  applications in economics, finance and medical imaging.
	In this paper, we propose a projection estimator for  the Tucker-decomposition based TFM, and provide its least-square interpretation which parallels to
	the least-square interpretation of the Principal Component Analysis  (PCA) for the vector factor model. The projection technique simultaneously reduces the dimensionality of the signal component and the magnitudes of the idiosyncratic component tensor, thus leading to an increase of the signal-to-noise ratio.
	We derive a  convergence rate
	of the projection estimator of the loadings and the common factor tensor which are faster than that of the naive PCA-based estimator. Our results are obtained under mild conditions which allow the
	idiosyncratic components to be weakly cross- and auto- correlated. We also provide a novel iterative  procedure based on the eigenvalue-ratio principle to determine the factor numbers.
	Extensive
	numerical studies are conducted to investigate the empirical performance of the proposed projection estimators relative to the
	state-of-the-art ones.
\end{abstract}

\begin{keyword} 
	Principal Component Analysis \sep
	Least squares \sep
	Tensor factor model \sep
	Projection Estimation.
	\MSC[2020] Primary 62H25 \sep
	Secondary 62F12
\end{keyword}

\end{frontmatter}

\section{Introduction\label{intro}}

Tensors, or high-order arrays, emerge ubiquitously in all applied sciences,
such as macroeconomic data (\cite{Chen2021Factor,BCM22}) and financial
data (\cite{Han2021Rank,He2021Matrix}) where typical
applications consider {higher-order realized moments %
	\citep{neuberger2012realized,buckle2016realised,bae2021realized,jondeau2018moment}%
}. They appear also in computer vision data (\cite{panagakis2021tensor});
neuroimaging and functional MRI data (see e.g. \cite{zhou2013tensor,Ji2021Brain,Chen2021Simultaneous}), typically
consisting of hundreds of thousands of voxels observed over time;
recommender systems (see e.g. \cite{entezari2021tensor}, and the various
references therein); data arising in psychometrics (%
\cite{carroll1970analysis,douglas1980candelinc}), and
chemometrics (see \cite{tomasi2005parafac}, and also the references in %
\cite{acar2011scalable}). For a review of further applications we also refer
to \citet{kolda2009tensor} and \citet{bi2021tensors}.

Hence, inference in the context of tensor-valued time series has now become
one of the most active research areas in statistics and machine learning.
Given that the analysis of tensor-valued data poses significant
computational challenges due to high-dimensionality, a natural approach is
to model tensor-valued time series through a low-dimensional projection on a
space of common factors. Indeed, factor models are one of the most powerful
dimension reduction tools in time series for extracting comovements among
multi-dimensional features, and, in the past decades, factor models applied
to vector-valued time series have been extensively studied - we refer,
\textit{inter alia}, to the contributions by \cite{forni2000generalized,stock2002forecasting,bai2003inferential,doz2012quasi,fan2013large,Sahalia2017Using}. In recent years, the
literature has extended the tools developed for the analysis of
vector-valued data to the context of matrix-valued data. In particular, we
refer to the seminal contribution by \cite{wang2019factor}, who proposed
Matrix Factor Model (MFM) exploiting the double low-rank structure of
matrix-valued observations, and to the subsequent paper by \citet{fan2021},
who propose the so-called $\alpha $-PCA estimation method. Furthermore, %
\citet{Yu2021Projected} study a projected version of PCA; \citet{Gao2021A}
study the Maximum Likelihood approach; and \citet{zhang2022modeling} study a
hierarchical CP product MFM. In related contributions, \cite{hkty} propose a
strong rule to determine whether there is a factor structure of matrix time
series; \citet{chen2019constrained} study the MFM under linear constraints;
and \citet{liu2019helping} and \citet{he2021online} study non-linearities in
the form of a threshold regression, or a changepoint model, respectively.

In contrast with this plethora of contributions, the statistical analysis of
tensor factor models (TFM) is still in its infancy, and most of the recent
contributions focus on \textit{i.i.d.} data, or on models with \textit{i.i.d.%
} noise, thus limiting the applicability to tensor-valued time series. In
such a context, a typical approach would be to decompose an observed $K$%
-fold tensor time series $\{\cX_{t}\in \RR^{p_{1}\times p_{2}\times \dots
	\times p_{K}},\;t\in \mathbb{Z}\}$ into a sum of a signal, or common, plus a
noise, or idiosyncratic, component, denoted as $\{\cS_{t},\;t\in \mathbb{Z}%
\} $ and $\{\cE_{t},\;t\in \mathbb{Z}\}$, respectively:
\begin{equation}
	\cX_{t}=\cS_{t}+\cE_{t}.  \label{TFM1}
\end{equation}%
However, whereas in the vector case the possible structure of the signal
component is uniquely defined as a product of a loadings matrix times a
vector of factors, in the context of a tensor factor model like (\ref{TFM1})
two possible decompositions exist: the \textit{Tucker decomposition}, and
the \textit{CP decomposition} (also known as PARAFAC or CANDECOMP
decomposition), and we refer to e.g. \citet{lettau2022high} for a recent
discussion and a comparison. The choice between the two relies mainly on the
kind of data one is dealing with and the problem one is interested in
studying. The Tucker decomposition is easier to estimate, less restrictive
and more flexible than the CP decomposition. This flexibility comes at a
cost, however, since Tucker tensor factor models, as the vector factors
models, are not uniquely identified, while CP factor models, if they exist,
are uniquely identified. An advantage of Tucker tensor factor models is that
they allow for different numbers of factors for each mode, while the CP
tensor factor model restricts the number of factors to be the same across
modes.

\subsection{Contributions of the paper\label{contribution}}

In this paper we focus on estimation of the tensor factor model based on the
Tucker-decomposition, which is defined as:
\begin{equation}
	\cS_{t}=\cF_{t}\times _{1}\Ab_{1}\times _{2}\cdots \times _{K}\Ab%
	_{K}=\sum_{j_{1}=1}^{r_{1}}\cdots \sum_{j_{K}=1}^{r_{K}}f_{j_{1},\ldots
		,j_{K},t}\left( \mathbf{a}_{1j_{1}}\circ \cdots \circ \mathbf{a}%
	_{Kj_{K}}\right) ,  \label{tucker}
\end{equation}%
where $\Ab_{k}=(\mathbf{a}_{k1}\ldots \mathbf{a}_{kr_{k}})\in \RR%
^{p_{k}\times r_{k}}$ is the $k$-th mode-wise loading matrix with rows $%
\mathbf{a}_{kj}^{\top }$, $\cF_{t}\in \RR^{r_{1}\times \cdots \times r_{K}}$
is the latent common factor tensor with entries $f_{j_{1},\ldots ,j_{K},t}$
and such that $r_{k}\ll p_{k}$, $(\times _{k})$ denotes the mode-$k$
product, and ($\circ $) denotes the tensor or outer product.

In particular, we consider the Tucker decomposition in \eqref{tucker},
adopting the definition of common factors which is typical of econometrics -
i.e., we assume that a common factor has a non-negligible impact on almost
all series, while the idiosyncratic components can be both weakly cross- and
auto-correlated.

We first provide the least squares interpretation of the iterative
projection for tensor factor model, which generalizes the least squares
interpretation of PCA for vector factor models \citep{bai2003inferential}
and of projection estimation for MFM \citep{Yu2021Projected,He2021Matrix}.
Second, we propose a naive estimator of the factor loadings matrices $%
\left\{ \mathbf{A}_{k},1\le k\le K\right\} $ based on the eigen-analysis of
the \textit{mode-wise sample covariance matrix}, referred to as initial
estimator in the following. Third, we provide a one-step projection method
for estimating the factor loadings matrices. Then, we provide the
theoretical convergence rates and asymptotic distributions for the initial
and the projection estimators of the loadings, and for
the tensor of common factors. These indicate that the projection technique
improves the estimation accuracy attributed to the increased signal-to-noise
ratio due to the use of information about the factors contained in all modes. Last, an iterative procedure is also provided to estimate the factor numbers consistently. In an extensive numerical study our approaches are compared with the
existing ones.

\subsection{Related literature}

\citet{zhang2022tucker} propose an iterative
projected mode-wise PCA (IPmoPCA) estimation approach. Although, potentially, our projected estimator of the loadings can also be cast into an iterative procedure (see our Algorithm \ref{alg1} in Section \ref{sec:2.1}), we show,
however, that
our one-step estimator already attains the \textquotedblleft
optimal\textquotedblright\ rate of convergence, for the estimated $\mathbf{A}%
_{k}$, which would be obtained if $\left\{ \mathbf{A}_{j},1\leq j\leq
K,j\neq k\right\} $ were known in advance. This conclusion is also
reinforced by our Monte Carlo experiments, where we show that our one-step
iterative algorithm \ref{alg1} can get comparable results with the IPmoPCA approach, while being computationally more efficient. Furthermore, differently from \cite{zhang2022tucker}, we require weaker moment assumptions on the idiosyncratic components, we also thoroughly analyze the asymptotic distribution of the estimated tensor factor, and we provide a theoretical foundation of our estimator based on the least squares approach.

Estimation of (\ref{tucker}) has also been studied by \citet{Chen2021Factor} and %
\citet{Han2020Tensor,Han2021Rank} under the assumption that the idiosyncratic components
are weakly cross-correlated but \textquotedblleft white\textquotedblright ,
i.e., with no auto-correlation, and by \citet{lam2021rank} and \citet{chen2022rank}, who introduce a pre-averaging method before projection and work under the assumption that the idiosyncratic components are both weakly cross-correlated and also serially correlated. While their approach might be convenient in the case of weakly pervasive factors they do not derive any asymptotic distribution and, in the case of strong factors, their rates are comparable to ours (see our Theorem \ref{atilde}).
For the CP or PARAFAC decomposition, which is a special
case of \eqref{tucker} having the factor tensor diagonal and not considered here, we refer to %
\citet{Han2021CP}, \citet{chang2021modelling}, and \citet{babiighysels2023}.
\medskip

STRUCTURE. The rest of the article is organized as follows. In Section \ref{sec:2}, we
consider the Tucker decomposition and formulate the estimation of factor
loading matrices and the factor score tensor by minimizing the least squares
loss under the identifiability condition and give the KKT condition to the
optimization problem, from which it naturally leads to a projection
estimation algorithm. We also propose an iterative procedure to estimate the factor numbers. In Section \ref{sec:3}, we investigate the theoretical
properties of the initial estimator and the one-step estimator
under mild conditions. An analysis of world-trade data is in Section \ref{empirics}. We conclude in Section \ref{conclusions}. 
The proofs of the main theorems are in Section \ref{proofs}.
In  \ref{lemmas} we report all auxiliary technical results with their proofs. In \ref{sec:4} we report numerical studies based on simulated data.

\medskip
NOTATION. The following notation is used extensively henceforth. For a
tensor $\cX\in \RR^{p_{1}\times p_{2}\times \dots \times p_{K}}$, the mode-$%
k $ product with a matrix $\Ab\in \RR^{d\times p_{k}}$, denoting as $\cX%
\times _{k}\Ab$, is a tensor of size $p_{1}\times \cdots \times
p_{k-1}\times d\times p_{k+1}\times \cdots p_{K}$. Element-wise, $(\cX\times %
\Ab)_{i_{1},\ldots ,i_{k-1},j,i_{k+1},\ldots
	,i_{K}}=\sum_{i_{k}=1}^{p_{k}}x_{i_{i},\ldots ,i_{k},\ldots
	,i_{K}}a_{j,i_{k}}$. Let also $p=\prod_{k=1}^{K}p_{k}$ and $p_{-k}=p/p_{k}$.
The mode-$k$ unfolding matrix of $\cX$ is denoted by $\mathop{\mathrm{mat}}%
_{k}(\cX)$ and arranges all $p_{-k}$ mode-$k$ fibers of $\cX$ to be the
columns to get a $p_{k}\times p_{-k}$ matrix. For a matrix $\Ab$, $\Ab^{\top
}$ is the transpose of $\Ab$, $\mathop{\mathrm{Tr}}(\Ab)$ is the trace of $%
\Ab$, $\Vert \Ab\Vert $ is the spectral norm, $\Vert \Ab\Vert _{F}$ is the
Frobenious norm, and $\Vert \Ab\Vert _{\max }$ is the maximum entry of $\Ab$
in absolute value. $\Ab\otimes \Bb$ denotes the Kronecker product of
matrices $\Ab$ and $\Bb$. $\Ib_{k}$ represents a $k$-dimensional identity
matrix. For two sequences of random variables $X_{n}$ and $Y_{n}$, $%
X_{n}\lesssim Y_{n}$ means $X_{n}=O_{P}(Y_{n})$ and $X_{n}\gtrsim Y_{n}$
means $Y_{n}=O_{P}(X_{n})$. {When used, the short-hand notation $j\in
	\lbrack K]$ indicates $1\leq j\leq K$.}

\section{Methodology\label{sec:2}}


\subsection{Least Squares and Projection Estimation\label{sec:2.1}}

Let $\left\{ \mathcal{X}_{t},1\leq t\leq T\right\} $ be a $p_{1}\times
p_{2}\times \dots \times p_{K}$ sequence of tensor-valued random variables.
The corresponding factor model based on the Tucker decomposition is given by
\begin{equation}
	\mathcal{X}_{t}=\mathcal{F}_{t}\times _{k=1}^{K}\mathbf{A}_{k}+\mathcal{E}%
	_{t},  \label{model:TFM}
\end{equation}%
where: $\mathbf{A}_{k}$ is the $p_{k}\times r_{k}$ loading matrix for mode $%
k $, $\mathcal{F}_{t}$ is the $r_{1}\times \cdots \times r_{K}$ common
factor tensor, $\mathcal{E}_{t}$ is the $p_{1}\times \cdots \times p_{K}$
idiosyncratic component tensor, and $\mathcal{S}_{t}=\mathcal{F}_{t}\times
_{k=1}^{K}\mathbf{A}_{k}$ is the signal or common component tensor.

In order to estimate the loading matrices $\left\{ \mathbf{A}_{k},1\leq
k\leq K\right\} $ and the tensor-valued factor $\left\{ \mathcal{F}%
_{t},1\leq t\leq T\right\} $, we define the quadratic loss function
\begin{equation}
	L_{1}\left( \mathbf{A}_{1},\ldots ,\mathbf{A}_{K},\mathcal{F}_{1},\ldots ,%
	\mathcal{F}_{T}\right) :=\frac{1}{Tp}\sum_{t=1}^{T}\left\Vert \mathcal{X}%
	_{t}-\mathcal{F}_{t}\times _{k=1}^{K}\mathbf{A}_{k}\right\Vert _{F}^{2}.
	\label{equa:L1}
\end{equation}%
The least squares estimator of the loadings and of the common factor are
then defined as the solutions to the minimisation problem%
\begin{equation}
	\min_{\substack{ \left\{ \mathbf{A}_{k},1\leq k\leq K\right\}  \\ \left\{
			\mathcal{F}_{t},1\leq t\leq T\right\} }}L_{1}\left( \mathbf{A}_{1},\ldots ,%
	\mathbf{A}_{K},\mathcal{F}_{1},\ldots ,\mathcal{F}_{T}\right) .
	\label{ls-estimator}
\end{equation}
of this loss. 
The minimisation problem defined in (\ref{ls-estimator}) can be solved by
exploiting the following relation, valid for all $1\leq k\leq K$ and all $%
1\leq t\leq T$
\begin{equation*}
	\left\Vert \mathcal{X}_{t}-\mathcal{F}_{t}\times _{k=1}^{K}\mathbf{A}%
	_{k}\right\Vert _{F}^{2}
	=\left\Vert \text{mat}_{k}\left( \mathcal{X}%
	_{t}\right) -\mathbf{A}_{k}\text{mat}_{k}\left( \mathcal{F}_{t}\right)
	\left( \otimes _{j\in \lbrack K]\backslash \{k\}}\mathbf{A}_{j}\right)
	^{\top }\right\Vert _{F}^{2},
\end{equation*}%
where $\otimes _{j\in \lbrack K]\backslash \{k\}}\mathbf{A}_{j}=\mathbf{A}%
_{K}\otimes \cdots \otimes \mathbf{A}_{k+1}\otimes \mathbf{A}_{k-1}\otimes
\cdots \otimes \mathbf{A}_{1}$.
For ease of notation, we will henceforth use the following quantities: $%
\mathbf{B}_{k}:=\otimes _{j\in \lbrack K]\backslash \{k\}}\mathbf{A}_{j}$; $%
\mathbf{X}_{k,t}:=\mathop{\mathrm{mat}}_{k}\left( \mathcal{X}_{t}\right) $;
and $\mathbf{F}_{k,t}:=\mathop{\mathrm{mat}}_{k}(\mathcal{F}_{t})$. Then,
for any given $1\le k\le K$, the loss (\ref{equa:L1}) can be
written as

\begin{align}
	L_{1}\left( \mathbf{A}_{1},\ldots ,\mathbf{A}_{K},\mathcal{F}_{t}\right) & =%
	\frac{1}{Tp}\sum_{t=1}^{T}\left\Vert \cX_{t}-\cF_{t}\times _{k=1}^{K}\Ab%
	_{k}\right\Vert _{F}^{2}  =\frac{1}{Tp}\sum_{t=1}^{T}\left\Vert \Xb_{k,t}-\Ab_{k}\Fb_{k,t}\Bb%
	_{k}^{\top }\right\Vert _{F}^{2}  \notag\\
	& =\frac{1}{Tp}\sum_{t=1}^{T}\Big[\mathop{\mathrm{Tr}}\left( \Xb_{k,t}^{\top
	}\Xb_{k,t}\right) -2\mathop{\mathrm{Tr}}\left( \Xb_{k,t}^{\top }\Ab_{k}\Fb%
	_{k,t}\Bb_{k}^{\top }\right) +p\mathop{\mathrm{Tr}}\left( \Fb_{k,t}^{\top }%
	\Fb_{k,t}\right) \Big]  =L_{1}\left( \Ab_{k},\Bb_{k},\Fb_{k,t}\right) .  \label{equa:L1bis} 
\end{align}

Hence, solving (\ref{ls-estimator}) is equivalent to finding $\Ab_{k}$, $\Bb%
_{k}$, and $\Fb_{k,t}$ for all $1\le k\le K$ and all $1\le t\le T$, such
that they minimize (\ref{equa:L1bis}). This problem can be solved as
follows. If $\Ab_{k}$ and $\Bb_{k}$ are given, for any $1\le t\le T$ 
the first order conditions
give the following ordinary least squares solution:
\begin{equation}
	\Fb_{k,t}^{\ast }:=\frac{1}{p}\Ab_{k}^{\top }\Xb_{k,t}\Bb_{k},\quad 1\le
	t\le T.  \label{eq:Fstar}
\end{equation}%
Substituting $\Fb_{k,t}^{\ast }$ in \eqref{equa:L1bis} gives the
(concentrated) loss
\begin{equation}
	L_{1}\left( \Ab_{k},\Bb_{k}\right) :=\frac{1}{Tp}\sum_{t=1}^{T}\Big[%
	\mathop{\mathrm{Tr}}\left( \Xb_{k,t}^{\top }\Xb_{k,t}\right) -\frac{1}{p}%
	\mathop{\mathrm{Tr}}\left( \Xb_{k,t}^{\top }\Ab_{k}\Ab_{k}^{\top }\Xb_{k,t}%
	\Bb_{k}\Bb_{k}^{\top }\right) \Big].  \label{equa1:L1ter}
\end{equation}%
In order to identify the minimizers of \eqref{equa1:L1ter}, we impose the
following identifying constraints
\begin{equation}
	\frac{\Ab_{k}^{\top }\Ab_{k}}{p_{k}}=\Ib_{r_{k}}\;\text{ and }\;\frac{\Bb%
		_{k}^{\top }\Bb_{k}}{p_{-k}}=\Ib_{r_{-k}},  \label{eq:AABB}
\end{equation}%
which are $r(r+1)/2$ constraints, with $r=\prod_{k=1}^{K}r_{k}$. These are
in agreement of the idea that factors are pervasive along each mode (see
Assumption \ref{as-2}\textit{(i)} below).

Hence, minimizing \eqref{equa1:L1ter} subject to the constraint in %
\eqref{eq:AABB} is equivalent to minimizing the Lagrangian:
\begin{equation}
	\mathcal{L}_{1}\left( \Ab_{k},\Bb_{k},\bm\Theta ,\bm\Lambda \right)
	:=L_{1}\left( \Ab_{k},\Bb_{k}\right) +\mathop{\mathrm{Tr}}\left[ \bTheta%
	\left( \frac{1}{p_{k}}\Ab_{k}^{\top }\Ab_{k}-\Ib_{r_{k}}\right) \right] +%
	\mathop{\mathrm{Tr}}\left[ \bPhi\left( \frac{1}{p_{-k}}\Bb_{k}^{\top }\Bb%
	_{k}-\Ib_{r_{-k}}\right) \right] ,  \label{equa:L1quater}
\end{equation}%
where the Lagrange multipliers $\bTheta$ and $\bPhi$ are symmetric matrices
of dimensions $r_{k}\times r_{k}$ and $r_{-k}\times r_{-k}$, respectively.

Hence, the minimizers of \eqref{equa:L1quater} must solve the following
system of equations given by the Karush-Kuhn-Tucker (KKT) conditions:
\begin{equation*}
	\begin{aligned} &\frac{\partial \mathcal{L}_{1}}{\partial
			\Ab_k}=-\frac{1}{Tp} \sum_{t=1}^{T} \frac{2}{p} \Xb_{k,t} \Bb_k\Bb_k^{\top}
		\Xb_{k,t}^{\top} \Ab_k+\frac{2}{p_{k}} \Ab_k \bTheta=\mathbf 0, \qquad \frac{\partial \mathcal{L}_{1}}{\partial \Bb_k}=-\frac{1}{Tp}
		\sum_{t=1}^{T} \frac{2}{p} \Xb_{k,t}^{\top} \Ab_k \Ab_k^{\top} \Xb_{k,t}
		\Bb_k+\frac{2}{p_{-k}} \Bb_k \bPhi=\mathbf 0, \end{aligned}
\end{equation*}%
In turn, the KKT\ conditions are equivalent to solving the following linear
system:
\begin{equation}
	\left\{
	\begin{array}{l}
		\left( {\frac{1}{Tpp_{-k}}\sum_{t=1}^{\top }\Xb_{k,t}\Bb_{k}\Bb_{k}^{\top }%
			\Xb_{k,t}^{\top }}\right) {\Ab_{k}=\Ab_{k}\bTheta,} \\
		\left( {\frac{1}{Tpp_{k}}\sum_{t=1}^{\top }\Xb_{k,t}^{\top }\Ab_{k}\Ab%
			_{k}^{\top }\Xb_{k,t}}\right) {\Bb_{k}=\Bb_{k}\bPhi,}%
	\end{array}%
	\text{or}\quad \left\{
	\begin{array}{l}
		\Mb_{k}\Ab_{k}=\Ab_{k}\bTheta, \\
		\Mb_{-k}\Bb_{k}=\Bb_{k}\bPhi,%
	\end{array}%
	\right. \right.
\end{equation}%
where we have used the short-hand notation
\begin{equation*}
	\Mb_{k}=\frac{1}{Tpp_{-k}}\sum_{t=1}^{T}\Xb_{k,t}\Bb_{k}\Bb_{k}^{\top }\Xb%
	_{k,t}^{\top }\;\text{ and }\;\Mb_{-k}=\frac{1}{Tpp_{k}}\sum_{t=1}^{T}\Xb%
	_{k,t}^{\top }\Ab_{k}\Ab_{k}^{\top }\Xb_{k,t}.
\end{equation*}%
Denoting the largest $r_{k}$ eigenvalues of $\Mb_{k}$ in descending order as
$\lambda _{k,1},\ldots ,\lambda _{k,r_{k}}$ and collecting the corresponding
normalized eigenvectors into the $p_{k}\times r_{k}$ matrix $\Ub_{k}=\left( %
\bu_{k,1}\cdots \bu_{k,r_{k}}\right) $, we have the solutions $\bTheta^{\ast
}:=\diag\left( \lambda _{k,1},\ldots ,\lambda _{k,r_{k}}\right) $ and $\Ab%
_{k}^{\ast }:=\sqrt{p_{k}}\,\Ub_{k}$. Notice that, after replacing $\Ab%
_{k}^{\ast }$ into the definition of $\mathbf{F}_{k,t}^{\ast }$ in %
\eqref{eq:Fstar}, we obtain
\begin{equation*}
	\frac{1}{T}\sum_{t=1}^{T}\mathbf{F}_{k,t}^{\ast }\mathbf{F}_{k,t}^{\ast \top
	}=\frac{1}{Tp^2}\sum_{t=1}^{T}\Ab_{k}^{\ast \top }\Xb_{k,t}\Bb_{k}\Bb%
	_{k}^{\top }\Xb_{k,t}^{\top }\Ab_{k}^{\ast }=\frac{p_{-k}}p\Ab_{k}^{\ast
		\top }\Mb_{k}\Ab_{k}^{\ast }=\bm\Theta ^{\ast },
\end{equation*}%
which is diagonal by construction. In this way, for any $1\le k\le K$ we are
implicitly imposing $r_k(r_k-1)/2$ constraints, and, once this is repeated
for all $1\le k\le K$, we impose the remaining $r(r-1)/2$ constraints needed
to fully identify the model.

The estimators of $\Ab_{k}$ and $\mathbf{F}_{k,t}$ defined above require to
know $\Mb_{k}$ which, in turn, requires to know the unobservable projection
matrix $\Bb_{k}$, which depends on $\left\{ \Ab_{j},j\neq k\right\} $. A
natural solution is to replace each $\Ab_{j}$, $j\neq k$, with a consistent
initial estimator $\widehat{\Ab}_{j}$. While the construction of such
initial estimators is extensively discussed in Section \ref{sec:2.1.1}, here
we discuss how to proceed once we have such estimators. Defining for short
$\widehat{\Bb}_{k} :=\otimes _{j\in \lbrack K]\backslash \{k\}}\widehat{\Ab}%
_{j}$ and $\widehat{\Yb}_{k,t} :=\frac{1}{p_{-k}}\Xb_{k,t}\widehat{\Bb}%
_{k},$%
and letting
\begin{equation}
	\widetilde{\Mb}_{k}:=\frac{1}{Tp_{k}}\sum_{t=1}^{T}\widehat{\Yb}_{k,t}%
	\widehat{\Yb}_{k,t}^{\top },  \label{eq:Mtildek}
\end{equation}%
the projected estimator of the loadings $\widetilde{\Ab}_{k}$ can be
constructed as
\begin{equation}
	\widetilde{\Ab}_{k}:=\sqrt{p_{k}}\,\widetilde{\Ub}_{k},
	\label{eq:Atildekdef}
\end{equation}%
where $\widetilde{\Ub}_{k}=\left( \widetilde{\bu}_{k,1}\cdots \widetilde{\bu}%
_{k,r_{k}}\right) $ is the $p_{k}\times r_{k}$ matrix whose columns are the
normalized eigenvectors of $\widetilde{\Mb}_{k}$ corresponding to its
largest $r_{k}$ eigenvalues. Then, by iterating the procedure for all $1\le
k \le K$, we obtain the projected loadings estimators $\left\{ \widetilde{\Ab%
}_{k},\,1\le k\le K\right\} $.

We summarize the projection procedure in Algorithm \ref{alg1} below, which
extends, to the tensor case, the algorithm proposed in \cite{Yu2021Projected}
for the matrix factor model.

\begin{algorithm}[t!]
	\caption{Least squares method for estimating the loading spaces}
	\label{alg1}
	\begin{algorithmic}[1]
		
		\REQUIRE tensor data $\{\cX_t,1\le t\le T\}$, factor numbers $r_1,\ldots,r_K$;
		
		\ENSURE factor loading matrices $\{\tilde{\Ab}_k,1\le k\le K\}$;
		
		\STATE obtain the initial estimators $\{\widehat{\Ab}_k,1\le k\le K\}$;
		
		\STATE define $\widehat{\Yb}_{k,t}:=\frac{1}{p_{-k}}\Xb_{k,t}\widehat{\Bb}_k,\text{where~}\widehat{\Bb}_k:=\otimes_{j\in \lbrack K]\backslash \{k\}}\widehat{\Ab}_j$, $1\le k\le K$;
		
		\STATE given $\{\widehat{\Yb}_{k,t},1\le k\le K\}$, define $\widetilde{\Mb}_k:=(Tp_{k})^{-1}\sum_{t=1}^T \widehat{\Yb}_{k,t}(\widehat{\Yb}_{k,t})^{\top}$, set $\widetilde{\Ab}_k$ as $\sqrt{p_k}$ times the matrix with as columns the first $r_k$ normalized eigenvectors of $\widehat{\Mb}_k$;
		
		\STATE Output the projection estimators as $\{\widetilde{\Ab}_k,1\le k\le K\}$.
	\end{algorithmic} 	
\end{algorithm}

The projection method described above can be implemented recursively by
plugging in the newly estimated $\left\{ \widetilde{\Ab}_{k},\, 1\le k\le
K\right\} $ to replace $\left\{ \widehat{\Ab},1\le k\le K\right\} $ in {Step
	2,} and by iterating {Steps 2-4, in }Algorithm \ref{alg1}. The {theoretical
	analysis (and the computational side) of the recursive solution is
	challenging; however, simulations show that the projection estimators with
	one single iteration perform sufficiently well compared with the recursive
	method. Indeed, as we show in Corollary \ref{cor:1} below, when $T\asymp
	p_{1}\asymp \cdots \asymp p_{K}$, and for a suitable choice of $\left\{
	\widehat{\Ab},1\le k\le K\right\} $, the single-iteration, projected
	estimator $\widetilde{\Ab}_{k}$ converges to $\Ab_{k}$ (up to a rotation) at
	rate $O_{P}\left( 1/\sqrt{Tp_{k}}\right) $, which is the optimal rate one
	would obtain if all the other loading matrices were known in advance.}

Finally, an estimator of the common factor tensor is obtained by
linear projection as:
\begin{equation}
	\widetilde{\cF}_{t}:=\frac{1}{p}\cX_{t}\times _{k=1}^{K}\widetilde{\Ab}%
	_{k}^{\top },\quad 1\le t\le T.  \label{eq:Ftildeest}
\end{equation}%
Notice that, for any given $1\le k\le K$, we also have:
	$\widetilde{\Fb}_{k,t}=\text{mat}_{k}\left( \widetilde{\cF}_{t}\right) ={p}^{-1}\widetilde{\Ab}_{k}^{\top }\Xb_{k,t}\widetilde{\Bb}_{k},\quad 1\le t\le
	T,$
where $\widetilde{\Bb}_{k}:=\otimes _{j\in \lbrack K]\backslash \{k\}}%
\widetilde{\Ab}_{j}$. It is easy to see that this estimator is the least
squares estimator of the mode-$k$ matricization of the common factor tensor
in \eqref{eq:Fstar}, computed using the estimated loadings matrices.


\subsection{Initial projection matrices\label{sec:2.1.1}}

We now discuss the choice of the initial projection matrices $\widehat{\Ab}%
_{k}$. For the sake of notational simplicity, let $\Eb_{k,t}:=%
\mathop{\mathrm{mat}}_{k}\left( \cE_{t}\right) $. Then, for any given $1\le
k\le K$, the mode-$k$ matricization of model (\ref{model:TFM}) is given by
\begin{equation}
	\Xb_{k,t}=\Ab_{k}\Fb_{k,t}\Bb_{k}^{\top }+\Eb_{k,t}.  \label{equ:matkma}
\end{equation}%
Furthermore, letting $\Fb_{k,t}=\left( {\bbf}_{k,t,\cdot 1}\cdots {\bbf}%
_{k,t,\cdot r_{-k}}\right) $ and $\Eb_{k,t}=\left( {\be}_{k,t,\cdot 1}\cdots
{\be}_{k,t,\cdot p_{-k}}\right) $, each of the $p_{-k}$ columns of $\Xb%
_{k,t} $ follows a vector factor model
\begin{equation}
	\bx_{k,t,\cdot j}=\Ab_{k}\Fb_{k,t}\Bb_{k,j\cdot }^{\top }+\be_{k,t,\cdot j}=%
	\Ab_{k}\overline{\bbf}_{k,t,\cdot j}+\be_{k,t,\cdot j},\quad 1\le t\le T,\ \
	1\le j\le p_{-k}.  \label{equ2.2}
\end{equation}%
Hence, in order to estimate $\Ab_{k}$, we can consider each column of $\Xb%
_{k,t}$ as an individual vector-valued time series, and apply the classical
PCA estimator for vector time series \citep{bai2003inferential}.

Whilst details are in the next sections, here we offer a heuristic preview
of the intuition behind PCA in this tensor setting. For any given $1\le k\le
K$ define the scaled mode-wise sample covariance matrix as
\begin{equation}
	\widehat{\Mb}_{k}:=\frac{1}{Tp}\sum_{t=1}^{T}\sum_{j=1}^{p_{-k}}\bx%
	_{k,t,\cdot j}\bx_{k,t,\cdot j}^{\top }=\frac{1}{Tp}\sum_{t=1}^{T}\Xb_{k,t}%
	\Xb_{k,t}^{\top }.  \label{eq:Mhatk}
\end{equation}%
Under the usual assumption of weak dependence between factors and
idiosyncratic components, it approximately holds that
\begin{eqnarray}
	\widehat{\Mb}_{k} &\approx &\frac{1}{p_{k}}\Ab_{k}\left( \frac{1}{Tp_{-k}}%
	\sum_{t=1}^{T}\sum_{j=1}^{p_{-k}}\overline{\bbf}_{k,t,\cdot j}\overline{\bbf}%
	_{k,t,\cdot j}^{\top }\right) \Ab_{k}^{\top }+\frac{1}{p_{k}}\frac{1}{Tp_{-k}%
	}\sum_{t=1}^{T}\sum_{j=1}^{p_{-k}}\be_{k,t,\cdot j}\be_{k,t,\cdot j}^{\top }
	\notag  \label{hatM1} \\
	&=&\frac{1}{p_{k}}\Ab_{k}\left( \frac{1}{T}\sum_{t=1}^{T}\sum_{j=1}^{r_{-k}}%
	\Fb_{k,t,\cdot j}\Fb_{k,t,\cdot j}^{\top }\right) \Ab_{k}^{\top }+\frac{1}{%
		Tp_{k}}\bigg(\frac{1}{p_{-k}}\sum_{t=1}^{T}\sum_{j=1}^{p_{-k}}\be_{k,t,\cdot
		j}\be_{k,t,\cdot j}^{\top }\bigg).
\end{eqnarray}%
Now, on one hand, the first term on the right-hand-side of \eqref{hatM1}
converges to a matrix of rank $r_{k}$, since the term $T^{-1}\sum_{t=1}^{T}%
\sum_{j=1}^{r_{-k}}\bbf_{k,t,\cdot j}\bbf_{k,t,\cdot j}^{\top }$ converges
to a positive definite $r_{k}\times r_{k}$ matrix, and since by means of the
identifying constraint \eqref{eq:AABB} we imposed pervasiveness of the
factors as $p_{k}\rightarrow \infty $. On the other hand, under the usual
assumption of weakly cross-correlated idiosyncratic components, the second
term on the right-hand-side of of \eqref{hatM1} is asymptotically
negligible, as $\min \left\{ T,p_{1},\ldots ,p_{K}\right\} \rightarrow
\infty $. As a result, the leading $r_{k}$ eigenvalues of $\widehat{\Mb}_{k}$
dominate over the remaining $p_{k}-r_{k}$, and, by Davis-Kahan's $\sin
(\Theta )$ theorem (\cite{davis1970rotation,Yu2015useful}%
), the $r_{k}$ leading eigenvectors of $\widehat{\Mb}_{k}$ asymptotically
span the same column space as that of the columns of $\Ab_{k}$. Therefore,
letting $\widehat{\Ub}_{k}$ be the $p_{k}\times r_{k}$ matrix having as
columns the $r_{k}$ leading normalized eigenvectors of $\widehat{\Mb}_{k}$,
we define an initial estimator the loadings matrix $\Ab_{k}$ as
\begin{equation}
	\widehat{\Ab}_{k}:=\sqrt{p_{k}}\,\widehat{\Ub}_{k}.  \label{eq:Ahatkdef}
\end{equation}%
Similarly, the estimator for $\Bb_{k}$ can be naturally chosen as $\widehat{%
	\Bb}_{k}:=\otimes _{j=1,j\neq k}^{K}\widehat{\Ab}_{j}$.

\subsubsection{Alternative initial projection matrices\label{sec:2.1.2}}

Some other choices of initial estimates of $\Ab_{k}$ are admissible as long
as two sufficient conditions stated below in (\ref{equ:3.1}) are fulfilled.
In particular, instead of setting the estimator of $\Bb_{k}$ as $\widehat{\Bb%
}_{k}$ defined above, we can choose another estimator $\widehat{\Bb}%
_{k}^{\ast }$ as $\sqrt{p_{-k}}$ times the matrix with columns being the
leading $r_{-k}$ normalized eigenvectors of $\widehat{\Mb}%
_{-k}:=(Tp)^{-1}\sum_{t=1}^{T}\Xb_{k,t}^{\top }\Xb_{k,t}$, a choice which is
again motivated by the matrix factor model form in (\ref{equ:matkma}). We
denote the resulted projection estimators, obtained using $\widehat{\Bb}%
_{k}^{\ast }$, as $\widetilde{\Ab}_{k}^{\ast }$. The detailed procedure is
given in Algorithm \ref{alg3}.

\begin{algorithm}[htbp]
	\caption{Projected method for estimating the loading spaces}
	\label{alg3}
	\begin{algorithmic}[1]
		
		\REQUIRE tensor data $\{\cX_t,1\le t\le T\}$, factor numbers $r_1,\ldots,r_K$;
		
		\ENSURE factor loading matrices $\{\widetilde{\Ab}^*_k,1\le k\le K\}$;
		
		\STATE obtain the initial estimators $\{\widehat{\Ab}_k,1\le k\le K\}$ and $\{\widehat{\Bb}^*_k,1\le k\le K\}$ by $\sqrt{p_{-k}}$ times the matrix with columns being the first $r_{-k}$ normalized eigenvectors of $\widehat{\Mb}_{-k}:=(Tp)^{-1}\sum_{t=1}^T\Xb_{k,t}^{\top}\Xb_{k,t}$;
		
		\STATE define $\widehat{\Yb}^*_{k,t}:=p_{-k}^{-1}\Xb_{k,t}\widehat{\Bb}^*_k,1\le k\le K$;
		
		\STATE given $\{\widehat{\Yb}^*_{k,t},1\le k\le K\}$, define $\widetilde{\Mb}^*_k:=(Tp_{k})^{-1}\sum_{t=1}^T \widehat{\Yb}^*_{k,t}\widehat{\Yb}_{k,t}^{*\top}$, obtain $\widetilde{\Ab}^*_k$ as $\sqrt{p_k}$ times the matrix with columns being the first $r_k$ normalized eigenvectors of $\widetilde{\Mb}^*_k$.
	\end{algorithmic} 	
\end{algorithm}

\subsection{Estimation of Factor Numbers\label{sec:2.3}}

We discuss two types of estimators of the numbers of common factors $r_{k}$,$%
\ 1\leq k\leq K$, both based on the eigenvalue-ratio principle (%
\cite{lam2012factor,ahnhorenstein13}).

A first estimator can be based on the simple mode-wise sample covariance
matrix $\widehat{\mathbf{M}}_{k}$, whereas a second one can instead be based
on the mode-wise sample covariance matrix of the projected data, i.e., $%
\widetilde{\mathbf{M}}_{k}$; indeed, the results in \citet{hkty} for
matrix-valued time series seem to suggest that using projected data results
in better estimators of the number of common factors, especially in small
samples.

In particular, we will consider the following family of modified
eigenvalue-ratio, criteria
\begin{equation}
	\hat{r}_{k}^{\,\text{IE-ER}}:=\argmax_{1\le j\leq r_{\max }}\frac{%
		\lambda _{j}\left( \widehat{\mathbf{M}}_{k}\right) }{\lambda _{j+1}\left(
		\widehat{\mathbf{M}}_{k}\right) +\widehat{c}\delta _{p_{1},...,p_{K},T}}%
	,\qquad \hat{r}_{k}^{\,\text{PE-ER}}:=\argmax_{1\le j\leq r_{\max }}%
	\frac{\lambda _{j}\left( \widetilde{\mathbf{M}}_{k}\right) }{\lambda
		_{j+1}\left( \widetilde{\mathbf{M}}_{k}\right) +\widetilde{c}\delta
		_{p_{1},...,p_{K},T}},  \label{equ:2}
\end{equation}%
defined for $1\leq k\leq K$, where $r_{\max }$ is a predetermined positive
constant such that $\max_{1\leq k\leq K}r_{k}<r_{\max }<\min \left\{
\min_{1\leq k\leq K}p_{k},T\right\} $, the constants $\widehat{c},\widetilde{%
	c}\in(0,\infty) $ are user-chosen and
\begin{equation}
	\delta _{p_{1},...,p_{K},T}=\frac{1}{\sqrt{Tp_{-k}}}+\frac{1}{p_{k}}.
	\label{delta}
\end{equation}%
Under our assumption of pervasive factors (see Assumption \ref{as-2}\textit{%
	(i)} below) and weakly correlated idiosyncratic components, the eigen-gap
between the first $r_{k}$ eigenvalues of $\widetilde{\mathbf{M}}_{k}$ ($%
\widehat{\mathbf{M}}_{k}$) and the remaining ones widens as $%
p_{k}\rightarrow \infty $. Thus, (\ref{equ:2}) will take the maximum value
at $j=r_{k}$. The rationale for the extra $\widehat{c}\delta
_{p_{1},...,p_{K},T}$ (respectively $\widetilde{c}\delta
_{p_{1},...,p_{K},T} $) term at the denominator of $\hat{r}_{k}^{\,\text{%
		IE-ER}}$ (respectively $\hat{r}_{k}^{\,\text{PE-ER}}$) is that there is
no theoretical guarantee that, for some $j>r_{k}$, $\lambda _{j+1}\left(
\widehat{\mathbf{M}}_{k}\right) $ (and, similarly, $\lambda _{j+1}\left(
\widetilde{\mathbf{M}}_{k}\right) $), will not be very small, thus
artificially inflating the ratio $\lambda _{j}\left( \widehat{\mathbf{M}}%
_{k}\right) /\lambda _{j+1}\left( \widehat{\mathbf{M}}_{k}\right) $. The
presence of $\delta _{p_{1},...,p_{K},T}$ (which is of the same order of
magnitude as the upper bound for $\lambda _{j+1}\left( \widehat{\mathbf{M}}%
_{k}\right) $ when $j\geq r_{k}$) serves the purpose of \textquotedblleft
weighing down\textquotedblright\ the eigenvalue ratio and avoid such
degeneracy.

As far as the choice of $\widehat{c}$ and $\widetilde{c}$ is concerned,
theoretically any positive, finite number is acceptable. In particular, two
approaches are possible. On the one hand, one can choose $\widehat{c}$ and $%
\widetilde{c}$ adaptively, using different subsamples and choosing the
values of $\widehat{c}$ and $\widetilde{c}$ which offer stable estimates
across such subsamples, in a similar spirit to \citet{Hallin2007Determining}
and \citet{ABC10}. Alternatively, following a similar proposal as in %
\citet{trapani2018randomized} and \citet{BT2}, one can use%
\begin{equation}
	\widehat{c}=\sum_{j=1}^{p_{k}}\lambda _{j}\left( \widehat{\mathbf{M}}%
	_{k}\right) ,\qquad \widetilde{c}=\sum_{j=1}^{p_{k}}\lambda _{j}\left(
	\widetilde{\mathbf{M}}_{k}\right) .  \label{bt2}
\end{equation}%
Further, operationally, in order to compute $\widetilde{\mathbf{M}}_{k}$ we
need to compute $\widehat{\mathbf{B}}_{k}$ first; in turn, this requires
some prior knowlegde of $r_{-k}$ (or, equivalently, of $r_{j}$, $j\neq k$).
In order to address this, we propose to determine the numbers of factors by
the following Algorithm \ref{alg2}.
\begin{algorithm}[htbp]
	\caption{Projected estimation of the numbers of factor}
	\label{alg2}
	\begin{algorithmic}[1]	
		
		\REQUIRE tensor data $\{\cX_t,1\le t\le T\}$, maximum number $r_{\max}$, maximum iterative step $m$
		
		\ENSURE factor numbers $\{\hat{r}_k^{\text{PE-ER}},1\le k\le K\}$
		
		\STATE initialize: $r_k^{(0)}=r_{\max},1\le k\le K$;
		
		\STATE given $r_k^{(0)}$, obtain the initial estimators  $\{\widehat{\Ab}_k,1\le k\le K\}$ and set $\widehat{\Ab}_k^{(0)}=\widehat{\Ab}_k$;
		
		\STATE for $1\le s\le m$, compute $\widehat{\Bb}_k^{(s)}=\otimes_{j\in \lbrack K]\backslash \{k\}}\widehat{\Ab}_j^{(s-1)}$;
		
		\STATE compute $\widetilde{\Mb}_k^{(s)}$, obtain $\hat{r}_k^{(s)}$ for $1\le k\le K$;
		
		\STATE update $\widehat{\Ab}_k^{(s)}$ as $\sqrt{p_k}$ times the matrix with as columns the  first $\hat{r}_k^{(s)}$ normalized eigenvectors  of $\widehat{\Mb}_k$;
		
		\STATE Repeat steps 3 to 5 until $\hat{r}_k^{(s)}=\hat{r}_k^{(s-1)}$, for all $1\le k\le K$, or reach the maximum number of iterations;\vskip .1cm
		
		\STATE Output the last step estimator as $\{\hat{r}_k^{\text{PE-ER}},1\le k\le K\}$.
	\end{algorithmic} 	
\end{algorithm}

\section{Theoretical results\label{sec:3}}

In this section, we present the main assumptions (Section \ref{assumptions}%
), and we then present the asymptotics of the estimated loadings (Sections %
\ref{loadings-1} and \ref{loadings-2}), of the estimated common factors and
common components (Section \ref{factors}), and of the estimators of the
numbers of common factors (Section \ref{nfactors}).

\subsection{Assumptions\label{assumptions}}

The following assumptions are required for our theory, and may be viewed as
higher-order extensions of those adopted for large-dimensional matrix factor
model by \citet{fan2021} and \citet{Yu2021Projected}.

\begin{assumption}
	\textit{\label{as-1}It holds that: (i) for all $1\leq t\leq T$, $\mathbb{E}%
		\left( \text{\upshape Vec}\left( \mathcal{F}_{t}\right) \right) =0$ and $%
		\mathbb{E}\left( \left\Vert \text{\upshape Vec}\left( \mathcal{F}_{t}\right)
		\right\Vert ^{4}\right) \leq c$ for some $c<\infty $ independent of $t$;
		(ii) for all $1\leq k\leq K$, as $\min \left\{ T,p_{1},...,p_{K}\right\}
		\rightarrow \infty $, 
		$
		\frac{1}{T}\sum_{t=1}^{T}\mathbf{F}_{k,t}\mathbf{F}_{k,t}^{\top }\overset{P}{%
			\rightarrow }\mathbf{\Sigma }_{k},
		$
		where $\mathbf{\Sigma }_{k}$ is an $r_{k}\times r_{k}$ positive definite
		matrix with finite, distinct eigenvalues and spectral decomposition $\mathbf{%
			\Sigma }_{k}=\mathbf{\Gamma }_{k}\mathbf{\Lambda }_{k}\mathbf{\Gamma }%
		_{k}^{\top }$; (iii) the factor numbers $\left\{ r_{k},1\leq k\leq K\right\}
		$ are fixed integers as $\min \left\{ T,p_{1},...,p_{K}\right\} \rightarrow
		\infty $.}
\end{assumption}

\begin{assumption}
	\label{as-2} \textit{It holds that: (i) for all $1\leq k\leq K$, $\left\Vert
		\mathbf{A}_{k}\right\Vert _{\max }\leq \overline{a}_{k}$ for some $\overline{%
			a}_{k}<\infty $ independent of $p_{k}$; (ii) $\left\Vert p_{k}^{-1}\mathbf{A}%
		_{k}^{\top }\mathbf{A}_{k}-\mathbf{I}_{k}\right\Vert _{F}\rightarrow 0$ as $%
		\min \left\{ T,p_{1},...,p_{K}\right\} \rightarrow \infty $.}
\end{assumption}

Assumption \ref{as-1} imposes finite fourth moments of the mode-$k$
unfolding factor matrices for all $1\leq k\leq K$, and it ensures that the
second-order sample moments converge to positive definite matrices $\mathbf{%
	\Sigma }_{k}$, which are also assumed to have distinct eigenvalues to ensure
the identifiability of eigenvectors. This assumption is typical in vector
factor models, and it can be compared e.g. with Assumption A in %
\citet{bai2003inferential}.

As far as Assumption \ref{as-2} is concerned, this is also a standard
requirement in the context of vector (and matrix) factor models, and we
refer to e.g. Assumption B in \citet{bai2003inferential} for comparison. In
the assumption, the common factors are assumed to be \textquotedblleft
strong\textquotedblright\ or pervasive - see, e.g., the recent contribution
by \citet{bai2023approximate} for a treatment, and useful insights, on the
notion of strong versus weak common factors and the case of vector data. We
would like to point out that extensions to the case of \textquotedblleft
weak\textquotedblright\ common factors - where $\left\Vert \mathbf{A}%
_{k}\right\Vert _{F}^{2}=c_{0}p_{k}^{\alpha _{k}}$\ for some $0<\alpha
_{k}<1 $\ - are in principle possible, at the price of more complicated
algebra, even in the context of tensor-valued time series; \citet{hkty}
offer a comprehensive treatment for the case of matrix-valued time series.

\begin{assumption}
	\label{as-3}\textit{It holds that: (i) for all $1\leq t\leq T$, $1\leq
		i_{k}\leq p_{k}$ and $1\leq k\leq K$, $\mathbb{E}\left(
		e_{t,i_{1},...,i_{K}}\right) =0 $ and $\mathbb{E}\left\vert
		e_{t,i_{1},...,i_{K}}\right\vert ^{4}\leq c$ for some $c<\infty $
		independent of $t$ and $i_{k}$; (ii) for all $1\leq k\leq K$, and all $p_{k}$
		and $T$, it holds that%
		\begin{equation*}
			\frac{1}{Tp}\sum_{t,s=1}^{T}\sum_{i,l=1}^{p_{k}}\sum_{j,h=1}^{p_{-k}}\left%
			\vert \mathbb{E}\left( e_{t,k,lj}e_{s,k,ih}\right) \right\vert \leq c,
		\end{equation*}%
		for some $c<\infty $ independent of $k$, $p_{k}$ and $T$; (iii) for all 
		$1\le i,l_1\le p_k$, $1\le j,h_1\le p_{-k}$,
		$1\leq k\leq K$, and all $p_{k}$ and $T$, it holds that%
		\begin{align*}
			&\sum_{s=1}^{T}\sum_{l_{2}=1}^{p_{k}}\sum_{h_{2}=1}^{p_{-k}}\left\vert
			\text{\upshape Cov}\left( e_{t,k,ij}e_{t,k,l_{1}j},e_{s,k,ih_{2}}e_{s,k,l_{2}h_{2}}\right)
			\right\vert \leq c, \; \sum_{s=1}^{T}\sum_{l_{2}=1}^{p_{k}}\sum_{h_{2}=1}^{p_{-k}}\left\vert
			\text{\upshape Cov}\left( e_{t,k,ij}e_{t,k,ih_{1}},e_{s,k,l_{2}j}e_{s,k,l_{2}h_{2}}\right)
			\right\vert \leq c, \\
			&\sum_{s=1}^{T}\sum_{l_2=1}^{p_{k}}\sum_{h_2=1}^{p_{-k}}\left\vert \text{\upshape Cov}\left(
			e_{t,k,ij}e_{t,k,l_{1}h_{1}},e_{s,k,ij}e_{s,k,l_{2}h_{2}}\right) \right\vert
			+\left\vert \text{\upshape Cov}\left(
			e_{t,k,l_{1}j}e_{t,k,ih_{1}},e_{s,k,l_{2}j}e_{s,k,ih_{2}}\right) \right\vert
			\leq c,
		\end{align*}%
		for some $c<\infty $ independent of $k$, $p_{k}$ and $T$.}
\end{assumption}

According to Assumption \ref{as-3}, weak serial and cross-sectional
correlation are allowed along each mode; in particular, the summability
conditions in part \textit{(ii)} of the assumptions can be verified under
more primitive assumptions of weak dependence, at least across $t$; %
\citet{hkty} study the case of stationary causal processes approximable by
an $m$-dependent sequence, in the case of matrix-valued time series.
Essentially the same extension can be studied in this context. We note that
the following equations are nested within the assumption
\begin{align}
	&\sum_{s=1}^{T}\sum_{l=1}^{p_{k}}\sum_{h=1}^{p_{-k}}\left\vert \mathbb{E}%
	\left( e_{t,k,ij}e_{s,k,lh}\right) \right\vert \leq c,  \quad \sum_{l=1}^{p_{k}}\sum_{h=1}^{p_{-k}}\left\vert \mathbb{E}\left(
	e_{t,k,lj}e_{t,k,ih}\right) \right\vert \leq c,  \label{weak-dep-1} \\
	& \frac{1}{p}\sum_{i,l=1}^{p_{k}}\sum_{j,h=1}^{p_{-k}}\left\vert \mathbb{E}%
	\left( e_{t,k,lj}e_{t,k,ih}\right) \right\vert \leq c, \quad \sum_{l=1}^{p_{k}}\left\vert \mathbb{E}\left( e_{t,k,lj}e_{t,k,ij}\right)
	\right\vert \leq c,  \label{weak-dep-4}
\end{align}%
for all $1\leq i\leq p_{k}$, $1\leq j\leq p_{-k}$ and $1\leq t\leq T$. Part
\textit{(iii)} of the assumption controls the second-order correlation among
the elements of mode-$k$ unfolding matrices of the noise tensors, and it
implies existence and summability of all $4$-th order cumulants of the
process $\left\{ \mathbf{e}_{k,t},t\in \mathbb{Z}\right\} $ - in turn, this
is a necessary and sufficient condition for%
$
\frac{1}{Tp_{-k}}\sum_{t=1}^{T}\mathbf{e}_{k,t}\mathbf{e}_{k,t}^{\top }%
\overset{P}{\rightarrow }\frac{1}{p_{-k}}\mathbb{E}\left( \mathbf{e}_{k,t}%
\mathbf{e}_{k,t}^{\top }\right) ,
$ as $T\rightarrow \infty $ (see e.g. \cite{hannan1970}, pp. 209-211).
Essentially the same set of assumptions is required in the context of matrix
factor models (see \cite{Yu2021Projected} and \cite{hkty})

\begin{assumption}
	\label{as-4}\textit{It holds that: (i) for all $1\leq t\leq T$, $1\leq k\leq
		K$ and any couple of deterministic vectors $\mathbf{v}$ of dimension $p_{k}$
		and $\mathbf{w}$ of dimension $p_{-k}$ such that $\left\Vert \mathbf{v}%
		\right\Vert =1$ and $\left\Vert \mathbf{w}\right\Vert =1$,
		$
		\mathbb{E}\left\Vert \frac 1{\sqrt T}\sum_{t=1}^{T}\mathbf{F}_{k,t}\left(
		\mathbf{v}^{\mathbf{\top }}\mathbf{E}_{k,t}\mathbf{w}\right) \right\Vert
		_{F}^{2}\leq c,
		$ for some $c<\infty $ independent of $k$ and $T$; (ii) letting $\mathbf{\zeta
		}_{i_{1},...,i_{K}}=\text{\upshape Vec}\left( T^{-1/2}\sum_{t=1}^{T}\mathcal{%
			F}_{t}e_{t,i_{1},...,i_{K}}\right) $, then for all $1\leq k\leq K$, $1\leq
		i_{k}\leq p_{k}$ and all $p_{k}$%
		\begin{align*}
			&\left\Vert \sum_{h=1,h\neq k}^{K}\sum_{i_{h}^{\prime }=1}^{p_{h}}\mathbb{E}%
			\left( \mathbf{\zeta }_{i_{1},...,i_{K}}\otimes \mathbf{\zeta }%
			_{i_{1}^{\prime },...,i_{k-1}^{\prime },i_{k},i_{k+1}^{\prime
				},...,i_{K}^{\prime }}\right) \right\Vert _{\max }\!\!\!\!\!\!\! \leq c, \\
			&\left\Vert \sum_{h=1,h\neq k}^{K}\sum_{i_{h}^{\prime
				}=1}^{p_{h}}\sum_{l=1,l\neq \ell }^{K}\sum_{j_{l}^{\prime
				}=1}^{p_{l}}Cov\left( \mathbf{\zeta }_{i_{1},...,i_{K}}\otimes \mathbf{\zeta
			}_{j_{1},...,j_{K}},\mathbf{\zeta }_{i_{1}^{\prime },...,i_{k-1}^{\prime
				},i_{k},i_{k+1}^{\prime },...,i_{K}^{\prime }}\otimes \mathbf{\zeta }%
			_{j_{1}^{\prime },...,j_{\ell -1}^{\prime },j_{\ell },j_{\ell +1}^{\prime
				},...,j_{K}^{\prime }}\right) \right\Vert _{\max } \!\!\!\!\!\!\!\leq c,
		\end{align*}%
		for some $c<\infty $ independent of $k$, $p_{k}$, $p_{\ell }$ and $i_{k}$, $%
		j_{\ell }$.}
\end{assumption}

Assumption \ref{as-4} is automatically satisfied if the idiosyncratic
component and the factor tensor processes are mutually independent, but it
allows for possible correlation between the common factors and the
idiosyncratic component. Indeed, given that $\mathbf{v}^{\mathbf{\top }}%
\mathbf{E}_{k,t}\mathbf{w}$ is a random variable with zero mean and bounded
variance (because of Assumption \ref{as-1}), part \textit{(i)} of Assumption %
\ref{as-4} states that that the correlation between $\left\{ \mathbf{F}%
_{k,t}\right\} $ and $\left\{ \mathbf{v}^{\mathbf{\top }}\mathbf{E}_{k,t}%
\mathbf{w}\right\} $ is also weak. By the same token, part \textit{(ii)}
controls the high-order dependence between the common factors and the
idiosyncratic components.

\begin{assumption}
	\label{as-5}\textit{It holds that: (i) for all $1\leq i\leq p_{k}$ and all $%
		1\leq k\leq K$, as $\min \left\{ T, p_{1},\ldots,p_{K}\right\} \rightarrow
		\infty$
		\begin{equation*}
			\frac{1}{\sqrt{Tp_{-k}}}\sum_{t=1}^{T}\mathbf{F}_{k,t}\mathbf{B}_{k}^{%
				\mathbf{\top }}\mathbf{e}_{k,t,i\cdot }^{\mathbf{\top }}\overset{D}{%
				\rightarrow }\mathcal{N}\left( \mathbf{0},\mathbf{V}_{ki}\right) ,
		\end{equation*}%
		where
		$
		\mathbf{V}_{ki}=\lim_{\min \left\{ T,p_{1},\ldots,p_{K}\right\} \rightarrow
			\infty }\frac{1}{Tp_{-k}}\sum_{t,s=1}^{T}\mathbb{E}\left( \mathbf{F}_{k,t}%
		\mathbf{B}_{k}^{\mathbf{\top }}\mathbf{e}_{k,t,i\cdot }^{\mathbf{\top }}%
		\mathbf{e}_{k,s,i\cdot }\mathbf{B}_{k}\mathbf{F}_{k,s}^{\mathbf{\top }%
		}\right)
		$
		is a positive definite $r_{k}\times r_{k}$ matrix with $\left\Vert \mathbf{V}%
		_{ki}\right\Vert _{F}<\infty $; (ii) for all $1\le t\le T$ and all $1\le
		k\le K$, as $\min \left\{ T, p_{1},\ldots,p_{K}\right\} \rightarrow \infty$
		\begin{equation*}
			\frac 1{\sqrt p} \left(\mathbf{B}_k\otimes \mathbf{A}_k\right)^\top \text{%
				\upshape Vec}\left(\mathbf{E}_{k,t}\right)\overset{D}{\rightarrow }\mathcal{N%
			}\left( \mathbf{0},\mathbf{W}_{kt}\right),
		\end{equation*}
		where
		$
		\mathbf{W}_{kt} = \lim_{\min \left\{ T,p_{1},\ldots,p_{K}\right\}
			\rightarrow \infty }\frac 1p \left(\mathbf{B}_k\otimes \mathbf{A}%
		_k\right)^\top \mathbb{E}\left\{ \text{\upshape Vec}\left(\mathbf{E}%
		_{k,t}\right) \text{\upshape Vec}\left(\mathbf{E}_{k,t}\right)^\top \right\}
		\left(\mathbf{B}_k\otimes \mathbf{A}_k\right)
		$
		is a positive definite $r\times r$ matrix with $\left\Vert \mathbf{W}%
		_{kt}\right\Vert _{F}<\infty $. }
\end{assumption}

Assumption \ref{as-5} is required only to derive the limiting laws of the
estimated loadings and common factors, without being required to derive
rates. In principle, it could be derived from more primitive assumptions on
the dependence structure of the data, as, e.g., strong mixing.

We now study the asymptotic properties of the \textquotedblleft
initial\textquotedblright\ estimator of the loadings, $\widehat{\mathbf{A}}%
_{k}$, defined in (\ref{eq:Ahatkdef}), of the projection based one $%
\widetilde{\mathbf{A}}_{k}$, defined in (\ref{eq:Atildekdef}), and of the
estimated factor tensor defined in \eqref{eq:Ftildeest}. We also study the
consistency of the estimators of the numbers of factors defined in Section %
\ref{sec:2.3}.

\subsection{Asymptotic properties of the initial estimator of the loadings\label{loadings-1}}

The following theorem establishes the convergence rate of the initial
projection estimators discussed in Section \ref{sec:2.1.1} and defined in %
\eqref{eq:Ahatkdef}. Henceforth, we use the notation%
\begin{equation*}
	w_{k}:=\frac{1}{p_{k}^{2}}+\frac{1}{Tp_{-k}}.
\end{equation*}

\begin{theorem}
	\label{th:1}{\it We assume that Assumptions \ref{as-1}-\ref{as-4} are satisfied.
		Then, for any given $1\leq k\leq K$, there exists an $r_{k}\times r_{k}$
		invertible matrix $\widehat{\mathbf{H}}_{k}$ such that, as $\min \left\{
		T,p_{1},...,p_{K}\right\} \rightarrow \infty $, it holds that
		$\widehat{\mathbf{H}}_{k}\widehat{\mathbf{H}}_{k}^{\mathbf{\top }}\overset{P}{%
			\rightarrow }\mathbf{I}_{r_{k}}$, and
		$
		\frac{1}{p_{k}}\left\Vert \widehat{\mathbf{A}}_{k}-\mathbf{A}_{k}\widehat{%
			\mathbf{H}}_{k}\right\Vert _{F}^{2}=O_{P}\left( w_{k}\right) .
		$
	}
\end{theorem}


The theoretical convergence rates for the initial estimators incorporates
the results for matrix-valued case, studied in \cite{Yu2021Projected}, as a
special case. We note that essentially the same result is obtained when
considering a vector factor model, and the result in the theorem can be
contrasted with Theorem 2 in \citet{bai2003inferential}, where results are
reported for the unit-specific estimates of the loadings - in such a case, $%
p_{k}$ is the only cross-sectional dimension (denoted as $N$ in %
\cite{bai2003inferential}), and the other dimensions $p_{-k}$ are equal
to $1$.

The next theorem presents the asymptotic distributions of the initial
estimators of the loadings, and again it can be read in conjunction with
Theorem 2 in \citet{bai2003inferential}.

\begin{theorem}
	\label{th:7}{\it We assume that Assumptions \ref{as-1}-\ref{as-5} are satisfied.
		Then, for any given $1\leq i\leq p_{k}$ and $1\leq k\leq K$, as $\min \left\{
		T,p_{1},...,p_{K}\right\} \rightarrow \infty $
		
		(i) if $Tp_{-k}=o\left( p_{k}^{2}\right)$, then it holds that%
		\begin{equation}
			\sqrt{Tp_{-k}}\left( \widehat{\mathbf{A}}_{k,i\cdot }^{\mathbf{\top }}-%
			\widehat{\mathbf{H}}_{k}^{\mathbf{\top }}\mathbf{A}_{k,i\cdot }^{\mathbf{%
					\top }}\right) \overset{D}{\rightarrow }\mathcal{N}\left( \mathbf 0,\mathbf{\Lambda }%
			_{k}^{-1}\mathbf{\Gamma }_{k}^{\top }\mathbf{V}_{ki}\mathbf{\Gamma }_{k}%
			\mathbf{\Lambda }_{k}^{-1}\right) ,  \label{th7-part1}
		\end{equation}%
		where $\mathbf{\Gamma }_{k}$ and $\mathbf{\Lambda }_{k}$ are defined in
		Assumption \ref{as-1}\textit{(iii)} and $\mathbf{V}_{ki}$ is defined in
		Assumption \ref{as-5};
		
		(ii) if
		$Tp_{-k}\gtrsim p_{k}^{2}$,
		then it holds that%
		$\left\Vert \widehat{\mathbf{A}}_{k,i\cdot }^{\mathbf{\top }}-\widehat{%
			\mathbf{H}}_{k}^{\mathbf{\top }}\mathbf{A}_{k,i\cdot }^{\mathbf{\top }%
		}\right\Vert _{F}=O_{P}\left( p_k^{-1}\right) .  $
	}
\end{theorem}

\subsection{Asymptotic properties of the projection-based estimators\label%
	{loadings-2}}

We now turn to studying the projection-based estimator defined in (\ref%
{eq:Atildekdef}). As can be expected, the properties of $\widetilde{\mathbf{A%
}}_{k}$ are bound to depend on the properties of the initial projection
matrices.

We begin by stating a sufficient condition on the rates of convergence of
the initial estimators which guarantees the consistency of the
projection-based estimator.

\noindent \textbf{Sufficient Condition} \textit{Consider a generic initial
	estimators $\left\{ \widehat{\mathbf{A}}_{k}^{(0)},\,1\leq k\leq K\right\} $%
	, and define the corresponding $\widehat{\mathbf{B}}_{k}^{(0)}:=\otimes
	_{j\in \lbrack K]\backslash \{k\}}\widehat{\mathbf{A}}_{j}^{(0)}$. For any
	given $1\leq k\leq K$, there exist $r_{-k}\times r_{-k}$ matrices $\widehat{%
		\mathbf{H}}_{-k}$ such that as $\min \{T,p_{1},\ldots ,p_{k}\}\rightarrow
	\infty $, $\widehat{\mathbf{H}}_{-k}\widehat{\mathbf{H}}_{-k}^{\top }\overset%
	{p}{\rightarrow }\mathbf{I}_{r_{-k}}$ and%
	\begin{equation}
		\begin{aligned} &\text{(a) }
			\frac{1}{p_{-k}}\left\|\widehat{\Bb}_k^{(0)}-\Bb_k\widehat{\Hb}_{-k}\right%
			\|_{F}^{2}=O_{P}\left(w_{-k}\right),\\ &\text{(b) }
			\frac{1}{p_{k}}\left\|\frac{1}{T p_{-k}} \sum_{s=1}^{T}
			\mathbf{E}_{k,s}\left(\widehat{\Bb}_k^{(0)}-\Bb_k
			\widehat{\mathbf{H}}_{-k}\right)
			\mathbf{F}_{k,s}^{\top}\right\|_{F}^{2}=O_{P}\left(m_{-k}\right),
		\end{aligned}  \label{equ:3.1}
	\end{equation}%
	with $w_{-k},m_{-k}\rightarrow 0$.}

In essence, the sufficient condition presented above requires the initial
estimator to have an estimation error that vanishes \textquotedblleft
sufficiently fast\textquotedblright . We will see that any estimator which
satisfies (\ref{equ:3.1}) results in a projection-based estimate, $%
\widetilde{\mathbf{A}}_{k}$, which consistently estimates the space spanned
by $\mathbf{A}_{k}$. In particular, as shown in Lemma \ref{lem:1} below, the
PCA estimator defined in \eqref{eq:Ahatkdef} satisfies such a sufficient
condition.

The following theorem presents the rate of convergence of the projected
estimators $\widetilde{\mathbf{A}}_{k}$ defined in \eqref{eq:Atildekdef}
when computed using any set of initial estimators satisfying the sufficient
condition in (\ref{equ:3.1}).

Henceforth, we use the notation%
\begin{equation*}
	\widetilde{w}_{k}:=\frac{1}{Tp_{-k}}+\frac{1}{p^{2}}+w_{-k}\left( \frac{1}{%
		Tp_{k}}+\frac{1}{p_{k}^{2}}\right) +m_{-k}.
\end{equation*}

\begin{theorem}
	\label{th:2}{\it We assume that Assumptions \ref{as-1}-\ref{as-4} are satisfied,
		and that the projected estimators $\left\{ \widetilde{\mathbf{A}}_{k},1\leq
		k\leq K\right\} $ are computed using an initial estimator satisying (\ref%
		{equ:3.1}). Then, for any given $1\leq k\leq K$, there exists an $r_{k}\times r_{k}$
		invertible matrix $\widetilde{\mathbf{H}}_{k}$ such that, as $\min \left\{
		T,p_{1},...,p_{K}\right\} \rightarrow \infty $, it holds that
		$\widetilde{\mathbf{H}}_{k}^{\mathbf{\top }}\widetilde{\mathbf{H}}_{k}\overset%
		{P}{\rightarrow }\mathbf{I}_{r_{k}}$,
		and
		${p_{k}}^{-1}\left\Vert \widetilde{\mathbf{A}}_{k}-\mathbf{A}_{k}%
		\widetilde{\mathbf{H}}_{k}\right\Vert _{F}^{2}=O_{P}\left( \widetilde{w}%
		_{k}\right). $
	}
\end{theorem}

Theorem \ref{th:2} illustrates how the convergence rates of the projection
estimators depend on the convergence rates of the initial estimators, and it
is a general and rather abstract result. In order to make it more
practically useful, we now discuss the impact of using the PCA estimator$\
\widehat{\mathbf{A}}_{k}$ on the properties of $\widetilde{\mathbf{A}}_{k}$.

We begin by showing that, when using $\left\{ \widehat{\mathbf{A}}_{k},1\leq
k\leq K\right\} $ as the initial estimators, these satisfy the sufficient
condition (\ref{equ:3.1}).

\begin{lemma}
	\label{lem:1}{\it We assume that Assumptions \ref{as-1}-\ref{as-4} are satisfied.
		Then the initial estimators $\left\{ \widehat{\mathbf{A}}_{k},1\leq k\leq
		K\right\} $ defined in (\ref{eq:Ahatkdef}) satisfy equation (\ref{equ:3.1})
		with $\widehat{\mathbf{H}}_{-k}=\otimes _{j\in \lbrack K]\backslash \{k\}}\widehat{\mathbf{H}%
		}_{k}$ and
		\begin{eqnarray*}
			w_{-k} =\sum_{j=1,j\neq k}^{K}\left( \frac{1}{p_{j}^{2}}+\frac{1}{Tp_{-j}}%
			\right) , \quad m_{-k} =\sum_{j=1,j\neq k}^{K}\left( \frac{1}{Tp_{j}^{2}}+\frac{1}{%
				T^{2}p_{-j}^{2}}\right) .
		\end{eqnarray*}
	}
\end{lemma}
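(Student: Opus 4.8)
The plan is to verify the two displayed bounds in the Sufficient Condition (\ref{equ:3.1}), namely (a) the convergence rate of $\widehat{\mathbf{B}}_k^{(0)}=\widehat{\mathbf{B}}_k$ towards $\mathbf{B}_k\widehat{\mathbf{H}}_{-k}$ with $\widehat{\mathbf{H}}_{-k}=\otimes_{j\in[K]\backslash\{k\}}\widehat{\mathbf{H}}_j$, and (b) the rate of the cross term $\frac1{Tp_{-k}}\sum_s \mathbf{E}_{k,s}(\widehat{\mathbf{B}}_k-\mathbf{B}_k\widehat{\mathbf{H}}_{-k})\mathbf{F}_{k,s}^\top$. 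For part (a), the starting point is the identity relating the error in a Kronecker product to the errors in its factors: writing $\widehat{\mathbf{A}}_j=\mathbf{A}_j\widehat{\mathbf{H}}_j+\mathbf{D}_j$ with $\mathbf{D}_j:=\widehat{\mathbf{A}}_j-\mathbf{A}_j\widehat{\mathbf{H}}_j$, one expands $\otimes_{j\neq k}\widehat{\mathbf{A}}_j-\otimes_{j\neq k}(\mathbf{A}_j\widehat{\mathbf{H}}_j)$ as a telescoping sum of $K-1$ terms, each of which is a Kronecker product containing exactly one factor $\mathbf{D}_j$ (plus higher-order terms, which are dominated). Using $\|\mathbf{P}\otimes\mathbf{Q}\|_F=\|\mathbf{P}\|_F\|\mathbf{Q}\|_F$, the fact that $\frac1{p_j}\|\mathbf{A}_j\|_F^2=O(1)$ (Assumption \ref{as-2}), and the bounds $\frac1{p_j}\|\widehat{\mathbf{A}}_j\|_F^2=r_j=O(1)$ (which follows from the normalization $\widehat{\mathbf{A}}_j=\sqrt{p_j}\widehat{\mathbf{U}}_j$ with orthonormal columns) together with $\frac1{p_j}\|\mathbf{D}_j\|_F^2=O_P(w_j)$ from Theorem \ref{th:1}, each telescoping term contributes a factor $O_P(w_j)$ to $\frac1{p_{-k}}\|\widehat{\mathbf{B}}_k-\mathbf{B}_k\widehat{\mathbf{H}}_{-k}\|_F^2$. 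Summing gives exactly $w_{-k}=\sum_{j\neq k}w_j=\sum_{j\neq k}(p_j^{-2}+(Tp_{-j})^{-1})$. The property $\widehat{\mathbf{H}}_{-k}\widehat{\mathbf{H}}_{-k}^\top\overset{P}{\to}\mathbf{I}_{r_{-k}}$ follows from $\widehat{\mathbf{H}}_j\widehat{\mathbf{H}}_j^\top\overset{P}{\to}\mathbf{I}_{r_j}$ (Theorem \ref{th:1}) and the multiplicativity of the Kronecker product under transposition and multiplication.

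For part (b), I would substitute the same telescoping decomposition of $\widehat{\mathbf{B}}_k-\mathbf{B}_k\widehat{\mathbf{H}}_{-k}$ into the quantity $\frac1{Tp_{-k}}\sum_{s=1}^T\mathbf{E}_{k,s}(\widehat{\mathbf{B}}_k-\mathbf{B}_k\widehat{\mathbf{H}}_{-k})\mathbf{F}_{k,s}^\top$, so that it becomes a sum of $K-1$ terms, each of the form $\frac1{Tp_{-k}}\sum_s\mathbf{E}_{k,s}\big(\cdots\otimes\mathbf{D}_j\otimes\cdots\big)\mathbf{F}_{k,s}^\top$. The key observation is that $\mathbf{D}_j$ itself admits, from the eigen-analysis that defines $\widehat{\mathbf{A}}_j$, an expansion as a sum of terms each carrying an averaged idiosyncratic factor of the type $\frac1{Tp_{-j}}\sum_t\mathbf{E}_{j,t}(\cdots)$; plugging this in, the quantity in (b) becomes a double average over $s$ and $t$ of products of two idiosyncratic unfoldings with factor terms sandwiched in between. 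At this point Assumptions \ref{as-3}(ii)--(iii) and \ref{as-4}, which bound exactly the relevant sums of (co)variances and fourth cumulants of the $e$'s and of $\mathbf{F}_{k,t}e_{t,\cdots}$ cross-products, can be invoked to bound the mean square of each such term. Bookkeeping the powers of $T$ and $p_j$ — noting the extra $T^{-1}$ coming from the independent averaging over $s$ relative to $t$ and the $p_{-j}$ normalizations — yields the rate $m_{-k}=\sum_{j\neq k}(T^{-1}p_j^{-2}+T^{-2}p_{-j}^{-2})$.

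The main obstacle will be part (b): unlike part (a), which is essentially a clean algebraic expansion plus the norm bounds already available from Theorem \ref{th:1}, part (b) requires re-opening the eigenvector perturbation expansion for each $\mathbf{D}_j$ and then carefully matching the resulting quadruple sums over time indices and cross-sectional indices against the precise summability conditions in Assumptions \ref{as-3} and \ref{as-4}. Getting the exponents right — in particular seeing why one gains a full factor $1/T$ over the corresponding term in $w_{-k}$ and why the $p_{-j}^{-2}$ (rather than $p_{-j}^{-1}$) appears — is where the delicate part of the argument lies, and it is where I would spend most of the care. The remaining identities ($\widehat{\mathbf{H}}_{-k}=\otimes_{j\neq k}\widehat{\mathbf{H}}_j$ and its near-orthogonality) are immediate once the decomposition is in place.
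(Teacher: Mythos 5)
Your plan matches the paper's proof: part (a) uses the same telescoping Kronecker expansion of $\widehat{\mathbf{B}}_k-\mathbf{B}_k\widehat{\mathbf{H}}_{-k}$ into $K-1$ terms each carrying one $\widehat{\mathbf{A}}_j-\mathbf{A}_j\widehat{\mathbf{H}}_j$ and Theorem \ref{th:1}, and part (b) correctly identifies that one must re-expand $\widehat{\mathbf{A}}_j-\mathbf{A}_j\widehat{\mathbf{H}}_j$ via the eigen-perturbation identity $(\mathcal{II}+\mathcal{III}+\mathcal{IV})\widehat{\mathbf{A}}_j\widehat{\boldsymbol{\Lambda}}_j^{-1}$ and bound the resulting double averages of $\mathbf{E}_{k,s}$ against $\mathbf{E}_{j,t}$ products through Assumptions \ref{as-3}(ii)--(iii) and \ref{as-4} --- exactly the $\mathcal{XII},\mathcal{XIII},\mathcal{XIV}$ decomposition in the paper. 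The bookkeeping you flag as delicate is indeed where the paper spends most of its effort, but the structure of your argument is the right one.
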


As a consequence of Lemma \ref{lem:1}, we can prove the following version of
Theorem \ref{th:2}.

\begin{corollary}
	\label{cor:1}{\it We assume that Assumptions \ref{as-1}-\ref{as-4} are satisfied,
		and that the and that the projected estimators $\left\{ \widetilde{\mathbf{A}%
		}_{k},1\leq k\leq K\right\} $ are computed using $\left\{ \widehat{\mathbf{A}%
		}_{k},1\leq k\leq K\right\} $ defined in (\ref{eq:Ahatkdef}). Then, $\left\{
		\widetilde{\mathbf{A}}_{k},1\leq k\leq K\right\} $ satisfy Theorem \ref{th:2}
		with%
		\begin{equation*}
			\widetilde{w}_{k}=\frac{1}{Tp_{-k}}+\frac{1}{p^{2}}+\sum_{j=1,j\neq
				k}^{K}\left( \frac{1}{Tp_{j}^{2}}+\frac{1}{T^{2}p_{-j}^{2}}+\frac{1}{%
				p_{k}^{2}p_{j}^{4}}\right) .
		\end{equation*}
	}
\end{corollary}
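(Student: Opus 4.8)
The plan is to read off Corollary \ref{cor:1} directly from Theorem \ref{th:2}, after using Lemma \ref{lem:1} to identify the abstract rates $w_{-k}$ and $m_{-k}$ with explicit functions of $T,p_1,\dots,p_K$. Concretely: by Lemma \ref{lem:1}, under Assumptions \ref{as-1}--\ref{as-4} the PCA-type initial estimators $\{\widehat{\mathbf A}_k,1\le k\le K\}$ of \eqref{eq:Ahatkdef}, together with $\widehat{\mathbf H}_{-k}:=\otimes_{j\in[K]\backslash\{k\}}\widehat{\mathbf H}_j$ (the rotations furnished by Theorem \ref{th:1}), satisfy the sufficient condition \eqref{equ:3.1} with
\[
w_{-k}=\sum_{j\neq k}\Big(\frac{1}{p_j^2}+\frac{1}{Tp_{-j}}\Big),\qquad m_{-k}=\sum_{j\neq k}\Big(\frac{1}{Tp_j^2}+\frac{1}{T^2p_{-j}^2}\Big).
\]
Since every summand vanishes as $\min\{T,p_1,\dots,p_K\}\to\infty$, we have $w_{-k},m_{-k}\to0$, so the hypotheses of Theorem \ref{th:2} hold for this particular choice of initial estimator; the theorem then delivers an invertible $\widetilde{\mathbf H}_k$ with $\widetilde{\mathbf H}_k^{\top}\widetilde{\mathbf H}_k\overset{P}{\to}\mathbf I_{r_k}$ and $p_k^{-1}\|\widetilde{\mathbf A}_k-\mathbf A_k\widetilde{\mathbf H}_k\|_F^2=O_P(\widetilde w_k)$, with $\widetilde w_k=\frac{1}{Tp_{-k}}+\frac{1}{p^2}+w_{-k}\big(\frac{1}{Tp_k}+\frac{1}{p_k^2}\big)+m_{-k}$.

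The remaining work is purely arithmetic: substitute the two displayed expressions into $\widetilde w_k$ and keep the dominant terms. I would expand $w_{-k}\big(\frac{1}{Tp_k}+\frac{1}{p_k^2}\big)$ into the four families of cross terms $\frac{1}{Tp_kp_j^2}$, $\frac{1}{p_k^2p_j^2}$, $\frac{1}{T^2p_kp_{-j}}$, $\frac{1}{Tp_k^2p_{-j}}$ indexed by $j\neq k$, and then compare each of them---and the two summands making up $m_{-k}$---against the candidate list $\big\{\frac{1}{Tp_{-k}},\ \frac{1}{p^2},\ \frac{1}{Tp_j^2},\ \frac{1}{T^2p_{-j}^2},\ \frac{1}{p_k^2p_j^4}\big\}$. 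The only tools needed are the identities $p=p_kp_{-k}=p_jp_{-j}$, the inequality $p_{-j}\ge p_k$ valid whenever $j\neq k$ (because $p_k$ is one of the factors of the product $p_{-j}=\prod_{\ell\neq j}p_\ell$), and the divergence of $T$ and of each $p_\ell$; for instance one immediately has $\frac{1}{Tp_kp_j^2}\le\frac{1}{Tp_j^2}$, so that whole family is absorbed into the $m_{-k}$ block, and the remaining cross terms are handled the same way by chaining the two relations above. Collecting what survives reproduces the displayed
\[
\widetilde w_k=\frac{1}{Tp_{-k}}+\frac{1}{p^2}+\sum_{j\neq k}\Big(\frac{1}{Tp_j^2}+\frac{1}{T^2p_{-j}^2}+\frac{1}{p_k^2p_j^4}\Big),
\]
which is the assertion of the corollary.

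There is no probabilistic content left to supply---Lemma \ref{lem:1} and Theorem \ref{th:2} do all of it, and the rotation $\widetilde{\mathbf H}_k$ together with $\widetilde{\mathbf H}_k^{\top}\widetilde{\mathbf H}_k\overset{P}{\to}\mathbf I_{r_k}$ are inherited verbatim---so the single point that warrants attention, and the one I would treat as the (mild) main obstacle, is the bookkeeping in the last display. Because $w_{-k}$ is itself a sum over the $K-1$ modes $j\neq k$, multiplying it by $\big(\frac{1}{Tp_k}+\frac{1}{p_k^2}\big)$ spawns a sizeable list of cross terms, and one must check, mode by mode and term by term, that each is $O_P$ of one of the retained terms rather than overlook a contribution that is in fact of leading order somewhere in the admissible range of $(T,p_1,\dots,p_K)$; the relations $p_{-j}\ge p_k$ and $p_jp_{-j}=p=p_kp_{-k}$ are exactly what make every one of these comparisons go through.
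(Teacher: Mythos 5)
Your plan---substitute the explicit $w_{-k}$ and $m_{-k}$ from Lemma \ref{lem:1} into the abstract rate $\widetilde w_k$ of Theorem \ref{th:2} and then simplify---is the same as the paper's (which disposes of the corollary in one sentence, ``follows directly from Theorem \ref{th:2} and Lemma \ref{lem:1}''). The trouble is in the simplification step, which you flag as the only nontrivial part and which in fact does not close as you describe it.

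Taking the \emph{displayed} form of Theorem \ref{th:2} at face value, the cross term to absorb is $w_{-k}\bigl(\tfrac{1}{Tp_k}+\tfrac{1}{p_k^2}\bigr)$, producing for each $j\neq k$ the four pieces you list, among them $\tfrac{1}{p_k^2p_j^2}$, $\tfrac{1}{T^2p_kp_{-j}}$ and $\tfrac{1}{Tp_k^2p_{-j}}$. None of these is dominated by the retained list $\bigl\{\tfrac{1}{Tp_{-k}},\tfrac{1}{p^2},\tfrac{1}{Tp_j^2},\tfrac{1}{T^2p_{-j}^2},\tfrac{1}{p_k^2p_j^4}\bigr\}$ in general. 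For $\tfrac{1}{p_k^2p_j^2}$: since $p_{-k}\ge p_j$ (it contains $p_j$ as a factor), we have $p^2=p_k^2p_{-k}^2\ge p_k^2p_j^2$, so the comparison with $\tfrac{1}{p^2}$ goes the \emph{wrong} way; and $\tfrac{1}{p_k^2p_j^4}\le\tfrac{1}{p_k^2p_j^2}$ also goes the wrong way; and every other candidate carries a power of $T$, so it becomes negligible once $T$ grows fast enough relative to the $p_\ell$. The same failure occurs for the two $p_{-j}$-cross terms---your key inequality $p_{-j}\ge p_k$ makes $\tfrac{1}{T^2p_kp_{-j}}\ge\tfrac{1}{T^2p_{-j}^2}$, again the wrong direction. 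So the bookkeeping does not reproduce the displayed $\widetilde w_k$ in Corollary \ref{cor:1}.

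What actually makes the arithmetic work is that the rate in Theorem \ref{th:2} should carry $w_{-k}^2$, not $w_{-k}$: this is what Lemma \ref{lemma5} establishes and what the body of the proof of Theorem \ref{th:2} uses (see the display ending in $w_{-k}^2\bigl(\tfrac{1}{p_k^2}+\tfrac{1}{Tp_k}\bigr)+m_{-k}$); the $w_{-k}$ in the theorem's statement is a typo. With $w_{-k}^2\lesssim\sum_{j\neq k}\bigl(\tfrac{1}{p_j^4}+\tfrac{1}{T^2p_{-j}^2}\bigr)$ (Cauchy--Schwarz, $K$ fixed), the cross terms become $\tfrac{1}{Tp_kp_j^4}$, $\tfrac{1}{p_k^2p_j^4}$, $\tfrac{1}{T^3p_kp_{-j}^2}$, $\tfrac{1}{T^2p_k^2p_{-j}^2}$; the first is at most $\tfrac{1}{Tp_j^2}$, the second is retained, the third and fourth are at most $\tfrac{1}{T^2p_{-j}^2}$. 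That is the calculation you need to run; as written, your argument chains inequalities that do not hold.
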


Corollary \ref{cor:1} states that if we choose - as initial estimators - the
PCA based $\widehat{\mathbf{A}}_{k}$ defined in (\ref{eq:Ahatkdef}), then
the one-step projection estimator $\widetilde{\mathbf{A}}_{k}$ might achieve
faster convergence rates than $\widehat{\mathbf{A}}_{k}$. In particular, by
comparing the convergence rates of $\widehat{\mathbf{A}}_{k}$ in Theorem \ref%
{th:1}, and the ones of $\widetilde{\mathbf{A}}_{k}$ in Corollary \ref{cor:1}%
, we see that the latter might improve on the former, especially when $p_{k}$
is \textquotedblleft small\textquotedblright .

Finally, the following theorem presents the asymptotic distributions of the
projected estimators of the loadings.

\begin{theorem}
	\label{th:5}{\it We assume that Assumptions \ref{as-1}-\ref{as-5} are satisfied,
		and that the and that the projected estimators $\left\{ \widetilde{\mathbf{A}%
		}_{k},1\leq k\leq K\right\} $ are computed using $\left\{ \widehat{\mathbf{A}%
		}_{k},1\leq k\leq K\right\} $ defined in (\ref{eq:Ahatkdef}).
		Then, for any given $1\leq i\leq p_{k}$ and $1\le k\le K$, as $\min \left\{
		T,p_{1},...,p_{K}\right\} \rightarrow \infty $
		
		(i) if $K\le 2$ and
		$Tp_{-k}=o\left( \min \left\{
		p^{2},T^{2}p_{-j}^{2},Tp_{j}^{2},p_{k}^{2}p_{j}^{2}\right\} \right)$, for  all $j\ne k$, then it holds that%
		\begin{equation}
			\sqrt{Tp_{-k}}\left( \widetilde{\mathbf{A}}_{k,i\cdot }^{\mathbf{\top }}-%
			\widetilde{\mathbf{H}}_{k}^{\mathbf{\top }}\mathbf{A}_{k,i\cdot }^{\mathbf{%
					\top }}\right) \overset{D}{\rightarrow }\mathcal{N}\left( \mathbf{0},\mathbf{%
				\Lambda }_{k}^{-1}\mathbf{\Gamma }_{k}^{\mathbf{\top }}\mathbf{V}_{ki}%
			\mathbf{\Gamma }_{k}\mathbf{\Lambda }_{k}^{-1}\right) ,  \label{th5-1}
		\end{equation}%
		with $\mathbf{\Gamma }_{k}$ and $\mathbf{\Lambda }_{k}$ defined in
		Assumption \ref{as-1}\textit{(iii)} and $\mathbf{V}_{ki}$ defined in
		Assumption \ref{as-5};
		
		(ii) if $K\le 2$ and $Tp_{-k}\gtrsim \min \left\{
		p^{2},T^{2}p_{-j}^{2},Tp_{j}^{2},p_{k}^{2}p_{j}^{2}\right\}$,
		for all $j\ne k$, or if $K\ge 3$, then it holds that $\left\Vert \widetilde{\mathbf{A}}_{k,i\cdot }^{\mathbf{\top }}-\widetilde{%
			\mathbf{H}}_{k}^{\mathbf{\top }}\mathbf{A}_{k,i\cdot }^{\mathbf{\top }%
		}\right\Vert =O_{P}\left( \frac{1}{p}+\sum_{j=1,j\neq k}^{K}\left( \frac{1}{%
			\sqrt{T}p_{j}}+\frac{1}{Tp_{-j}}+\frac{1}{p_{k}p_{j}^{2}}\right) \right) .
		$
	}
\end{theorem}

Comparing part (i) of the theorem with part (i) of Theorem \ref{th:7} (which
holds under the same assumptions and restrictions on the relative rate of
divergence of the dimensions $T$, $p_{1}$,..., $p_{K}$ as they pass to
infinity), it emerges that the projected estimator $\widetilde{\mathbf{A}}%
_{k,i\cdot }$ has the same consistency rate of the initial estimator $%
\widehat{\mathbf{A}}_{k,i\cdot }$ and it is equally efficient. Indeed, due
to its iterative nature, $\widetilde{\mathbf{A}}_{k,i\cdot }$ behaves
similarly to a classical one-step estimator (see, e.g.,
\cite[Theorem
4.3]{LC06}). {We would like to emphasize that this result holds only for
	vector- or matrix-valued time series - i.e. when $K\leq 2$. Indeed, it is
	easy to see that if $K>2$, then the required constraint between the rates of
	divergence of $T$ and $p_{k}$, $1\leq k\leq K$, is never satisfied. For the
	case of higher-order tensors, i.e. }$K\geq 3${, comparing the results in
	part (ii) of Theorem \ref{th:7} with part (ii) of Theorem \ref{th:5}, it
	emerges that $\widetilde{\mathbf{A}}_{k,i\cdot }$ has a faster rate of
	convergence - this result can be understood by noting that, unlike the
	initial estimator $\widehat{\mathbf{A}}_{k,i\cdot }$, the projected
	estimator $\widetilde{\mathbf{A}}_{k,i\cdot }$ it is built by using all the
	information contained in all other modes. In this case however no asymptotic
	distribution can be derived under the present set of assumptions.}

However, if we consider  
$\widehat{\mathbf{B}}_{k}^{\ast }$ as
initial estimators (see Section \ref{sec:2.1.2}) we can derive also asymptotic normality. Let
\begin{equation*}
	\widetilde{w}_{k}^{\ast }=\frac{1}{Tp_{-k}}+\frac{1}{p^{2}}+\frac{1}{%
		T^{2}p_{k}^{2}}.
\end{equation*}%
The following theorem summarizes the asymptotics (rates and limiting
distribution) of $\widetilde{\mathbf{A}}_{k}^{\ast }$.

\begin{theorem}
	\label{atilde}{\it We assume that Assumptions \ref{as-1}-\ref{as-4} are
		satisfied. Then, as $\min \{T,p_{1},\ldots ,p_{k}\}\rightarrow
		\infty $, it holds that, for all $1\leq k\leq K$ $
		\frac{1}{p_{k}}\left\Vert \widetilde{\mathbf{A}}_{k}^{\ast }-\mathbf{A}_{k}%
		\widetilde{\mathbf{H}}_{k}^{\ast }\right\Vert _{F}^{2}=O_{P}\left(
		\widetilde{w}_{k}^{\ast }\right),$
		where $\widetilde{\mathbf{H}}_{k}^{\ast }$ is defined in (\ref{h-tilde-star}%
		). Further, if Assumption \ref{as-5} is also satisfied, and if $%
		Tp_{-k}=o\left( \min \left\{ p^{2},T^{2}p_{k}^{2}\right\} \right) $, then it
		holds that $\sqrt{Tp_{-k}}\left( \widetilde{\mathbf{A}}_{k,i}^{\ast \top }-\widetilde{%
			\mathbf{H}}_{k}^{\ast \top }\mathbf{A}_{k,i}^{\top }\right) \overset{%
			\mathcal{D}}{\rightarrow }\mathcal{N}\left( \mathbf{0},\mathbf{\Lambda }%
		_{k}^{-1}\mathbf{\Gamma }_{k}\mathbf{V}_{k,i}\mathbf{\Gamma }_{k}^{\top }%
		\mathbf{\Lambda }_{k}^{-1}\right) , $
		where $\mathbf{\Lambda }_{k}$, $\mathbf{\Gamma }_{k}$ and $\mathbf{V}_{k,i}$
		are defined in Theorem \ref{th:5}.}
\end{theorem}

In order to appreciate the differences
in the convergence rates between $\widetilde{\mathbf{A}}_{k}$ and $%
\widetilde{\mathbf{A}}_{k}^{\ast }$, let us assume for simplicity that that $%
p_{1}\asymp \cdots \asymp p_{K}\asymp \overline{p}$. When $K=2$, then $%
\widetilde{w}_{k}=O_{P}((\overline{p}^{-4}+(T\overline{p}^{-1})$ and $%
\widetilde{w}_{k}^{\ast }=\widetilde{w}_{k}$, since $\widehat{\mathbf{B}}%
_{k}^{\ast }=\widehat{\mathbf{B}}_{k}$ in this case. Similarly, when $K=3$, $%
\widetilde{w}_{k}=O_{P}\left( (\overline{p}^{-6}+(T\overline{p}%
^{2})^{-1}\right) $ and the equality $\widetilde{w}_{k}^{\ast }=\widetilde{w}%
_{k}$ still holds. When $K>3$, $\widetilde{w}_{k}=O_{P}\left( (\overline{p}%
^{-6}+(T\overline{p}^{2})^{-1}\right) $, and $\widetilde{w}_{k}^{\ast
}=O_{P}\left( (\overline{p}^{-2K}+(T\overline{p}^{-2}+(T\overline{p}%
^{(K-1)})^{-1}\right) $. This shows that when $K$ is large (and, also, when $%
T$ is large), the estimator $\widetilde{\mathbf{A}}_{k}^{\ast }$ will have
faster converge rate than $\widetilde{\mathbf{A}}_{k}$. Note however that,
in order to obtain $\widehat{\mathbf{B}}_{k}^{\ast }$, one needs to perform
the eigen-decomposition of a very large scale, $p_{-k}\times p_{-k}$,
covariance matrix, with computational complexity $O\left( p_{-k}^{3}\right) $%
. Conversely, $\widetilde{\mathbf{A}}_{k}$ requires calculating, as initial
estimators, the loading matrices $\widehat{\mathbf{A}}_{k}$ separately; this
requires performing the eigen-decompositions of $p_{k}\times p_{k}$
covariance matrices, and then calculating Kronecker product of $(K-1)$ small
scale matrices of dimension $p_{k}\times r_{k}$, which has (lower)
computational complexity $O\left( \sum_{j\neq k}p_{j}^{3}+p_{-k}\times
r_{-k}\right) $.

Finally, we notice that in the case of strong factors as considered in this paper, 
\citet{chen2022rank} have a (squared) consistency rate which is $\min(T^2,p^2,Tp_{-k})$ which is comparable to the rate $\min(T^2p_k^2,p^2,Tp_{-k})$ derived in Theorem \ref{atilde}.

\subsection{Asymptotic properties of the factors and common component\label{factors}}

As long as the loading matrices are determined, the factor $\cF_{t}$ can be
estimated easily by
$\widetilde{\cF}_t=\frac 1p\cF_t\times_{k=1}^K\widetilde{\Ab}_k^{\top},$
where the projected estimators $\left\{ \widetilde{\mathbf{A}}_{k},1\leq
k\leq K\right\} $ are computed using $\left\{ \widehat{\mathbf{A}}_{k},1\leq
k\leq K\right\} $ defined in (\ref{eq:Ahatkdef}).
The next theorem provides the consistency of the asymptotic properties of
the estimated factors.

\begin{theorem}\label{th:6}{\it
		We assume that Assumptions \ref{as-1}-\ref{as-5} are satisfied.
		Then, for any given $1\leq i\leq p_{k}$ and $1\le k\le K$, as $\min \left\{
		T,p_{1},...,p_{K}\right\} \rightarrow \infty$

		(i) if  $p=o\left(\min\left\{Tp_{-k}^2,Tp_k p_j^2,Tp_{-k} p_j^2,p_k^2p_j^4\right\}\right)$, for all $j\ne k$,
		then it holds that
		\[
		\sqrt p\left(\text{\upshape Vec}\left(\widetilde{\mathbf F}_{t,k}\right)-\left(\widetilde {\mathbf H}_{-k} \otimes \widetilde {\mathbf H}_{k}\right)^{-1} \text{\upshape Vec}\left({\mathbf F}_{t,k}\right)\right)\overset{D}{\rightarrow }\mathcal{N}\left( \mathbf{0},\left(\mathbf H_{-k}\otimes\mathbf H_{k} \right)^\top\mathbf{W}_{kt}\left(\mathbf H_{-k}\otimes\mathbf H_{k} \right)\right) ,  \label{th5-1}
		\]
		with
		$\mathbf H_{k} = {P\text{-}\lim}_{\min \left\{
			T,p_{1},...,p_{K}\right\} \rightarrow \infty} \widetilde{\mathbf H}_{k}$ and $\mathbf H_{-k}=\otimes_{j\in[K]\slash \{k\}} \mathbf H_{j}$, and $\mathbf{W}_{kt}$ defined in
		Assumption \ref{as-5};
		
		(ii) if 	$p\gtrsim \min\left\{Tp_{-k}^2,Tp_k p_j^2,Tp_{-k} p_j^2,p_k^2p_j^4\right\}$,
		for all $j\ne k$,
		then it holds that
		\[
		\left\|\widetilde{\cF}_{t}- \cF_{t}\times_1\widetilde{\mathbf{H}}_{1}^{-1}\times_2\cdots\times_K\widetilde{\mathbf{H}}_{K}^{-1}\ \right\|=O_{P}\left(\sum_{k=1}^{K} \left(\frac{1}{\sqrt{T} p_{-k}}+\sum_{j\neq k}\left(\frac{1}{\sqrt{T p_{k}} p_{j}}+\frac{1}{\sqrt{T p_{-k}} p_{j}}+\frac{1}{p_{k} p_{j}^{2}}\right)\right)\right).
		\]
	}
\end{theorem}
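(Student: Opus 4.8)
The plan is to obtain an exact ``signal plus noise'' expansion of $\widetilde{\cF}_t$ in mode-$k$ matricised form, to identify the correct rotation, and then to reduce the idiosyncratic part to a single asymptotically Gaussian leading term plus remainders; the rate in part (ii) will come from collecting the orders of all remainders, and the CLT of part (i) will follow once the rate restriction forces every remainder below $p^{-1/2}$. First I would substitute $\cX_t=\cF_t\times_{k=1}^K\Ab_k+\cE_t$ into \eqref{eq:Ftildeest}; writing $\widetilde{\Bb}_k:=\otimes_{j\in[K]\setminus\{k\}}\widetilde{\Ab}_j$ and using $\widetilde{\Ab}_k^\top\widetilde{\Ab}_k=p_k\Ib_{r_k}$ (and $\widetilde{\Bb}_k^\top\widetilde{\Bb}_k=p_{-k}\Ib_{r_{-k}}$, up to a Kronecker-product error to be tracked), the mode-$k$ unfolding of $\widetilde{\cF}_t$ reads
\begin{equation*}
\widetilde{\Fb}_{k,t}=\Big(\tfrac{1}{p_k}\widetilde{\Ab}_k^\top\Ab_k\Big)\Fb_{k,t}\Big(\tfrac{1}{p_{-k}}\widetilde{\Bb}_k^\top\Bb_k\Big)^{\!\top}+\tfrac1p\,\widetilde{\Ab}_k^\top\Eb_{k,t}\widetilde{\Bb}_k=:\mathrm{(I)}+\mathrm{(II)} .
\end{equation*}
The object of comparison is $\widetilde{\mathbf H}_k^{-1}\Fb_{k,t}\widetilde{\mathbf H}_{-k}^{-\top}$ with $\widetilde{\mathbf H}_{-k}:=\otimes_{j\neq k}\widetilde{\mathbf H}_j$, which equals $\mathrm{mat}_k(\cF_t\times_1\widetilde{\mathbf H}_1^{-1}\times_2\cdots\times_K\widetilde{\mathbf H}_K^{-1})$ and whose vectorisation is $(\widetilde{\mathbf H}_{-k}\otimes\widetilde{\mathbf H}_k)^{-1}\mathrm{Vec}(\Fb_{k,t})$, matching the centring in the statement.

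For the rotation discrepancy $\mathrm{(I)}-\widetilde{\mathbf H}_k^{-1}\Fb_{k,t}\widetilde{\mathbf H}_{-k}^{-\top}$, write $\bm{\Delta}_k:=\widetilde{\Ab}_k-\Ab_k\widetilde{\mathbf H}_k$. Theorem \ref{th:2} with Corollary \ref{cor:1} gives $p_k^{-1}\|\bm{\Delta}_k\|_F^2=O_P(\widetilde{w}_k)$, while $p_k^{-1}\Ab_k^\top\Ab_k\to\Ib_{r_k}$ and $\widetilde{\mathbf H}_k^\top\widetilde{\mathbf H}_k\overset{P}{\rightarrow}\Ib_{r_k}$ (hence $\widetilde{\mathbf H}_k^\top=\widetilde{\mathbf H}_k^{-1}+o_P(1)$); an analogous bound for $p_{-k}^{-1}\widetilde{\Bb}_k^\top\Bb_k-\widetilde{\mathbf H}_{-k}^{-1}$ follows by telescoping the Kronecker product as in the proof of Lemma \ref{lem:1}. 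Since $\|\cF_t\|_F=O_P(1)$ by Assumption \ref{as-1}\textit{(i)}, $\mathrm{(I)}-\widetilde{\mathbf H}_k^{-1}\Fb_{k,t}\widetilde{\mathbf H}_{-k}^{-\top}$ is $O_P$ of a sum of products of these two bounds. The refinement that converts them into the sharp contributions $T^{-1/2}p_{-k}^{-1}$, $T^{-1/2}p_k^{-1/2}p_j^{-1}$, $T^{-1/2}p_{-k}^{-1/2}p_j^{-1}$, $p_k^{-1}p_j^{-2}$ of part (ii) is to use, exactly as in the proof of Theorem \ref{th:5}, that $\bm{\Delta}_k$ is to leading order orthogonal to $\mathrm{span}(\Ab_k)$, so that $p_k^{-1}\Ab_k^\top\bm{\Delta}_k$ is of strictly smaller order than $p_k^{-1/2}\|\bm{\Delta}_k\|_F$.

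For the idiosyncratic part, substitute $\widetilde{\Ab}_k=\Ab_k\widetilde{\mathbf H}_k+\bm{\Delta}_k$, $\widetilde{\Bb}_k=\Bb_k\widetilde{\mathbf H}_{-k}+\bm{\Delta}_{-k}$ and vectorise:
\begin{equation*}
\sqrt{p}\,\mathrm{Vec}(\mathrm{(II)})=\big(\widetilde{\mathbf H}_{-k}^\top\otimes\widetilde{\mathbf H}_k^\top\big)\,\tfrac{1}{\sqrt p}(\Bb_k\otimes\Ab_k)^\top\mathrm{Vec}(\Eb_{k,t})+R_{k,t},
\end{equation*}
where $R_{k,t}$ collects the vectorised cross terms $p^{-1/2}\bm{\Delta}_k^\top\Eb_{k,t}\Bb_k\widetilde{\mathbf H}_{-k}$, $p^{-1/2}\widetilde{\mathbf H}_k^\top\Ab_k^\top\Eb_{k,t}\bm{\Delta}_{-k}$ and $p^{-1/2}\bm{\Delta}_k^\top\Eb_{k,t}\bm{\Delta}_{-k}$. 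By Assumption \ref{as-5}\textit{(ii)} the leading term converges to $(\mathbf H_{-k}\otimes\mathbf H_k)^\top\mathcal N(\mathbf 0,\mathbf W_{kt})$, using $\widetilde{\mathbf H}_k^\top=\widetilde{\mathbf H}_k^{-1}+o_P(1)$ and $\widetilde{\mathbf H}_k\overset{P}{\rightarrow}\mathbf H_k$, and $(\mathbf H_{-k}\otimes\mathbf H_k)^\top=\mathbf H_{-k}^\top\otimes\mathbf H_k^\top$ gives precisely the announced variance. The remainder $R_{k,t}$ is bounded by Cauchy--Schwarz together with $p_k^{-1}\|\bm{\Delta}_k\|_F^2=O_P(\widetilde{w}_k)$, $p_{-k}^{-1}\|\bm{\Delta}_{-k}\|_F^2=O_P(\sum_{j\neq k}\widetilde{w}_j)$ and the moment bounds $\E\|\Ab_k^\top\Eb_{k,t}\|_F^2=O(p)$, $\E\|\Eb_{k,t}\Bb_k\|_F^2=O(p)$ that follow from the weak-dependence conditions in Assumption \ref{as-3}; one then checks $R_{k,t}=o_P(1)$ exactly under $p=o(\min\{Tp_{-k}^2,Tp_kp_j^2,Tp_{-k}p_j^2,p_k^2p_j^4\})$. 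Combined with the preceding step this gives part (i); under the complementary rate condition the rotation discrepancy and $R_{k,t}$ dominate, and converting the mode-$k$ bounds back to the tensor $\widetilde{\cF}_t$ and summing over $k=1,\dots,K$ yields the $O_P$ rate of part (ii).

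The hard part is the sharp control of the cross term $p^{-1/2}\Ab_k^\top\Eb_{k,t}\bm{\Delta}_{-k}$ (and its mirror image $p^{-1/2}\bm{\Delta}_k^\top\Eb_{k,t}\Bb_k$): here $\bm{\Delta}_{-k}$ is the estimation error of the \emph{projected} loadings in all modes $j\neq k$, so a plain Frobenius bound $p^{-1/2}\|\Ab_k^\top\Eb_{k,t}\|_{\mathrm{op}}\|\bm{\Delta}_{-k}\|_F$, which would require a tight operator-norm bound on $\Eb_{k,t}$, is too crude to reach the threshold $o_P(p^{-1/2})$ and the rate of part (ii). Instead one must first extract, by telescoping $\widetilde{\Bb}_k=\otimes_{j\neq k}\widetilde{\Ab}_j$ and plugging in the leading-term expansion of each $\widetilde{\Ab}_j$ from the proof of Theorem \ref{th:5}, a row-wise expansion of $\bm{\Delta}_{-k}$, and then pair its leading part with $\Eb_{k,t}$ through Assumption \ref{as-4}; this is where the dimension restrictions in the hypotheses of (i) and (ii) are consumed, and the careful bookkeeping of which mode-$j$ errors interact with $\Eb_{k,t}$, with $\Fb_{k,t}$, and with one another is the main technical burden.
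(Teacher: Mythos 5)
Your decomposition of $\widetilde{\Fb}_{k,t}$ into a rotation-discrepancy term, a pure-noise term, and cross terms, with the rotation discrepancy sharpened via the near-orthogonality $p_k^{-1}\Ab_k^\top\big(\widetilde{\Ab}_k-\Ab_k\widetilde{\mathbf H}_k\big)$ (the paper's Lemma~\ref{lemma6}), and the pure-noise term handled by Assumption~\ref{as-5}\textit{(ii)} after vectorisation, is exactly the paper's proof. The one place where you go off course is the concluding paragraph on the idiosyncratic cross terms.

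You correctly observe that the crude bound $p^{-1/2}\|\Ab_k^\top\Eb_{k,t}\|\cdot\|\widetilde{\Bb}_k-\Bb_k\widehat{\mathbf H}_{-k}\|_F$ does not deliver $o_P(1)$ (indeed $\|\widetilde{\Bb}_k-\Bb_k\widehat{\mathbf H}_{-k}\|_F$ can be $O_P(1)$ or worse once $K\geq 3$). But the remedy the paper uses is far simpler than what you propose, and it does not involve a leading-term/row-wise expansion of $\widetilde{\Bb}_k$ or any appeal to Assumption~\ref{as-4}. One just telescopes $\widetilde{\Bb}_k=\otimes_{j\neq k}\widetilde{\Ab}_j$ so that each first-order cross term carries exactly one factor $\widetilde{\Ab}_j-\Ab_j\widetilde{\mathbf H}_j$, then \emph{re-matricises that term in mode $j$}. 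After this regrouping the term takes the shape $p^{-1}\big(\widetilde{\Ab}_j-\Ab_j\widetilde{\mathbf H}_j\big)^\top\Eb_{j,t}\Bb_j$, and a plain Cauchy--Schwarz in Frobenius norm already gives $p^{-1}\|\widetilde{\Ab}_j-\Ab_j\widetilde{\mathbf H}_j\|_F\,\|\Eb_{j,t}\Bb_j\|_F=O_P\big(\widetilde w_j/p_{-j}\big)^{1/2}$, because $\|\Eb_{j,t}\Bb_j\|_F=O_P(\sqrt p)$ by the weak cross-sectional dependence of the noise (Lemma~\ref{lemma1}\textit{(i)}, via Assumption~\ref{as-3}, not Assumption~\ref{as-4}). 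In short: the gain comes from keeping $\Eb$ coupled to the modes carrying $\Ab$-factors rather than $\Delta$-factors (so $\Bb_j$ averages down $\Eb_{j,t}$ over the huge dimension $p_{-j}$), not from a refined stochastic expansion of the loading errors. Assumption~\ref{as-4} and the leading-term expansions enter only through Lemma~\ref{lemma6} for the signal-side term $p_k^{-1}\Ab_k^\top\big(\widetilde{\Ab}_k-\Ab_k\widetilde{\mathbf H}_k\big)$, which you did treat correctly in the second paragraph.
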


The theorem states the asymptotic distribution of the estimated factor tensor, which holds for any finite tensor order $K$ and it can be compared with
Theorem 1 in \citet{bai2003inferential} which holds for $K=1$, i.e. for the vector factor model.
Two comments are worth making. First, notice that an explicit expression for the matrix $\mathbf H_{k}$ can be derived from the definition of
$\widetilde{\mathbf H}_{k}$ given in \eqref{eq:Htildedefinition} in \ref{proofs}. However, we prefer not to write those expressions explicitly to avoid introducing further notation. Second, a similar result can be derived for the factor tensor estimators
$\widehat{\cF}_t=\frac 1p\cF_t\times_{k=1}^K\widehat{\Ab}_k^{\top}$ and $\widetilde{\cF}^*_t=\frac 1p\cF_t\times_{k=1}^K\widetilde{\Ab}_k^{*\top}$,  which are estimated
using the initial estimator of the loadings $\left\{ \widehat{\mathbf{A}}_{k},1\leq
k\leq K\right\} $ or the estimator of the loadings $\left\{ \widetilde{\mathbf{A}}_{k}^*,1\leq
k\leq K\right\} $, respectively. These results are omitted for
brevity.

Finally, the estimated common component is given by
$\widetilde{\cS}_t=\widetilde{\cF}_t\times_{k=1}^K\widetilde{\Ab}_k$.
The next theorem provides consistency of the estimated common component.

\begin{theorem}\label{th:6bis}{\it
		We assume that Assumptions \ref{as-1}-\ref{as-5} are satisfied.
		Then, for any given $1\leq i_k\leq p_{k}$, $1\le k\le K$, and $1\le t\le T$, as $\min \left\{
		T,p_{1},...,p_{K}\right\} \rightarrow \infty $
		$$
		\left|\widetilde{S}_{t, i_1\cdots i_K}-S_{t, i_1\cdots i_K}\right|=O_{P}\left(\frac{1}{\sqrt{p}}+\sum_{k=1}^{K}\left(\frac{1}{\sqrt{T} p_{k}}+\frac{1}{\sqrt{T p_{-k}}}+\sum_{j\neq k} \frac{1}{p_{k} p_{j}^{2}}\right)\right).
		$$
	}
\end{theorem}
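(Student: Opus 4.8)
The plan is to exploit that, although $\widetilde{\mathbf{A}}_k$ and $\widetilde{\mathcal{F}}_t$ estimate $\mathbf{A}_k$ and $\mathcal{F}_t$ only up to a rotation, the common component $\mathcal{S}_t=\mathcal{F}_t\times_{k=1}^K\mathbf{A}_k$ is rotation-free, so the rotation matrices cancel in the leading term. Write $\widetilde{\mathbf{A}}_k=\mathbf{A}_k\widetilde{\mathbf{H}}_k+\boldsymbol{\Delta}_k$ with $\boldsymbol{\Delta}_k:=\widetilde{\mathbf{A}}_k-\mathbf{A}_k\widetilde{\mathbf{H}}_k$, and $\widetilde{\mathcal{F}}_t=\mathcal{G}_t+\boldsymbol{\Xi}_t$ with $\mathcal{G}_t:=\mathcal{F}_t\times_{k=1}^K\widetilde{\mathbf{H}}_k^{-1}$ and $\boldsymbol{\Xi}_t:=\widetilde{\mathcal{F}}_t-\mathcal{G}_t$. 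Substituting into $\widetilde{\mathcal{S}}_t=\widetilde{\mathcal{F}}_t\times_{k=1}^K\widetilde{\mathbf{A}}_k$, expanding by multilinearity of the mode products and using $(\mathcal{X}\times_k\mathbf{M})\times_k\mathbf{N}=\mathcal{X}\times_k(\mathbf{N}\mathbf{M})$ together with the commutativity of mode products in distinct modes, the ``all-true'' term equals $\mathcal{G}_t\times_{k=1}^K(\mathbf{A}_k\widetilde{\mathbf{H}}_k)=\mathcal{F}_t\times_{k=1}^K\mathbf{A}_k=\mathcal{S}_t$, so $\widetilde{\mathcal{S}}_t-\mathcal{S}_t$ is a sum of $2^{K+1}-1$ remainder terms, each a mode-wise product of one factor-type object ($\mathcal{G}_t$ or $\boldsymbol{\Xi}_t$) with $K$ loading-type objects ($\mathbf{A}_k\widetilde{\mathbf{H}}_k$ or $\boldsymbol{\Delta}_k$), at least one of the $K+1$ objects being an error.

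Next I would pass to the $(i_1,\dots,i_K)$ entry, which amounts to replacing each loading-type matrix in mode $k$ by (the transpose of) its $i_k$-th row; thus each remainder entry is a scalar of the form $\mathcal{Z}\times_1\mathbf{v}_1^\top\times_2\cdots\times_K\mathbf{v}_K^\top$ with $\mathcal{Z}\in\{\mathcal{G}_t,\boldsymbol{\Xi}_t\}$ and $\mathbf{v}_k\in\{\widetilde{\mathbf{H}}_k^\top\mathbf{A}_{k,i_k\cdot}^\top,\ \boldsymbol{\Delta}_{k,i_k\cdot}^\top\}$, and (after vectorizing and applying Cauchy--Schwarz) is bounded by $\|\mathcal{Z}\|_F\prod_{k=1}^K\|\mathbf{v}_k\|$. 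The required estimates on the ingredients are: $\|\mathcal{G}_t\|_F=O_P(1)$ from Assumption \ref{as-1}\textit{(i)} together with $\widetilde{\mathbf{H}}_k^\top\widetilde{\mathbf{H}}_k\overset{P}{\rightarrow}\mathbf{I}_{r_k}$ (which also gives $\|\widetilde{\mathbf{H}}_k\|,\|\widetilde{\mathbf{H}}_k^{-1}\|=O_P(1)$); $\|\mathbf{A}_{k,i_k\cdot}\|=O(1)$ from Assumption \ref{as-2}\textit{(i)}; $\|\boldsymbol{\Delta}_{k,i_k\cdot}\|=\|\widetilde{\mathbf{A}}_{k,i_k\cdot}^\top-\widetilde{\mathbf{H}}_k^\top\mathbf{A}_{k,i_k\cdot}^\top\|=O_P(\cdot)$ with the rate of Theorem \ref{th:5} (part \textit{(i)} or \textit{(ii)} according to the relative rates of divergence of the dimensions); and $\|\boldsymbol{\Xi}_t\|_F=\|\widetilde{\mathcal{F}}_t-\mathcal{F}_t\times_{k=1}^K\widetilde{\mathbf{H}}_k^{-1}\|=O_P(\cdot)$ with the rate of Theorem \ref{th:6} (part \textit{(i)} giving $p^{-1/2}$, part \textit{(ii)} the displayed sum). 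The dominant remainders are the $K$ ``one-loading-error'' terms, contributing $\sum_k O_P(\|\boldsymbol{\Delta}_{k,i_k\cdot}\|)$, and the ``one-factor-error'' term, contributing $O_P(\|\boldsymbol{\Xi}_t\|_F)$; every other remainder carries at least two error objects and is of strictly smaller order, since all the error rates are $o_P(1)$.

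It then remains to check that, in every admissible regime, the sum of these contributions is $O_P$ of the stated rate. The factor-error term supplies the $p^{-1/2}$ term via Theorem \ref{th:6}\textit{(i)}, while in the complementary regime the bound of Theorem \ref{th:6}\textit{(ii)} is $\lesssim\sum_k\big(\tfrac{1}{\sqrt{Tp_{-k}}}+\sum_{j\neq k}\tfrac{1}{p_kp_j^2}\big)$ after using $\sqrt{T}\,p_{-k}\ge\sqrt{Tp_{-k}}$ and $\sqrt{Tp_k}\ge\sqrt{T}$. For the loading-error terms, Theorem \ref{th:5}\textit{(i)} gives $\sum_k 1/\sqrt{Tp_{-k}}$, whereas Theorem \ref{th:5}\textit{(ii)} (which in particular covers all $K\ge 3$) gives, after summing over $k$ and relabelling the inner index, $O_P\big(p^{-1}+\sum_k\tfrac{1}{\sqrt{T}p_k}+\sum_k\tfrac{1}{Tp_{-k}}+\sum_k\sum_{j\neq k}\tfrac{1}{p_kp_j^2}\big)$, where $p^{-1}\le p^{-1/2}$ and $\tfrac{1}{Tp_{-k}}\le\tfrac{1}{\sqrt{Tp_{-k}}}$. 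Collecting all pieces yields exactly the bound of Theorem \ref{th:6bis}. The only genuinely delicate step is this final bookkeeping of rates across the two regimes of Theorems \ref{th:5} and \ref{th:6}; the algebraic decomposition and the norm estimates are routine once the rotation cancellation is in place.
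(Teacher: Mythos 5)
Your proposal is correct and takes essentially the same route as the paper: decompose the entry-wise error $\widetilde{S}_{t,i_1\cdots i_K}-S_{t,i_1\cdots i_K}$ into one factor-error piece and $K$ loading-error pieces (the rotation $\widetilde{\mathbf H}_k$ cancelling in the leading term because $\mathcal S_t$ is rotation-invariant), bound each piece by a product of norms, and invoke Theorem \ref{th:5} and Theorem \ref{th:6} for the rates. The only difference is expository: you spell out the full multilinear expansion into $2^{K+1}-1$ remainder terms and explicitly argue that the multi-error terms are of strictly smaller order, and you perform the regime-by-regime bookkeeping in detail; the paper compresses all of this into a single two-line telescoping estimate and cites the earlier theorems, but the content is the same.
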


The consistency rate depends of the rates for $\{\widetilde{\Ab}_k,\,1\le k \le K\}$ in Theorem \ref{th:5} and for $\widetilde{\cF}_t$ in Theorem \ref{th:6}.
Two other estimators of the common component can be considered:
$\widehat{\cS}_t=\widehat{\cF}_t\times_{k=1}^K\widehat{\Ab}_k$ and $\widetilde{\cS}_t^*=\widetilde{\cF}_t^*\times_{k=1}^K\widetilde{\Ab}_k^*$,
and the proof of their consistency is straightforward and thus it is omitted.  The proof of asymptotic normality for $\widetilde{\cS}_t$, $\widehat{\cS}_t$, and $\widetilde{\cS}_t^*$ is also omitted for brevity. Indeed, it can be easily proved in a similar way as in %
\citet{bai2003inferential}. Notice that, while asymptotic normality of $\widehat{\cS}_t$ and $\widetilde{\cS}_t^*$ would hold for any finite tensor order $K$, for
$\widetilde{\cS}_t$ it would hold only if $K\le 2$, since only in that case
we can prove asymptotic normality of the corresponding estimated loadings.

\subsection{Asymptotic properties of the estimators for factor numbers\label%
	{nfactors}}

In this section, we establish the consistency of the proposed estimators of
factor numbers. We first focus on the estimators based on the simple
mode-wise sample covariance matrix $\widehat{\Mb}_{k}$. The following
theorem shows that the $\hat{r}_{k}^{\text{IE-ER}}$ defined in (\ref%
{equ:2}) are consistent.

\begin{theorem}
	\label{th:4}{\it We assume that Assumptions \ref{as-1}-\ref{as-4} are satisfied,
		and that $r_{k}\geq 1$, for all $1\leq k\leq K$, with $r_{\max }\geq
		\max_{1\leq k\leq K}r_{k}$. Then, for all $1\leq k\leq K$, as $\min \left\{ T,p_{1},...,p_{K}\right\}
		\rightarrow \infty $, it holds that
		$\mathbb{P}\left( \hat{r}_{k}^{\,\text{\upshape IE-ER}}=r_{k}\right) \to1$. }
\end{theorem}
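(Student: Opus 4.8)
Write $\delta:=\delta_{p_1,\ldots,p_K,T}=(Tp_{-k})^{-1/2}+p_k^{-1}$, as in \eqref{delta}. The plan is to reduce consistency of $\widehat r_k^{\,\text{IE-ER}}$ to three facts about the eigenvalues of $\widehat{\mathbf{M}}_k$: \textit{(a)} $\lambda_j(\widehat{\mathbf{M}}_k)$ is bounded away from $0$ and $\infty$ in probability for $1\le j\le r_k$; \textit{(b)} $\lambda_j(\widehat{\mathbf{M}}_k)=O_P(\delta)$ for $j>r_k$; \textit{(c)} $\widehat c$ is bounded away from $0$ and $\infty$ in probability. Granted \textit{(a)}--\textit{(c)}, I would inspect the ratio in \eqref{equ:2}: for $j<r_k$ both numerator and denominator are bounded away from $0$ and $\infty$ (the denominator because $\lambda_{j+1}(\widehat{\mathbf{M}}_k)$ is, for $j+1\le r_k$, while the penalty is $o_P(1)$), so the ratio is $O_P(1)$; for $r_k<j\le r_{\max}$ the numerator is $O_P(\delta)$ while the denominator is at least $\widehat c\,\delta\gtrsim\delta$, so the ratio is again $O_P(1)$; and for $j=r_k$ the numerator is bounded below by a positive constant while the denominator $\lambda_{r_k+1}(\widehat{\mathbf{M}}_k)+\widehat c\,\delta$ is of exact order $\delta$, so the ratio diverges like $\delta^{-1}$. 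Since $r_k\ge1$ and $r_{\max}\ge r_k$, the index $r_k$ is feasible and, with probability tending to one, it is the unique maximizer, i.e.\ $\mathbb{P}(\widehat r_k^{\,\text{IE-ER}}=r_k)\to1$.

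Facts \textit{(a)} and \textit{(b)} I would obtain by a Weyl perturbation argument. Writing $\mathbf{X}_{k,t}=\mathbf{A}_k\mathbf{F}_{k,t}\mathbf{B}_k^{\top}+\mathbf{E}_{k,t}$ in $\widehat{\mathbf{M}}_k=(Tp)^{-1}\sum_{t=1}^T\mathbf{X}_{k,t}\mathbf{X}_{k,t}^{\top}$, set $\mathbf{M}_k^S:=(Tp)^{-1}\sum_t\mathbf{A}_k\mathbf{F}_{k,t}\mathbf{B}_k^{\top}\mathbf{B}_k\mathbf{F}_{k,t}^{\top}\mathbf{A}_k^{\top}$, which has rank at most $r_k$. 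By Assumption \ref{as-2} we have $p_k^{-1}\mathbf{A}_k^{\top}\mathbf{A}_k\to\mathbf{I}_{r_k}$ and $p_{-k}^{-1}\mathbf{B}_k^{\top}\mathbf{B}_k=\otimes_{j\ne k}(p_j^{-1}\mathbf{A}_j^{\top}\mathbf{A}_j)\to\mathbf{I}_{r_{-k}}$, so combined with Assumption \ref{as-1}\textit{(ii)} the nonzero eigenvalues of $\mathbf{M}_k^S$ converge in probability to those of $\mathbf{\Sigma}_k$, which are positive, finite and distinct; thus $\lambda_j(\mathbf{M}_k^S)$ is bounded away from $0$ and $\infty$ in probability for $1\le j\le r_k$ and $\lambda_j(\mathbf{M}_k^S)=0$ for $j>r_k$. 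It then remains to show $\|\widehat{\mathbf{M}}_k-\mathbf{M}_k^S\|=O_P(\delta)$; this perturbation consists of the cross term $(Tp)^{-1}\sum_t\mathbf{A}_k\mathbf{F}_{k,t}\mathbf{B}_k^{\top}\mathbf{E}_{k,t}^{\top}$ (plus transpose) and the pure-noise term $(Tp)^{-1}\sum_t\mathbf{E}_{k,t}\mathbf{E}_{k,t}^{\top}$. For the former, Assumption \ref{as-4}\textit{(i)} applied columnwise (with $\mathbf{v}$ a canonical basis vector, $\mathbf{w}$ a normalized column of $\mathbf{B}_k$, and $\|\mathbf{A}_k\|^2\asymp p_k$, $\|\mathbf{B}_k\|^2\asymp p_{-k}$) gives $\mathbb{E}\|\sum_t\mathbf{F}_{k,t}\mathbf{B}_k^{\top}\mathbf{E}_{k,t}^{\top}\|_F^2\lesssim pT$, so the cross term is $O_P((Tp_{-k})^{-1/2})$. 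For the latter, write it as $p_k^{-1}\{(Tp_{-k})^{-1}\sum_t\mathbf{E}_{k,t}\mathbf{E}_{k,t}^{\top}\}$; the bracketed matrix has mean $p_{-k}^{-1}\mathbb{E}(\mathbf{E}_{k,t}\mathbf{E}_{k,t}^{\top})$ whose row sums are bounded by the cross-sectional summability \eqref{weak-dep-4} (Gershgorin), and fluctuation $o_P(1)$ in Frobenius norm by Assumption \ref{as-3}\textit{(ii)}--\textit{(iii)}, hence it is $O_P(1)$ in operator norm and the term is $O_P(p_k^{-1})$. Combining, $\|\widehat{\mathbf{M}}_k-\mathbf{M}_k^S\|=O_P((Tp_{-k})^{-1/2}+p_k^{-1})=O_P(\delta)$, and Weyl's inequality applied to $\widehat{\mathbf{M}}_k=\mathbf{M}_k^S+(\widehat{\mathbf{M}}_k-\mathbf{M}_k^S)$ yields \textit{(a)} and \textit{(b)}. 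Most of these estimates already appear in the proof of Theorem \ref{th:1} and can be borrowed from there.

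Fact \textit{(c)} is immediate when $\widehat c$ is a fixed positive constant. For the data-driven choice \eqref{bt2}, $\widehat c=\Tr(\widehat{\mathbf{M}}_k)=(Tp)^{-1}\sum_t\|\cX_t\|_F^2$, with $\mathbb{E}\|\cX_t\|_F^2=O(p)$ because $\mathbb{E}\|\cS_t\|_F^2\le\|\mathbf{A}_k\|^2\|\mathbf{B}_k\|^2\,\mathbb{E}\|\mathbf{F}_{k,t}\|_F^2\lesssim p$ by Assumptions \ref{as-1}\textit{(i)} and \ref{as-2}, and $\mathbb{E}\|\cE_t\|_F^2\le cp$ by Assumption \ref{as-3}\textit{(i)}; hence $\widehat c=O_P(1)$, while $\widehat c\ge\lambda_1(\widehat{\mathbf{M}}_k)$ is bounded below by a positive constant with probability tending to one by \textit{(a)}. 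The step I expect to be the main obstacle is the perturbation bound $\|\widehat{\mathbf{M}}_k-\mathbf{M}_k^S\|=O_P(\delta)$ — specifically, the operator-norm control of the cross term through Assumption \ref{as-4}\textit{(i)} and of the pure-noise term through the weak serial and cross-sectional dependence of Assumption \ref{as-3} — though, as noted, this is essentially the computation already carried out for Theorem \ref{th:1}, so the additional work here is mostly bookkeeping.
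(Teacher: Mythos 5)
Your proposal is correct and follows essentially the same route as the paper: establish the eigenvalue dichotomy ($\lambda_j(\widehat{\mathbf{M}}_k)$ bounded away from zero for $j\le r_k$, of order $\delta$ for $j>r_k$) via a Weyl perturbation of the rank-$r_k$ signal matrix, then read off that the penalized ratio is $O_P(1)$ away from $j=r_k$ and diverges like $\delta^{-1}$ at $j=r_k$, and finally bound $\widehat c$ above and below for the data-driven choice \eqref{bt2} exactly as you do via the trace identity. The eigenvalue dichotomy you re-derive from scratch is precisely the content of the paper's Lemma \ref{lemma2} (your decomposition into $\mathbf{M}_k^S$, the cross terms, and the pure-noise term is the paper's $\mathcal{I}+\mathcal{II}+\mathcal{III}+\mathcal{IV}$, and your Gershgorin-type row-sum bound on $\mathbb{E}(\mathbf{E}_{k,t}\mathbf{E}_{k,t}^\top)$ is the paper's $\|\mathbf{U}_{\mathbf{E}}\|_1$ estimate), so nothing is genuinely different in substance; you have simply unpacked the cited lemma inline.
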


In Algorithm \ref{alg2}, we choose to calculate eigenvalue-ratio of the
projection version $\widetilde{\Mb}_{k}$ rather than the initial version $%
\widehat{\Mb}_{k}$, as $\widetilde{\Mb}_{k}$ is more accurate. The
consistency of the estimator outputted by the iterative algorithm is
guaranteed by the following theorem.

\begin{theorem}
	\label{th:4-bis}{\it We assume that Assumptions \ref{as-1}-\ref{as-4} are
		satisfied, and that $r_{k}\geq 1$, for all $1\leq k\leq K$, with $r_{\max
		}\geq \max_{1\leq k\leq K}r_{k}$. If, for all $j\neq k$, it holds that
		$r_{j}\leq \hat{r}_{j}^{\,\left( s-1\right) }\leq r_{\max }$,
		for some $s$ in Algorithm \ref{alg2}, then, for all $1\leq k\leq K$, as $\min \left\{
		T,p_{1},...,p_{K}\right\} \rightarrow \infty $, it holds that
		$\mathbb{P}\left( \hat{r}_{k}^{\,\left( s\right) }=r_{k}\right) \to 1$ thereby, $\mathbb{P}\left( \hat{r}_{k}^{\,\text{\upshape PE-ER}}=r_{k}\right) \to 1.$ 
	}
\end{theorem}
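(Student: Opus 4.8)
The plan is to follow the proof of Theorem~\ref{th:4} almost verbatim, replacing the raw mode-$k$ sample covariance $\widehat{\mathbf{M}}_{k}$ by the projected matrix used in Algorithm~\ref{alg2}, namely $\widetilde{\mathbf{M}}_{k}^{(s)}=(Tp_{k})^{-1}\sum_{t=1}^{T}\widehat{\mathbf{Y}}_{k,t}^{(s)}\widehat{\mathbf{Y}}_{k,t}^{(s)\top}$ with $\widehat{\mathbf{Y}}_{k,t}^{(s)}=p_{-k}^{-1}\mathbf{X}_{k,t}\widehat{\mathbf{B}}_{k}^{(s)}$ and $\widehat{\mathbf{B}}_{k}^{(s)}=\otimes_{j\in[K]\setminus\{k\}}\widehat{\mathbf{A}}_{j}^{(s-1)}$. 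The whole argument reduces to the following eigenvalue dichotomy for $\widetilde{\mathbf{M}}_{k}^{(s)}$: (a) $\lambda_{j}\big(\widetilde{\mathbf{M}}_{k}^{(s)}\big)\asymp 1$ (that is, $\Theta_{P}(1)$) for $1\le j\le r_{k}$, and (b) $\lambda_{j}\big(\widetilde{\mathbf{M}}_{k}^{(s)}\big)=O_{P}\big(\delta_{p_{1},\ldots,p_{K},T}\big)$ for $j>r_{k}$, with $\delta$ as in \eqref{delta}. Granting (a)--(b), and writing $\widetilde{c}$ either for a fixed positive constant or for the data-driven choice \eqref{bt2} (in both cases $\widetilde{c}\,\delta\to 0$ while $\widetilde{c}\,\delta\gtrsim\delta$), we plug into the criterion \eqref{equ:2}: for $j<r_{k}$ the ratio $\lambda_{j}/(\lambda_{j+1}+\widetilde{c}\,\delta)$ is $\Theta_{P}(1)$; for $j=r_{k}$ the numerator is $\Theta_{P}(1)$ whereas the denominator is $O_{P}(\delta)\to 0$, so the ratio diverges; and for $j>r_{k}$ the numerator is $O_{P}(\delta)$ while the denominator is at least $\widetilde{c}\,\delta$, so the ratio stays $O_{P}(1)$. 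Hence the maximiser of \eqref{equ:2} over $1\le j\le r_{\max}$ equals $r_{k}$ with probability tending to one, i.e.\ $\mathbb{P}\big(\widehat{r}_{k}^{(s)}=r_{k}\big)\to 1$.

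The observation that makes (a)--(b) go through is that, \emph{under the maintained hypothesis} $r_{j}\le\widehat{r}_{j}^{(s-1)}\le r_{\max}$ for every $j\ne k$, over-extraction in the other modes is harmless. Each $\widehat{\mathbf{A}}_{j}^{(s-1)}$ equals $\sqrt{p_{j}}$ times a matrix of $\widehat{r}_{j}^{(s-1)}$ orthonormal eigenvectors of $\widehat{\mathbf{M}}_{j}$, so $\Vert\widehat{\mathbf{A}}_{j}^{(s-1)}\Vert^{2}=p_{j}$ irrespective of $\widehat{r}_{j}^{(s-1)}$, whence $\Vert\widehat{\mathbf{B}}_{k}^{(s)}\Vert^{2}=p_{-k}$, exactly as for $\mathbf{B}_{k}$. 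Moreover, by the Davis--Kahan argument already used to prove Theorem~\ref{th:1} and Lemma~\ref{lem:1}, the \emph{first} $r_{j}$ columns of $\widehat{\mathbf{A}}_{j}^{(s-1)}$ consistently span the column space of $\mathbf{A}_{j}$ at rate $w_{j}=p_{j}^{-2}+(Tp_{-j})^{-1}$; hence the column space of $\widehat{\mathbf{B}}_{k}^{(s)}$ contains, up to an error of the order appearing in the sufficient condition \eqref{equ:3.1}, that of $\mathbf{B}_{k}$. It follows that the projected signal $p_{-k}^{-1}\mathbf{A}_{k}\mathbf{F}_{k,t}\mathbf{B}_{k}^{\top}\widehat{\mathbf{B}}_{k}^{(s)}$ still has rank $r_{k}$ and its scaled second moment converges to $p_{k}^{-1}\mathbf{A}_{k}\mathbf{\Sigma}_{k}\mathbf{A}_{k}^{\top}$ up to negligible terms (by Assumption~\ref{as-1}\textit{(ii)} and Assumption~\ref{as-2}\textit{(ii)}), whose nonzero eigenvalues are $\Theta_{P}(1)$; this yields (a). The projected-noise term $(Tp_{k}p_{-k}^{2})^{-1}\sum_{t}\mathbf{E}_{k,t}\widehat{\mathbf{B}}_{k}^{(s)}\widehat{\mathbf{B}}_{k}^{(s)\top}\mathbf{E}_{k,t}^{\top}$ and the two signal--noise cross terms are bounded exactly as in the proofs of Theorems~\ref{th:1}--\ref{th:2}, using the summability conditions of Assumptions~\ref{as-3}--\ref{as-4}: those bounds use only $\Vert\widehat{\mathbf{B}}_{k}^{(s)}\Vert^{2}=p_{-k}$ and the weak-dependence structure of the idiosyncratic component, not the precise value of $\widehat{r}_{-k}^{(s-1)}=\prod_{j\ne k}\widehat{r}_{j}^{(s-1)}$; this yields (b).

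For the ``thereby'' claim one feeds the above into Algorithm~\ref{alg2}. The algorithm is initialised at $r_{k}^{(0)}=r_{\max}\ge\max_{1\le k\le K}r_{k}$, so the hypothesis $r_{j}\le\widehat{r}_{j}^{(0)}\le r_{\max}$ holds trivially for all $j\ne k$; the first part then gives $\mathbb{P}\big(\widehat{r}_{k}^{(1)}=r_{k}\big)\to 1$ for every $1\le k\le K$, and, $K$ being fixed, the intersection over $k$ of these events also has probability tending to one. On that event the hypothesis is again met at $s=2$, so $\widehat{r}_{k}^{(2)}=r_{k}$ for all $k$; thus $(r_{1},\ldots,r_{K})$ is a fixed point of the recursion, the iteration terminates, and the output satisfies $\widehat{r}_{k}^{\text{\upshape PE-ER}}=r_{k}$. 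Intersecting over the (at most $m$, hence finitely many) iteration steps preserves limiting probability one, which gives $\mathbb{P}\big(\widehat{r}_{k}^{\text{\upshape PE-ER}}=r_{k}\big)\to 1$.

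The main obstacle is part (b) in the over-extracted regime: one must verify that the eigenvalues of the projected idiosyncratic covariance stay $O_{P}\big(\delta_{p_{1},\ldots,p_{K},T}\big)$ \emph{uniformly over all admissible} $\widehat{r}_{j}^{(s-1)}$ with $r_{j}\le\widehat{r}_{j}^{(s-1)}\le r_{\max}$, i.e.\ that replacing the true $r_{-k}$-dimensional projector by a larger one, built partly from noise eigenvectors of the $\widehat{\mathbf{M}}_{j}$'s, does not inflate the noise level. This is exactly where $\Vert\widehat{\mathbf{B}}_{k}^{(s)}\Vert^{2}=p_{-k}$, together with the fourth-order summability in Assumption~\ref{as-3}\textit{(iii)}, does the work, paralleling the corresponding estimates in the proofs of Theorems~\ref{th:1} and \ref{th:2}; the remaining steps are routine adaptations of the proof of Theorem~\ref{th:4}.
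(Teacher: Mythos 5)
Your high-level skeleton matches the paper's: reduce everything to an eigenvalue dichotomy for $\widetilde{\mathbf{M}}_{k}^{(s)}$, feed it into the ratio criterion exactly as in Theorem~\ref{th:4}, and obtain the ``thereby'' claim by iterating and intersecting over a bounded number of events. The ``thereby'' paragraph in particular is fine. But the core step you flag as the ``main obstacle'' — the bound $\lambda_{j}\big(\widetilde{\mathbf{M}}_{k}^{(s)}\big)=O_{P}(\delta_{p_{1},\ldots,p_{K},T})$ for $j>r_{k}$ in the over-extracted regime — is a genuine gap, not a routine adaptation. When $\widehat r_{j}^{(s-1)}>r_{j}$, the matrix $\widehat{\mathbf{B}}_{k}^{(s)}$ has more columns than any $\mathbf{B}_{k}\widehat{\mathbf{H}}_{-k}$, so the decomposition $\widehat{\mathbf{B}}_{k}^{(s)}=(\widehat{\mathbf{B}}_{k}^{(s)}-\mathbf{B}_{k}\widehat{\mathbf{H}}_{-k})+\mathbf{B}_{k}\widehat{\mathbf{H}}_{-k}$ underlying Lemmas~\ref{lemma4}--\ref{lemma5} is no longer available, and those lemmas in any case control only the \emph{leading} $r_{k}$ eigenvalues and the terms $\Vert\cdot\widetilde{\mathbf{A}}_{k}\Vert_{F}$, not the operator norm of the projected noise block. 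A crude bound using $\Vert\widehat{\mathbf{B}}_{k}^{(s)}\Vert^{2}=p_{-k}$ alone only gives $\Vert\mathcal{VIII}^{(s)}\Vert=O_{P}(1)$, which is not enough.

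The paper sidesteps this entirely with a Loewner-ordering sandwich. Writing (WLOG, by boundedness of $r_{\max}$) $\widehat{\mathbf{A}}_{j}^{(s-1)}=(\widehat{\mathbf{A}}_{j},\widehat{\gamma}_{j})$ with $\widehat{\mathbf{A}}_{j}^{\top}\widehat{\gamma}_{j}=\mathbf{0}$, one has
\[
\widetilde{\mathbf{M}}_{k}^{(s)}
=\widetilde{\mathbf{M}}_{k}
+\frac{1}{Tp}\sum_{t=1}^{T}\mathbf{X}_{k,t}\Big(\otimes_{j\neq k}\widehat{\gamma}_{j}\widehat{\gamma}_{j}^{\top}\Big)\mathbf{X}_{k,t}^{\top}
\succeq \widetilde{\mathbf{M}}_{k},
\qquad
\widehat{\mathbf{M}}_{k}-\widetilde{\mathbf{M}}_{k}^{(s)}\succeq \mathbf{0},
\]
the latter because each $\mathbf{I}_{p_{j}}-p_{j}^{-1}(\widehat{\mathbf{A}}_{j}\widehat{\mathbf{A}}_{j}^{\top}+\widehat{\gamma}_{j}\widehat{\gamma}_{j}^{\top})$ is a projection. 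Weyl's inequality then yields $\lambda_{j}(\widetilde{\mathbf{M}}_{k})\le\lambda_{j}(\widetilde{\mathbf{M}}_{k}^{(s)})\le\lambda_{j}(\widehat{\mathbf{M}}_{k})$ for every $j$, and the dichotomy follows immediately: for $j\le r_{k}$, the lower bound and Lemma~\ref{lemma4} give $\lambda_{j}(\widetilde{\mathbf{M}}_{k}^{(s)})\ge\lambda_{j}(\boldsymbol{\Sigma}_{k})+o_{P}(1)$; for $j>r_{k}$, the upper bound and Lemma~\ref{lemma2} give $\lambda_{j}(\widetilde{\mathbf{M}}_{k}^{(s)})=O_{P}((Tp_{-k})^{-1/2}+p_{k}^{-1})$. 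No new moment or concentration estimates in the over-extracted regime are required; the hypothesis $\widehat r_{j}^{(s-1)}\ge r_{j}$ enters only to guarantee that $\widehat{\mathbf{A}}_{j}^{(s-1)}$ contains $\widehat{\mathbf{A}}_{j}$ as a block, which is what makes the lower sandwich possible. You should replace the ``routine adaptation of Theorems~\ref{th:1}--\ref{th:2}'' step with this two-sided monotonicity argument; as written your part (b) is not established.
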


The theorem states that as long as we choose $r_k^{(0)}$ larger than real $%
r_k$ at the beginning, the iterative method will yield consistent estimators
of the number of factors. The algorithm is computationally very fast because
it has a large probability to stop within few steps.

\section{Real data analysis\label{empirics}}

We analyzed multi-category import-export network data analyzed in \cite%
{Chen2021Factor}. The data contains total monthly imports and exports for 15
commodity categories between 22 European and American countries from January
2014 to December 2019, with missing values for the export from any country
to itself. More detailed information on data, countries and commodity
categories can be found in \cite{Chen2021Factor}. Similar to \cite%
{Chen2021Factor}, we set the volume of each country's exports to zero and
take a three-month rolling average of the data to eliminate the impact of
unusual large transactions or shipping delays. At this point, this data can
be viewed as a $22 \times 22 \times 15 \times 70$ fourth-order tensor. Each
element $x_{i,j,k,t}$ is the three-month average total export volume of
country $i$ to country $j$ at time $t$ for category $k$ goods. We will model
the multi-category import-export network data according to the following
model. Let $\cX_t=\cF_t\times_1\Ab_1\times_2\Ab_2\times_3\Ab_3+\cE_t$, where
$\cX_t\in \RR^{p_1\times p_2 \times p_3}$ is the observed tensor at time $t$%
, $\cF_t \in \RR^{r_1\times r_2\times r_3}$ is the factor tensor, $\{\Ab_k\in%
\RR^{p_k\times r_k},~k=1,2,3\}$ are the loading matrices, for $t=1,2,\cdots
,70$, $p_1=p_2=22,~p_3=15$, $\{r_k,k=1,2,3\}$ to be determined.

We first use a rolling-validation procedure as in \cite{wang2019factor} to
compare the methods mentioned in Section \ref{sec:4.2}. To implement the
rolling-validation procedure, we add additional November and December 2013
data before making the three-month moving average and normalize the final
data. For each year $t$ from 2017 to 2019, we repeatedly use the $n$
(bandwidth) year observations prior to $t$ to estimate the loading matrices
and then used loadings to estimate the observations and the corresponding
residuals for the 12 months of the year. Specifically, let $\cX_{t}^{i}$ and
$\widehat\cX_{t}^{i}$ be the observed and estimated import-export tensor of
month $i$ in year $t$, and further define the mean squared error as:
\begin{equation*}
	\mathrm{MSE}_{t}=\frac{1}{12 \times 22 \times 22 \times 15}
	\sum_{i=1}^{12}\left\|\widehat\cX_{t}^{i}-\cX_{t}^{i}\right\|_{F}^{2}.
\end{equation*}

\begin{table}[!h]
	\caption{ The averaged MSE of rolling validation for the three-month moving
		average data. $12n$ is the sample size of the training set. $r_1=r_2=r_3=r$
		is the number of factors. ``PE": projection estimation
		method.``IE": initial estimation method. ``TOPUP": Time series Outer-Product
		Unfolding Procedure with $h_0=1$. ``TIPUP": Time series Inner-Product
		Unfolding Procedure with $h_0=1$. ``iTOPUP": Iterative Time series
		Outer-Product Unfolding Procedure with $h_0=1$. ``iTIPUP": Iterative Time
		series Inner-Product Unfolding Procedure with $h_0=1$. ``IPmoPCA": iterative
		projected mode-wise PCA estimation.}
	
	\label{tab:4}\renewcommand{\arraystretch}{0.5} \centering
	\begin{tabular}{ccccccccc}
		\toprule[2pt] $n$ & $r$ & PE & IE & IPmoPCA & TIPUP & iTIPUP & TOPUP & iTOPUP \\
		\midrule 1 & 3 & 0.717165 & 0.719191 & 0.717929 & 0.719222 & 0.717905 &
		0.719204 & 0.717891 \\
		2 & 3 & 0.358891 & 0.359821 & 0.359287 & 0.359839 & 0.359309 & 0.359794 &
		0.359284 \\
		3 & 3 & 0.239331 & 0.239936 & 0.239582 & 0.239945 & 0.239595 & 0.239909 &
		0.239571 \\
		1 & 4 & 0.711102 & 0.712812 & 0.711038 & 0.712861 & 0.711067 & 0.712864 &
		0.711088 \\
		2 & 4 & 0.355766 & 0.356680 & 0.355675 & 0.356697 & 0.355685 & 0.356598 &
		0.355629 \\
		3 & 4 & 0.237183 & 0.237880 & 0.237054 & 0.237888 & 0.237066 & 0.237798 &
		0.237037 \\
		1 & 5 & 0.697179 & 0.699275 & 0.693501 & 0.698985 & 0.693601 & 0.698836 &
		0.693544 \\
		2 & 5 & 0.348440 & 0.349485 & 0.346284 & 0.349321 & 0.346368 & 0.349145 &
		0.346376 \\
		3 & 5 & 0.232304 & 0.233041 & 0.230282 & 0.232942 & 0.230299 & 0.232837 &
		0.230415 \\
		1 & 6 & 0.672986 & 0.675783 & 0.672979 & 0.675047 & 0.672926 & 0.673810 &
		0.672323 \\
		2 & 6 & 0.336668 & 0.338309 & 0.336630 & 0.338103 & 0.336587 & 0.337334 &
		0.335946 \\
		3 & 6 & 0.224361 & 0.225838 & 0.224242 & 0.225780 & 0.224165 & 0.225522 & 0.223770 \\
		\bottomrule[2pt] &  &  &  &  &  &  &  &
	\end{tabular}%
\end{table}

Table \ref{tab:4} compares the mean values of MSE of various estimation
methods for different combinations of bandwidth $n$ and factor number $%
r_1=r_2=r_3=r$. The estimation errors of PE, iTIOPUP and iTIPUP methods are
very close, and the estimation performance of projected methods is better
than that of non-prejected methods.

For processed $22 \times 22 \times 15 \times 70$ fourth-order tensor data,
our PE-ER method suggests $r_1=1,~r_2=3,~r_3=1$, while TCorTh suggests $%
r_1=3,~r_2=1,~r_3=3$, iTOP-ER suggests $r_1=3,~r_2=4,~r_3=1$, iTIP-ER
suggests $r_1=2,~r_2=2,~r_3=1$ and other methods all suggest $r_1=r_2=r_3=1$%
. For better illustration, we take $r_1=r_2=4,~r_3=6$ as same as in \cite%
{Chen2021Factor}.

\begin{table}[!h]
	\caption{ Estimated loading matrix $\Ab_3$ for category fiber.}
	\label{tab:5}\renewcommand{\arraystretch}{0.5}
	\resizebox{\textwidth}{!}{
		\centering
		\begin{tabular}{cccccccccccccccc}
			\toprule[2pt]
			&	Animal	&	Vegetable	&	Food	&	Mineral	&	Chemicals	&	Plastics	&	Leather	&	Wood	&	Textiles	&	Footwear	&	Stone	&	Metals	&	Machinery	&	Transport	&	Misc	\\
			\midrule
			1	&	0	&	-3	&	1	&	0	&	1	&	-2	&	0	&	2	&	-1	&	0	&	-1	&	0	&	\bf{-29}	&	0	&	-6	\\
			2	&	0	&	1	&	0	&	\bf{30}	&	0	&	0	&	0	&	3	&	-1	&	0	&	2	&	1	&	0	&	0	&	1	\\
			3	&	1	&	3	&	0	&	0	&	\bf{-29}	&	-1	&	0	&	0	&	1	&	0	&	2	&	-2	&	0	&	0	&	-8	\\
			4	&	1	&	0	&	4	&	0	&	-1	&	-3	&	0	&	2	&	0	&	0	&	-1	&	2	&	0	&	\bf{29}	&	2	\\
			5	&	6	&	5	&	5	&	0	&	1	&	19	&	1	&	5	&	5	&	1	&	-1	&	17	&	1	&	0	&	-9	\\
			6	&	2	&	2	&	4	&	-2	&	2	&	-2	&	1	&	4	&	2	&	1	&	\bf{29}	&	0	&	-1	&	0	&	0	\\
			\bottomrule[2pt]
	\end{tabular}}
\end{table}

\begin{table}[!h]
	\caption{ Estimated loading matrix $\Ab_1$ for export fiber.}
	\label{tab:6}\renewcommand{\arraystretch}{0.5}
	\resizebox{\textwidth}{!}{
		\centering
		\begin{tabular}{ccccccccccccccccccccccc}
			\toprule[2pt]
			&	BE	&	BU	&	CA	&	DK	&	FI	&	FR	&	DE	&	GR	&	HU	&	IS	&	IR	&	IT	&	MX	&	NO	&	PO	&	PT	&	ES	&	SE	&	CH	&	ER	&	US	&	UK	\\
			\midrule
			1	&	1	&	0	&	-1	&	0	&	0	&	-1	&	-4	&	0	&	0	&	0	&	2	&	-1	&	\bf{-30}	&	0	&	0	&	0	&	1	&	0	&	1	&	0	&	1	&	-1	\\
			2	&	-1	&	0	&	2	&	0	&	0	&	-3	&	1	&	0	&	-2	&	0	&	2	&	-3	&	-1	&	-1	&	-3	&	0	&	-1	&	-1	&	-2	&	-1	&	\bf{-29}	&	-1	\\
			3	&	-1	&	0	&	\bf{-30}	&	0	&	0	&	0	&	2	&	0	&	0	&	0	&	-1	&	0	&	0	&	-1	&	0	&	0	&	0	&	0	&	-1	&	0	&	-2	&	-1	\\
			4	&	6	&	0	&	1	&	2	&	1	&	8	&	24	&	0	&	1	&	0	&	8	&	6	&	-3	&	1	&	3	&	1	&	5	&	2	&	5	&	2	&	-1	&	5	\\
			\bottomrule[2pt]
		\end{tabular}
	}
\end{table}

\begin{table}[!h]
	\caption{ Estimated loading matrix $\Ab_2$ for import fiber.}
	\label{tab:7}\renewcommand{\arraystretch}{0.5}
	\resizebox{\textwidth}{!}{
		\centering
		\begin{tabular}{ccccccccccccccccccccccc}
			\toprule[2pt]
			&	BE	&	BU	&	CA	&	DK	&	FI	&	FR	&	DE	&	GR	&	HU	&	IS	&	IR	&	IT	&	MX	&	NO	&	PO	&	PT	&	ES	&	SE	&	CH	&	ER	&	US	&	UK	\\
			\midrule
			1	&	0	&	0	&	0	&	0	&	0	&	0	&	0	&	0	&	0	&	0	&	0	&	0	&	0	&	0	&	0	&	0	&	0	&	0	&	0	&	0	&	\bf{-30}	&	0	\\
			2	&	1	&	0	&	0	&	1	&	0	&	-1	&	-8	&	0	&	1	&	0	&	-1	&	2	&	\bf{-28}	&	0	&	2	&	0	&	1	&	1	&	4	&	0	&	0	&	-5	\\
			3	&	-3	&	-1	&	-1	&	-3	&	-2	&	-16	&	2	&	0	&	-5	&	0	&	1	&	-8	&	2	&	-2	&	-5	&	-2	&	-8	&	-5	&	1	&	-4	&	0	&	-19	\\
			4	&	9	&	0	&	0	&	0	&	0	&	1	&	20	&	1	&	-1	&	0	&	5	&	7	&	-2	&	0	&	3	&	0	&	2	&	-1	&	17	&	1	&	0	&	-4	\\
			\bottomrule[2pt]
	\end{tabular}}
\end{table}

Table \ref{tab:5} shows an estimator of $\Ab_3$ under tensor factor model
using PE procedure. For better interpretation, the loading matrix is rotated
using the varimax procedure and all numbers are multiplied by 30 and then
truncated to integers for clearer viewing.
Table \ref{tab:5} shows that there is a group structure of
these six factors, that each factor is called a ``condensed product group" in \cite{Chen2021Factor}.
Factor 1, 2, 3, 4 and 6 can be interpreted as Machinery \& Electrical
factor, Mineral Products factor, Chemicals \& Allied Industries factor,
Transportation factor and Stone \& Glass factor, because these factors are
mainly loaded on these commodity categories. Factor 5 can be viewed as a
mixing factor, with Plastics \& Rubbers and Metals as main load.

Table \ref{tab:6} and \ref{tab:7} show estimators of $\Ab_1$ and $\Ab_2$.
The loading matrices are also rotated via varimax procedure. \cite%
{Chen2021Factor} proposed that there are some virtual import hubs and export
hubs in the import-export data factor model. When exporting a commodity, the
exporting country first puts the commodity into the virtual export hub, then
the commodities are exported from the export hub to the virtual import hub,
and finally taken out of the import by the importing countries. Table \ref%
{tab:6} shows that Mexico, the United States of America and Canada heavily
load on virtual export hubs E1, E2 and E3, respectively. European countries
mainly load on export hub E4, and Germany occupies an important position.
This can be seen in Table \ref{tab:7} that the United States of America and
Mexico heavily load on virtual import hubs I1 and I2. European countries
mainly load on import hub I3 and I4, while I3 is mainly loaded by Western
European countries France and Britain and I4 is mainly loaded by Central
European countries Germany and Switzerland.
\section{Concluding remarks\label{conclusions}}

Tensor factor model is a powerful tool for dimension reduction of high-order
tensors and is drawing growing attention in the last few years. In this
paper, we propose a projection estimation method for Tucker-decomposition
based Tensor Factor Model (TFM). We also provide the least squares
interpretation of the iterative projection for TFM, which parallels to the
least squares interpretation of traditional PCA for vector factor models %
\citep{fan2013large} and of Projection Estimation for matrix factor models %
\citep{He2021Matrix}. We establish the theoretical properties for the
one-step iteration projection estimators, and faster convergence rates are
achieved by the projection technique compared with the naive tensor PCA
method.

\section{Technical details\label{proofs}}

We report all proofs assuming, for simplicity
and with no loss of generality, that $r_{k}=1$, $1\leq k\leq K$.

\begin{proof}[Proof of Theorem \ref{th:1}]
	We state some preliminary facts. Recall (\ref{dec-m-hat}), and define $%
	\widehat{\mathbf{\Lambda }}_{k}$ as the diagonal matrix containing the
	largest $r_{k}$ eigenvalues of $\widehat{\mathbf{M}}_{k}$, viz.%
	\begin{equation*}
		\widehat{\mathbf{\Lambda }}_{k}=\diag(\lambda _{1}(\widehat{\mathbf{M}}%
		_{k}),\ldots ,\lambda _{r_{k}}(\widehat{\mathbf{M}}_{k})).
	\end{equation*}%
	Also recall that by definition $\widehat{\mathbf{A}}_{k}=\sqrt{p_{k}}\,%
	\widehat{\mathbf{U}}_{k}$, where $\widehat{\mathbf{U}}_{k}$ has as columns
	the normalized eigenvectors corresponding to the $r_{k}$ largest eigenvalues
	of $\widehat{\mathbf{M}}_{k}$ - see (\ref{eq:Ahatkdef}). Then, by definition
$		\widehat{\mathbf{A}}_{k}\widehat{\mathbf{\Lambda }}_{k}=\widehat{\mathbf{M}}%
		_{k}\widehat{\mathbf{A}}_{k}$.
	Define now
	\begin{equation}
		\widehat{\mathbf{H}}_{k}=\frac{1}{Tp}\left( \sum_{t=1}^{T}\mathbf{F}%
		_{k,t}^{\top }\widehat{\mathbf{B}}_{k}^{\top }\widehat{\mathbf{B}}_{k}%
		\mathbf{F}_{k,t}^{\top }\right) \mathbf{A}_{k}^{\top }\widehat{\mathbf{A}}%
		_{k}\widehat{\boldsymbol{\Lambda }}_{k}^{-1}.  \label{eq:defHk}
	\end{equation}%
	Then it is easy to see that
	\begin{equation}
		\widehat{\mathbf{A}}_{k}-\mathbf{A}_{k}\widehat{\mathbf{H}}_{k}=(\mathcal{I}%
		\mathcal{I}+\mathcal{I}\mathcal{I}\mathcal{I}+\mathcal{I}\mathcal{V})%
		\widehat{\mathbf{A}}_{k}\widehat{\boldsymbol{\Lambda }}_{k}^{-1},
		\label{eq:AHATapp}
	\end{equation}%
	where $\mathcal{I}\mathcal{I}$, $\mathcal{I}\mathcal{I}\mathcal{I}$ and $%
	\mathcal{I}\mathcal{V}$ are defined in (\ref{dec-m-hat}). Recall that: by
	Assumption \ref{as-1}\textit{(ii)}%
		$T^{-1}\sum_{t=1}^{T}\mathbf{F}_{k,t}\mathbf{F}_{k,t}^{\top }\overset{P}{%
			\rightarrow }\boldsymbol{\Sigma }_{k}$,
	as$T\rightarrow \infty $; by Assumption \ref{as-2}\textit{(ii)} and the
	definition of $\mathbf{B}_{k}$ it readily follows that
		$p_{-k}^{-1}\mathbf{B}_{k}^{\top }\mathbf{B}_{k}\rightarrow \mathbf{I}%
		_{r_{-k}},$
	as $p_{-k}\rightarrow \infty $; and by Assumption \ref{as-2}\textit{(ii)}
	again, and the definition of $\widehat{\mathbf{A}}_{k}$,
$		\Vert \mathbf{A}_{k}\Vert _{F}^{2}\asymp \Vert \widehat{\mathbf{A}}_{k}\Vert
		_{F}^{2}\asymp p_{k},$
	as $p_{k}\rightarrow \infty $. Also, by Lemma \ref{lemma2}, as $\min
	\{T,p_{1},\ldots ,p_{k}\}\rightarrow \infty $, $\lambda _{j}(\widehat{%
		\mathbf{M}}_{k})$ converge to some positive constants for all $j\leq r_{k}$;
	hence
	\begin{equation}
		\left\Vert \widehat{\mathbf{\Lambda }}_{k}\right\Vert _{F}=O_{P}(1),\quad
		\left\Vert \widehat{\mathbf{\Lambda }}_{k}^{-1}\right\Vert _{F}=O_{P}(1).
		\label{eq:LambdaHat}
	\end{equation}%
	Putting all together, it follows that
	\begin{equation}
		\left\Vert \widehat{\mathbf{H}}_{k}\right\Vert _{F}=O_{P}(1).  \label{h-hat}
	\end{equation}%
	Using now (\ref{eq:AHATapp}) and (\ref{eq:LambdaHat}), it holds that
	\begin{align}
		\frac{1}{p_{k}}\left\Vert \widehat{\mathbf{A}}_{k}-\mathbf{A}_{k}\widehat{%
			\mathbf{H}}_{k}\right\Vert _{F}^{2}& =\frac{1}{p_{k}}\left\Vert (\mathcal{I}%
		\mathcal{I}+\mathcal{I}\mathcal{I}\mathcal{I}+\mathcal{I}\mathcal{V})%
		\widehat{\mathbf{A}}_{k}\widehat{\boldsymbol{\Lambda }}_{k}^{-1}\right\Vert
		_{F}^{2}   \lesssim \frac{1}{p_{k}}\left( \left\Vert \mathcal{I}\mathcal{I}\widehat{%
			\mathbf{A}}_{k}\right\Vert _{F}^{2}+\left\Vert \mathcal{I}\mathcal{I}%
		\mathcal{I}\widehat{\mathbf{A}}_{k}\right\Vert _{F}^{2}+\left\Vert \mathcal{I%
		}\mathcal{V}\widehat{\mathbf{A}}_{k}\right\Vert _{F}^{2}\right)  \notag \\
		& =O_{P}\left( \frac{1}{Tp_{-k}}+\frac{1}{p_{k}^{2}}\right) +o_{p}\left(
		\frac{1}{p_{k}}\left\Vert \widehat{\mathbf{A}}_{k}-\mathbf{A}_{k}\widehat{%
			\mathbf{H}}_{k}\right\Vert _{F}^{2}\right) .  \label{eq:statement31}
	\end{align}%
	We note that the last term in (\ref{eq:statement31}) can be neglected:
	indeed, $p_{k}^{-1}\left\Vert \widehat{\mathbf{A}}_{k}-\mathbf{A}_{k}%
	\widehat{\mathbf{H}}_{k}\right\Vert _{F}^{2}$ is bounded byt $O_{P}(1)$
	because $\Vert \mathbf{A}_{k}\Vert _{F}^{2}\asymp p_{k}$, $\Vert \widehat{%
		\mathbf{A}}_{k}\Vert _{F}^{2}\asymp p_{k}$, and $\Vert \widehat{\mathbf{H}}%
	_{k}\Vert _{F}=O_{P}(1)$ as shown in (\ref{h-hat}). In order to complete the
	proof, it remains to show that $\widehat{\mathbf{H}}_{k}^{\top }\widehat{%
		\mathbf{H}}_{k}\overset{P}{\rightarrow }\mathbf{I}_{r_{k}}.$ By
	Cauchy-Schwartz inequality and (\ref{eq:statement31}), it holds that%
	\begin{eqnarray*}
		\left\Vert \frac{1}{p_{k}}\mathbf{A}_{k}^{\top }\left( \widehat{\mathbf{A}}%
		_{k}-\mathbf{A}_{k}\widehat{\mathbf{H}}_{k}\right) \right\Vert _{F}^{2}
		&\leq &\frac{\left\Vert \mathbf{A}_{k}\right\Vert _{F}^{2}}{p_{k}}\frac{%
			\left\Vert \widehat{\mathbf{A}}_{k}-\mathbf{A}_{k}\widehat{\mathbf{H}}%
			_{k}\right\Vert _{F}^{2}}{p_{k}}=o_{P}\left( 1\right) , \\
		\left\Vert \frac{1}{p_{k}}\widehat{\mathbf{A}}_{k}^{\top }\left( \widehat{%
			\mathbf{A}}_{k}-\mathbf{A}_{k}\widehat{\mathbf{H}}_{k}\right) \right\Vert
		_{F}^{2} &\leq &\frac{\left\Vert \widehat{\mathbf{A}}_{k}\right\Vert _{F}^{2}%
		}{p_{k}}\frac{\left\Vert \widehat{\mathbf{A}}_{k}-\mathbf{A}_{k}\widehat{%
				\mathbf{H}}_{k}\right\Vert _{F}^{2}}{p_{k}}=o_{P}\left( 1\right) .
	\end{eqnarray*}%
	Note now that, by construction, $p_{k}^{-1}\widehat{\mathbf{A}}_{k}^{\top }%
	\widehat{\mathbf{A}}_{k}=\mathbf{I}_{r_{k}}$; also, by Assumption \ref{as-2}%
	\textit{(ii)}, $p_{k}^{-1}\mathbf{A}_{k}^{\top }\mathbf{A}_{k}\rightarrow
	\mathbf{I}_{r_{k}}$. Hence, by (\ref{eq:statement31})
	\begin{equation*}
		\mathbf{I}_{r_{k}}=\frac{1}{p_{k}}\widehat{\mathbf{A}}_{k}^{\top }\mathbf{A}%
		_{k}\widehat{\mathbf{H}}_{k}+o_{P}(1)=\widehat{\mathbf{H}}_{k}^{\top }%
		\widehat{\mathbf{H}}_{k}+o_{P}(1),
	\end{equation*}%
	which concludes the proof.
\end{proof}

\begin{proof}[Proof of Theorem \protect\ref{th:7}]
	Based on equation (\ref{dec-m-hat}), for all $1\leq i\leq p_{k}$, it holds
	that
	\begin{equation*}
		\widehat{\mathbf{A}}_{k,i\cdot }^{\top }-\widehat{\mathbf{H}}_{k}^{\top }%
		\mathbf{A}_{k,i\cdot }^{\top }=\frac{1}{Tp}\sum_{t=1}^{T}\widehat{%
			\boldsymbol{\Lambda }}_{k}^{-1}\left( \widehat{\mathbf{A}}_{k}^{\top }%
		\mathbf{E}_{k,t}\mathbf{B}_{k}\mathbf{F}_{k,t}^{\top }\mathbf{A}_{k,i\cdot
		}^{\top }+\widehat{\mathbf{A}}_{k}^{\top }\mathbf{A}_{k}\mathbf{F}_{k,t}%
		\mathbf{B}_{k}^{\top }\boldsymbol{e}_{k,t,i\cdot }^{\top }+\widehat{\mathbf{A%
		}}_{k}^{\top }\mathbf{E}_{k,t}\boldsymbol{e}_{k,t,i\cdot }^{\top }\right) .
	\end{equation*}%
	Using now: equation (\ref{eq:LambdaHat}) and Assumptions \ref{as-2}\textit{%
		(i)}and \ref{as-4}\textit{(i)}, it follows that%
	\begin{eqnarray}
		&&\left\Vert \frac{1}{Tp}\sum_{t=1}^{T}\widehat{\boldsymbol{\Lambda }}%
		_{k}^{-1}\widehat{\mathbf{A}}_{k}^{\top }\mathbf{E}_{k,t}\mathbf{B}_{k}%
		\mathbf{F}_{k,t}^{\top }\mathbf{A}_{k,i\cdot }^{\top }\right\Vert _{F}
		\lesssim \frac{1}{\sqrt{Tp}}\left\Vert \frac{1}{\sqrt{T}}%
		\sum_{t=1}^{T}\left( \frac{\widehat{\mathbf{A}}_{k}}{p_{k}^{1/2}}\right)
		^{\top }\mathbf{E}_{k,t}\left( \frac{\mathbf{B}_{k}}{p_{-k}^{1/2}}\right)
		\mathbf{F}_{k,t}^{\top }\right\Vert _{F}=O_{P}\left( \frac{1}{\sqrt{Tp}}%
		\right) .  \label{th7-1} 
	\end{eqnarray}%
	By the same token, and using Assumptions \ref{as-2}\textit{(i)} and \ref%
	{as-4}\textit{(ii)}, it holds that
	\begin{eqnarray}
		&&\left\Vert \frac{1}{Tp}\sum_{t=1}^{T}\widehat{\boldsymbol{\Lambda }}%
		_{k}^{-1}\widehat{\mathbf{A}}_{k}^{\top }\mathbf{A}_{k}\mathbf{F}_{k,t}%
		\mathbf{B}_{k}^{\top }\boldsymbol{e}_{k,t,i\cdot }^{\top }\right\Vert _{F}
		\label{th7-2} \\
		&\leq &\left\Vert \widehat{\boldsymbol{\Lambda }}_{k}^{-1}\right\Vert
		_{F}\left\Vert \frac{1}{p_{k}}\widehat{\mathbf{A}}_{k}^{\top }\mathbf{A}%
		_{k}\right\Vert _{F}\left\vert \frac{1}{Tp_{-k}}\sum_{t=1}^{T}\mathbf{F}%
		_{k,t}\mathbf{B}_{k}^{\top }\boldsymbol{e}_{k,t,i\cdot }^{\top }\right\vert
		=O_{P}\left( \frac{1}{\sqrt{Tp_{-k}}}\right) .  \notag
	\end{eqnarray}%
	Moreover, by the same arguments used above, and using Lemma \ref{lemma1}%
	\textit{(iii)}, we have
	\begin{eqnarray}
		&&\left\Vert \frac{1}{Tp}\sum_{t=1}^{T}\widehat{\boldsymbol{\Lambda }}%
		_{k}^{-1}\widehat{\mathbf{A}}_{k}^{\top }\mathbf{E}_{k,t}\boldsymbol{e}%
		_{k,t,i\cdot }^{\top }\right\Vert _{F} \lesssim \left\Vert \frac{1}{Tp}\sum_{t=1}^{T}\left( \widehat{\mathbf{A}}%
		_{k}-\mathbf{A}_{k}\widehat{\mathbf{H}}_{k}\right) ^{\top }\mathbf{E}_{k,t}%
		\boldsymbol{e}_{k,t,i\cdot }^{\top }\right\Vert _{F}+\left\Vert \frac{1}{Tp}%
		\sum_{t=1}^{T}\widehat{\mathbf{H}}_{k}^{\top }\mathbf{A}_{k}^{\top }\mathbf{E%
		}_{k,t}\boldsymbol{e}_{k,t,i\cdot }^{\top }\right\Vert _{F}   \label{th7-3} \\
		&\lesssim &\left\Vert \widehat{\mathbf{A}}_{k}-\mathbf{A}_{k}\widehat{%
			\mathbf{H}}_{k}\right\Vert _{F}\left\Vert \frac{1}{Tp}\sum_{t=1}^{T}\mathbf{E%
		}_{k,t}\boldsymbol{e}_{k,t,i\cdot }^{\top }\right\Vert _{F}+\left\Vert \frac{%
			1}{Tp}\sum_{t=1}^{T}\mathbf{A}_{k}^{\top }\mathbf{E}_{k,t}\boldsymbol{e}%
		_{k,t,i\cdot }^{\top }\right\Vert _{F}  =O_{P}\left( \frac{1}{p_{k}}\right) +O_{P}\left( \frac{1}{\sqrt{Tp}}%
		\right) .  \notag
	\end{eqnarray}%
	Therefore, as $\min \left\{ T,p_{1},\ldots ,p_{K}\right\} \rightarrow \infty
	$ under the restriction
		$Tp_{-k}=o\left( p_{k}^{2}\right)$ ,
	it follows that
	\begin{equation*}
		\sqrt{Tp_{-k}}\left( \widehat{\mathbf{A}}_{k,i\cdot }^{\top }-\widehat{%
			\mathbf{H}}_{k}^{\top }\mathbf{A}_{k,i\cdot }^{\top }\right) =\widehat{%
			\boldsymbol{\Lambda }}_{k}^{-1}\frac{\widehat{\mathbf{A}}_{k}^{\top }\mathbf{%
				A}_{k}}{p_{k}}\frac{1}{\sqrt{Tp_{-k}}}\sum_{t=1}^{T}\mathbf{F}_{k,t}\mathbf{B%
		}_{k}^{\top }\boldsymbol{e}_{k,t,i\cdot }^{\top }+{o}_{{P}}({1}).
	\end{equation*}%
	Recall that, by Lemma \ref{lemma2}, it holds that as $\min \left\{
	T,p_{1},\ldots ,p_{K}\right\} \rightarrow \infty $
		$\widehat{\boldsymbol{\Lambda }}_{k}\overset{P}{\rightarrow }\boldsymbol{%
			\Lambda }_{k}$;
	further, from the proof of Theorem \ref{th:1}, it holds that
		$\widehat{\mathbf{H}}_{k}=p_{k}^{-1}\mathbf{A}_{k}^{\top }\widehat{%
			\mathbf{A}}_{k}+{o}_{{P}}({1})$;
	using Assumption \ref{as-1}\textit{(ii)}, the definition of $\widehat{%
		\mathbf{H}}_{k}$ in (\ref{eq:defHk}), and recalling that $\mathbf{\Sigma }%
	_{k}$ has spectral decomposition $\mathbf{\Sigma }_{k}=\boldsymbol{\Gamma }%
	_{k}\boldsymbol{\Lambda }_{k}\boldsymbol{\Gamma }_{k}^{\top }$, we have
		$\widehat{\mathbf{H}}_{k}=\boldsymbol{\Gamma }_{k}\boldsymbol{\Lambda }_{k}%
		\boldsymbol{\Gamma }_{k}^{\top }\widehat{\mathbf{H}}_{k}\boldsymbol{\Lambda }%
		_{k}^{-1}+{o}_{{P}}({1})$.
	By rearranging the last equation and recalling that $\boldsymbol{\Gamma }%
	_{k}^{\top }\boldsymbol{\Gamma }_{k}=\mathbf{I}_{r_{k}}$,
		$\boldsymbol{\Gamma }_{k}^{\top }\widehat{\mathbf{H}}_{k}\boldsymbol{\Lambda }%
		_{k}=\boldsymbol{\Lambda }_{k}\boldsymbol{\Gamma }_{k}^{\top }\widehat{%
			\mathbf{H}}_{k}+{o}_{{P}}({1})$.
	We note that $\boldsymbol{\Lambda }_{k}$ is diagonal with distinct entries;
	hence, $\boldsymbol{\Gamma }_{k}^{\top }\widehat{\mathbf{H}}_{k}$ must be
	asymptotically diagonal. Furthermore, since
		$\widehat{\mathbf{H}}_{k}^{\top }\widehat{\mathbf{H}}_{k}=\left( \boldsymbol{%
			\Gamma }_{k}^{\top }\widehat{\mathbf{H}}_{k}\right) ^{\top }\boldsymbol{%
			\Gamma }_{k}^{\top }\widehat{\mathbf{H}}_{k}=\mathbf{I}_{r_{k}}+{o}_{{P}}({1}%
		)$,
	the diagonal entries of $\boldsymbol{\Gamma }_{k}^{\top }\widehat{\mathbf{H}}%
	_{k}$ must be asymptotically $1$ or $-1$, and we can always choose the
	column signs of $\widehat{\mathbf{A}}_{k}$in such a way that
		$\boldsymbol{\Gamma }_{k}^{\top }\widehat{\mathbf{H}}_{k}=\mathbf{I}_{r_{k}}+{%
			o}_{{P}}({1})$.
	Therefore, it finally follows that
		${p_{k}}^{-1}{\widehat{\mathbf{A}}_{k}^{\top }\mathbf{A}_{k}}=\widehat{%
			\mathbf{H}}_{k}^{\top }+{o}_{{P}}({1})=\boldsymbol{\Gamma }_{k}^{\top }+{o}_{%
			{P}}({1})$.
	Consequently, by Slutsky's theorem and Assumption \ref{as-5}\textit{(i)}, as
	$\min \left\{ T,p_{1},\ldots ,p_{K}\right\} \rightarrow \infty $ under the
	restriction $Tp_{-k}=o\left( p_{k}^{2}\right) $, it holds that
	\begin{equation*}
		\sqrt{Tp_{-k}}\left( \widehat{\mathbf{A}}_{k,i\cdot }^{\top }-\widehat{%
			\mathbf{H}}_{k}^{\top }\mathbf{A}_{k,i\cdot }^{\top }\right) \overset{D}{%
			\rightarrow }\mathcal{N}\left( \mathbf{0},\boldsymbol{\Lambda }_{k}^{-1}%
		\boldsymbol{\Gamma }_{k}^{\top }\mathbf{V}_{ki}\boldsymbol{\Gamma }_{k}%
		\boldsymbol{\Lambda }_{k}^{-1}\right) ,
	\end{equation*}%
	which concludes the proof of part \textit{(i)} of the theorem. Whenever $%
	Tp_{-k}\gtrsim p_{k}^{2}$, it immediately follows from (\ref{th7-1})-(\ref%
	{th7-3}) that
		$\Vert \widehat{\mathbf{A}}_{k,i\cdot }^{\top }-\widehat{\mathbf{H}}%
		_{k}^{\top }\mathbf{A}_{k,i\cdot }^{\top }\Vert =O_{P}\left(p_k^{-1}%
		\right) ,$
	proving part \textit{(ii)} of the theorem.
\end{proof}

\begin{proof}[Proof of Theorem \protect\ref{th:2}]
	The proof follows a similar logic to the proof of Theorem \ref{th:1}. Recall
	(\ref{dec-m-tilde}), and define the diagonal matrix containing the first $%
	r_{k}$ eigenvalues of $\widetilde{\mathbf{M}}_{k}\ $as%
		$\widetilde{\boldsymbol{\Lambda }}_{k}=diag\left\{ \lambda _{1}\left(
		\widetilde{\mathbf{M}}_{k}\right) ,...,\lambda _{r_{k}}\left( \widetilde{%
			\mathbf{M}}_{k}\right) \right\} $.
	By definition, \linebreak $\widetilde{\mathbf{A}}_{k}:=\sqrt{p_{k}}\,\widetilde{\mathbf{%
			U}}_{k}$, where the columns of $\widetilde{\mathbf{U}}_{k}$ are orthonormal
	basis of the eigenvectors corresponding to the $r_{k}$ largest eigenvalues
	of $\widetilde{\mathbf{M}}_{k}$. Hence
		$\widetilde{\mathbf{A}}_{k}\widetilde{\mathbf{\Lambda }}_{k}=\widetilde{%
			\mathbf{M}}_{k}\widetilde{\mathbf{A}}_{k}.$
	Define now
	\begin{equation}
		\widetilde{\mathbf{H}}_{k}:=\frac{1}{Tp_{k}p_{-k}^{2}}\left( \sum_{t=1}^{T}%
		\mathbf{F}_{k,t}\mathbf{B}_{k}^{\top }\widehat{\mathbf{B}}_{k}\widehat{%
			\mathbf{B}}_{k}^{\top }\mathbf{B}_{k}\mathbf{F}_{k,t}^{\top }\right) \mathbf{%
			A}_{k}^{\top }\widetilde{\mathbf{A}}_{k}\widetilde{\boldsymbol{\Lambda }}%
		_{k}^{-1}.  \label{eq:Htildedefinition}
	\end{equation}%
	It is easy to see, in the light of (\ref{dec-m-tilde}), that
	\begin{equation}
		\widetilde{\mathbf{A}}_{k}-\mathbf{A}_{k}\widetilde{\mathbf{H}}_{k}=(%
		\mathcal{VI}+\mathcal{VII}+\mathcal{VIII})\widetilde{\mathbf{A}}_{k}%
		\widetilde{\boldsymbol{\Lambda }}_{k}^{-1}.  \label{eq:Atildeapp}
	\end{equation}%
	Note now that, using Assumption \ref{as-2}\textit{(ii)} and the definitions
	of $\widetilde{\mathbf{A}}_{k}$ in (\ref{eq:Atildekdef}) and $\widehat{%
		\mathbf{B}}_{k}$ following from (\ref{eq:Ahatkdef}) respectively, ithat $%
	\left\Vert \mathbf{A}_{k}\right\Vert _{F}^{2}\asymp \left\Vert \widetilde{%
		\mathbf{A}}_{k}\right\Vert _{F}^{2}\asymp p_{k}$ and $\left\Vert \mathbf{B}%
	_{k}\right\Vert _{F}^{2}\asymp \left\Vert \widehat{\mathbf{B}}%
	_{k}\right\Vert _{F}^{2}\asymp p_{-k}$ as $p_{-k}$, $p_{k}\rightarrow \infty
	$. Assumption \ref{as-1}\textit{(ii)} states that
		$T^{-1}\sum_{t=1}^{T}\mathbf{F}_{k,t}\mathbf{F}_{k,t}^{\top }\overset{P}{%
			\rightarrow }\boldsymbol{\Sigma }_{k}$,
	as $T\rightarrow \infty $; further, by Lemma \ref{lemma2}, the diagonal
	entries of $\widetilde{\mathbf{\Lambda }}_{k}$ converge to some positive
	constants as $\min \{T,p_{1},\ldots ,p_{k}\}\rightarrow \infty $. Hence
	\begin{equation}
		\left\Vert \widetilde{\mathbf{\Lambda }}_{k}\right\Vert _{F}=O_{P}(1),\quad
		\left\Vert \widetilde{\mathbf{\Lambda }}_{k}^{-1}\right\Vert _{F}=O_{P}(1).
		\label{eq:LambdaTilde}
	\end{equation}%
	Therefore, we have $\Vert \widetilde{\mathbf{H}}_{k}\Vert _{F}=O_{P}(1)$.
	Combining (\ref{eq:Atildeapp}) and (\ref{eq:LambdaTilde}) and using Lemma %
	\ref{lemma2}, it holds that
	\begin{align}
		& \frac{1}{p_{k}}\left\Vert \widetilde{\mathbf{A}}_{k}-\mathbf{A}_{k}%
		\widetilde{\mathbf{H}}_{k}\right\Vert _{F}^{2}  =\frac{1}{p_{k}}\left\Vert (\mathcal{VI}+\mathcal{VII}+\mathcal{VIII})%
		\widetilde{\mathbf{A}}_{k}\widetilde{\boldsymbol{\Lambda }}%
		_{k}^{-1}\right\Vert _{F}^{2}  \notag \\
		& \leq \frac{1}{p_{k}}\left( \left\Vert \mathcal{VI}\widetilde{\mathbf{A}}%
		_{k}\right\Vert _{F}^{2}+\left\Vert \mathcal{VII}\widetilde{\mathbf{A}}%
		_{k}\right\Vert _{F}^{2}+\left\Vert \mathcal{VIII}\widetilde{\mathbf{A}}%
		_{k}\right\Vert _{F}^{2}\right)  \notag \\
		& =O_{P}\left( \frac{1}{Tp_{-k}}+\frac{1}{p^{2}}+w_{-k}^{2}\left( \frac{1}{%
			p_{k}^{2}}+\frac{1}{Tp_{k}}\right) +m_{-k}\right) +o_{P}\left( \frac{1}{p_{k}%
		}\left\Vert \widetilde{\mathbf{A}}_{k}-\mathbf{A}_{k}\widetilde{\mathbf{H}}%
		_{k}\right\Vert _{F}^{2}\right) .  \label{eq:statement33}
	\end{align}%
	By the same logic as in the proof of Theorem \ref{th:1}, the last term in (%
	\ref{eq:statement33}) can be neglected. In order to complete the proof, it
	remains to show that $\widetilde{\mathbf{H}}_{k}^{\top }\widetilde{\mathbf{H}%
	}_{k}\overset{P}{\rightarrow }\mathbf{I}_{r_{k}}$. Using, as in the proof of
	Theorem \ref{th:1}, the Cauchy-Schwartz inequality and (\ref{eq:statement33}%
	), it holds that
	\begin{eqnarray*}
		\left\Vert \frac{1}{p_{k}}\mathbf{A}_{k}^{\top }\left( \widetilde{\mathbf{A}}%
		_{k}-\mathbf{A}_{k}\widetilde{\mathbf{H}}_{k}\right) \right\Vert _{F}^{2}
		&\leq &\frac{\left\Vert \mathbf{A}_{k}\right\Vert _{F}^{2}}{p_{k}}\frac{%
			\left\Vert \widetilde{\mathbf{A}}_{k}-\mathbf{A}_{k}\widetilde{\mathbf{H}}%
			_{k}\right\Vert _{F}^{2}}{p_{k}}=o_{P}\left( 1\right) , \\
		\left\Vert \frac{1}{p_{k}}\widetilde{\mathbf{A}}_{k}^{\top }\left(
		\widetilde{\mathbf{A}}_{k}-\mathbf{A}_{k}\widetilde{\mathbf{H}}_{k}\right)
		\right\Vert _{F}^{2} &\leq &\frac{\left\Vert \widetilde{\mathbf{A}}%
			_{k}\right\Vert _{F}^{2}}{p_{k}}\frac{\left\Vert \widetilde{\mathbf{A}}_{k}-%
			\mathbf{A}_{k}\widetilde{\mathbf{H}}_{k}\right\Vert _{F}^{2}}{p_{k}}%
		=o_{P}\left( 1\right) ,
	\end{eqnarray*}%
	since, as noted above, $\left\Vert \mathbf{A}_{k}\right\Vert _{F}^{2}\asymp
	\left\Vert \widetilde{\mathbf{A}}_{k}\right\Vert _{F}^{2}\asymp p_{k}$.
	Recalling that $p_{k}^{-1}\widetilde{\mathbf{A}}_{k}^{\top }\widetilde{%
		\mathbf{A}}_{k}=\mathbf{I}_{r_{k}}$ and $p_{k}^{-1}\mathbf{A}_{k}^{\top }%
	\mathbf{A}_{k}\rightarrow \mathbf{I}_{r_{k}}$, then from (\ref%
	{eq:statement33})
		$\mathbf{I}_{r_{k}}={p_{k}}^{-1}\widetilde{\mathbf{A}}_{k}^{\top }\mathbf{A%
		}_{k}\widetilde{\mathbf{H}}_{k}+{o}_{{P}}({1})=\widetilde{\mathbf{H}}%
		_{k}^{\top }\widetilde{\mathbf{H}}_{k}+{o}_{{P}}({1}),$
	which concludes the proof.
\end{proof}

\begin{proof}[Proof of Lemma \protect\ref{lem:1}]
	Consider part \textit{(i)} of the lemma, and note that%
	\begin{eqnarray*}
		&&\widehat{\mathbf{B}}_{k}-\mathbf{B}_{k}\widehat{\mathbf{H}}_{-k} =\widehat{\mathbf{A}}_{K}\otimes \cdots \otimes \widehat{\mathbf{A}}%
		_{k+1}\otimes \widehat{\mathbf{A}}_{k-1}\otimes \cdots \otimes \widehat{%
			\mathbf{A}}_{1}-\left( \otimes _{j=1,j\neq k}^{K}\mathbf{A}_{j}\right)
		\left( \otimes _{j=1,j\neq k}^{K}\widehat{\mathbf{H}}_{j}\right) \\
		&=&\left( \widehat{\mathbf{A}}_{K}-\mathbf{A}_{K}\widehat{\mathbf{H}}_{K}+%
		\mathbf{A}_{K}\widehat{\mathbf{H}}_{K}\right) \otimes \cdots \otimes \left(
		\widehat{\mathbf{A}}_{k+1}-\mathbf{A}_{k+1}\widehat{\mathbf{H}}_{k+1}+%
		\mathbf{A}_{k+1}\widehat{\mathbf{H}}_{k+1}\right) \\
		&&\otimes \left( \widehat{\mathbf{A}}_{k-1}-\mathbf{A}_{k-1}\widehat{\mathbf{%
				H}}_{k-1}+\mathbf{A}_{k-1}\widehat{\mathbf{H}}_{k-1}\right) \otimes \cdots
		\otimes \left( \widehat{\mathbf{A}}_{1}-\mathbf{A}_{1}\widehat{\mathbf{H}}%
		_{1}+\mathbf{A}_{1}\widehat{\mathbf{H}}_{1}\right) \\
		&&-\left( \otimes _{j=1,j\neq k}^{K}\left( \mathbf{A}_{j}\widehat{\mathbf{H}}%
		_{j}\right) \right) \\
		&=&\left( \widehat{\mathbf{A}}_{K}-\mathbf{A}_{K}\widehat{\mathbf{H}}%
		_{K}\right) \otimes \left( \otimes _{j\neq k,K}\widehat{\mathbf{A}}%
		_{j}\right) +\mathbf{A}_{K}\widehat{\mathbf{H}}_{K}\otimes \left( \widehat{%
			\mathbf{A}}_{K-1}-\mathbf{A}_{K-1}\widehat{\mathbf{H}}_{K-1}\right) \otimes
		\left( \otimes _{j\neq k,K-1,K}\widehat{\mathbf{A}}_{j}\right) \\
		&&+\cdots +\otimes _{j\neq 1,k}\left( \mathbf{A}_{j}\widehat{\mathbf{H}}%
		_{j}\right) \otimes \left( \widehat{\mathbf{A}}_{1}-\mathbf{A}_{1}\widehat{%
			\mathbf{H}}_{1}\right) . \label{diocane}
	\end{eqnarray*}%
	Hence, using Theorem \ref{th:1}, Assumption \ref{as-2}\textit{(ii)}, and the
	definition of $\widehat{\mathbf{A}}_{\ell }$ in (\ref{eq:Ahatkdef}), it holds
	that
	\begin{eqnarray*}
		&&\frac{1}{p_{-k}}\left\Vert \widehat{\mathbf{B}}_{k}-\mathbf{B}_{k}\widehat{%
			\mathbf{H}}_{-k}\right\Vert _{F}^{2} \lesssim \frac{1}{p_{-k}}\sum_{j=1,j\neq k}^{K}\left\Vert \left( \otimes
		_{\ell =k+1,\ell \neq k}^{K}\mathbf{A}_{\ell }\widehat{\mathbf{H}}_{\ell
		}\right) \otimes \left( \widehat{\mathbf{A}}_{j}-\mathbf{A}_{j}\widehat{%
			\mathbf{H}}_{j}\right) \otimes \left( \otimes _{\ell =1,\ell \neq k}^{j-1}%
		\widehat{\mathbf{A}}_{\ell }\right) \right\Vert _{F}^{2} \\
		&\lesssim &\frac{1}{p_{-k}}\sum_{j=1,j\neq k}^{K}\left\Vert \left( \otimes
		_{\ell =k+1,\ell \neq k}^{K}\mathbf{A}_{\ell }\right) \otimes \left(
		\widehat{\mathbf{A}}_{j}-\mathbf{A}_{j}\widehat{\mathbf{H}}_{j}\right)
		\otimes \left( \otimes _{\ell =1,\ell \neq k}^{j-1}\mathbf{A}_{\ell }\right)
		\right\Vert _{F}^{2} \lesssim \frac{1}{p_{-k}}\sum_{j=1,j\neq k}^{K}\left( \prod_{\ell =1,\ell
			\neq j,k}^{K}\left\Vert \mathbf{A}_{\ell }\right\Vert _{F}^{2}\right)
		\left\Vert \widehat{\mathbf{A}}_{j}-\mathbf{A}_{j}\widehat{\mathbf{H}}%
		_{j}\right\Vert _{F}^{2} \\
		&\lesssim &\sum_{j=1,j\neq k}^{K}\frac{1}{p_{j}}\left\Vert \widehat{\mathbf{A%
		}}_{j}-\mathbf{A}_{j}\widehat{\mathbf{H}}_{j}\right\Vert
		_{F}^{2}=O_{P}\left( \sum_{j=1,j\neq k}^{K}w_{j}\right)
	\end{eqnarray*}%
	By letting
		$w_{-k}=\sum_{j=1,j\neq k}^{K}w_{j}$,
	it follows immediately that the sufficient condition in (\ref{equ:3.1})%
	\textit{(i)} is satisfied. We now turn to part \textit{(ii)} of the lemma.
	Recall that by equation (\ref{dec-m-hat}), it holds that
		$\widehat{\mathbf{M}}_{k}=\mathcal{I}+\mathcal{II}+\mathcal{III}+\mathcal{IV}$;
	further, from (\ref{eq:statement31}), we have
		$\left( \widehat{\mathbf{A}}_{k}-\mathbf{A}_{k}\widehat{\mathbf{H}}%
		_{k}\right) =\left( \mathcal{II}+\mathcal{III}+\mathcal{IV}\right) \widehat{%
			\mathbf{A}}_{k}\widehat{\mathbf{\Lambda }}_{k}^{-1}.$
	Thus, using (\ref{eq:LambdaHat}) and (\ref{diocane}), we obtain%
	\begin{eqnarray}
		&&\frac{1}{p_{k}}\left\Vert \frac{1}{Tp_{-k}}\sum_{s=1}^{T}\mathbf{E}%
		_{k,s}\left( \widehat{\mathbf{B}}_{k}-\mathbf{B}_{k}\widehat{\mathbf{H}}%
		_{-k}\right) \mathbf{F}_{k,s}^{\top }\right\Vert _{F}^{2}  \label{eq:131415}
		\\
		&\lesssim &\frac{1}{p_{k}}\sum_{j=1,j\neq k}^{K}\left\Vert \frac{1}{T}%
		\sum_{s=1}^{T}\mathbf{E}_{k,s}\left( \left( \otimes _{{\ell }=j+1,{\ell }%
			\neq k}^{K}\frac{1}{p_{\ell }}\mathbf{A}_{\ell }\widehat{\mathbf{H}}_{\ell
		}\right) \otimes \frac{1}{p_{j}}\left( \widehat{\mathbf{A}}_{j}-\mathbf{A}%
		_{j}\widehat{\mathbf{H}}_{j}\right) \otimes \left( \otimes _{{\ell }=1,{\ell
			}\neq k}^{j-1}\frac{1}{p_{\ell }}\widehat{\mathbf{A}}_{\ell }\right) \right)
		\mathbf{F}_{k,s}^{\top }\right\Vert _{F}^{2}  \notag \\
		&\lesssim &\frac{1}{p_{k}}\sum_{j=1,j\neq k}^{K}\left\Vert \frac{1}{T}%
		\sum_{s=1}^{T}\mathbf{E}_{k,s}\left( \left( \otimes _{{\ell }=j+1,{\ell }%
			\neq k}^{K}\frac{1}{p_{\ell }}\mathbf{A}_{\ell }\right) \otimes \frac{1}{%
			p_{j}}\left( \widehat{\mathbf{A}}_{j}-\mathbf{A}_{j}\widehat{\mathbf{H}}%
		_{j}\right) \otimes \left( \otimes _{{\ell }=1,{\ell }\neq k}^{j-1}\frac{1}{%
			p_{\ell }}\mathbf{A}_{\ell }\right) \right) \mathbf{F}_{k,s}^{\top
		}\right\Vert _{F}^{2}  \notag \\
		&\lesssim &\frac{1}{p_{k}}\sum_{j=1,j\neq k}^{K}\left\Vert \frac{1}{Tp_{-k}}%
		\sum_{s=1}^{T}\mathbf{E}_{k,s}\left( \left( \otimes _{{\ell }=j+1,{\ell }%
			\neq k}^{K}\mathbf{A}_{\ell }\right) \otimes \left( \left( \mathcal{II}+%
		\mathcal{III}+\mathcal{IV}\right) \widehat{\mathbf{A}}_{j}\widehat{\mathbf{%
				\Lambda }}_{j}^{-1}\right) \otimes \left( \otimes _{{\ell }=1,{\ell }\neq
			k}^{j-1}\mathbf{A}_{\ell }\right) \right) \mathbf{F}_{k,s}^{\top
		}\right\Vert _{F}^{2}  \notag \\
		&\lesssim &\frac{1}{p_{k}}\sum_{j=1,j\neq k}^{K}\left\Vert \frac{1}{Tp_{-k}}%
		\sum_{s=1}^{T}\mathbf{E}_{k,s}\left( \left( \otimes _{{\ell }=j+1,{\ell }%
			\neq k}^{K}\mathbf{A}_{\ell }\right) \otimes \left( \mathcal{II}\widehat{%
			\mathbf{A}}_{j}\right) \otimes \left( \otimes _{{\ell }=1,{\ell }\neq
			k}^{j-1}\mathbf{A}_{\ell }\right) \right) \mathbf{F}_{k,s}^{\top
		}\right\Vert _{F}^{2}  \notag \\
		&&+\frac{1}{p_{k}}\sum_{j=1,j\neq k}^{K}\left\Vert \frac{1}{Tp_{-k}}%
		\sum_{s=1}^{T}\mathbf{E}_{k,s}\left( \left( \otimes _{{\ell }=j+1,{\ell }%
			\neq k}^{K}\mathbf{A}_{\ell }\right) \otimes \left( \mathcal{III}\widehat{%
			\mathbf{A}}_{j}\right) \otimes \left( \otimes _{{\ell }=1,{\ell }\neq
			k}^{j-1}\mathbf{A}_{\ell }\right) \right) \mathbf{F}_{k,s}^{\top
		}\right\Vert _{F}^{2}  \notag \\
		&&\frac{1}{p_{k}}\sum_{j=1,j\neq k}^{K}\left\Vert \frac{1}{Tp_{-k}}%
		\sum_{s=1}^{T}\mathbf{E}_{k,s}\left( \left( \otimes _{{\ell }=j+1,{\ell }%
			\neq k}^{K}\mathbf{A}_{\ell }\right) \otimes \left( \mathcal{IV}\widehat{%
			\mathbf{A}}_{j}\right) \otimes \left( \otimes _{{\ell }=1,{\ell }\neq
			k}^{j-1}\mathbf{A}_{\ell }\right) \right) \mathbf{F}_{k,s}^{\top
		}\right\Vert _{F}^{2}  \notag \\
		&=&\mathcal{XII}+\mathcal{XIII}+\mathcal{XIV}\text{.}  \notag
	\end{eqnarray}%
	Consider $\mathcal{XII}$; by the definition of $\mathcal{II}$ in (\ref%
	{dec-m-hat}), and using Lemmas \ref{lemma1}\textit{(ii)} and \ref{lemma3},
	and Theorem \ref{th:1}, it follows that%
	\begin{eqnarray}
		&&\mathcal{XII}  \lesssim \frac{1}{T^{2}p_{k}p_{-k}^{2}}\sum_{j=1,j\neq k}^{K}\left\Vert
		\frac{1}{Tp}\sum_{t=1}^{T}\mathbf{F}_{j,t}\mathbf{B}_{j}^{\top }\mathbf{E}%
		_{j,t}^{\top }\widehat{\mathbf{A}}_{j}\right\Vert _{F}^{2}\left\Vert
		\sum_{s=1}^{T}\mathbf{E}_{k,s}\mathbf{B}_{k}\mathbf{F}_{k,s}^{\top
		}\right\Vert _{F}^{2}  \label{d12}  \\
		&\lesssim &\frac{1}{T^{4}p_{k}^{3}p_{-k}^{4}}\sum_{j=1,j\neq k}^{K}\bigg(
		\left\Vert \frac{1}{Tp}\sum_{t=1}^{T}\mathbf{F}_{j,t}\mathbf{B}_{j}^{\top }%
		\mathbf{E}_{j,t}^{\top }\mathbf{A}_{j}\right\Vert _{F}^{2}+\left\Vert \frac{1%
		}{Tp}\sum_{t=1}^{T}\mathbf{F}_{j,t}\mathbf{B}_{j}^{\top }\mathbf{E}%
		_{j,t}^{\top }\right\Vert _{F}^{2}\left\Vert \widehat{\mathbf{A}}_{j}-%
		\mathbf{A}_{j}\widehat{\mathbf{H}}_{j}\right\Vert _{F}^{2}\bigg)  \left\Vert
		\sum_{s=1}^{T}\mathbf{E}_{k,s}\mathbf{B}_{k}\mathbf{F}_{k,s}^{\top
		}\right\Vert _{F}^{2}  \notag \\
		&=&O_{P}\left( \frac{1}{T^{2}p_{k}p_{-k}^{2}}\sum_{j=1,j\neq k}^{K}\left( 1+%
		\frac{p_{j}}{Tp_{-j}}+\frac{1}{p_{j}}\right) \right) .  \notag
	\end{eqnarray}%
	Turning to $\mathcal{XIII}$, recall that we have assumed $r_{k}=1$, and
	therefore $\mathbf{F}_{k,t}$ is a scalar, $\mathbf{A}_{k}$ is a $p_{k}$%
	-dimensional vector with entries $A_{ki}$, $\mathbf{B}_{k}$ is a $p_{-k}$%
	-dimensional vector with entries $B_{ki}$, and
		$\mathbf{\zeta }_{i_{1},...,i_{K}}=Vec\left( T^{-1/2}\sum_{t=1}^{T}\mathcal{F}%
		_{t}e_{t,i_{1},...,i_{K}}\right) ,$
	is also a scalar. Let now%
		$h\left( m_{1},...,m_{N}\right) =m_{1}+\left( m_{2}-1\right) I_{1}+\left(
		m_{3}-1\right) I_{1}I_{2}+...+\left( m_{N}-1\right) I_{1}I_{2}...I_{N-1},$
	where
		$\left( m_{1},...,m_{N}\right) \in \mathbb{Z}^{I_{1}\times ...\times I_{N}},$
	and define further%
		$n_{k,i_{1},...,i_{k-1},i_{k+1},...,i_{K}}=h\left(
		i_{1},...,i_{k-1},i_{k+1},...,i_{K}\right) ,$
	with $1\leq i_{j}\leq p_{j}$ - then it is easy to see that $%
	n_{k,i_{1},...,i_{k-1},i_{k+1},...,i_{K}}$ runs from $1$, when $i_{j}=1$ for
	all $j$, to $p_{-k}$ when $i_{j}=p_{j}$ for all $j$. Hence, using Assumption %
	\ref{as-4}\textit{(ii)}, it follows that%
	\begin{eqnarray*}
		&&\mathbb{E}\left( \left\Vert \sum_{s=1}^{T}\mathbf{E}_{k,s}\left( \left(
		\otimes _{{\ell }=j+1,{\ell }\neq k}^{K}\mathbf{A}_{\ell }\right) \otimes
		\left( \frac{1}{Tp}\sum_{t=1}^{T}\mathbf{E}_{j,t}\mathbf{B}_{j}\mathbf{F}%
		_{j,t}^{\top }\right) \otimes \left( \otimes _{{\ell }=1,{\ell }\neq k}^{j-1}%
		\mathbf{A}_{\ell }\right) \right) \mathbf{F}_{k,s}^{\top }\right\Vert
		_{F}^{2}\right) \\
		&=&\frac{1}{p^{2}}\sum_{i_{k}=1}^{p_{k}}\mathbb{E}\left( \frac{1}{T}%
		\sum_{t,s=1}^{T}\sum_{i_{1}=1}^{p_{1}}...\sum_{i_{k-1}=1}^{p_{k-1}}%
		\sum_{i_{k+1}=1}^{p_{k+1}}...\sum_{i_{K}=1}^{p_{K}}e_{s,i_{1},...,i_{K}}%
		\left( \prod_{\ell =1,\ell \neq j,k}^{K}A_{\ell ,i_{\ell }}\right) \left(
		\mathbf{e}_{j,t,i_{j}\cdot }\mathbf{B}_{j}\right) \mathbf{F}_{k,s}\mathbf{F}%
		_{k,t}\right) ^{2} \\
		&=&\frac{1}{p^{2}}\sum_{i_{k}=1}^{p_{k}}\mathbb{E}\left( \frac{1}{T}%
		\sum_{t,s=1}^{T}\sum_{i_{1}=1}^{p_{1}}...\sum_{i_{k-1}=1}^{p_{k-1}}%
		\sum_{i_{k+1}=1}^{p_{k+1}}...\sum_{i_{K}=1}^{p_{K}}\sum_{h_{1}=1}^{p_{1}}...%
		\sum_{h_{j-1}=1}^{p_{j-1}}\sum_{h_{j+1}=1}^{p_{j+1}}...%
		\sum_{h_{K}=1}^{p_{K}}e_{s,i_{1},...,i_{K}}\right. \\
		&&\left. \left( \prod_{\ell =1,\ell \neq j,k}^{K}A_{\ell ,i_{\ell }}\right)
		e_{s,h_{1},...,i_{j},...,h_{K}}B_{j,n_{j},h_{1},...,h_{j-1},h_{j+1},...,h_{K}}%
		\mathbf{F}_{k,s}\mathbf{F}_{k,t}\right) ^{2} \\
		&\lesssim &\frac{1}{p^{2}}\sum_{i_{k}=1}^{p_{k}}\mathbb{E}\left(
		\sum_{i_{1}=1}^{p_{1}}...\sum_{i_{k-1}=1}^{p_{k-1}}%
		\sum_{i_{k+1}=1}^{p_{k+1}}...\sum_{i_{K}=1}^{p_{K}}\sum_{h_{1}=1}^{p_{1}}...%
		\sum_{h_{j-1}=1}^{p_{j-1}}\sum_{h_{j+1}=1}^{p_{j+1}}...\sum_{h_{K}=1}^{p_{K}}%
		\mathbf{\zeta }_{i_{1},...,i_{K}}\mathbf{\zeta }_{h_{1},...,i_{j},...,h_{K}}%
		\right. \\
		&&\left. -\mathbb{E}\left( \mathbf{\zeta }_{i_{1},...,i_{K}}\mathbf{\zeta }%
		_{h_{1},...,i_{j},...,h_{K}}\right) \right) ^{2} \\
		&&+\frac{1}{p^{2}}\sum_{i_{k}=1}^{p_{k}}\mathbb{E}\left(
		\sum_{i_{1}=1}^{p_{1}}...\sum_{i_{k-1}=1}^{p_{k-1}}%
		\sum_{i_{k+1}=1}^{p_{k+1}}...\sum_{i_{K}=1}^{p_{K}}\sum_{h_{1}=1}^{p_{1}}...%
		\sum_{h_{j-1}=1}^{p_{j-1}}\sum_{h_{j+1}=1}^{p_{j+1}}...\sum_{h_{K}=1}^{p_{K}}%
		\mathbb{E}\left( \mathbf{\zeta }_{i_{1},...,i_{K}}\mathbf{\zeta }%
		_{h_{1},...,i_{j},...,h_{K}}\right) \right) ^{2} \\
		&=&\frac{1}{p^{2}}\sum_{i_{k}=1}^{p_{k}}\sum_{i_{1},i_{1}^{\prime
			}=1}^{p_{1}}...\sum_{i_{k-1},i_{k-1}^{\prime
			}=1}^{p_{k-1}}\sum_{i_{k+1},i_{k+1}^{\prime
			}=1}^{p_{k+1}}...\sum_{i_{K},i_{K}^{\prime
			}=1}^{p_{K}}\sum_{h_{1},h_{1}^{\prime
			}=1}^{p_{1}}...\sum_{h_{j-1},h_{j-1}^{\prime
			}=1}^{p_{j-1}}\sum_{h_{j+1},h_{j+1}^{\prime
			}=1}^{p_{j+1}}...\sum_{h_{K},h_{K}^{\prime }=1}^{p_{K}} \\
		&&Cov\left( \mathbf{\zeta }_{i_{1},...,i_{K}}\mathbf{\zeta }%
		_{h_{1},...,i_{j},...,h_{K}},\mathbf{\zeta }_{i_{1}^{\prime
			},...,i_{K}^{\prime }}\mathbf{\zeta }_{h_{1}^{\prime
			},...,i_{j},...,h_{K}^{\prime }}\right) \\
		&&+\frac{1}{p^{2}}\sum_{i_{k}=1}^{p_{k}}\mathbb{E}\left(
		\sum_{i_{1}=1}^{p_{1}}...\sum_{i_{k-1}=1}^{p_{k-1}}%
		\sum_{i_{k+1}=1}^{p_{k+1}}...\sum_{i_{K}=1}^{p_{K}}\sum_{h_{1}=1}^{p_{1}}...%
		\sum_{h_{j-1}=1}^{p_{j-1}}\sum_{h_{j+1}=1}^{p_{j+1}}...\sum_{h_{K}=1}^{p_{K}}%
		\mathbb{E}\left( \mathbf{\zeta }_{i_{1},...,i_{K}}\mathbf{\zeta }%
		_{h_{1},...,i_{j},...,h_{K}}\right) \right) ^{2} \\
		&\lesssim &\frac{1}{p_{j}}+\frac{1}{p_{k}}.
	\end{eqnarray*}%
	Hence, from the definition of $\mathcal{III}$ in (\ref{dec-m-hat}), by
	Theorem \ref{th:1} it holds that
	\begin{align}
		\mathcal{XIII}& \lesssim \frac{1}{p_{k}}\sum_{j=1,j\neq k}^{K}\left\Vert
		\frac{1}{Tp_{-k}}\sum_{s=1}^{T}\mathbf{E}_{k,s}\left( \left( \otimes _{{\ell
			}=j+1,{\ell }\neq k}^{K}\mathbf{A}_{\ell }\right) \otimes \left( \frac{1}{Tp}%
		\sum_{t=1}^{T}\mathbf{E}_{j,t}\mathbf{B}_{j}\mathbf{F}_{j,t}^{\top }\right)
		\otimes \left( \otimes _{{\ell }=1,{\ell }\neq k}^{j-1}\mathbf{A}_{\ell
		}\right) \right) \mathbf{F}_{k,s}^{\top }\right\Vert _{F}^{2}\times\left\Vert
		\mathbf{A}_{j}^{\top }\widehat{\mathbf{A}}_{j}\right\Vert _{F}^{2}
		\label{d13} \\
		& =O_{P}\left( \frac{1}{T^{2}p_{k}p_{-k}^{2}}\sum_{j=1,j\neq k}^{K}\left(
		p_{j}+\frac{p_{j}^{2}}{p_{k}}\right) \right) .  \notag
	\end{align}%
	We finally consider $\mathcal{XIV}$; recalling Lemma \ref{lemma1}\textit{%
		(iii)}, it holds that%
	\begin{align}
		& \mathcal{XIV}    \lesssim \frac{1}{p_{k}}\sum_{\substack{ j=1  \\ j\neq k}}^{K}\left\Vert
		\frac{1}{Tp_{-k}}\sum_{s=1}^{T}\mathbf{E}_{k,s}\left( \left( \otimes _{{\ell
			}=j+1,{\ell }\neq k}^{K}\mathbf{A}_{\ell }\right) \otimes \left( \frac{1}{Tp}%
		\sum_{t=1}^{T}\mathbf{E}_{j,t}\mathbf{E}_{j,t}^{\top }\widehat{\mathbf{A}}%
		_{j}\right) \otimes \left( \otimes _{{\ell }=1,{\ell }\neq k}^{j-1}\mathbf{A}%
		_{\ell }\right) \right) \mathbf{F}_{k,s}^{\top }\right\Vert _{F}^{2}  \label{d14}
		\\
		& \lesssim \frac{1}{p_{k}}\left\Vert \frac{1}{Tp_{-k}}\sum_{s=1}^{T}\mathbf{E%
		}_{k,s}\mathbf{F}_{k,s}\right\Vert _{F}^{2}\left( \sum_{j=1,j\neq k}^{K}%
		\left[ \left( \prod_{\ell =1,\ell \neq j,k}^{K}\left\Vert \mathbf{A}_{\ell
		}\right\Vert _{F}^{2}\right) \left( \left\Vert \frac{1}{Tp}\sum_{t=1}^{T}%
		\mathbf{E}_{j,t}\mathbf{E}_{j,t}^{\top }\mathbf{A}_{j}\right\Vert
		_{F}^{2}+\left\Vert \frac{1}{Tp}\sum_{t=1}^{T}\mathbf{E}_{j,t}\mathbf{E}%
		_{j,t}^{\top }\right\Vert _{F}^{2}\right.\right.\right.\notag\\
		&\qquad\qquad\qquad\qquad\qquad\left.\left.\left.\phantom{\sum_{k=1}^K}\left\Vert \widehat{\mathbf{A}}_{j}-%
		\mathbf{A}_{j}\widehat{\mathbf{H}}_{j}\right\Vert _{F}^{2}\right) \right]
		\right)  \notag \\
		& =O_{P}\left( \frac{1}{T^{2}p^{2}}\sum_{j=1,j\neq k}^{K}\left(
		p_{j}^{2}+Tp_{-j}^{2}\right) \right) .  \notag
	\end{align}%
	Indeed, to offer more detail on (\ref{d14}), we note that, using Assumption %
	\ref{as-4} and the same logic as above%
	\begin{eqnarray*}
		&&\mathbb{E}\left\Vert \frac{1}{Tp_{-k}}\sum_{s=1}^{T}\mathbf{E}_{k,s}%
		\mathbf{F}_{k,s}\right\Vert _{F}^{2} =\left( \frac{1}{Tp_{-k}}\right) ^{2}\sum_{i=1}^{p_{k}}\sum_{j=1}^{p_{-k}}%
		\mathbb{E}\left( \sum_{s=1}^{T}e_{k,t,ij}\mathbf{F}_{k,s}\right) ^{2}\leq
		c_{0}\frac{p_{k}}{Tp_{-k}},
	\end{eqnarray*}%
	whence%
	\begin{equation*}
		\left\Vert \frac{1}{Tp_{-k}}\sum_{s=1}^{T}\mathbf{E}_{k,s}\mathbf{F}%
		_{k,s}\right\Vert _{F}^{2}=O_{P}\left( \frac{p_{k}}{Tp_{-k}}\right) .
	\end{equation*}%
	Moreover, we know from Lemma \ref{lemma1}\textit{(iii)} that
	\begin{equation*}
		\left\Vert \frac{1}{Tp}\sum_{t=1}^{T}\mathbf{E}_{j,t}\mathbf{E}_{j,t}^{\top }%
		\mathbf{A}_{j}\right\Vert _{F}^{2}=O_{P}\left( \frac{1}{p_{k}}\right)
		+O_{P}\left( \frac{1}{Tp_{-k}}\right) ,
	\end{equation*}%
	and, by Assumption \ref{as-2}\textit{(i)}, that $\left\Vert \mathbf{A}_{\ell
	}\right\Vert _{F}^{2}=c_{0}p_{\ell }$. Noting that
	\begin{equation*}
		\prod_{\ell =1,\ell \neq j,k}^{K}p_{\ell }=\frac{p}{p_{j}p_{k}},
	\end{equation*}
	we have%
	\begin{eqnarray}
		&&\frac{1}{p_{k}}\left\Vert \frac{1}{Tp_{-k}}\sum_{s=1}^{T}\mathbf{E}_{k,s}%
		\mathbf{F}_{k,s}\right\Vert _{F}^{2}\left( \sum_{j=1,j\neq k}^{K}\left[
		\left( \prod_{\ell =1,\ell \neq j,k}^{K}\left\Vert \mathbf{A}_{\ell
		}\right\Vert _{F}^{2}\right) \left\Vert \frac{1}{Tp}\sum_{t=1}^{T}\mathbf{E}%
		_{j,t}\mathbf{E}_{j,t}^{\top }\mathbf{A}_{j}\right\Vert _{F}^{2}\right]
		\right)  \notag \\
		&=&\frac{1}{p_{k}}\frac{p_{k}}{Tp_{-k}}\left( \sum_{j=1,j\neq k}^{K}\left(
		\prod_{\ell =1,\ell \neq j,k}^{K}p_{\ell }\right) \left( \frac{1}{p_{j}}+%
		\frac{1}{Tp_{-j}}\right) \right) O_{P}\left( 1\right)  \notag \\
		&=&\frac{1}{Tp_{-k}}\left( \sum_{j=1,j\neq k}^{K}\left( \frac{p}{%
			p_{k}p_{j}^{2}}+\frac{1}{Tp_{k}}\right) \right) O_{P}\left( 1\right)
		=O_{P}\left( \frac{1}{pT^{2}}\right) +O_{P}\left( \sum_{j=1,j\neq k}^{K}%
		\frac{1}{p_{j}^{2}T}\right) .  \label{d14a}
	\end{eqnarray}%
	Also, using the facts that%
	\begin{eqnarray*}
		\left\Vert \frac{1}{Tp}\sum_{t=1}^{T}\mathbf{E}_{j,t}\mathbf{E}_{j,t}^{\top
		}\right\Vert _{F}^{2} =O_{P}\left( \frac{1}{Tp_{-j}}\right) +O_{P}\left(
		\frac{1}{p_{j}^{2}}\right) , \qquad\qquad 
		\left\Vert \widehat{\mathbf{A}}_{j}-\mathbf{A}_{j}\widehat{\mathbf{H}}%
		_{j}\right\Vert _{F}^{2} =O_{P}\left( p_{j}\widehat{w}_{j}\right) ,
	\end{eqnarray*}%
	we have%
	\begin{eqnarray}
		&&\frac{1}{p_{k}}\left\Vert \frac{1}{Tp_{-k}}\sum_{s=1}^{T}\mathbf{E}_{k,s}%
		\mathbf{F}_{k,s}\right\Vert _{F}^{2}\left( \sum_{j=1,j\neq k}^{K}\left[
		\left( \prod_{\ell =1,\ell \neq j,k}^{K}\left\Vert \mathbf{A}_{\ell
		}\right\Vert _{F}^{2}\right) \left\Vert \frac{1}{Tp}\sum_{t=1}^{T}\mathbf{E}%
		_{j,t}\mathbf{E}_{j,t}^{\top }\right\Vert _{F}^{2}\left\Vert \widehat{%
			\mathbf{A}}_{j}-\mathbf{A}_{j}\widehat{\mathbf{H}}_{j}\right\Vert _{F}^{2}%
		\right] \right) \notag  \\
		&=&\frac{1}{p_{k}}\frac{p_{k}}{Tp_{-k}}\left( \sum_{j=1,j\neq k}^{K}\left(
		\prod_{\ell =1,\ell \neq j,k}^{K}p_{\ell }\right) \left( \frac{1}{Tp_{-j}}+%
		\frac{1}{p_{j}^{2}}\right) p_{j}\widehat{w}_{j}\right) O_{P}\left( 1\right)
		=\frac{1}{T}\left( \sum_{j=1,j\neq k}^{K}\left( \frac{1}{Tp_{-j}}+\frac{1}{%
			p_{j}^{2}}\right) \widehat{w}_{j}\right) ,  \label{d14b}
	\end{eqnarray}%
	which is dominated. Then, (\ref{d14}) follows from combining (\ref{d14a})
	and (\ref{d14b}). The desired result obtains by combining (\ref{d12}), (\ref%
	{d13}) and (\ref{d14}), this showing that the sufficient condition (\ref%
	{equ:3.1})\textit{(ii)} is satisfied with%
	\begin{equation*}
		m_{-k}=\sum_{j=1,j\neq k}^{K}\left( \frac{1}{T^{2}p_{-j}^{2}}+\frac{1}{%
			Tp_{j}^{2}}\right) .
	\end{equation*}
\end{proof}

\begin{proof}[Proof of Corollary \protect\ref{cor:1}]
	The proof follows directly from Theorem \ref{th:2} and Lemma \ref{lem:1}.
\end{proof}

\begin{proof}[Proof of Theorem \protect\ref{th:5}]
	Based on equation (\ref{eq:Atildeapp}), for any $1\leq i\leq p_{k}$ it holds that%
	\begin{eqnarray*}
		\widetilde{\mathbf{A}}_{k,i}^{\top }-\widetilde{\mathbf{H}}_{k}\mathbf{A}%
		_{k,i}^{\top } &=&\frac{1}{Tp_{k}p_{-k}^{2}}\widetilde{\mathbf{\Lambda }}_{k}^{-1}%
		\widetilde{\mathbf{A}}_{k}^{\top }\sum_{t=1}^{T}\left( \mathbf{A}_{k}\mathbf{%
			F}_{k,t}\mathbf{B}_{k}^{\top }\widehat{\mathbf{B}}_{k}\widehat{\mathbf{B}}%
		_{k}^{\top }\mathbf{e}_{t,k,i\cdot }^{\top }+\mathbf{E}_{k,t}\widehat{%
			\mathbf{B}}_{k}\widehat{\mathbf{B}}_{k}^{\top }\mathbf{B}_{k}\mathbf{F}%
		_{k,t}^{\top }\mathbf{A}_{k,i\cdot }^{\top }+\mathbf{E}_{k,t}\widehat{%
			\mathbf{B}}_{k}\widehat{\mathbf{B}}_{k}^{\top }\mathbf{e}_{t,k,i\cdot
		}^{\top }\right) \\
		&=&\mathcal{XV}+\mathcal{XVI}+\mathcal{XVII}\text{.}
	\end{eqnarray*}%
	Using Theorem \ref{th:1}, it holds that%
	\begin{eqnarray*}
		\mathcal{XV} &=&\widetilde{\mathbf{\Lambda }}_{k}^{-1}\frac{\widetilde{\mathbf{A}}%
			_{k}^{\top }\mathbf{A}_{k}}{p_{k}}\frac{1}{Tp_{-k}}\sum_{t=1}^{T}\mathbf{F}%
		_{k,t}\frac{\mathbf{B}_{k}^{\top }\widehat{\mathbf{B}}_{k}}{p_{-k}}\left(
		\widehat{\mathbf{B}}_{k}-\mathbf{B}_{k}\widehat{\mathbf{H}}_{-k}+\mathbf{B}%
		_{k}\widehat{\mathbf{H}}_{-k}\right) ^{\top }\mathbf{e}_{t,k,i\cdot }^{\top }
		\\
		&=&\widetilde{\mathbf{\Lambda }}_{k}^{-1}\frac{\widetilde{\mathbf{A}}%
			_{k}^{\top }\mathbf{A}_{k}}{p_{k}}\frac{1}{Tp_{-k}}\sum_{t=1}^{T}\mathbf{F}%
		_{k,t}\left( \mathbf{I}_{r_{-k}}+O_{P}\left( w_{k}^{1/2}\right) \right)
		\mathbf{B}_{k}^{\top }\mathbf{e}_{t,k,i\cdot }^{\top } +\widetilde{\mathbf{\Lambda }}_{k}^{-1}\frac{\widetilde{\mathbf{A}}%
			_{k}^{\top }\mathbf{A}_{k}}{p_{k}}\frac{\mathbf{B}_{k}^{\top }\widehat{%
				\mathbf{B}}_{k}}{p_{-k}}\frac{1}{Tp_{-k}}\sum_{t=1}^{T}\mathbf{F}%
		_{k,t}\left( \widehat{\mathbf{B}}_{k}-\mathbf{B}_{k}\widehat{\mathbf{H}}%
		_{-k}\right) ^{\top }\mathbf{e}_{t,k,i\cdot }^{\top }
	\end{eqnarray*}%
	By Lemma \ref{lem:1}, it follows that
	\begin{equation*}
		\left\Vert \frac{1}{Tp_{-k}}\sum_{t=1}^{T}\mathbf{F}_{k,t}\left( \widehat{%
			\mathbf{B}}_{k}-\mathbf{B}_{k}\widehat{\mathbf{H}}_{-k}\right) ^{\top }%
		\boldsymbol{e}_{k,t,i\cdot }^{\top }\right\Vert _{F}={O}_{{P}}\left(
		\sum_{j=1,j\neq k}^{K}\left( \frac{1}{p_{j}\sqrt{T}}+\frac{1}{Tp_{-j}}%
		\right) \right) .
	\end{equation*}%
	Hence, using Theorems \ref{th:1} and \ref{th:2}, equation (\ref%
	{eq:LambdaTilde}) and noting that, by Lemma \ref{lemma1}\textit{(ii)}%
	\begin{equation*}
		\widetilde{\mathbf{\Lambda }}_{k}^{-1}\frac{\widetilde{\mathbf{A}}_{k}^{\top
			}\mathbf{A}_{k}}{p_{k}}\frac{1}{Tp_{-k}}\sum_{t=1}^{T}\mathbf{F}_{k,t}%
		\mathbf{B}_{k}^{\top }\boldsymbol{e}_{k,t,i\cdot }^{\top }={O}_{{P}}\left(
		\frac{1}{\sqrt{Tp_{-k}}}\right) ,
	\end{equation*}
	it follows that%
	\begin{equation}
		\mathcal{XV}=\widetilde{\mathbf{\Lambda }}_{k}^{-1}\frac{\widetilde{\mathbf{A%
			}}_{k}^{\top }\mathbf{A}_{k}}{p_{k}}\frac{1}{Tp_{-k}}\sum_{t=1}^{T}\mathbf{F}%
		_{k,t}\mathbf{B}_{k}^{\top }\boldsymbol{e}_{k,t,i\cdot }^{\top }+{o}_{{P}%
		}\left( \frac{1}{\sqrt{Tp_{-k}}}\right) +{O}_{{P}}\left( \sum_{j=1,j\neq
			k}^{K}\left( \frac{1}{p_{j}\sqrt{T}}+\frac{1}{Tp_{-j}}\right) \right) .
		\label{d15}
	\end{equation}%
	Using now by Theorem \ref{th:1}, Assumption \ref{as-2}\textit{(i)} and (\ref%
	{eq:LambdaTilde}), we now turn to showing that%
	\begin{eqnarray}
		\left\Vert \mathcal{XVI}\right\Vert _{F}  &\lesssim &\left\Vert \frac{1}{Tp}\sum_{t=1}^{T}\widetilde{\mathbf{A}}%
		_{k}^{\top }\mathbf{E}_{k,t}\widehat{\mathbf{B}}_{k}\mathbf{F}_{k,t}^{\top
		}\right\Vert _{F}   \label{d16} \\
		&\lesssim &\left\Vert \widetilde{\mathbf{A}}_{k}-\mathbf{A}_{k}\widetilde{%
			\mathbf{H}}_{k}\right\Vert _{F}\left\Vert \frac{1}{Tp}\sum_{t=1}^{T}\mathbf{E%
		}_{k,t}\widehat{\mathbf{B}}_{k}\mathbf{F}_{k,t}^{\top }\right\Vert
		_{F}+\left\Vert \frac{1}{Tp}\sum_{t=1}^{T}\mathbf{A}_{k}^{\top }\mathbf{E}%
		_{k,t}\widehat{\mathbf{B}}_{k}\mathbf{F}_{k,t}^{\top }\right\Vert _{F}
		\notag \\
		&=&O_{P}\left( \frac{1}{\sqrt{Tp}}+\frac{1}{Tp_{-k}}+\sum_{j=1,j\neq
			k}^{K}\left( \frac{1}{Tp_{-j}}+\frac{1}{\sqrt{Tp_{k}}p_{j}}+\frac{1}{Tp_{j}%
			\sqrt{p_{-k}}}+\frac{1}{Tp_{j}^{2}}\right) \right) .  \notag
	\end{eqnarray}%
	In order to prove (\ref{d16}), we begin by noting that by Lemmas \ref{lemma1}%
	\textit{(ii)} and \ref{lem:1}, it follows that%
	\begin{eqnarray*}
		\left\Vert \sum_{t=1}^{T}\mathbf{E}_{k,t}\widehat{\mathbf{B}}_{k}\mathbf{F}%
		_{k,t}^{\top }\right\Vert _{F} &\lesssim& \left\Vert \sum_{t=1}^{T}\mathbf{E}_{k,t}\mathbf{B}_{k}\mathbf{F}%
		_{k,t}^{\top }\right\Vert _{F}+\left\Vert \sum_{t=1}^{T}\mathbf{E}%
		_{k,t}\left( \widehat{\mathbf{B}}_{k}-\mathbf{B}_{k}\widehat{\mathbf{H}}%
		_{-k}\right) \mathbf{F}_{k,t}^{\top }\right\Vert _{F} \\
		&=&O_{P}\left( Tp+\sum_{j=1,j\neq k}^{K}\left( p_{j}^{2}+Tp_{-j}^{2}\right)
		\right) .
	\end{eqnarray*}%
	We also note that%
	\begin{eqnarray*}
		&&\left\Vert \sum_{t=1}^{T}\mathbf{A}_{k}^{\top }\mathbf{E}_{k,t}\widehat{%
			\mathbf{B}}_{k}\mathbf{F}_{k,t}^{\top }\right\Vert _{F}^{2} \lesssim \left\Vert \sum_{t=1}^{T}\mathbf{A}_{k}^{\top }\mathbf{E}_{k,t}%
		\mathbf{B}_{k}\widehat{\mathbf{H}}_{-k}\mathbf{F}_{k,t}^{\top }\right\Vert
		_{F}^{2}+\left\Vert \sum_{t=1}^{T}\mathbf{A}_{k}^{\top }\mathbf{E}%
		_{k,t}\left( \widehat{\mathbf{B}}_{k}-\mathbf{B}_{k}\widehat{\mathbf{H}}%
		_{-k}\right) \mathbf{F}_{k,t}^{\top }\right\Vert _{F}^{2}.
	\end{eqnarray*}%
	Recalling that $\left\Vert \widehat{\mathbf{H}}_{-k}\right\Vert
	_{F}^{2}=O_{P}\left( 1\right) $, Assumption \ref{as-4}\textit{(i)} readily
	entails that%
	\begin{equation*}
		\left\Vert \sum_{t=1}^{T}\mathbf{A}_{k}^{\top }\mathbf{E}_{k,t}\mathbf{B}_{k}%
		\widehat{\mathbf{H}}_{-k}\mathbf{F}_{k,t}^{\top }\right\Vert
		_{F}^{2}=O_{P}\left( Tp\right) .
	\end{equation*}%
	Further (recall the notation in (\ref{dec-m-hat}))%
	\begin{eqnarray*}
		&&\left\Vert \sum_{t=1}^{T}\mathbf{A}_{k}^{\top }\mathbf{E}_{k,t}\left(
		\widehat{\mathbf{B}}_{k}-\mathbf{B}_{k}\widehat{\mathbf{H}}_{-k}\right)
		\mathbf{F}_{k,t}^{\top }\right\Vert _{F}^{2} \\
		&\lesssim &\sum_{j=1,j\neq k}^{K}\left\Vert \sum_{t=1}^{T}\mathbf{A}%
		_{k}^{\top }\mathbf{E}_{k,t}\left( \left( \otimes _{{\ell }=j+1,{\ell }\neq
			k}^{K}\mathbf{A}_{\ell }\right) \otimes \left( \mathcal{II}\widehat{\mathbf{A%
		}}_{j}\right) \otimes \left( \otimes _{{\ell }=1,{\ell }\neq k}^{j-1}\mathbf{%
			A}_{\ell }\right) \right) \mathbf{F}_{k,t}^{\top }\right\Vert _{F}^{2} \\
		&&+\sum_{j=1,j\neq k}^{K}\left\Vert \sum_{t=1}^{T}\mathbf{A}_{k}^{\top }%
		\mathbf{E}_{k,t}\left( \left( \otimes _{{\ell }=j+1,{\ell }\neq k}^{K}%
		\mathbf{A}_{\ell }\right) \otimes \left( \mathcal{III}\widehat{\mathbf{A}}%
		_{j}\right) \otimes \left( \otimes _{{\ell }=1,{\ell }\neq k}^{j-1}\mathbf{A}%
		_{\ell }\right) \right) \mathbf{F}_{k,t}^{\top }\right\Vert _{F}^{2} \\
		&&+\sum_{j=1,j\neq k}^{K}\left\Vert \sum_{t=1}^{T}\mathbf{A}_{k}^{\top }%
		\mathbf{E}_{k,t}\left( \left( \otimes _{{\ell }=j+1,{\ell }\neq k}^{K}%
		\mathbf{A}_{\ell }\right) \otimes \left( \mathcal{IV}\widehat{\mathbf{A}}%
		_{j}\right) \otimes \left( \otimes _{{\ell }=1,{\ell }\neq k}^{j-1}\mathbf{A}%
		_{\ell }\right) \right) \mathbf{F}_{k,t}^{\top }\right\Vert _{F}^{2} \\
		&=&\mathcal{XVIII}+\mathcal{XIX}+\mathcal{XX}.
	\end{eqnarray*}%
	Upon inspecting the proof of Lemma \ref{lem:1}, it can be shown that%
	\begin{eqnarray*}
		\mathcal{XVIII} &\lesssim &\sum_{j=1,j\neq k}^{K}\left\Vert \sum_{t=1}^{T}%
		\mathbf{A}_{k}^{\top }\mathbf{E}_{k,t}\mathbf{B}_{k}\mathbf{F}_{k,t}^{\top
		}\right\Vert _{F}^{2}\left\Vert \frac{1}{Tp}\sum_{t=1}^{T}\mathbf{F}_{j,t}%
		\mathbf{B}_{j}^{\top }\mathbf{E}_{j,t}^{\top }\widehat{\mathbf{A}}_{j}%
		\mathbf{F}_{k,t}^{\top }\right\Vert _{F}^{2} =O_{P}\left( \sum_{j=1,j\neq k}^{K}\left( 1+\frac{p_{j}}{Tp_{-j}}\right)
		\right) ,
	\end{eqnarray*}%
	\begin{eqnarray*}
		\mathcal{XIX} &\lesssim &T^{2}p_{k}p_{-k}^{2}\left\Vert \mathbf{A}%
		_{k}\right\Vert _{F}^{2}\mathcal{XIII} =O_{P}\left( \sum_{j=1,j\neq k}^{K}\left( p_{j}^{2}+p_{j}p_{k}\right)
		\right) ,
	\end{eqnarray*}%
	and
	\begin{eqnarray*}
		\mathcal{XX} &\lesssim &\sum_{j=1,j\neq k}^{K}\left\Vert \sum_{t=1}^{T}\mathbf{A}%
		_{k}^{\top }\mathbf{E}_{k,t}\mathbf{F}_{k,t}\right\Vert _{F}^{2}\left(
		\prod_{\ell =1,\ell \neq j,k}^{K}\left\Vert \mathbf{A}_{\ell }\right\Vert
		_{F}^{2}\right) \bigg( \left\Vert \frac{1}{Tp}\sum_{t=1}^{T}\mathbf{E}_{j,t}%
		\mathbf{E}_{j,t}^{\top }\mathbf{A}_{j}\right\Vert _{F}^{2}+\left\Vert \frac{1%
		}{Tp}\sum_{t=1}^{T}\mathbf{E}_{j,t}\mathbf{E}_{j,t}^{\top }\right\Vert
		_{F}^{2}\left\Vert \widehat{\mathbf{A}}_{j}-\mathbf{A}_{j}\widehat{\mathbf{H}%
		}_{j}\right\Vert _{F}^{2}\bigg) \\
		&=&O_{P}\left( \sum_{j=1,j\neq k}^{K}\frac{1}{p_{k}}\left( Tp_{-j}^{2}+\frac{%
			p_{j}^{2}}{T}\right) \right) .
	\end{eqnarray*}%
	Putting all together, we have%
	\begin{equation*}
		\left\Vert \sum_{t=1}^{T}\mathbf{A}_{k}^{\top }\mathbf{E}_{k,t}\widehat{%
			\mathbf{B}}_{k}\mathbf{F}_{k,t}^{\top }\right\Vert _{F}^{2}=O_{P}\left(
		Tp+\sum_{j=1,j\neq k}^{K}\left( p_{j}^{2}+p_{j}p_{k}+\frac{Tp_{-j}^{2}}{p_{k}%
		}\right) \right) ,
	\end{equation*}%
	whence we finally obtain (\ref{d16}).
	
	Finally we consider $\mathcal{XVII}$. Following the same passages as in the
	proof of Lemma \ref{lemma5}, we obtain%
	\begin{eqnarray}
		\left\Vert \mathcal{XVI}\right\Vert _{F}  &\lesssim &\frac{1}{Tp}+\frac{1}{p^{2}}+\left( \frac{1}{p_{-k}^{2}}+\frac{%
			w_{-k}}{p_{-k}}\right) \widetilde{w}_{k}+\left( \frac{1}{Tp}+\frac{1}{p^{2}}%
		\right) w_{-k}+\left( \frac{1}{Tp_{k}}+\frac{1}{p_{k}^{2}}\right)
		w_{-k}^{2}+\left( \frac{1}{p_{-k}}+w_{-k}\right) w_{-k}\widetilde{w}_{k}\notag  \\
		&=&O_{P}\left( \frac{1}{p}\right) +O_{P}\left( \sum_{j=1,j\neq k}^{K}\frac{1%
		}{p_{k}p_{j}^{2}}\right) +o_{P}\left( \frac{1}{\sqrt{Tp_{-k}}}\right)
		+o_{P}\left( \frac{1}{Tp_{-j}}\right) +o_{P}\left( \frac{1}{\sqrt{Tp_{j}}}%
		\right) .  \label{d17}
	\end{eqnarray}%
	
	Now, consider the restriction
	\begin{equation}\label{eq:restriction_contradict}
		Tp_{-k}=o\left( \min \left(
		p^{2},T^{2}p_{-j}^{2},Tp_{j}^{2},p_{k}^{2}p_{j}^{4}\right) \right) ,
	\end{equation}
	for all $j\ne k$ and as $\min\{T, p_1,\ldots, p_K\}\to\infty$, and notice that if $K\ge 3$ this cannot be satisfied. Indeed,  if $K=3$, we would need $Tp_2p_3 = o(Tp_2^2)$ but also and $Tp_2p_3 = o(Tp_3^2)$, which would create a contradiction.
	
	Hence let us consider first the case $K\le 2$, and by combining (\ref{d15})-(\ref{d17}), as $\min \left\{ T,p_{1},\ldots
	,p_{K}\right\} \rightarrow \infty $ under the restriction \eqref{eq:restriction_contradict}
	we have
	\begin{equation*}
		\sqrt{Tp_{-k}}\left( \widetilde{\mathbf{A}}_{k,i\cdot }^{\top }-\widetilde{%
			\mathbf{H}}_{k}^{\top }\mathbf{A}_{k,i\cdot }^{\top }\right) =\widetilde{%
			\boldsymbol{\Lambda }}_{k}^{-1}\frac{\widetilde{\mathbf{A}}_{k}^{\top }%
			\mathbf{A}_{k}}{p_{k}}\frac{1}{\sqrt{Tp_{-k}}}\sum_{t=1}^{T}\mathbf{F}_{k,t}%
		\mathbf{B}_{k}^{\top }\boldsymbol{e}_{k,t,i\cdot }^{\top }+{o}_{{P}}({1}).
	\end{equation*}%
	We know from Lemma \ref{lemma4} that $\widetilde{\boldsymbol{\Lambda }}_{k}%
	\overset{p}{\rightarrow }\boldsymbol{\Lambda }_{k}$; further, recall that,
	by Assumption \ref{as-1}\textit{(ii)}, $\mathbf{\Sigma }_{k}$ has spectral
	decomposition $\mathbf{\Sigma }_{k}=\boldsymbol{\Gamma }_{k}\boldsymbol{%
		\Lambda }_{k}\boldsymbol{\Gamma }_{k}^{\top }$. Using the definition of $%
	\widetilde{\mathbf{H}}_{k}$ in (\ref{eq:Htildedefinition}), Theorems \ref%
	{th:1} and \ref{th:2} entail
$		\widetilde{\mathbf{H}}_{k}=\boldsymbol{c}_{k}\boldsymbol{\Lambda }_{k}%
		\boldsymbol{\Gamma }_{k}^{\top }\widetilde{\mathbf{H}}_{k}\boldsymbol{%
			\Lambda }_{k}^{-1}+{o}_{{P}}({1}).$
	By rearranging the last equation, and since $\boldsymbol{\Gamma }_{k}^{\top }%
	\boldsymbol{\Gamma }_{k}=\mathbf{I}_{r_{k}}$,
$		\boldsymbol{\Gamma }_{k}^{\top }\tilde{\mathbf{H}}_{k}\boldsymbol{\Lambda }%
		_{k}=\boldsymbol{\Lambda }_{k}\boldsymbol{\Gamma }_{k}^{\top }\widetilde{%
			\mathbf{H}}_{k}+{o}_{{P}}({1}).$
	Hence, $\boldsymbol{\Gamma }_{k}^{\top }\widetilde{\mathbf{H}}_{k}=\mathbf{I}%
	_{r_{k}}+{o}_{{P}}({1})$. Therefore, from Theorem \ref{th:2}, $p_{k}^{-1}%
	\widetilde{\mathbf{A}}_{k}^{\top }\mathbf{A}_{k}=\boldsymbol{\Gamma }%
	_{k}^{\top }+{o}_{{P}}({1})$. Using now Slutsky's theorem, and Assumption %
	\ref{as-5}\textit{(i)}, it readily follows that
	\begin{equation*}
		\sqrt{Tp_{-k}}\left( \widetilde{\mathbf{A}}_{k,i\cdot }^{\top }-\widetilde{%
			\mathbf{H}}_{k}^{\top }\mathbf{A}_{k,i\cdot }^{\top }\right) \overset{d}{%
			\rightarrow }\mathcal{N}\left( \mathbf{0},\boldsymbol{\Lambda }_{k}^{-1}%
		\boldsymbol{\Gamma }_{k}^{\top }\mathbf{V}_{ki}\boldsymbol{\Gamma }_{k}%
		\boldsymbol{\Lambda }_{k}^{-1}\right) ,
	\end{equation*}%
	which concludes the proof of part (i).
	
	Finally, consider the case in which $K\le 2$ and we impose the restriction
	\begin{equation}\label{eq:restriction_K3}
		Tp_{-k}\gtrsim \min \left\{
		p^{2},T^{2}p_{-j}^{2},Tp_{j}^{2},p_{k}^{2}p_{j}^{2}\right\} ,
	\end{equation}
	for all $j\ne k$ and as $\min\{T, p_1,\ldots, p_K\}\to\infty$, and notice also that if $K\ge 3$, then \eqref{eq:restriction_K3} always hold. Then, the proof of part (ii) follows from \eqref{d15}-\eqref{d17} under \eqref{eq:restriction_K3}. This completes the proof.
\end{proof}

\begin{proof}[Proof of Theorem \protect\ref{th:6}]
	By definition, it holds that%
	\begin{eqnarray}
		\widetilde{\mathcal{F}}_{t} &=&\frac{1}{p}\mathcal{X}_{t}\times _{1}%
		\widetilde{\mathbf{A}}_{1}^{\top }\times _{2}...\times _{K}\widetilde{%
			\mathbf{A}}_{K}^{\top }=\frac{1}{p}\left( \mathcal{F}_{t}\times _{1}\mathbf{A}_{1}\times
		_{2}...\times _{K}\mathbf{A}_{K}+\mathcal{E}_{t}\right) \times _{1}%
		\widetilde{\mathbf{A}}_{1}^{\top }\times _{2}...\times _{K}\widetilde{%
			\mathbf{A}}_{K}^{\top }\notag \\
		&=&\frac{1}{p}\mathcal{F}_{t}\times _{1}\widetilde{\mathbf{A}}_{1}^{\top
		}\left( \mathbf{A}_{1}-\widetilde{\mathbf{A}}_{1}\widetilde{\mathbf{H}}%
		_{1}^{-1}+\widetilde{\mathbf{A}}_{1}\widetilde{\mathbf{H}}_{1}^{-1}\right)
		\times _{2}...\times _{K}\widetilde{\mathbf{A}}_{K}^{\top }\left( \mathbf{A}%
		_{K}-\widetilde{\mathbf{A}}_{K}\widetilde{\mathbf{H}}_{K}^{-1}+\widetilde{%
			\mathbf{A}}_{K}\widetilde{\mathbf{H}}_{K}^{-1}\right) \notag \\
		&&+\frac{1}{p}\mathcal{E}_{t}\times _{1}\left( \widetilde{\mathbf{A}}_{1}-%
		\mathbf{A}_{1}\widetilde{\mathbf{H}}_{1}+\mathbf{A}_{1}\widetilde{\mathbf{H}}%
		_{1}\right) ^{\top }\times _{2}...\times _{K}\left( \widetilde{\mathbf{A}}%
		_{K}-\mathbf{A}_{K}\widetilde{\mathbf{H}}_{K}+\mathbf{A}_{K}\widetilde{%
			\mathbf{H}}_{K}\right) ^{\top }.\label{eq:AEFT}
	\end{eqnarray}%
	Then we have%
	\begin{eqnarray*}
		&&\left\Vert \widetilde{\mathcal{F}}_{t}-\mathcal{F}_{t}\times _{1}%
		\widetilde{\mathbf{H}}_{1}^{-1}\times _{2}...\times _{K}\widetilde{\mathbf{H}%
		}_{K}^{-1}\right\Vert _{F} \lesssim \sum_{k=1}^{K}\left\Vert \frac{1}{p_{k}}\widetilde{\mathbf{A}}%
		_{k}^{\top }\left( \mathbf{A}_{k}-\widetilde{\mathbf{A}}_{k}\widetilde{%
			\mathbf{H}}_{k}^{-1}\right) \mathbf{F}_{k,t}\right\Vert
		_{F}+\sum_{k=1}^{K}\left\Vert \frac{1}{p}\left( \widetilde{\mathbf{A}}_{k}-%
		\mathbf{A}_{k}\widetilde{\mathbf{H}}_{k}\right) ^{\top }\mathbf{E}_{k,t}%
		\mathbf{B}_{k}\right\Vert _{F}+\sum_{k=1}^{K}\left\Vert \frac{1}{p}\mathbf{A}%
		_{k}^{\top }\mathbf{E}_{k,t}\mathbf{B}_{k}\right\Vert _{F}.
	\end{eqnarray*}%
	The first term follows immediately by using Lemma \ref{lemma6}; similarly,
	it also follows that
	\begin{eqnarray*}
		\left\Vert \frac{1}{p}\left( \widetilde{\mathbf{A}}_{k}-\mathbf{A}_{k}%
		\widetilde{\mathbf{H}}_{k}\right) ^{\top }\mathbf{E}_{k,t}\mathbf{B}%
		_{k}\right\Vert _{F}^{2} &\leq &\frac{1}{p^{2}}\left\Vert \widetilde{\mathbf{A}}_{k}-\mathbf{A}_{k}%
		\widetilde{\mathbf{H}}_{k}\right\Vert _{F}^{2}\left\Vert \mathbf{E}_{k,t}%
		\mathbf{B}_{k}\right\Vert _{F}^{2} \\
		&\lesssim &\frac{1}{pp_{-k}}\widetilde{w}_{k}\left(
		\sum_{i=1}^{p_{k}}\sum_{h=1}^{r_{-k}}\left(
		\sum_{j=1}^{p_{-k}}e_{k,t,ij}B_{k,jh}\right) ^{2}\right)  \lesssim \frac{1}{pp_{-k}}\widetilde{w}_{k}\sum_{i=1}^{p_{k}}%
		\sum_{j_{1},j_{2}=1}^{p_{-k}}\left\vert
		e_{k,t,ij_{1}}e_{k,t,ij_{2}}\right\vert =O_{P}\left( 1\right) \frac{1}{p_{-k}%
		}\widetilde{w}_{k},
	\end{eqnarray*}%
	having used (\ref{weak-dep-4}) in the last passage. Also, it holds that%
	\begin{eqnarray*}
		\left\Vert \frac{1}{p}\mathbf{A}_{k}^{\top }\mathbf{E}_{k,t}\mathbf{B}%
		_{k}\right\Vert _{F}^{2} &=&\frac{1}{p^{2}}\sum_{h=1}^{r_{k}}\sum_{g=1}^{r_{-k}}\left(
		\sum_{i=1}^{p_{k}}\sum_{j=1}^{p_{-k}}A_{k,ih}B_{k,jg}e_{k,t,ij}\right) ^{2}
		\lesssim \frac{1}{p^{2}}\sum_{i_{1},i_{2}=1}^{p_{k}}%
		\sum_{j_{1},j_{2}=1}^{p_{-k}}\left\vert
		e_{k,t,ij_{1}}e_{k,t,ij_{2}}\right\vert =O_{P}\left( 1\right) \frac{1}{p},
	\end{eqnarray*}%
	again by (\ref{weak-dep-4}). Hence, we have%
	\begin{eqnarray}
		\left\Vert \widetilde{\mathcal{F}}_{t}-\mathcal{F}_{t}\times _{1}%
		\widetilde{\mathbf{H}}_{1}^{-1}\times _{2}...\times _{K}\widetilde{\mathbf{H}%
		}_{K}^{-1}\right\Vert _{F}&\lesssim& O_{P}\left( 1\right) \left( \frac{1}{\sqrt{p}}+\sum_{k=1}^{K}%
		\left( z_{k}+\sqrt{\frac{\widetilde{w}_{k}}{p_{-k}}}\right) \right) \notag \\
		&=&O_{P}\left( \frac{1}{\sqrt{p}}+\sum_{k=1}^{K}\left( \frac{1}{\sqrt{T}p_{k}%
		}+\frac{1}{\sqrt{Tp_{-k}}}\sum_{j=1,j\neq k}^K\frac{1}{p_{k}p_{j}^{2}}\right)
		\right) ,\label{eq:consFapp}
	\end{eqnarray}%
	where we have used the short-hand notation (based on Lemma \ref{lemma6})%
	\begin{eqnarray*}
		z_{k} &=&O_{P}\left( \frac{1}{\sqrt{Tp}}\right) +O_{P}\left( \frac{1}{p}%
		\right) +O_{P}\left( \sum_{j=1}^{K}\frac{1}{Tp_{-j}}\right)  +O_{P}\left( \sum_{j=1,j\neq k}^{K}\left( \frac{1}{Tp_{j}\sqrt{p_{-k}}}+%
		\frac{1}{p_{j}\sqrt{Tp_{k}}}+\frac{1}{p_{k}p_{j}^{2}}+\frac{1}{Tp_{j}^{2}}%
		\right) \right) .
	\end{eqnarray*}%
	
	Now, if for all $1\le k\le K$
	\begin{equation}
		p = o\left(\min\left\{ Tp_{-k}^2,Tp_k p_j^2, Tp_{-k} p_j^2,p_k^2p_j^4\right\}\right)\label{eq:restrictionF}
	\end{equation}
	for  all $j\ne k$ and as $\min\{T,p_1,\ldots, p_K\}\to\infty$, then from \eqref{eq:AEFT} we have
	\begin{align}
		\sqrt p \left(\widetilde{\mathcal F}_t-\mathcal F_t\times_{k=1}^K \widetilde{\mathbf H}_k^{-1}\right)=\frac 1{\sqrt p}\mathcal E_t \times_{k=1}^K \left(\mathbf A_k\widetilde{\mathbf H}_k\right)^\top+o_P(1),\nonumber
	\end{align}
	from which it follows that for all $1\le k\le K$
	\[
	\sqrt p\left(\widetilde{\mathbf F}_{t,k}-\widetilde {\mathbf H}_k^{-1} {\mathbf F}_{t,k} \left(\widetilde {\mathbf H}_{-k}^{-1}\right)^\top\right)=
	\frac 1{\sqrt p} \widetilde{\mathbf H}_k^\top{\mathbf A}_k^\top{\mathbf E}_{k,t}{\mathbf B}_k \widetilde{\mathbf H}_{-k}
	+o_P(1),
	\]
	where $\widetilde {\mathbf H}_{-k}=\otimes_{j\in[K]\slash \{k\}}\widetilde {\mathbf H}_j$ and so $\widetilde {\mathbf H}_{-k}^{-1}=\otimes_{j\in[K]\slash \{k\}}\widetilde {\mathbf H}_j^{-1}$. And therefore,
	\begin{equation}
		\sqrt p\left(\text{Vec}\left(\widetilde{\mathbf F}_{t,k}\right)-\left(\widetilde {\mathbf H}_{-k} \otimes \widetilde {\mathbf H}_{k}\right)^{-1} \text{Vec}\left({\mathbf F}_{t,k}\right)\right)
		= \frac 1{\sqrt p}\left(\widetilde {\mathbf H}_{-k} \otimes \widetilde {\mathbf H}_{k}\right)^\top \left(\mathbf B_k\otimes \mathbf A_k\right)^\top\text{Vec}\left(\mathbf E_{k,t} \right)+o_P(1).\nonumber
	\end{equation}
	The proof of part (i) follows from  Slutsky's theorem, and Assumption %
	\ref{as-5}\textit{(ii)}. The proof of part (ii) follows from \eqref{eq:consFapp} when condition \eqref{eq:restrictionF} does not hold.
\end{proof}

\begin{proof}[Proof of Theorem \protect\ref{atilde}]
	We begin by noting that%
	\begin{equation}
		\widetilde{\mathbf{H}}_{k}^{\ast }=\widetilde{\mathbf{\Lambda }}%
		_{k}^{-1}\left( \frac{\widetilde{\mathbf{A}}_{k}^{\top }\mathbf{A}_{k}}{p_{k}%
		}\right) \frac{1}{T}\sum_{t=1}^{T}\mathbf{F}_{k,t}\left( \frac{\mathbf{B}%
			_{k}^{\top }\widehat{\mathbf{B}}_{k}^{\ast }}{p_{-k}}\right) \left( \frac{%
			\mathbf{B}_{k}^{\top }\widehat{\mathbf{B}}_{k}^{\ast }}{p_{-k}}\right)
		^{\top }\mathbf{F}_{k,t}^{\top }.  \label{h-tilde-star}
	\end{equation}%
	The proof of the first part follows from showing that%
	\begin{equation*}
		\frac{1}{p_{-k}}\left\Vert \widehat{\mathbf{B}}_{k}^{\ast }-\mathbf{B}_{k}%
		\widehat{\mathbf{H}}_{-k}^{\ast }\right\Vert _{F}^{2}=O_{P}\left( \frac{1}{%
			Tp_{k}}\right) +O_{P}\left( \frac{1}{p_{-k}^{2}}\right) ,
	\end{equation*}%
	with $\widehat{\mathbf{H}}_{-k}^{\ast }=\otimes _{j=1,j\neq k}^{K}\widehat{%
		\mathbf{H}}_{k}^{\ast }$, using the same logic as in the remainder of the
	proofs, and subsequently from a direct application of Corollary 3.1 in \cite%
	{Yu2021Projected}. The second part also follows immediately, by
	the same passages as in the proof of Theorem \ref{th:5}, \textit{mutatis
		mutandis} using $\widehat{\mathbf{B}}_{k}^{\ast }$ instead of $\widehat{%
		\mathbf{B}}_{k}$. 
\end{proof}

\begin{proof}[Proof of Theorem \protect\ref{th:6bis}]
	Recall the notation
		$\widetilde{S}_{t,i_{1},...,i_{K}} =\widetilde{\mathcal{F}}_{t}\times _{1}%
		\widetilde{\mathbf{A}}_{1,i_{1}\cdot }\times _{2}...\times _{K}\widetilde{%
			\mathbf{A}}_{K,i_{K}\cdot },$ and $S_{t,i_{1},...,i_{K}} =\mathcal{F}_{t}\times _{1}\mathbf{A}_{1,i_{1}\cdot
		}\times _{2}...\times _{K}\mathbf{A}_{K,i_{K}\cdot }$
	We have from Theorems \ref{th:5} and \ref{th:6}
	\begin{eqnarray*}
		\left\vert \widetilde{S}_{t,i_{1},...,i_{K}}-S_{t,i_{1},...,i_{K}}\right\vert  &\lesssim &\sum_{k=1}^{K}\left\Vert \left( \widetilde{\mathbf{A}}%
		_{k,i_{k}\cdot }-\mathbf{A}_{k,i_{k}\cdot }\widetilde{\mathbf{H}}_{k}\right)
		^{\top }\widetilde{\mathbf{F}}_{k,t}\mathbf{B}_{k,j_{k}\cdot }\widetilde{%
			\mathbf{H}}_{-k}\right\Vert _{F} \\
		&&+\left\Vert \left( \widetilde{\mathcal{F}}_{t}-\mathcal{F}_{t}\times _{1}%
		\widetilde{\mathbf{H}}_{1}^{-1}\times _{2}...\times _{K}\widetilde{\mathbf{H}%
		}_{K}^{-1}\right) \times _{1}\left( \mathbf{A}_{1,i_{1}\cdot }\widetilde{%
			\mathbf{H}}_{1}\right) ^{\top }\times _{2}...\times _{K}\left( \mathbf{A}%
		_{K,i_{K}\cdot }\widetilde{\mathbf{H}}_{K}\right) ^{\top }\right\Vert _{F} \\
		&\lesssim &O_{P}\left( 1\right) \left( \sum_{k=1}^{K}\sqrt{\widetilde{w}_{k}}%
		+\frac{1}{\sqrt{p}}+\sum_{k=1}^{K}\left( z_{k}+\sqrt{\frac{\widetilde{w}_{k}%
			}{p_{-k}}}\right) \right)  =O_{P}\left( \frac{1}{\sqrt{p}}+\sum_{k=1}^{K}\left( \frac{1}{\sqrt{T}p_{k}%
		}+\frac{1}{\sqrt{Tp_{-k}}}\sum_{j=1,j\neq k}\frac{1}{p_{k}p_{j}^{2}}\right)
		\right) ,
	\end{eqnarray*}%
	whence the result obtains.
\end{proof}

\begin{proof}[Proof of Theorem \protect\ref{th:4}]
	Recall that, by Lemma \ref{lemma2}, it holds that, for all $j\leq r_{k}$, it
	holds that%
	\begin{equation}
		\lambda _{j}\left( \widehat{\mathbf{M}}_{k}\right) =\lambda _{j}\left(
		\boldsymbol{\Sigma }_{k}\right) +o_{P}(1),  \label{number-1}
	\end{equation}%
	with $\lambda _{j}\left( \boldsymbol{\Sigma }_{k}\right) >0$. Moreover,
	Lemma \ref{lemma3} and standard arguments based on Weyl's inequality entail
	that, for all $j>r_{k}$%
	\begin{equation}
		\lambda _{j}\left( \widehat{\mathbf{M}}_{k}\right) =O_{P}\left( \dfrac{1}{%
			\sqrt{Tp_{-k}}}+\dfrac{1}{p_{k}}\right) .  \label{number-2}
	\end{equation}%
	Hence, using elementary arguments, (\ref{number-1}) entails%
	\begin{equation*}
		\max_{1\leq j\leq r_{k}-1}\frac{\lambda _{j}\left( \widehat{\mathbf{M}}%
			_{k}\right) }{\lambda _{j+1}\left( \widehat{\mathbf{M}}_{k}\right) +\widehat{%
				c}\delta _{p_{1},...,p_{K},T}}\leq \max_{1\leq j\leq r_{k}-1}\frac{\lambda
			_{j}\left( \widehat{\mathbf{M}}_{k}\right) }{\lambda _{j+1}\left( \widehat{%
				\mathbf{M}}_{k}\right) }=O_{P}\left( 1\right) ,
	\end{equation*}%
	and, by (\ref{number-2}) and the definition of $\delta _{p_{1},...,p_{K},T}$%
	, it also follows that
	\begin{equation*}
		\max_{r_{k}+1\leq j\leq r_{\max }}\frac{\lambda _{j}\left( \widehat{\mathbf{M%
			}}_{k}\right) }{\lambda _{j+1}\left( \widehat{\mathbf{M}}_{k}\right) +%
			\widehat{c}\delta _{p_{1},...,p_{K},T}}\leq \max_{1\leq j\leq r_{k}-1}\frac{%
			\lambda _{j}\left( \widehat{\mathbf{M}}_{k}\right) }{\widehat{c}\delta
			_{p_{1},...,p_{K},T}}=O_{P}\left( 1\right) .
	\end{equation*}%
	Finally, combining (\ref{number-1}) and (\ref{number-2}), there exists a $%
	c_{0}>0$ such that%
	\begin{equation*}
		\mathbb{P}\left( \frac{\lambda _{r_{k}}\left( \widehat{\mathbf{M}}%
			_{k}\right) }{\lambda _{r_{k}+1}\left( \widehat{\mathbf{M}}_{k}\right) +%
			\widehat{c}\delta _{p_{1},...,p_{K},T}}\geq c_{0}\delta
		_{p_{1},...,p_{K},T}^{-1}\lambda _{r_{k}}\left( \widehat{\mathbf{M}}%
		_{k}\right) \right) =1.
	\end{equation*}%
	The final result obtains readily upon noting that
		$\delta _{p_{1},...,p_{K},T}^{-1}\lambda _{r_{k}}\left( \widehat{\mathbf{M}}%
		_{k}\right) \overset{P}{\rightarrow }\infty ,$
	as $\min \left\{ p_{1},...,p_{K},T\right\} \rightarrow \infty $. We conclude
	by showing that, when choosing $\widehat{c}$ and $\widetilde{c}$ according
	to (\ref{bt2}), it holds that%
	\begin{eqnarray}
		\widehat{c}^{L}+o_{P}\left( 1\right)  &\leq &\widehat{c}\leq \widehat{c}%
		^{U}+o_{P}\left( 1\right) ,  \label{ul-1} \\
		\widetilde{c}^{L}+o_{P}\left( 1\right)  &\leq &\widetilde{c}\leq \widetilde{c%
		}^{U}+o_{P}\left( 1\right) ,  \label{ul-2}
	\end{eqnarray}%
	which, in turn, automatically entails that the theorem holds when using $%
	\widehat{c}$ and $\widetilde{c}$. We focus only on (\ref{ul-1}) to save
	space, since the proof of (\ref{ul-2}) is basically the same. We begin by
	noting that%
		$\widehat{c}\geq \lambda _{1}\left( \widehat{\mathbf{M}}_{k}\right)
		=c_{0}+o_{P}\left( 1\right) ,$
	where $c_{0}>0$, as an immediate consequence of Lemma \ref{lemma2}. Also,
	setting $r_{k}=1$ for simplicity%
	\begin{eqnarray*}
		\sum_{j=1}^{p_{k}}\lambda _{j}\left( \widehat{\mathbf{M}}_{k}\right)  &=&%
		\frac{1}{Tp}\sum_{i=1}^{p_{k}}\sum_{j=1}^{p_{-k}}x_{k,t,ij}^{2} \leq 2\left( \frac{1}{Tp}\sum_{i=1}^{p_{k}}\sum_{j=1}^{p_{-k}}%
		\sum_{t=1}^{T}A_{ki}^{2}B_{kj}^{2}F_{k,t}^{2}+\frac{1}{Tp}%
		\sum_{i=1}^{p_{k}}\sum_{j=1}^{p_{-k}}\sum_{t=1}^{T}e_{k,t,ij}^{2}\right) .
	\end{eqnarray*}%
	It holds that
	\begin{eqnarray*}
		&&\mathbb{E}\left( \frac{1}{Tp}\sum_{i=1}^{p_{k}}\sum_{j=1}^{p_{-k}}%
		\sum_{t=1}^{T}A_{ki}^{2}B_{kj}^{2}F_{k,t}^{2}\right)  \leq \frac{1}{Tp}\sum_{i=1}^{p_{k}}\sum_{j=1}^{p_{-k}}%
		\sum_{t=1}^{T}A_{ki}^{2}B_{kj}^{2}\mathbb{E}\left( F_{k,t}^{2}\right) \leq
		c_{0},
	\end{eqnarray*}%
	having used Assumptions \ref{as-1} and \ref{as-2}; further, by Assumption %
	\ref{as-3}\textit{(i)}, it follows immediately that $\sum_{i=1}^{p_{k}}%
	\sum_{j=1}^{p_{-k}}\sum_{t=1}^{T}e_{k,t,ij}^{2}=O_{P}\left( Tp\right) $,
	whence the upper bound for $\sum_{j=1}^{p_{k}}\lambda _{j}\left( \widehat{%
		\mathbf{M}}_{k}\right) $.
\end{proof}

\begin{proof}[Proof of Theorem \protect\ref{th:4-bis}]
	We begin by noting that, if $\hat{r}_{j}^{(s-1)}>r_{j}$ for all $j\in
	\lbrack K]\backslash k$, we can and we will assume $\hat{r}%
	_{j}^{(s-1)}=r_{j}+1$, without loss of generality because $r_{\max }$ is a
	constant. Hence, by definition, $\widehat{\mathbf{A}}_{j}^{(s)}=\left(
	\widehat{\mathbf{A}}_{j},\widehat{\gamma }_{j}\right) $, where $\widehat{%
		\mathbf{A}}_{j}$ is the PCA estimator computed using the eigenvectors of $%
	\widehat{\mathbf{M}}_{k}^{\left( s\right) }$ corresponding to its $r_{j}$
	largest eigenvalues (using the true number of common factors $r_{j}$), and $%
	p_{j}^{-1/2}\widehat{\gamma }_{j}$ is the $\left( r_{j}+1\right) $-th
	eigenvector of $\widehat{\mathbf{M}}_{k}^{\left( s\right) }$; by
	construction, $\widehat{\mathbf{A}}_{j}^{\top }\widehat{\gamma }_{j}=\mathbf{%
		0}$. Therefore, it follows that%
	\begin{eqnarray*}
		\widetilde{\mathbf{M}}_{k}^{\left( s\right) } 
		&=&\frac{1}{Tp}\sum_{t=1}^{T}\mathbf{X}_{k,t}\widehat{\mathbf{B}}%
		_{k}^{\left( s\right) }\left( \widehat{\mathbf{B}}_{k}^{\left( s\right)
		}\right) ^{\top }\mathbf{X}_{k,t}^{\top } =\frac{1}{Tp}\sum_{t=1}^{T}\mathbf{X}_{k,t}\left( \otimes _{j\neq k}\left(
		\widehat{\mathbf{A}}_{j},\widehat{\gamma }_{j}\right) \right) \left( \otimes
		_{j\neq k}\left( \widehat{\mathbf{A}}_{j},\widehat{\gamma }_{j}\right)
		\right) ^{\top }\mathbf{X}_{k,t}^{\top } \\
		&=&\frac{1}{Tp}\sum_{t=1}^{T}\mathbf{X}_{k,t}\left( \otimes _{j\neq k}\left(
		\widehat{\mathbf{A}}_{j}\widehat{\mathbf{A}}_{j}^{\top }+\widehat{\gamma }%
		_{j}\widehat{\gamma }_{j}^{\top }\right) \right) \mathbf{X}_{k,t}^{\top } =\widetilde{\mathbf{M}}_{k}+\frac{1}{Tp}\sum_{t=1}^{T}\mathbf{X}%
		_{k,t}\left( \otimes _{j\neq k}\left( \widehat{\gamma }_{j}\widehat{\gamma }%
		_{j}^{\top }\right) \right) \mathbf{X}_{k,t}^{\top },
	\end{eqnarray*}%
	where the second matrix is non-negative definite. Using Weyl's inequality,
	it is easy to see that%
	\begin{equation*}
		\lambda _{j}\left( \widetilde{\mathbf{M}}_{k}^{\left( s\right) }\right) \geq
		\lambda _{j}\left( \widetilde{\mathbf{M}}_{k}\right) +\lambda _{\min }\left(
		\frac{1}{Tp}\sum_{t=1}^{T}\mathbf{X}_{k,t}\left( \otimes _{j\neq k}\left(
		\widehat{\gamma }_{j}\widehat{\gamma }_{j}^{\top }\right) \right) \mathbf{X}%
		_{k,t}^{\top }\right) \geq \lambda _{j}\left( \widetilde{\mathbf{M}}%
		_{k}\right) ,
	\end{equation*}%
	for all $j\leq r_{k}$. Thus, by Lemma \ref{lemma2}%
		$\lambda _{j}\left( \widetilde{\mathbf{M}}_{k}^{\left( s\right) }\right) \geq
		\lambda _{j}\left( \mathbf{\Sigma }_{k}\right) +o_{P}\left( 1\right) ,$
	for all $j\leq r_{k}$. Also note that, on account of Assumption \ref{as-2}%
	\textit{(ii)}, the matrix $\left( \widehat{\mathbf{A}}_{j}\widehat{\mathbf{A}%
	}_{j}^{\top }+\widehat{\gamma }_{j}\widehat{\gamma }_{j}^{\top }\right) $ is
	the idempotent, and therefore so is $\mathbf{I}_{r_{j}}-\left( \widehat{%
		\mathbf{A}}_{j}\widehat{\mathbf{A}}_{j}^{\top }+\widehat{\gamma }_{j}%
	\widehat{\gamma }_{j}^{\top }\right) $. Hence%
	\begin{equation*}
		\widehat{\mathbf{M}}_{k}-\widetilde{\mathbf{M}}_{k}^{\left( s\right) }=\frac{%
			1}{Tp}\sum_{t=1}^{T}\mathbf{X}_{k,t}\left( \otimes _{j\neq k}\left[ \mathbf{I%
		}_{r_{j}}-\left( \widehat{\mathbf{A}}_{j}\widehat{\mathbf{A}}_{j}^{\top }+%
		\widehat{\gamma }_{j}\widehat{\gamma }_{j}^{\top }\right) \right] \right)
		\mathbf{X}_{k,t}^{\top },
	\end{equation*}%
	is a non-negative definite matrix. Hence it is immediate to see that, for
	all $j$%
		$\lambda _{j}\left( \widehat{\mathbf{M}}_{k}\right) -\lambda _{j}\left(
		\widetilde{\mathbf{M}}_{k}^{\left( s\right) }\right) \geq 0.$
	Thus, by Lemma \ref{lemma3}, it follows that
	\begin{equation*}
		\lambda _{j}\left( \widetilde{\mathbf{M}}_{k}^{\left( s\right) }\right)
		=O_{P}\left( \dfrac{1}{\sqrt{Tp_{-k}}}+\dfrac{1}{p_{k}}\right) ,
	\end{equation*}%
	for all $j>r_{k}$. From hereon, the proof is the same as that of Theorem \ref%
	{th:4}.
\end{proof}

\section*{Acknowledgements}

He's work is supported by NSF China (12171282), Qilu and Xiaomi Young Scholars
Program of Shandong University.

\bibliographystyle{myjmva}
\bibliography{Ref}

\appendix

\section{Technical lemmas\label{lemmas}}

Throughout this appendix, $C_1,C_2,\ldots$ denote some positive constants
that do not depend on $p_1,p_2,T$.

We begin with a series of results which are a direct consequence of
Assumptions \ref{as-1}-\ref{as-4}, and which will be used throughout the
whole paper.

\begin{lemma}\label{lemma1}
	We assume that Assumptions \ref{as-1}-\ref{as-4} hold. Then it holds that
	
	(i)%
	\begin{eqnarray*}
		\sum_{t=1}^{T}\mathbb{E}\left\Vert \mathbf{E}_{k,t}\mathbf{A}_{k}\right\Vert
		_{F}^{2} &=&O\left( Tp\right) , \\
		\sum_{t=1}^{T}\mathbb{E}\left\Vert \mathbf{E}_{k,t}\Bb_{k}\right\Vert
		_{F}^{2} &=&O\left( Tp\right) ;
	\end{eqnarray*}
	
	(ii)
	\begin{eqnarray*}
		\mathbb{E}\left\Vert \sum_{t=1}^{T}\mathbf{F}_{k,t}\mathbf{B}_{k}^{\top }%
		\mathbf{E}_{k,t}^{\top }\right\Vert _{F}^{2} &=&O\left( Tp\right) , \\
		\mathbb{E}\left\Vert \sum_{t=1}^{T}\mathbf{F}_{k,t}^{\top }\mathbf{A}%
		_{k}^{\top }\mathbf{E}_{k,t}\right\Vert _{F}^{2} &=&O\left( Tp\right) , \\
		\mathbb{E}\left\Vert \sum_{t=1}^{T}\mathbf{F}_{k,t}\mathbf{B}_{k}^{\top }%
		\mathbf{E}_{k,t}^{\top }\mathbf{A}_{k}\right\Vert _{F}^{2} &=&O\left(
		Tp\right) , \\
		\mathbb{E}\left\Vert \sum_{t=1}^{T}\mathbf{F}_{k,t}^{\top }\mathbf{A}%
		_{k}^{\top }\mathbf{E}_{k,t}\mathbf{B}_{k}\right\Vert _{F}^{2} &=&O\left(
		Tp\right) ;
	\end{eqnarray*}
	
	(iii) for all $1\leq i_{k}\leq p_{k}$%
	\begin{eqnarray*}
		\mathbb{E}\left\Vert \sum_{t=1}^{T}\mathbf{E}_{k,t}\mathbf{e}_{t,k,i\cdot
		}^{\top }\right\Vert ^{2} &=&O\left( Tp\right) +O\left( \left(
		Tp_{-k}\right) ^{2}\right) , \\
		\mathbb{E}\left\Vert \sum_{t=1}^{T}\mathbf{A}_{k}^{\top }\mathbf{E}_{k,t}%
		\mathbf{e}_{t,k,i\cdot }^{\top }\right\Vert ^{2} &=&O\left( Tp\right)
		+O\left( \left( Tp_{-k}\right) ^{2}\right) ,
	\end{eqnarray*}%
	and%
	\begin{eqnarray*}
		\mathbb{E} \left\Vert \sum_{t=1}^{T}\mathbf{E}_{k,t}\mathbf{E}%
		_{k,t}^{\top }\right\Vert _{F}^{2}&=&O\left(
		Tp_{k}^{2}p_{-k}+T^{2}p_{k}p_{-k}^{2}\right) , \\
		\mathbb{E} \left\Vert \sum_{t=1}^{T}\mathbf{E}_{k,t}\mathbf{E}%
		_{k,t}^{\top }\mathbf{A}_{k}\right\Vert _{F}^{2} &=&O\left(
		Tp_{k}^{2}p_{-k}+T^{2}p_{k}p_{-k}^{2}\right) .
	\end{eqnarray*}
	
	\begin{proof}
		Given that $\left\{ r_{k},1\leq k\leq K\right\} $ are fixed, we show the
		results $r_{k}=1$ for all $1\le k\le K$, for simplicity and without loss
		of generality; thus, henceforth $\Ab_{k}$ is a $p_{k}$-dimensional vector
		with entries $A_{ki}$. We begin with part (i); using Assumptions \ref{as-2}%
		\textit{(i)} and \ref{as-3}\textit{(ii)}, and equations (\ref{weak-dep-1})-(%
		\ref{weak-dep-4}), it follows that
		\begin{equation*}
			\begin{aligned} \sum_{t=1}^{T} \mathbb{E}\left[\left\|\mathbf{E}_{k,t}^{\top}
				\Ab_k\right\|_{F}^{2}\right] &= \sum_{t=1}^T\EE\left[\sum_{j=1}^{p_{-k}}\left(\bm
				e_{k,t,\cdot j}^\top\Ab_k\right)^2\right]=\sum_{t=1}^T\sum_{j=1}^{p_{-k}}\EE\left[\left(%
				\sum_{i=1}^{p_k}e_{k,t,i j} A_{k,i}\right)^2\right]\\ &\leq
				\bar{a}_k^2\sum_{t=1}^T
				\sum_{j=1}^{p_{-k}}\sum_{i=1}^{p_k}\sum_{i_1=1}^{p_k}\EE\left[e_{k,t,i
					j}e_{k,t,i_1 j}\right] =O\left(Tp\right). \end{aligned}
		\end{equation*}%
		The second equation can be proven by similar passages. As far as part (ii)
		is concerned, it follows directly from Assumption \ref{as-4}\textit{(i)}.
		Finally, we consider part (iii). Using Assumptions \ref{as-3}\textit{%
			(ii)-(iii)} and (\ref{weak-dep-4}), we have
		\begin{eqnarray*}
			&&\mathbb{E}\left\Vert \sum_{t=1}^{T}\mathbf{E}_{k,t}\mathbf{e}_{t,k,i\cdot
			}^{\top }\right\Vert ^{2} \\
			&=&\mathbb{E}\left( \sum_{i_{1}=1}^{p_{k}}\left(
			\sum_{t=1}^{T}\sum_{j=1}^{p_{-k}}e_{t,k,ij}e_{t,k,i_{1}j}\right) ^{2}\right)
			\\
			&\leq &2\sum_{i_{1}=1}^{p_{k}}\mathbb{E}\left(
			\sum_{t=1}^{T}\sum_{j=1}^{p_{-k}}\left( e_{t,k,ij}e_{t,k,i_{1}j}-\mathbb{E}%
			\left( e_{t,k,ij}e_{t,k,i_{1}j}\right) \right) \right)
			^{2}+2\sum_{i_{1}=1}^{p_{k}}\left( \sum_{t=1}^{T}\sum_{j=1}^{p_{-k}}\mathbb{E%
			}\left( e_{t,k,ij}e_{t,k,i_{1}j}\right) \right) ^{2} \\
			&=&2\sum_{i_{1}=1}^{p_{k}}\sum_{t,s=1}^{T}\sum_{j_{1},j_{2}=1}^{p_{-k}}Cov%
			\left(
			e_{t,k,ij_{1}}e_{t,k,i_{1}j_{1}},e_{s,k,ij_{2}}e_{s,k,i_{1}j_{2}}\right)
			\\
			&&+2\sum_{i_{1}=1}^{p_{k}}\sum_{t,s=1}^{T}\sum_{j_{1},j_{2}=1}^{p_{-k}}\left\vert \mathbb{E}\left( e_{t,k,ij_{1}}e_{t,k,i_{1}j_{1}}\right) \right\vert
			\left\vert \mathbb{E}\left( e_{s,k,ij_{2}}e_{s,k,i_{1}j_{2}}\right)
			\right\vert \\
			&\leq
			&c_{0}Tp+c_{1}Tp_{-k}\sum_{i_{1}=1}^{p_{k}}\sum_{t=1}^{T}%
			\sum_{j_{1}=1}^{p_{-k}}\left\vert \mathbb{E}\left(
			e_{t,k,ij_{1}}e_{t,k,i_{1}j_{1}}\right) \right\vert \\
			&\leq &c_{0}Tp+c_{1}\left( Tp_{-k}\right) ^{2}.
		\end{eqnarray*}%
		Similarly, using Assumptions \ref{as-2}\textit{(i)} and \ref{as-3}\textit{%
			(ii)-(iii)}, and (\ref{weak-dep-4}), it follows that%
		\begin{eqnarray*}
			&&\mathbb{E}\left\Vert \sum_{t=1}^{T}\mathbf{A}_{k}^{\top }\mathbf{E}_{k,t}%
			\mathbf{e}_{t,k,i\cdot }^{\top }\right\Vert ^{2} \\
			&=&\mathbb{E}\left(
			\sum_{i_{1}=1}^{p_{k}}\sum_{t=1}^{T}%
			\sum_{j=1}^{p_{-k}}A_{k,i_{1}}e_{t,k,ij}e_{t,k,i_{1}j}\right) ^{2} \\
			&\leq &2\mathbb{E}\left(
			\sum_{i_{1}=1}^{p_{k}}\sum_{t=1}^{T}\sum_{j=1}^{p_{-k}}A_{k,i_{1}}\left(
			e_{t,k,ij}e_{t,k,i_{1}j}-\mathbb{E}\left( e_{t,k,ij}e_{t,k,i_{1}j}\right)
			\right) \right) ^{2}+2\left(
			\sum_{i_{1}=1}^{p_{k}}\sum_{t=1}^{T}\sum_{j=1}^{p_{-k}}A_{k,i_{1}}\mathbb{E}%
			\left( e_{t,k,ij}e_{t,k,i_{1}j}\right) \right) ^{2} \\
			&\leq
			&2\sum_{i_{1},i_{2}=1}^{p_{k}}\sum_{t,s=1}^{T}\sum_{j_{1},j_{2}=1}^{p_{-k}}%
			\left\vert A_{k,i_{1}}\right\vert \left\vert A_{k,i_{2}}\right\vert
			Cov\left(
			e_{t,k,ij_{1}}e_{t,k,i_{1}j_{1}},e_{s,k,ij_{2}}e_{s,k,i_{2}j_{2}}\right) \\
			&&+2\sum_{i_{1},i_{2}=1}^{p_{k}}\sum_{t,s=1}^{T}\sum_{j_{1},j_{2}=1}^{p_{-k}}\left\vert A_{k,i_{1}}\right\vert \left\vert A_{k,i_{2}}\right\vert \left\vert
			\mathbb{E}\left( e_{t,k,ij_{1}}e_{t,k,i_{1}j_{1}}\right) \right\vert
			\left\vert \mathbb{E}\left( e_{s,k,ij_{2}}e_{s,k,i_{2}j_{2}}\right)
			\right\vert \\
			&\leq &c_{0}Tp+c_{1}p_{k}\left( Tp_{-k}\right) ^{2}.
		\end{eqnarray*}%
		The proof of the remaining two result is essentially the same, and we omit
		it to save space.
	\end{proof}
\end{lemma}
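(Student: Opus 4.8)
The plan is to reduce everything to the scalar-factor case $r_k=1$ for all $k$ (legitimate because the $r_k$ are fixed, Assumption~\ref{as-1}\textit{(iii)}), so that each $\mathbf{A}_k$ becomes a $p_k$-vector with entries bounded by $\overline{a}_k$ (Assumption~\ref{as-2}\textit{(i)}); the general case only affects constants. In each estimate the mechanism is the same: write the quantity as an explicit sum over the entries $e_{t,k,ij}$ of $\mathbf{E}_{k,t}$, square it out, and control the resulting sums through the weak-dependence conditions of Assumptions~\ref{as-2}--\ref{as-4}.

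For part~(i) I would write $\|\mathbf{E}_{k,t}^\top\mathbf{A}_k\|_F^2=\sum_{j=1}^{p_{-k}}\big(\sum_{i=1}^{p_k}e_{k,t,ij}A_{k,i}\big)^2$, take expectations, bound $|A_{k,i}A_{k,i_1}|\le\overline{a}_k^2$, and invoke the cross-sectional summability (\ref{weak-dep-3}) to get that the $t$-th summand is $O(p)$; summing over $t$ gives $O(Tp)$, and the bound with $\mathbf{B}_k$ is identical since the entries of $\mathbf{B}_k=\otimes_{j\ne k}\mathbf{A}_j$ are also uniformly bounded. For part~(ii) I would appeal directly to Assumption~\ref{as-4}\textit{(i)}: up to the fixed factor $r_{-k}$, each of the four quantities is a finite sum of terms $\big\|T^{-1/2}\sum_t\mathbf{F}_{k,t}(\mathbf{v}^\top\mathbf{E}_{k,t}\mathbf{w})\big\|_F^2$ with $\mathbf{v},\mathbf{w}$ equal to $\mathbf{A}_k$, a column of $\mathbf{B}_k$, or a coordinate vector; normalising $\mathbf{v},\mathbf{w}$ to unit length produces the scale factors $\|\mathbf{A}_k\|^2\asymp p_k$ and/or $\|\mathbf{B}_k\|^2\asymp p_{-k}$ (Assumption~\ref{as-2}\textit{(ii)}), so each expectation is $O(T\,p_k\,p_{-k})=O(Tp)$ or smaller.

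The substance is part~(iii). For $\mathbb{E}\|\sum_t\mathbf{E}_{k,t}\mathbf{e}_{t,k,i\cdot}^\top\|^2$ and its $\mathbf{A}_k^\top$-weighted analogue I would write a generic coordinate as $\sum_t\sum_j e_{t,k,ij}e_{t,k,i_1 j}$, split it into a centred part plus its mean, bound the second moment of the centred part by the fourth-order covariance-summability of Assumption~\ref{as-3}\textit{(iii)} (this gives the $O(Tp)$ contribution) and the squared mean via (\ref{weak-dep-2})--(\ref{weak-dep-4}) (this gives the $O((Tp_{-k})^2)$ contribution, with an extra $p_k$ when the loading is present) --- this is exactly the computation already displayed in the excerpt. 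The two remaining displays, omitted there, I would treat the same way. Writing $\|\sum_t\mathbf{E}_{k,t}\mathbf{E}_{k,t}^\top\|_F^2=\sum_{i,l=1}^{p_k}\big(\sum_t\sum_j e_{t,k,ij}e_{t,k,lj}\big)^2$ and separating mean from fluctuation: the fluctuation term is $\sum_{i,l}\mathrm{Var}(\,\cdot\,)$, each variance being $O(Tp_{-k})$ by the covariance-summability of Assumption~\ref{as-3}\textit{(iii)}, hence $O(p_k^2\cdot Tp_{-k})=O(Tp_k^2p_{-k})$; the mean $\sum_t\mathbb{E}[\mathbf{E}_{k,t}\mathbf{E}_{k,t}^\top]$ has diagonal entries of size $O(Tp_{-k})$ (Assumption~\ref{as-3}\textit{(i)}) and rows of $\ell_1$-norm $O(Tp_{-k})$ by (\ref{weak-dep-4}), hence squared Frobenius norm $O(T^2p_kp_{-k}^2)$; summing the two gives the claim. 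For $\mathbf{E}_{k,t}\mathbf{E}_{k,t}^\top\mathbf{A}_k$ I would write its $i$-th coordinate as $\sum_j e_{t,k,ij}g_{t,j}$ with $g_{t,j}:=\sum_l A_{k,l}e_{t,k,lj}$, absorb the bounds $|A_{k,l}|\le\overline{a}_k$ into the covariance sums, and run the same argument (the weak-dependence hypotheses being inherited by the $g$-process) to get the order $O(Tp_k^2p_{-k}+T^2p_kp_{-k}^2)$.

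The main obstacle is the bookkeeping in part~(iii): one must match each four-fold covariance sum to the precise inequality of Assumption~\ref{as-3}\textit{(iii)} that controls it --- keeping careful track of which indices live in mode~$k$ and which in mode~$(-k)$, and which are free rather than fixed --- and then extract the \emph{sharp} orders in $T$, $p_k$, $p_{-k}$ rather than a loose bound. In particular a crude Cauchy--Schwarz step of the type $(\sum_l a_l)^2\le p_k\sum_l a_l^2$ is too lossy for the term involving $\mathbf{E}_{k,t}\mathbf{E}_{k,t}^\top\mathbf{A}_k$, which is why that quantity is better handled through the auxiliary variable $g_{t,j}$ (equivalently, a quadratic-form bound) than through a naive operator-norm estimate.
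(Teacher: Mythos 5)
Your proposal is correct and follows essentially the same route as the paper: entry-wise expansion with a mean--fluctuation split in part (iii), normalisation of $\mathbf{v},\mathbf{w}$ and a direct appeal to Assumption~\ref{as-4}\textit{(i)} in part (ii), and for the two displays the paper omits, the same absorption of the loading entries into the covariance sums that your auxiliary variable $g_{t,j}=\sum_l A_{k,l}e_{t,k,lj}$ makes explicit (your warning that a crude operator-norm or Cauchy--Schwarz step would introduce an extra $p_k$ is exactly the point). One small remark: the additional factor $p_k$ you attach to the $(Tp_{-k})^2$ term when $\mathbf{A}_k$ is present is in fact spurious --- it appears in the paper's own displayed bound, but performing $\sum_{l}|A_{k,l}|\,\left|\mathbb{E}\left(e_{t,k,lj}e_{t,k,ij}\right)\right|\le \bar a_k c$ via (\ref{weak-dep-4}) \emph{before} squaring the mean gives $O\left((Tp_{-k})^2\right)$ with no extra $p_k$, matching the rate actually stated in the lemma.
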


We now consider a set of results pertaining to $\widehat{\mathbf{M}}_{k}$.
We will extensively use the following decomposition, which is a direct
consequence of (\ref{eq:Mhatk})%
\begin{eqnarray}
	\widehat{\mathbf{M}}_{k} &=&\frac{1}{Tp}\sum_{t=1}^{T}\mathbf{X}_{k,t}%
	\mathbf{X}_{k,t}^{\top }  \label{dec-m-hat} \\
	&=&\frac{1}{Tp}\sum_{t=1}^{T}\left( \mathbf{A}_{k}\mathbf{F}_{k,t}\mathbf{B}%
	_{k}^{\top }+\mathbf{E}_{k,t}\right) \left( \mathbf{A}_{k}\mathbf{F}_{k,t}%
	\mathbf{B}_{k}^{\top }+\mathbf{E}_{k,t}\right) ^{\top }  \notag \\
	&=&\frac{1}{Tp}\sum_{t=1}^{T}\mathbf{A}_{k}\mathbf{F}_{k,t}\mathbf{B}%
	_{k}^{\top }\mathbf{B}_{k}\mathbf{F}_{k,t}^{\top }\mathbf{A}_{k}^{\top }+%
	\frac{1}{Tp}\sum_{t=1}^{T}\mathbf{A}_{k}\mathbf{F}_{k,t}\mathbf{B}_{k}^{\top
	}\mathbf{E}_{k,t}^{\top }  \notag \\
	&&+\frac{1}{Tp}\sum_{t=1}^{T}\mathbf{E}_{k,t}\mathbf{B}_{k}\mathbf{F}%
	_{k,t}^{\top }\mathbf{A}_{k}^{\top }+\frac{1}{Tp}\sum_{t=1}^{T}\mathbf{E}%
	_{k,t}\mathbf{E}_{k,t}^{\top }  \notag \\
	&=&\mathcal{I}+\mathcal{II}+\mathcal{III}+\mathcal{IV}.  \notag
\end{eqnarray}

\begin{lemma}
	\label{lemma2}We assume that Assumptions \ref{as-1}-\ref{as-4} hold. Then it
	holds that, as $\min \left\{ T,p_{1},...,p_{K}\right\} \rightarrow \infty $%
	\begin{equation*}
		\lambda _{j}\left( \widehat{\mathbf{M}}_{k}\right) =\left\{
		\begin{array}{ll}
			\lambda _{j}\left( \mathbf{\Sigma }_{k}\right) +o_{P}\left( 1\right) & j\leq
			r_{k} \\
			O_{P}\left( \left( Tp_{-k}\right) ^{-1/2}\right) +O_{P}\left(
			p_{-k}^{-1}\right) & j>r_{k}%
		\end{array}%
		\right. .
	\end{equation*}
	
	\begin{proof}
		As shown in equation (\ref{dec-m-hat}), $\widehat{\mathbf{M}}_{k}=\mathcal{I}%
		+\mathcal{II}+\mathcal{III}+\mathcal{IV}$. We will study the spectral norms
		of these four terms and show that $\mathcal{I}$ is the dominant one.
		Firstly, by Assumptions \ref{as-1}\textit{(ii)} and \ref{as-2}\textit{(ii)},
		we have
		\begin{equation*}
			\begin{gathered} \frac{1}{T p_{-k}} \sum_{t=1}^T \mathbf{F}_{k,t}
				\Bb_k^{\top} \Bb_k\Fb_{kt}^{\top} \stackrel{P}{\rightarrow}
				\boldsymbol{\Sigma}_{k},\\ \cI\stackrel{P}{\rightarrow}\frac
				{\Ab_k\bSigma_k\Ab_k^{\top}}{p_k}, \end{gathered}
		\end{equation*}%
		while the leading $r_{k}$ eigenvalues of $p_{k}^{-1}\Ab_{k}\boldsymbol{%
			\Sigma }_{k}\Ab_{k}^{\top }$ are asymptotically equal to those of $%
		\boldsymbol{\Sigma }_{k}$. Hence, $\lambda _{j}(\mathcal{I})=\lambda
		_{j}\left( \mathbf{\Sigma }_{k}\right) +o_{P}(1)$ for $j\leq r_{k}$ while $%
		\lambda _{j}(\mathcal{I})=0$ for $j>r_{k}$ because $\mathop{\mathrm{rank}}(%
		\mathcal{I})\leq r_{k}$. Secondly, using the Cauchy-Schwartz inequality,
		Assumption \ref{as-2}\textit{(ii)} and Lemma \ref{lemma1}\textit{(ii)},
		\begin{eqnarray*}
			\left\Vert \mathcal{II}\right\Vert _{F} &\leq &\frac{1}{Tp}\left\Vert
			\mathbf{A}_{k}\right\Vert _{F}\left\Vert \sum_{t=1}^{T}\mathbf{F}_{k,t}%
			\mathbf{B}_{k}^{\top }\mathbf{E}_{k,t}^{\top }\right\Vert _{F} \\
			&\lesssim &\frac{1}{\sqrt{Tp_{-k}}}\left\Vert \frac{1}{\sqrt{Tp}}%
			\sum_{t=1}^{T}\mathbf{F}_{k,t}\mathbf{B}_{k}^{\top }\mathbf{E}_{k,t}^{\top
			}\right\Vert _{F} \\
			&=&O_{P}\left( \frac{1}{\sqrt{Tp_{-k}}}\right) .
		\end{eqnarray*}%
		Exactly the same logic also yields $\left\Vert \mathcal{III}\right\Vert
		_{F}=O_{P}\left( \left( Tp_{-k}\right) ^{-1/2}\right) $. Finally, consider $%
		\mathcal{IV}$, and let $\mathbf{U}_{\mathbf{E}}=\left( Tp\right)
		^{-1}\sum_{t=1}^{T}\mathbb{E}\left( \mathbf{E}_{k,t}\mathbf{E}_{k,t}^{\top
		}\right) $; then, using Assumption \ref{as-3}\textit{(iii)}, it follows that%
		\begin{eqnarray*}
			&&\mathbb{E}\left\Vert \mathcal{IV-}\mathbf{U}_{\mathbf{E}}\right\Vert
			_{F}^{2} \\
			&=&\frac{1}{\left( Tp\right) ^{2}}\sum_{i_{1},i_{2}=1}^{p_{k}}\mathbb{E}%
			\left( \sum_{t=1}^{T}\sum_{j=1}^{p_{-k}}\left( e_{t,k,i_{1}j}e_{t,k,i_{2}j}-%
			\mathbb{E}\left( e_{t,k,i_{1}j}e_{t,k,i_{2}j}\right) \right) \right) ^{2} \\
			&=&\frac{1}{\left( Tp\right) ^{2}}\sum_{i_{1},i_{2}=1}^{p_{k}}%
			\sum_{t,s=1}^{T}\sum_{j_{1},j_{2}=1}^{p_{-k}}Cov\left(
			e_{t,k,i_{1}j_{1}}e_{t,k,i_{2}j_{1}},e_{s,k,i_{1}j_{2}}e_{s,k,i_{2}j_{2}}%
			\right) =O\left( \frac{1}{Tp_{-k}}\right) .
		\end{eqnarray*}%
		Further, by Assumption \ref{as-3}\textit{(ii)} and (\ref{weak-dep-4})%
		\begin{eqnarray*}
			\left\Vert \mathbf{U}_{\mathbf{E}}\right\Vert _{1} &=&\left\Vert \mathbf{U}_{%
				\mathbf{E}}\right\Vert _{\infty } \\
			&=&\frac{1}{Tp}\max_{1\leq i\leq p_{k}}\sum_{i_{1}=1}^{p_{k}}\left\vert
			\sum_{t=1}^{T}\sum_{j=1}^{p_{-k}}\mathbb{E}\left(
			e_{t,k,i_{1}j}e_{t,k,i_{2}j}\right) \right\vert  \\
			&\leq &\max_{1\leq i\leq
				p_{k}}\sum_{i_{1}=1}^{p_{k}}\sum_{t=1}^{T}\sum_{j=1}^{p_{-k}}\left\vert
			\mathbb{E}\left( e_{t,k,i_{1}j}e_{t,k,i_{2}j}\right) \right\vert =O\left(
			p_{k}^{-1}\right) ,
		\end{eqnarray*}%
		which also entails that $\left\Vert \mathbf{U}_{\mathbf{E}}\right\Vert
		_{F}=O\left( p_{k}^{-1}\right) $. Hence it follows that $\left\Vert \mathcal{%
			IV}\right\Vert _{F}=O_{P}\left( \left( Tp_{-k}\right) ^{-1/2}\right)
		+O\left( p_{k}^{-1}\right) $. The proof now follows from (repeated
		applications of) Weyl's inequality.
	\end{proof}
\end{lemma}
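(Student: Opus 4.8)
The plan is to build on the signal-plus-noise decomposition \eqref{dec-m-hat}, $\widehat{\mathbf M}_{k}=\mathcal I+\mathcal{II}+\mathcal{III}+\mathcal{IV}$, showing that only the ``signal'' block $\mathcal I$ carries the leading part of the spectrum while $\mathcal{II}$, $\mathcal{III}$ and $\mathcal{IV}$ are asymptotically negligible in spectral (equivalently Frobenius) norm, and then transferring these facts to the eigenvalues of $\widehat{\mathbf M}_{k}$ via Weyl's inequality.

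First I would analyse $\mathcal I=(Tp)^{-1}\sum_{t=1}^{T}\mathbf A_{k}\mathbf F_{k,t}\mathbf B_{k}^{\top}\mathbf B_{k}\mathbf F_{k,t}^{\top}\mathbf A_{k}^{\top}$. Since $\mathbf B_{k}=\otimes_{j\neq k}\mathbf A_{j}$, Assumption \ref{as-2}\textit{(ii)} applied mode by mode gives $p_{-k}^{-1}\mathbf B_{k}^{\top}\mathbf B_{k}\rightarrow\mathbf I_{r_{-k}}$; combined with Assumption \ref{as-1}\textit{(ii)} this yields $(Tp_{-k})^{-1}\sum_{t}\mathbf F_{k,t}\mathbf B_{k}^{\top}\mathbf B_{k}\mathbf F_{k,t}^{\top}\overset{P}{\rightarrow}\mathbf{\Sigma}_{k}$, hence $\mathcal I\overset{P}{\rightarrow}p_{k}^{-1}\mathbf A_{k}\mathbf{\Sigma}_{k}\mathbf A_{k}^{\top}$. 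Because $p_{k}^{-1}\mathbf A_{k}^{\top}\mathbf A_{k}\rightarrow\mathbf I_{r_{k}}$, the $r_{k}$ nonzero eigenvalues of $p_{k}^{-1}\mathbf A_{k}\mathbf{\Sigma}_{k}\mathbf A_{k}^{\top}$ converge to those of $\mathbf{\Sigma}_{k}$, while $\mathrm{rank}(\mathcal I)\leq r_{k}$ forces $\lambda_{j}(\mathcal I)=0$ for $j>r_{k}$. Thus $\lambda_{j}(\mathcal I)=\lambda_{j}(\mathbf{\Sigma}_{k})+o_{P}(1)$ for $j\leq r_{k}$ and $\lambda_{j}(\mathcal I)=0$ otherwise.

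Next I would bound the remaining blocks in Frobenius norm. For $\mathcal{II}$ (and, identically, $\mathcal{III}$), Cauchy--Schwarz together with $\|\mathbf A_{k}\|_{F}=O(\sqrt{p_{k}})$ and $\mathbb E\|\sum_{t}\mathbf F_{k,t}\mathbf B_{k}^{\top}\mathbf E_{k,t}^{\top}\|_{F}^{2}=O(Tp)$ from Lemma \ref{lemma1}\textit{(ii)} give $\|\mathcal{II}\|_{F}=O_{P}((Tp_{-k})^{-1/2})$. For $\mathcal{IV}=(Tp)^{-1}\sum_{t}\mathbf E_{k,t}\mathbf E_{k,t}^{\top}$ I would split it as $(\mathcal{IV}-\mathbf U_{\mathbf E})+\mathbf U_{\mathbf E}$ with $\mathbf U_{\mathbf E}:=(Tp)^{-1}\sum_{t}\mathbb E(\mathbf E_{k,t}\mathbf E_{k,t}^{\top})$: expanding $\mathbb E\|\mathcal{IV}-\mathbf U_{\mathbf E}\|_{F}^{2}$ as a sum of fourth-order covariances and invoking Assumption \ref{as-3}\textit{(iii)} gives $O((Tp_{-k})^{-1})$, while Assumption \ref{as-3}\textit{(ii)} and \eqref{weak-dep-4} bound the $\ell_{1}$ and $\ell_{\infty}$ operator norms of $\mathbf U_{\mathbf E}$ by $O(p_{k}^{-1})$, so that $\|\mathbf U_{\mathbf E}\|=O(p_{k}^{-1})$. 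Adding up, $\|\mathcal{II}+\mathcal{III}+\mathcal{IV}\|=O_{P}((Tp_{-k})^{-1/2})+O_{P}(p_{k}^{-1})$, and two applications of Weyl's inequality to $\widehat{\mathbf M}_{k}=\mathcal I+(\mathcal{II}+\mathcal{III}+\mathcal{IV})$ conclude: $\lambda_{j}(\widehat{\mathbf M}_{k})=\lambda_{j}(\mathbf{\Sigma}_{k})+o_{P}(1)$ for $j\leq r_{k}$, and $\lambda_{j}(\widehat{\mathbf M}_{k})\leq\lambda_{r_{k}+1}(\mathcal I)+\|\mathcal{II}+\mathcal{III}+\mathcal{IV}\|=O_{P}((Tp_{-k})^{-1/2})+O_{P}(p_{k}^{-1})$ for $j>r_{k}$.

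The main obstacle is the idiosyncratic block $\mathcal{IV}$, specifically showing that the ``bias'' matrix $\mathbf U_{\mathbf E}$ has spectral norm of the sharp order $p_{k}^{-1}$ rather than merely $O(1)$: a crude Frobenius bound is inadequate, and one must instead exploit weak cross-sectional dependence (Assumption \ref{as-3}\textit{(ii)} and \eqref{weak-dep-4}) through the $\ell_{1}$/$\ell_{\infty}$ operator norm. This is precisely where the approximate-factor-model structure --- a pervasive signal versus weakly correlated noise --- produces the eigen-gap between the first $r_{k}$ eigenvalues and the remaining ones.
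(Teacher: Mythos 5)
Your proposal is correct and follows the paper's argument step for step: the same decomposition $\widehat{\mathbf M}_{k}=\mathcal I+\mathcal{II}+\mathcal{III}+\mathcal{IV}$, the same limit for $\mathcal I$, the same Lemma~\ref{lemma1}\textit{(ii)} bound for $\mathcal{II},\mathcal{III}$, the same $\mathcal{IV}=(\mathcal{IV}-\mathbf U_{\mathbf E})+\mathbf U_{\mathbf E}$ split with Assumption~\ref{as-3}\textit{(iii)} controlling the fluctuation and the $\ell_{1}/\ell_{\infty}$ argument controlling the bias, and Weyl's inequality to finish. One small point in your favour: bounding the spectral norm of $\mathbf U_{\mathbf E}$ directly through $\|\mathbf U_{\mathbf E}\|\leq\sqrt{\|\mathbf U_{\mathbf E}\|_{1}\|\mathbf U_{\mathbf E}\|_{\infty}}=O(p_{k}^{-1})$ is cleaner than the paper's passage to $\|\mathbf U_{\mathbf E}\|_{F}$; your rate $O_{P}(p_{k}^{-1})$ for $j>r_{k}$ is the one the paper's own proof and the definition of $\delta_{p_{1},\ldots,p_{K},T}$ actually support, the $p_{-k}^{-1}$ appearing in the lemma's displayed statement being evidently a typo.
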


\begin{lemma}
	\label{lemma3}We assume that Assumptions \ref{as-1}-\ref{as-4} hold. Then it
	holds that, as $\min \left\{ T,p_{1},...,p_{K}\right\} \rightarrow \infty $%
	\begin{eqnarray*}
		\frac{1}{p_{k}}\left\Vert \mathcal{II}\widehat{\mathbf{A}}_{k}\right\Vert
		_{F}^{2} &=&O_{P}\left( \frac{1}{Tp_{-k}}\right) , \\
		\frac{1}{p_{k}}\left\Vert \mathcal{III}\widehat{\mathbf{A}}_{k}\right\Vert
		_{F}^{2} &=&O_{P}\left( \frac{1}{Tp_{-k}}\right) , \\
		\frac{1}{p_{k}}\left\Vert \mathcal{IV}\widehat{\mathbf{A}}_{k}\right\Vert
		_{F}^{2} &=&O_{P}\left( \frac{1}{Tp}\right) +O_{P}\left( \frac{1}{p_{k}^{2}}%
		\right) +O_{P}\left( \frac{1}{Tp_{-k}}+\frac{1}{p_{k}}\right) \times \frac{1%
		}{p_{k}}\left\Vert \widehat{\mathbf{A}}_{k}-\mathbf{A}_{k}\widehat{\mathbf{H}%
		}_{k}\right\Vert _{F}^{2},
	\end{eqnarray*}%
	where recall that, by (\ref{dec-m-hat}), $\widehat{\mathbf{M}}_{k}=\mathcal{I%
	}+\mathcal{II}+\mathcal{III}+\mathcal{IV}$.
	
	\begin{proof}
		We begin by noting that, from equation (\ref{dec-m-hat}) and Lemma \ref%
		{lemma1}\textit{(ii)}, it follows that%
		\begin{eqnarray*}
			\frac{1}{p_{k}}\left\Vert \mathcal{II}\widehat{\mathbf{A}}_{k}\right\Vert
			_{F}^{2} &=&\frac{1}{p_{k}}\left\Vert \frac{1}{Tp}\sum_{t=1}^{T}\mathbf{A}%
			_{k}\mathbf{F}_{k,t}\mathbf{B}_{k}^{\top }\mathbf{E}_{k,t}^{\top }\widehat{%
				\mathbf{A}}_{k}\right\Vert _{F}^{2} \\
			&\leq &\frac{1}{p_{k}}\left\Vert \mathbf{A}_{k}\right\Vert
			_{F}^{2}\left\Vert \widehat{\mathbf{A}}_{k}\right\Vert _{F}^{2}\left\Vert
			\frac{1}{Tp}\sum_{t=1}^{T}\mathbf{F}_{k,t}\mathbf{B}_{k}^{\top }\mathbf{E}%
			_{k,t}^{\top }\right\Vert _{F}^{2} \\
			&=&O_{P}\left( \frac{1}{Tp_{-k}}\right) ,
		\end{eqnarray*}%
		having used Assumption \ref{as-2}\textit{(ii)} and the fact that, by
		construction, $\left\Vert \widehat{\mathbf{A}}_{k}\right\Vert
		_{F}^{2}=c_{0}p_{k}$. By the same token, it is easy to see that%
		\begin{eqnarray*}
			\frac{1}{p_{k}}\left\Vert \mathcal{III}\widehat{\mathbf{A}}_{k}\right\Vert
			_{F}^{2} &=&\frac{1}{p_{k}}\left\Vert \frac{1}{Tp}\sum_{t=1}^{T}\mathbf{E}%
			_{k,t}\mathbf{B}_{k}\mathbf{F}_{k,t}^{\top }\mathbf{A}_{k}^{\top }\widehat{%
				\mathbf{A}}_{k}\right\Vert _{F}^{2} \\
			&\leq &\frac{1}{p_{k}}\left\Vert \mathbf{A}_{k}\right\Vert
			_{F}^{2}\left\Vert \widehat{\mathbf{A}}_{k}\right\Vert _{F}^{2}\left\Vert
			\frac{1}{Tp}\sum_{t=1}^{T}\mathbf{E}_{k,t}\mathbf{B}_{k}\mathbf{F}%
			_{k,t}^{\top }\right\Vert _{F}^{2} \\
			&=&O_{P}\left( \frac{1}{Tp_{-k}}\right) .
		\end{eqnarray*}%
		Finally, using Lemma \ref{lemma1}\textit{(iii)} and recalling that $%
		\left\Vert \widehat{\mathbf{H}}_{k}\right\Vert _{F}^{2}=O_{P}\left( 1\right)
		$, it follows that%
		\begin{eqnarray*}
			&&\frac{1}{p_{k}}\left\Vert \mathcal{IV}\widehat{\mathbf{A}}_{k}\right\Vert
			_{F}^{2} \\
			&=&\frac{1}{p_{k}}\left\Vert \frac{1}{Tp}\sum_{t=1}^{T}\left( \widehat{%
				\mathbf{A}}_{k}-\mathbf{A}_{k}\widehat{\mathbf{H}}_{k}+\mathbf{A}_{k}%
			\widehat{\mathbf{H}}_{k}\right) ^{\top }\mathbf{E}_{k,t}\mathbf{E}%
			_{k,t}^{\top }\right\Vert _{F}^{2} \\
			&\leq &\frac{1}{p_{k}}\sum_{i=1}^{p_{k}}\left\Vert \frac{1}{Tp}\sum_{t=1}^{T}%
			\widehat{\mathbf{H}}_{k}^{\top }\mathbf{A}_{k}^{\top }\mathbf{E}%
			_{k,t}\mathbf{e}_{t,k,i\cdot }+\frac{1}{Tp}\sum_{t=1}^{T}\left( \widehat{%
				\mathbf{A}}_{k}-\mathbf{A}_{k}\widehat{\mathbf{H}}_{k}\right) ^{\top }%
			\mathbf{E}_{k,t}\mathbf{e}_{t,k,i\cdot }\right\Vert _{F}^{2} \\
			&\leq &O_{P}\left( 1\right) \frac{2}{p_{k}}\sum_{i=1}^{p_{k}}\left\Vert
			\frac{1}{Tp}\sum_{t=1}^{T}\mathbf{A}_{k}^{\top }\mathbf{E}_{k,t}%
			\mathbf{e}_{t,k,i\cdot }\right\Vert _{F}^{2}+\frac{2}{p_{k}}%
			\sum_{i=1}^{p_{k}}\left\Vert \frac{1}{Tp}\sum_{t=1}^{T}\left( \widehat{%
				\mathbf{A}}_{k}-\mathbf{A}_{k}\widehat{\mathbf{H}}_{k}\right) ^{\top }%
			\mathbf{E}_{k,t}\mathbf{e}_{t,k,i\cdot }\right\Vert _{F}^{2} \\
			&=&O_{P}\left( \frac{1}{Tp}+\frac{1}{p_{k}^{2}}\right) +O_{P}\left( \frac{1}{%
				Tp_{-k}}+\frac{1}{p_{k}}\right) \times \frac{1}{p_{k}}\left\Vert \widehat{%
				\mathbf{A}}_{k}-\mathbf{A}_{k}\widehat{\mathbf{H}}_{k}\right\Vert _{F}^{2},
		\end{eqnarray*}%
		having used Assumption \ref{as-3}.
	\end{proof}
\end{lemma}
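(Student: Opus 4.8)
The plan is to use the decomposition $\widehat{\mathbf{M}}_k=\mathcal{I}+\mathcal{II}+\mathcal{III}+\mathcal{IV}$ recorded in \eqref{dec-m-hat}, together with three elementary a priori facts: $\|\mathbf{A}_k\|_F^2=O(p_k)$ by Assumption \ref{as-2}\textit{(ii)}; $\|\widehat{\mathbf{A}}_k\|_F^2=p_k r_k$ by construction, since $\widehat{\mathbf{A}}_k=\sqrt{p_k}\,\widehat{\mathbf{U}}_k$ with $\widehat{\mathbf{U}}_k$ having orthonormal columns; and $\|\widehat{\mathbf{H}}_k\|_F^2=O_P(1)$, a property of the rotation matrix that follows from $\widehat{\mathbf{H}}_k\widehat{\mathbf{H}}_k^{\top}\overset{P}{\rightarrow}\mathbf{I}_{r_k}$ (Theorem \ref{th:1}) and $r_k$ fixed. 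The bounds for $\mathcal{II}\widehat{\mathbf{A}}_k$ and $\mathcal{III}\widehat{\mathbf{A}}_k$ are then immediate from sub-multiplicativity of the Frobenius norm: writing $\mathcal{II}\widehat{\mathbf{A}}_k=\tfrac{1}{Tp}\sum_t\mathbf{A}_k\mathbf{F}_{k,t}\mathbf{B}_k^{\top}\mathbf{E}_{k,t}^{\top}\widehat{\mathbf{A}}_k$, one has $\|\mathcal{II}\widehat{\mathbf{A}}_k\|_F\le\|\mathbf{A}_k\|_F\,\|\widehat{\mathbf{A}}_k\|_F\,\big\|\tfrac{1}{Tp}\sum_t\mathbf{F}_{k,t}\mathbf{B}_k^{\top}\mathbf{E}_{k,t}^{\top}\big\|_F$, and since Lemma \ref{lemma1}\textit{(ii)} gives $\mathbb{E}\big\|\sum_t\mathbf{F}_{k,t}\mathbf{B}_k^{\top}\mathbf{E}_{k,t}^{\top}\big\|_F^2=O(Tp)$, Markov's inequality together with $p=p_kp_{-k}$ yields $\tfrac{1}{p_k}\|\mathcal{II}\widehat{\mathbf{A}}_k\|_F^2=O_P(p_k/(Tp))=O_P(1/(Tp_{-k}))$; the term $\mathcal{III}\widehat{\mathbf{A}}_k=\tfrac{1}{Tp}\sum_t\mathbf{E}_{k,t}\mathbf{B}_k\mathbf{F}_{k,t}^{\top}\mathbf{A}_k^{\top}\widehat{\mathbf{A}}_k$ is treated identically, using $\mathbb{E}\big\|\sum_t\mathbf{E}_{k,t}\mathbf{B}_k\mathbf{F}_{k,t}^{\top}\big\|_F^2=\mathbb{E}\big\|\sum_t\mathbf{F}_{k,t}\mathbf{B}_k^{\top}\mathbf{E}_{k,t}^{\top}\big\|_F^2=O(Tp)$.

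The crux is the $\mathcal{IV}$ term. Since $\mathcal{IV}$ is symmetric, $\|\mathcal{IV}\widehat{\mathbf{A}}_k\|_F=\|\widehat{\mathbf{A}}_k^{\top}\mathcal{IV}\|_F$; I would write $\widehat{\mathbf{A}}_k^{\top}=\widehat{\mathbf{H}}_k^{\top}\mathbf{A}_k^{\top}+(\widehat{\mathbf{A}}_k-\mathbf{A}_k\widehat{\mathbf{H}}_k)^{\top}$, read $\widehat{\mathbf{A}}_k^{\top}\mathcal{IV}$ column by column (the $i$-th column of $\mathbf{E}_{k,t}\mathbf{E}_{k,t}^{\top}$ being $\mathbf{E}_{k,t}\mathbf{e}_{t,k,i\cdot}^{\top}$), and combine $(a+b)^2\le 2a^2+2b^2$ with $\|(\widehat{\mathbf{A}}_k-\mathbf{A}_k\widehat{\mathbf{H}}_k)^{\top}\mathbf{v}\|\le\|\widehat{\mathbf{A}}_k-\mathbf{A}_k\widehat{\mathbf{H}}_k\|_F\|\mathbf{v}\|$ to reach
\[
\frac{1}{p_k}\|\mathcal{IV}\widehat{\mathbf{A}}_k\|_F^2\le\frac{2\|\widehat{\mathbf{H}}_k\|^2}{p_k}\sum_{i=1}^{p_k}\Big\|\frac{1}{Tp}\sum_{t=1}^{T}\mathbf{A}_k^{\top}\mathbf{E}_{k,t}\mathbf{e}_{t,k,i\cdot}^{\top}\Big\|^2+\frac{2\|\widehat{\mathbf{A}}_k-\mathbf{A}_k\widehat{\mathbf{H}}_k\|_F^2}{p_k}\sum_{i=1}^{p_k}\Big\|\frac{1}{Tp}\sum_{t=1}^{T}\mathbf{E}_{k,t}\mathbf{e}_{t,k,i\cdot}^{\top}\Big\|^2.
\]
For the first summand I would invoke $\mathbb{E}\big\|\sum_t\mathbf{A}_k^{\top}\mathbf{E}_{k,t}\mathbf{e}_{t,k,i\cdot}^{\top}\big\|^2=O(Tp)+O((Tp_{-k})^2)$ from Lemma \ref{lemma1}\textit{(iii)}, uniform in $i$; summing over $i\le p_k$, dividing by $p_k(Tp)^2$ and using $(Tp_{-k})^2/(Tp)^2=p_k^{-2}$ gives $O_P(1/(Tp))+O_P(1/p_k^2)$ (the $\|\widehat{\mathbf{H}}_k\|^2=O_P(1)$ factor is harmless). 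For the second summand, the analogous $\mathbb{E}\big\|\sum_t\mathbf{E}_{k,t}\mathbf{e}_{t,k,i\cdot}^{\top}\big\|^2=O(Tp)+O((Tp_{-k})^2)$ from Lemma \ref{lemma1}\textit{(iii)}, summed and divided by $(Tp)^2$, gives $\sum_i\big\|(Tp)^{-1}\sum_t\mathbf{E}_{k,t}\mathbf{e}_{t,k,i\cdot}^{\top}\big\|^2=O_P(1/(Tp_{-k})+1/p_k)$, which multiplies the retained factor $\tfrac{1}{p_k}\|\widehat{\mathbf{A}}_k-\mathbf{A}_k\widehat{\mathbf{H}}_k\|_F^2$. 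Adding the two summands reproduces exactly the claimed bound for $\tfrac{1}{p_k}\|\mathcal{IV}\widehat{\mathbf{A}}_k\|_F^2$.

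The step I expect to require genuine care — rather than the routine Cauchy--Schwarz and Markov manipulations — is the treatment of randomness inside the $\mathcal{IV}$ bound. Both $\widehat{\mathbf{H}}_k$ and $\widehat{\mathbf{A}}_k-\mathbf{A}_k\widehat{\mathbf{H}}_k$ are random and correlated with $\{\mathbf{E}_{k,t}\}$, so expectations cannot be passed through them; the clean route is to establish first the deterministic-coefficient $O_P$-bounds $\tfrac{1}{p_k(Tp)^2}\sum_i\big\|\sum_t\mathbf{A}_k^{\top}\mathbf{E}_{k,t}\mathbf{e}_{t,k,i\cdot}^{\top}\big\|^2=O_P(1/(Tp)+1/p_k^2)$ and $\tfrac{1}{(Tp)^2}\sum_i\big\|\sum_t\mathbf{E}_{k,t}\mathbf{e}_{t,k,i\cdot}^{\top}\big\|^2=O_P(1/(Tp_{-k})+1/p_k)$ via Lemma \ref{lemma1}\textit{(iii)} and Markov, and only afterwards multiply by the separate factors $\|\widehat{\mathbf{H}}_k\|^2=O_P(1)$ and $\tfrac{1}{p_k}\|\widehat{\mathbf{A}}_k-\mathbf{A}_k\widehat{\mathbf{H}}_k\|_F^2$, passing from the spectral to the Frobenius norm when the latter factor is pulled out of the column sums. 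A secondary point worth flagging is that $\tfrac{1}{p_k}\|\widehat{\mathbf{A}}_k-\mathbf{A}_k\widehat{\mathbf{H}}_k\|_F^2$ — precisely the quantity that Theorem \ref{th:1} aims to control — reappears on the right-hand side of the $\mathcal{IV}$-bound; this apparent circularity is harmless, because Lemma \ref{lemma3} only records the bound in this conditional form, and when it is later fed into the Davis--Kahan-type eigenvector-perturbation argument for $\widehat{\mathbf{A}}_k$ the offending term carries coefficient $O_P(1/(Tp_{-k})+1/p_k)=o_P(1)$ and is absorbed into the left-hand side.
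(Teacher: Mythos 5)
Your proposal is correct and follows essentially the same route as the paper: the same sub-multiplicativity bounds via Lemma \ref{lemma1}\textit{(ii)} for $\mathcal{II}$ and $\mathcal{III}$, the same split $\widehat{\mathbf{A}}_k=(\widehat{\mathbf{A}}_k-\mathbf{A}_k\widehat{\mathbf{H}}_k)+\mathbf{A}_k\widehat{\mathbf{H}}_k$ read column-by-column for $\mathcal{IV}$, and the same invocation of Lemma \ref{lemma1}\textit{(iii)} and Markov to close. Your explicit remark about first establishing deterministic-coefficient $O_P$ bounds before multiplying by the random factors $\|\widehat{\mathbf{H}}_k\|^2=O_P(1)$ and $p_k^{-1}\|\widehat{\mathbf{A}}_k-\mathbf{A}_k\widehat{\mathbf{H}}_k\|_F^2$, and about why the resulting circularity is harmless when fed back into the eigenvector-perturbation argument, is a correct and welcome clarification of a step the paper leaves implicit.
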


We now report a set of results concerning $\widetilde{\mathbf{M}}_{k}$. By
definition it holds that%
\begin{eqnarray}
	\widetilde{\mathbf{M}}_{k} &=&\frac{1}{Tp_{k}p_{-k}^{2}}\sum_{t=1}^{T}%
	\mathbf{X}_{k,t}\widehat{\mathbf{B}}_{k}\widehat{\mathbf{B}}_{k}^{\top }%
	\mathbf{X}_{k,t}^{\top }  \label{dec-m-tilde} \\
	&=&\frac{1}{Tp_{k}p_{-k}^{2}}\sum_{t=1}^{T}\left( \mathbf{A}_{k}\mathbf{F}%
	_{k,t}\mathbf{B}_{k}^{\top }+\mathbf{E}_{k,t}\right) \widehat{\mathbf{B}}_{k}%
	\widehat{\mathbf{B}}_{k}^{\top }\left( \mathbf{A}_{k}\mathbf{F}_{k,t}\mathbf{%
		B}_{k}^{\top }+\mathbf{E}_{k,t}\right) ^{\top }  \notag \\
	&=&\frac{1}{Tpp_{-k}}\sum_{t=1}^{T}\mathbf{A}_{k}\mathbf{F}_{k,t}\mathbf{B}%
	_{k}^{\top }\widehat{\mathbf{B}}_{k}\widehat{\mathbf{B}}_{k}^{\top }\mathbf{B%
	}_{k}\mathbf{F}_{k,t}^{\top }\mathbf{A}_{k}^{\top }+\frac{1}{Tpp_{-k}}%
	\sum_{t=1}^{T}\mathbf{E}_{k,t}\widehat{\mathbf{B}}_{k}\widehat{\mathbf{B}}%
	_{k}^{\top }\mathbf{B}_{k}\mathbf{F}_{k,t}^{\top }\mathbf{A}_{k}^{\top } 
	\notag \\
	&&+\frac{1}{Tpp_{-k}}\sum_{t=1}^{T}\mathbf{A}_{k}\mathbf{F}_{k,t}\mathbf{B}%
	_{k}^{\top }\widehat{\mathbf{B}}_{k}\widehat{\mathbf{B}}_{k}^{\top }\mathbf{E%
	}_{k,t}^{\top }+\frac{1}{Tpp_{-k}}\sum_{t=1}^{T}\mathbf{E}_{k,t}\widehat{\mathbf{B}%
	}_{k}\widehat{\mathbf{B}}_{k}^{\top }\mathbf{E}_{k,t}^{\top }  \notag \\
	&=&\mathcal{V}+\mathcal{VI}+\mathcal{VII}+\mathcal{VIII}.  \notag
\end{eqnarray}

\begin{lemma}
	\label{lemma4}We assume that Assumptions \ref{as-1}-\ref{as-4} and (\ref%
	{equ:3.1}) hold. Then, as $\min \left\{ T,p_{1},...,p_{K}\right\}
	\rightarrow \infty $, it holds that%
	\begin{equation*}
		\lambda _{j}\left( \widetilde{\mathbf{M}}_{k}\right) =\lambda _{j}\left(
		\mathbf{\Sigma }_{k}\right) +o_{P}\left( 1\right) ,
	\end{equation*}%
	for all $j\leq r_{k}$.
	
	\begin{proof}
		Recall that, from equation (\ref{dec-m-tilde}), it holds that
		\begin{equation*}
			\widetilde{\mathbf{M}}_{k}=\mathcal{V}+\mathcal{VI}+\mathcal{VII}+\mathcal{%
				VIII}.
		\end{equation*}%
		We have%
		\begin{eqnarray*}
			\mathcal{V} &\mathcal{=}&\frac{1}{Tp_{k}p_{-k}^{2}}\mathbf{A}_{k}\left(
			\sum_{t=1}^{T}\mathbf{F}_{k,t}\mathbf{B}_{k}^{\top }\widehat{\mathbf{B}}_{k}%
			\widehat{\mathbf{B}}_{k}^{\top }\mathbf{B}_{k}\mathbf{F}_{k,t}^{\top
			}\right) \mathbf{A}_{k}^{\top } \\
			&=&\frac{1}{Tp_{k}p_{-k}^{2}}\mathbf{A}_{k}\left( \sum_{t=1}^{T}\mathbf{F}%
			_{k,t}\mathbf{B}_{k}^{\top }\left( \widehat{\mathbf{B}}_{k}-\mathbf{B}_{k}%
			\widehat{\mathbf{H}}_{-k}+\mathbf{B}_{k}\widehat{\mathbf{H}}_{-k}\right)
			\left( \widehat{\mathbf{B}}_{k}-\mathbf{B}_{k}\widehat{\mathbf{H}}_{-k}+%
			\mathbf{B}_{k}\widehat{\mathbf{H}}_{-k}\right) ^{\top }\mathbf{B}_{k}\mathbf{%
				F}_{k,t}^{\top }\right) \mathbf{A}_{k}^{\top } \\
			&=&\frac{1}{Tp_{k}p_{-k}^{2}}\mathbf{A}_{k}\left( \sum_{t=1}^{T}\mathbf{F}%
			_{k,t}\mathbf{B}_{k}^{\top }\mathbf{B}_{k}\widehat{\mathbf{H}}_{-k}\widehat{%
				\mathbf{H}}_{-k}^{\top }\mathbf{B}_{k}^{\top }\mathbf{B}_{k}\mathbf{F}%
			_{k,t}^{\top }\right) \mathbf{A}_{k}^{\top } \\
			&&+\frac{1}{Tp_{k}p_{-k}^{2}}\mathbf{A}_{k}\left( \sum_{t=1}^{T}\mathbf{F}%
			_{k,t}\mathbf{B}_{k}^{\top }\mathbf{B}_{k}\widehat{\mathbf{H}}_{-k}\left(
			\widehat{\mathbf{B}}_{k}-\mathbf{B}_{k}\widehat{\mathbf{H}}_{-k}\right)
			^{\top }\mathbf{B}_{k}\mathbf{F}_{k,t}^{\top }\right) \mathbf{A}_{k}^{\top }
			\\
			&&+\frac{1}{Tp_{k}p_{-k}^{2}}\mathbf{A}_{k}\left( \sum_{t=1}^{T}\mathbf{F}%
			_{k,t}\mathbf{B}_{k}^{\top }\left( \widehat{\mathbf{B}}_{k}-\mathbf{B}_{k}%
			\widehat{\mathbf{H}}_{-k}\right) \widehat{\mathbf{H}}_{-k}^{\top }\mathbf{B}%
			_{k}^{\top }\mathbf{B}_{k}\mathbf{F}_{k,t}^{\top }\right) \mathbf{A}%
			_{k}^{\top } \\
			&&+\frac{1}{Tp_{k}p_{-k}^{2}}\mathbf{A}_{k}\left( \sum_{t=1}^{T}\mathbf{F}%
			_{k,t}\mathbf{B}_{k}^{\top }\left( \widehat{\mathbf{B}}_{k}-\mathbf{B}_{k}%
			\widehat{\mathbf{H}}_{-k}\right) \left( \widehat{\mathbf{B}}_{k}-\mathbf{B}%
			_{k}\widehat{\mathbf{H}}_{-k}\right) ^{\top }\mathbf{B}_{k}\mathbf{F}%
			_{k,t}^{\top }\right) \mathbf{A}_{k}^{\top }.
		\end{eqnarray*}%
		We begin by noting that
		\begin{equation*}
			\widehat{\mathbf{H}}_{-k}\widehat{\mathbf{H}}_{-k}^{\top }\overset{P}{%
				\rightarrow }\mathbf{I}_{r_{-k}};
		\end{equation*}%
		then, standard passages based on (\ref{equ:3.1}), Assumption \ref{as-1}%
		\textit{(ii)} and Weyl's inequality, we have
		\begin{equation*}
			\lambda _{j}(\mathcal{V})=\lambda _{j}\left( p_{k}^{-1}\Ab_{k}\boldsymbol{%
				\Sigma }_{k}\Ab_{k}^{\top }\right) +o_{p}(1),
		\end{equation*}%
		for $j\leq r_{k}$. Further, since $\mathop{\mathrm{rank}}(\mathcal{V})\leq
		r_{k}$, it follows immediately that $\lambda _{j}(\mathcal{V})=0$ for all $%
		j>r_{k}$. Also, using again (\ref{equ:3.1}) and Lemma \ref{lemma1}\textit{%
			(ii)}%
		\begin{eqnarray*}
			\left\Vert \mathcal{VI}\right\Vert _{F} &=&\frac{1}{Tp_{k}p_{-k}^{2}}%
			\left\Vert \sum_{t=1}^{T}\mathbf{E}_{k,t}\widehat{\mathbf{B}}_{k}\widehat{%
				\mathbf{B}}_{k}^{\top }\mathbf{B}_{k}\mathbf{F}_{k,t}^{\top }\mathbf{A}%
			_{k}^{\top }\right\Vert _{F} \\
			&\lesssim &\frac{1}{Tp_{k}p_{-k}}\left\Vert \sum_{t=1}^{T}\mathbf{E}_{k,t}%
			\widehat{\mathbf{B}}_{k}\mathbf{F}_{k,t}^{\top }\mathbf{A}_{k}^{\top
			}\right\Vert _{F} \\
			&\lesssim &\frac{1}{Tp_{k}p_{-k}}\left\Vert \sum_{t=1}^{T}\mathbf{E}%
			_{k,t}\left( \widehat{\mathbf{B}}_{k}-\mathbf{B}_{k}\widehat{\mathbf{H}}%
			_{-k}\right) \mathbf{F}_{k,t}^{\top }\right\Vert _{F}\left\Vert \mathbf{A}%
			_{k}\right\Vert _{F}+\frac{1}{Tp_{k}p_{-k}}\left\Vert \sum_{t=1}^{T}\mathbf{E%
			}_{k,t}\mathbf{B}_{k}\mathbf{F}_{k,t}^{\top }\mathbf{A}_{k}^{\top
			}\right\Vert _{F} \\
			&=&O_{P}\left( m_{-k}^{1/2}\right) +O_{P}\left( \left( Tp_{-k}\right)
			^{-1/2}\right) =o_{P}\left( 1\right) .
		\end{eqnarray*}%
		The same logic also yields%
		\begin{equation*}
			\left\Vert \mathcal{VII}\right\Vert _{F}=O_{P}\left( w_{-k}^{1/2}\right)
			+O_{P}\left( \left( Tp_{-k}\right) ^{-1/2}\right) =o_{P}\left( 1\right) .
		\end{equation*}%
		Finally, using (\ref{equ:3.1}) and Lemma \ref{lemma1}\textit{(i)}%
		\begin{eqnarray*}
			\left\Vert \mathcal{VIII}\right\Vert _{F} &=&\frac{1}{Tp_{k}p_{-k}^{2}}%
			\left\Vert \sum_{t=1}^{T}\mathbf{E}_{k,t}\widehat{\mathbf{B}}_{k}^{\top }%
			\widehat{\mathbf{B}}_{k}\mathbf{E}_{k,t}^{\top }\right\Vert _{F} \\
			&\lesssim &\frac{1}{Tp_{k}p_{-k}^{2}}\sum_{t=1}^{T}\left\Vert \mathbf{E}%
			_{k,t}\widehat{\mathbf{B}}_{k}\right\Vert _{F}^{2} \\
			&\lesssim &\frac{1}{Tp_{k}p_{-k}^{2}}\sum_{t=1}^{T}\left\Vert \mathbf{E}%
			_{k,t}\left( \widehat{\mathbf{B}}_{k}-\mathbf{B}_{k}\widehat{\mathbf{H}}%
			_{-k}\right) \right\Vert _{F}^{2}+\frac{1}{Tp_{k}p_{-k}^{2}}%
			\sum_{t=1}^{T}\left\Vert \mathbf{E}_{k,t}\mathbf{B}_{k}\widehat{\mathbf{H}}%
			_{-k}\right\Vert _{F}^{2} \\
			&\lesssim &\frac{1}{Tp_{k}p_{-k}^{2}}\sum_{t=1}^{T}\left\Vert \mathbf{E}%
			_{k,t}\right\Vert _{F}^{2}\left\Vert \widehat{\mathbf{B}}_{k}-\mathbf{B}_{k}%
			\widehat{\mathbf{H}}_{-k}\right\Vert _{F}^{2}+\frac{1}{Tp_{k}p_{-k}^{2}}%
			\sum_{t=1}^{T}\left\Vert \mathbf{E}_{k,t}\mathbf{B}_{k}\widehat{\mathbf{H}}%
			_{-k}\right\Vert _{F}^{2} \\
			&=&O_{P}\left( w_{-k}\right) +O_{P}\left( p_{-k}^{-1}\right) =o_{P}\left(
			1\right) .
		\end{eqnarray*}%
		The proof now follows from Weyl's inequality.
	\end{proof}
	
\end{lemma}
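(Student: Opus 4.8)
The plan is to work from the four-term decomposition $\widetilde{\mathbf{M}}_k = \mathcal{V} + \mathcal{VI} + \mathcal{VII} + \mathcal{VIII}$ in \eqref{dec-m-tilde}: I would show that $\mathcal{V}$ is the asymptotically dominant, rank-$\le r_k$ ``signal'' term whose top $r_k$ eigenvalues agree with those of $\boldsymbol{\Sigma}_k$ up to $o_P(1)$, that $\mathcal{VI},\mathcal{VII},\mathcal{VIII}$ are $o_P(1)$ in Frobenius (hence spectral) norm, and then conclude by Weyl's inequality.

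\textbf{The signal term.} In
\[
\mathcal{V} = \frac{1}{Tp_k p_{-k}^2}\,\mathbf{A}_k\Big(\sum_{t=1}^{T}\mathbf{F}_{k,t}\mathbf{B}_k^{\top}\widehat{\mathbf{B}}_k\widehat{\mathbf{B}}_k^{\top}\mathbf{B}_k\mathbf{F}_{k,t}^{\top}\Big)\mathbf{A}_k^{\top}
\]
I would substitute $\widehat{\mathbf{B}}_k = (\widehat{\mathbf{B}}_k - \mathbf{B}_k\widehat{\mathbf{H}}_{-k}) + \mathbf{B}_k\widehat{\mathbf{H}}_{-k}$ and expand the inner product into four pieces. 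The leading piece converges in probability to $p_k^{-1}\mathbf{A}_k\boldsymbol{\Sigma}_k\mathbf{A}_k^{\top}$, using $p_{-k}^{-1}\mathbf{B}_k^{\top}\mathbf{B}_k = \otimes_{j\ne k}(p_j^{-1}\mathbf{A}_j^{\top}\mathbf{A}_j)\to\mathbf{I}_{r_{-k}}$ (Assumption \ref{as-2}(ii) for every mode $j\ne k$), $\widehat{\mathbf{H}}_{-k}\widehat{\mathbf{H}}_{-k}^{\top}\overset{P}{\rightarrow}\mathbf{I}_{r_{-k}}$, and $(Tp_{-k})^{-1}\sum_t\mathbf{F}_{k,t}\mathbf{B}_k^{\top}\mathbf{B}_k\mathbf{F}_{k,t}^{\top}\overset{P}{\rightarrow}\boldsymbol{\Sigma}_k$ (Assumptions \ref{as-1}(ii) and \ref{as-2}(ii)). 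Each of the three remaining pieces carries a factor $\widehat{\mathbf{B}}_k - \mathbf{B}_k\widehat{\mathbf{H}}_{-k}$, so combining $p_{-k}^{-1}\|\widehat{\mathbf{B}}_k - \mathbf{B}_k\widehat{\mathbf{H}}_{-k}\|_F^2 = O_P(w_{-k}) = o_P(1)$ from \eqref{equ:3.1}(a) with $\|\mathbf{A}_k\|_F^2 \asymp p_k$ and $T^{-1}\sum_t\|\mathbf{F}_{k,t}\|_F^2 = O_P(1)$ makes them $o_P(1)$ in Frobenius norm. Hence $\lambda_j(\mathcal{V}) = \lambda_j(p_k^{-1}\mathbf{A}_k\boldsymbol{\Sigma}_k\mathbf{A}_k^{\top}) + o_P(1)$ for $j\le r_k$; since the nonzero eigenvalues of $p_k^{-1}\mathbf{A}_k\boldsymbol{\Sigma}_k\mathbf{A}_k^{\top}$ coincide with those of $(p_k^{-1}\mathbf{A}_k^{\top}\mathbf{A}_k)\boldsymbol{\Sigma}_k \to \boldsymbol{\Sigma}_k$, this yields $\lambda_j(\mathcal{V}) = \lambda_j(\boldsymbol{\Sigma}_k) + o_P(1)$ for $j\le r_k$, and $\lambda_j(\mathcal{V}) = 0$ for $j>r_k$ because $\mathrm{rank}(\mathcal{V})\le r_k$.

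\textbf{The remainder terms.} For $\mathcal{VI} = \frac{1}{Tp}\sum_t\mathbf{E}_{k,t}\widehat{\mathbf{B}}_k\widehat{\mathbf{B}}_k^{\top}\mathbf{B}_k\mathbf{F}_{k,t}^{\top}\mathbf{A}_k^{\top}$, using $\|\widehat{\mathbf{B}}_k^{\top}\mathbf{B}_k\|\lesssim p_{-k}$ and splitting $\widehat{\mathbf{B}}_k$ once more, the $(\widehat{\mathbf{B}}_k - \mathbf{B}_k\widehat{\mathbf{H}}_{-k})$-piece is $O_P(m_{-k}^{1/2})$ by \eqref{equ:3.1}(b) and the $\mathbf{B}_k\widehat{\mathbf{H}}_{-k}$-piece is $O_P((Tp_{-k})^{-1/2})$ by Lemma \ref{lemma1}(ii); symmetrically $\|\mathcal{VII}\|_F = O_P(w_{-k}^{1/2}) + O_P((Tp_{-k})^{-1/2})$ via \eqref{equ:3.1}(a) and Lemma \ref{lemma1}(ii). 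For the pure-noise term I would bound $\|\mathcal{VIII}\|_F \le \frac{1}{Tp_k p_{-k}^2}\sum_t\|\mathbf{E}_{k,t}\widehat{\mathbf{B}}_k\|_F^2$, split $\widehat{\mathbf{B}}_k$, use $\sum_t\|\mathbf{E}_{k,t}(\widehat{\mathbf{B}}_k - \mathbf{B}_k\widehat{\mathbf{H}}_{-k})\|_F^2 \le \big(\sum_t\|\mathbf{E}_{k,t}\|_F^2\big)\|\widehat{\mathbf{B}}_k - \mathbf{B}_k\widehat{\mathbf{H}}_{-k}\|_F^2$ with $\sum_t\mathbb{E}\|\mathbf{E}_{k,t}\|_F^2 = O(Tp)$ and \eqref{equ:3.1}(a), and bound $\sum_t\|\mathbf{E}_{k,t}\mathbf{B}_k\widehat{\mathbf{H}}_{-k}\|_F^2$ directly by Lemma \ref{lemma1}(i); this gives $O_P(w_{-k}) + O_P(p_{-k}^{-1}) = o_P(1)$. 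Then $\widetilde{\mathbf{M}}_k - \mathcal{V}$ has spectral norm $o_P(1)$ and Weyl's inequality yields $\lambda_j(\widetilde{\mathbf{M}}_k) = \lambda_j(\mathcal{V}) + o_P(1) = \lambda_j(\boldsymbol{\Sigma}_k) + o_P(1)$ for $j\le r_k$.

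\textbf{Main obstacle.} The delicate part is the power counting inside $\mathcal{V}$: against the $p_{-k}^{-2}$ in the normalization, each occurrence of $\mathbf{B}_k^{\top}\mathbf{B}_k$ or $\|\widehat{\mathbf{B}}_k\|_F^2$ contributes only a $p_{-k}$, so one must check that every cross piece genuinely picks up an extra decaying factor from $\widehat{\mathbf{B}}_k - \mathbf{B}_k\widehat{\mathbf{H}}_{-k}$ rather than staying $O_P(1)$. A related recurring subtlety is that \eqref{equ:3.1} controls $\widehat{\mathbf{B}}_k$ only up to the rotation $\widehat{\mathbf{H}}_{-k}$, so every decomposition has to be organised around $\mathbf{B}_k\widehat{\mathbf{H}}_{-k}$ rather than $\mathbf{B}_k$, with repeated appeals to $\widehat{\mathbf{H}}_{-k}\widehat{\mathbf{H}}_{-k}^{\top}\overset{P}{\rightarrow}\mathbf{I}_{r_{-k}}$ to discard the rotations at the end.
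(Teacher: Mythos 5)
Your proof is correct and follows essentially the same route as the paper's: the same four-term decomposition of $\widetilde{\mathbf{M}}_k$, the same splitting of $\widehat{\mathbf{B}}_k$ around $\mathbf{B}_k\widehat{\mathbf{H}}_{-k}$ inside $\mathcal{V}$, the same appeals to \eqref{equ:3.1}, Lemma \ref{lemma1}, and $\widehat{\mathbf{H}}_{-k}\widehat{\mathbf{H}}_{-k}^{\top}\to\mathbf{I}_{r_{-k}}$ to kill $\mathcal{VI},\mathcal{VII},\mathcal{VIII}$, and Weyl's inequality to finish. The only (very minor) improvement you make is spelling out the step from $\lambda_j(p_k^{-1}\mathbf{A}_k\boldsymbol{\Sigma}_k\mathbf{A}_k^{\top})$ to $\lambda_j(\boldsymbol{\Sigma}_k)$ via $p_k^{-1}\mathbf{A}_k^{\top}\mathbf{A}_k\to\mathbf{I}_{r_k}$, which the paper leaves implicit.
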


\begin{lemma}
	\label{lemma5}We assume that Assumptions \ref{as-1}-\ref{as-4} and (\ref%
	{equ:3.1}) hold. Then, as $\min \left\{ T,p_{1},...,p_{K}\right\}
	\rightarrow \infty $, it holds that%
	\begin{eqnarray*}
		\frac{1}{p_{k}}\left\Vert \mathcal{VI}\widetilde{\mathbf{A}}_{k}\right\Vert
		_{F}^{2} &=&O_{P}\left( m_{-k}\right) +O_{P}\left( \frac{1}{Tp_{-k}}\right) ,
		\\
		\frac{1}{p_{k}}\left\Vert \mathcal{VII}\widetilde{\mathbf{A}}_{k}\right\Vert
		_{F}^{2} &=&O_{P}\left( m_{-k}\right) +O_{P}\left( \frac{1}{Tp_{-k}}\right) ,
		\\
		\frac{1}{p_{k}}\left\Vert \mathcal{VIII}\widetilde{\mathbf{A}}%
		_{k}\right\Vert _{F}^{2} &=&O_{P}\left( \frac{1}{Tp}+\frac{1}{p^{2}}\right)
		+O_{P}\left( \left( \frac{1}{Tp_{k}}+\frac{1}{p_{k}^{2}}\right)
		w_{-k}^{2}\right) \\&&+O_{P}\Big( \frac{1}{p_{-k}^{2}}
		+\frac{w_{-k}}{p_{-k}}%
		+w_{-k}^{2}\Big) \times \frac{1}{p_{k}}\left\Vert \widetilde{\mathbf{A}}%
		_{k}-\mathbf{A}_{k}\widetilde{\mathbf{H}}_{k}\right\Vert _{F}^{2},
	\end{eqnarray*}%
	where recall that, by (\ref{dec-m-tilde}), $\widetilde{\mathbf{M}}_{k}=%
	\mathcal{V}+\mathcal{VI}+\mathcal{VII}+\mathcal{VIII}$.
	
	\begin{proof}
		We begin by noting that, by Lemma \ref{lemma1}\textit{(ii)} and (\ref%
		{equ:3.1})%
		\begin{eqnarray}
			\frac{1}{p_{k}}\left\Vert \mathcal{VI}\widetilde{\mathbf{A}}_{k}\right\Vert
			_{F}^{2} &=&\frac{1}{p_{k}}\left\Vert \frac{1}{Tp_{k}p_{-k}^{2}}%
			\sum_{t=1}^{T}\mathbf{E}_{k,t}\widehat{\mathbf{B}}_{k}\widehat{\mathbf{B}}%
			_{k}^{\top }\mathbf{B}_{k}\mathbf{F}_{k,t}^{\top }\mathbf{A}_{k}^{\top }%
			\widetilde{\mathbf{A}}_{k}\right\Vert _{F}^{2}  \label{5-1} \\
			&\lesssim &\frac{1}{p_{k}}\left\Vert \frac{1}{Tp_{-k}}\sum_{t=1}^{T}\mathbf{E%
			}_{k,t}\widehat{\mathbf{B}}_{k}\mathbf{F}_{k,t}^{\top }\right\Vert _{F}^{2}
			\notag \\
			&\lesssim &\frac{1}{p_{k}}\left\Vert \frac{1}{Tp_{-k}}\sum_{t=1}^{T}\mathbf{E%
			}_{k,t}\left( \widehat{\mathbf{B}}_{k}-\mathbf{B}_{k}\widehat{\mathbf{H}}%
			_{-k}\right) \mathbf{F}_{k,t}^{\top }\right\Vert _{F}^{2}+\frac{1}{p_{k}}%
			\left\Vert \frac{1}{Tp_{-k}}\sum_{t=1}^{T}\mathbf{E}_{k,t}\mathbf{B}_{k}%
			\widehat{\mathbf{H}}_{-k}\mathbf{F}_{k,t}^{\top }\right\Vert _{F}^{2}  \notag
			\\
			&=&O_{P}\left( m_{-k}\right) +O_{P}\left( \frac{1}{Tp_{-k}}\right) .  \notag
		\end{eqnarray}%
		By the same logic, it can be readily shown that%
		\begin{equation*}
			\frac{1}{p_{k}}\left\Vert \mathcal{VII}\widetilde{\mathbf{A}}_{k}\right\Vert
			_{F}^{2}=O_{P}\left( m_{-k}+\frac{1}{Tp_{-k}}\right) .
		\end{equation*}%
		We now consider the last statement in the lemma; it holds that%
		\begin{eqnarray*}
			\mathcal{VIII}\widetilde{\mathbf{A}}_{k} &=&\frac{1}{Tp_{k}p_{-k}^{2}}%
			\sum_{t=1}^{T}\mathbf{E}_{k,t}\widehat{\mathbf{B}}_{k}\widehat{\mathbf{B}}%
			_{k}^{\top }\mathbf{E}_{k,t}^{\top }\widetilde{\mathbf{A}}_{k} \\
			&=&\frac{1}{Tp_{k}p_{-k}^{2}}\sum_{t=1}^{T}\mathbf{E}_{k,t}\left( \widehat{%
				\mathbf{B}}_{k}-\mathbf{B}_{k}\widehat{\mathbf{H}}_{-k}+\mathbf{B}_{k}%
			\widehat{\mathbf{H}}_{-k}\right) \left( \widehat{\mathbf{B}}_{k}-\mathbf{B}%
			_{k}\widehat{\mathbf{H}}_{-k}+\mathbf{B}_{k}\widehat{\mathbf{H}}_{-k}\right)
			^{\top }\mathbf{E}_{k,t}^{\top }\widetilde{\mathbf{A}}_{k} \\
			&=&\frac{1}{Tp_{k}p_{-k}^{2}}\sum_{t=1}^{T}\mathbf{E}_{k,t}\mathbf{B}_{k}%
			\widehat{\mathbf{H}}_{-k}\widehat{\mathbf{H}}_{-k}^{\top }\mathbf{B}%
			_{k}^{\top }\mathbf{E}_{k,t}^{\top }\widetilde{\mathbf{A}}_{k} \\
			&&+\frac{1}{Tp_{k}p_{-k}^{2}}\sum_{t=1}^{T}\mathbf{E}_{k,t}\left( \widehat{%
				\mathbf{B}}_{k}-\mathbf{B}_{k}\widehat{\mathbf{H}}_{-k}\right) \widehat{%
				\mathbf{H}}_{-k}^{\top }\mathbf{B}_{k}^{\top }\mathbf{E}_{k,t}^{\top }%
			\widetilde{\mathbf{A}}_{k} \\
			&&+\frac{1}{Tp_{k}p_{-k}^{2}}\sum_{t=1}^{T}\mathbf{E}_{k,t}\widehat{\mathbf{B%
			}}_{k}\left( \widehat{\mathbf{B}}_{k}-\mathbf{B}_{k}\widehat{\mathbf{H}}%
			_{-k}\right) ^{\top }\mathbf{E}_{k,t}^{\top }\widetilde{\mathbf{A}}_{k} \\
			&=&\mathcal{VIII}_{a}\mathcal{+VIII}_{b}\mathcal{+VIII}_{c}.
		\end{eqnarray*}%
		We will report our calculations for $r_{k}=1$, for simplicity and without
		loss of generality, for $1\leq k\leq K$. Consider $\mathcal{VIII}_{a}$;
		since $\left\Vert \widehat{\mathbf{H}}_{-k}\right\Vert _{F}=O_{P}\left(
		1\right) $,\ it holds that%
		\begin{equation*}
			\left\Vert \sum_{t=1}^{T}\mathbf{E}_{k,t}\mathbf{B}_{k}\widehat{\mathbf{H}}%
			_{-k}\widehat{\mathbf{H}}_{-k}^{\top }\mathbf{B}_{k}^{\top }\mathbf{E}%
			_{k,t}^{\top }\widetilde{\mathbf{A}}_{k}\right\Vert _{F}^{2}=O_{P}\left(
			1\right) \left\Vert \sum_{t=1}^{T}\mathbf{E}_{k,t}\mathbf{B}_{k}\mathbf{B}%
			_{k}^{\top }\mathbf{E}_{k,t}^{\top }\widetilde{\mathbf{A}}_{k}\right\Vert
			_{F}^{2}.
		\end{equation*}%
		Consider now%
		\begin{align}
			& \mathbb{E}\left( \left\Vert \sum_{t=1}^{T}\mathbf{E}_{k,t}\mathbf{B}_{k}%
			\mathbf{B}_{k}^{\top }\mathbf{E}_{k,t}^{\top }\mathbf{A}_{k}\right\Vert
			_{F}^{2}\right)  \notag \\
			& =\left\Vert \mathbf{B}_{k}\right\Vert _{F}^{2}\mathbb{E}\left( \left\Vert
			\sum_{t=1}^{T}\mathbf{E}_{k,t}\mathbf{B}_{k}\mathbf{A}_{k}^{\top }\mathbf{E}%
			_{k,t}\right\Vert _{F}^{2}\right)  \notag \\
			& =\left\Vert \mathbf{B}_{k}\right\Vert
			_{F}^{2}\sum_{i=1}^{p_{k}}\sum_{j=1}^{p_{-k}}\mathbb{E}\left( \left\Vert
			\sum_{t=1}^{T}\mathbf{B}_{k}^{\top }\boldsymbol{e}_{k,t,i\cdot }^{\top }%
			\boldsymbol{e}_{k,t,\cdot j}^{\top }\mathbf{A}_{k}\right\Vert _{F}^{2}\right)
			\notag \\
			& \lesssim p_{-k}^{2}p_{k}\mathbb{E}\left( \left\Vert \sum_{t=1}^{T}\mathbf{B%
			}_{k}^{\top }\boldsymbol{e}_{k,t,i\cdot }^{\top }\boldsymbol{e}_{k,t,\cdot
				j}^{\top }\mathbf{A}_{k}\right\Vert _{F}^{2}\right) .  \label{equ:b.0}
		\end{align}%
		Given that Assumption \ref{as-2}\textit{(ii)} entails that $\left\Vert
		\mathbf{B}_{k}\right\Vert _{F}^{2}=c_{0}p_{-k}$, it holds that%
		\begin{align}
			& \mathbb{E}\left( \left\Vert \sum_{t=1}^{T}\mathbf{B}_{k}^{\top }%
			\boldsymbol{e}_{k,t,i\cdot }^{\top }\boldsymbol{e}_{k,t,\cdot j}^{\top }%
			\mathbf{A}_{k}\right\Vert _{F}^{2}\right)  \notag \\
			& \leq \mathbb{E}\left( \left\Vert \mathbf{B}_{k}^{\top
			}\sum_{t=1}^{T}\left( \boldsymbol{e}_{k,t,i\cdot }^{\top }\boldsymbol{e}%
			_{k,t,\cdot j}^{\top }-\mathbb{E}\left( \boldsymbol{e}_{k,t,i\cdot }^{\top }%
			\boldsymbol{e}_{k,t,\cdot j}^{\top }\right) \right) \mathbf{A}%
			_{k}\right\Vert _{F}^{2}\right) +\left\Vert \sum_{t=1}^{T}\mathbb{E}\left(
			\mathbf{B}_{k}^{\top }\boldsymbol{e}_{k,t,i\cdot }^{\top }\boldsymbol{e}%
			_{k,t,\cdot j}^{\top }\mathbf{A}_{k}\right) \right\Vert _{F}^{2}  \notag \\
			\lesssim &
			\,\sum_{t,s=1}^{T}\sum_{i_{1},i_{2}=1}^{p_{k}}\sum_{j_{1},j_{2}=1}^{p_{-k}}%
			\left\vert Cov\left(
			e_{k,t,ij_{1}}e_{k,t,i_{1}j},e_{k,s,ij_{2}}e_{k,s,i_{2}j}\right) \right\vert
			+\left(
			\sum_{t=1}^{T}\sum_{i_{1}=1}^{p_{k}}\sum_{j_{1}=1}^{p_{-k}}\left\vert
			\mathbb{E}\left( e_{k,t,ij_{1}}e_{k,t,i_{1}j}\right) \right\vert \right) ^{2}
			\notag \\
			& =O\left( Tp+T^{2}\right) ,  \label{equ:b.2}
		\end{align}%
		for all $1\leq i\leq p_{k}$ and all $1\leq j\leq p_{-k}$, where we have used
		Assumptions \ref{as-2}\textit{(ii)}, \ref{as-3}\textit{(ii)}, and \ref{as-3}%
		\textit{(iii)}. Hence, from (\ref{equ:b.0}) and (\ref{equ:b.2}) and Lemma %
		\ref{lemma1}\textit{(i)}, and recalling that $\left\Vert \widetilde{\mathbf{H%
		}}_{k}\right\Vert _{F}^{2}=O_{P}\left( 1\right) $%
		\begin{align}
			& \frac{1}{p_{k}}\Vert \mathcal{VIII}_{a}\Vert _{F}^{2}  \notag \\
			& =\frac{1}{p_{k}}\left\Vert \frac{1}{Tp_{k}p_{-k}^{2}}\left( \sum_{t=1}^{T}%
			\mathbf{E}_{k,t}\mathbf{B}_{k}\widehat{\mathbf{H}}_{-k}\widehat{\mathbf{H}}%
			_{-k}^{\top }\mathbf{B}_{k}^{\top }\mathbf{E}_{k,t}^{\top }\mathbf{A}_{k}%
			\widetilde{\mathbf{H}}_{k}+\sum_{t=1}^{T}\mathbf{E}_{k,t}\mathbf{B}_{k}%
			\widehat{\mathbf{H}}_{-k}\widehat{\mathbf{H}}_{-k}^{\top }\mathbf{B}%
			_{k}^{\top }\mathbf{E}_{k,t}^{\top }\left( \widetilde{\mathbf{A}}_{k}-%
			\mathbf{A}_{k}\widetilde{\mathbf{H}}_{k}\right) \right) \right\Vert _{F}^{2}
			\notag \\
			& \lesssim \frac{1}{T^{2}p_{k}^{3}p_{-k}^{4}}\left( \left\Vert \sum_{t=1}^{T}%
			\mathbf{E}_{k,t}\mathbf{B}_{k}\mathbf{B}_{k}^{\top }\mathbf{E}_{k,t}^{\top }%
			\mathbf{A}_{k}\right\Vert _{F}^{2}+\left( \sum_{t=1}^{T}\left\Vert \mathbf{E}%
			_{k,t}\mathbf{B}_{k}\right\Vert _{F}^{2}\right) ^{2}\left\Vert \widetilde{%
				\mathbf{A}}_{k}-\mathbf{A}_{k}\widetilde{\mathbf{H}}_{k}\right\Vert
			_{F}^{2}\right)  \notag \\
			& =O_{P}\left( \frac{1}{Tp}+\frac{1}{p^{2}}\right) +O_{P}\left( \frac{1}{%
				p_{k}p_{-k}^{2}}\right) \times \left\Vert \widetilde{\mathbf{A}}_{k}-\mathbf{%
				A}_{k}\widetilde{\mathbf{H}}_{k}\right\Vert _{F}^{2}.  \label{eq:IXcal}
		\end{align}%
		We now turn to studying $\mathcal{VIII}_{b}$; it holds that%
		\begin{equation}
			\left\Vert \sum_{t=1}^{T}\mathbf{E}_{k,t}\left( \widehat{\mathbf{B}}_{k}-%
			\mathbf{B}_{k}\widehat{\mathbf{H}}_{-k}\right) \mathbf{B}_{k}^{\top }\mathbf{%
				E}_{k,t}^{\top }\mathbf{A}_{k}\right\Vert _{F}^{2}\leq
			\sum_{i=1}^{p_{k}}\sum_{j=1}^{p_{-k}}\left\Vert \sum_{t=1}^{T}e_{k,t,ij}%
			\mathbf{B}_{k}^{\top }\mathbf{E}_{k,t}^{\top }\mathbf{A}_{k}\right\Vert
			_{F}^{2}\left\Vert \widehat{\mathbf{B}}_{k}-\mathbf{B}_{k}\widehat{\mathbf{H}%
			}_{-k}\right\Vert _{F}^{2};  \label{eq:C8a}
		\end{equation}%
		using Assumptions \ref{as-2}\textit{(ii)}, \ref{as-3}\textit{(ii)}, and \ref%
		{as-3}\textit{(iii)}, and a similar logic as in the proof of (\ref{equ:b.2}%
		), it follows that for all $1\leq i\leq p_{k}$ and all $1\leq j\leq p_{-k}$,
		\begin{align}
			& \mathbb{E}\left( \left\Vert \sum_{t=1}^{T}e_{k,t,ij}\mathbf{B}_{k}^{\top }%
			\mathbf{E}_{k,t}^{\top }\mathbf{A}_{k}\right\Vert _{F}^{2}\right)  \notag \\
			& \lesssim
			\sum_{t,s=1}^{T}\sum_{i_{1},i_{2}=1}^{p_{k}}\sum_{j_{1},j_{2}=1}^{p_{-k}}%
			\left\vert Cov\left(
			e_{k,t,ij}e_{k,t,i_{1}j_{1}},e_{k,s,ij}e_{k,s,i_{2}j_{2}}\right) \right\vert
			\notag \\
			& +\left(
			\sum_{t=1}^{T}\sum_{i_{1}=1}^{p_{k}}\sum_{j_{1}=1}^{p_{-k}}\left\vert
			\mathbb{E}\left( e_{k,t,ij}e_{k,t,i_{1}j_{1}}\right) \right\vert \right) ^{2}
			\notag \\
			& =\,O\left( Tp+T^{2}\right) .  \label{eq:C8b}
		\end{align}%
		Hence, combining (\ref{eq:C8a}) and (\ref{eq:C8b}), and by the sufficient
		condition (\ref{equ:3.1}) and Lemma \ref{lemma1}\textit{(i)}, it follows
		that
		\begin{align}
			\frac{1}{p_{k}}\Vert \mathcal{VIII}_{b}\Vert _{F}^{2}& \leq +\frac{1}{p_{k}}%
			\left\Vert \frac{1}{Tp_{k}p_{-k}^{2}}\sum_{t=1}^{T}\mathbf{E}_{k,t}\left(
			\widehat{\mathbf{B}}_{k}-\mathbf{B}_{k}\widehat{\mathbf{H}}_{-k}\right)
			\widehat{\mathbf{H}}_{-k}^{\top }\mathbf{B}_{k}^{\top }\mathbf{E}%
			_{k,t}^{\top }\mathbf{A}_{k}\widetilde{\mathbf{H}}_{k}\right\Vert _{F}^{2}
			\notag \\
			& +\frac{1}{p_{k}}\left\Vert \frac{1}{Tp_{k}p_{-k}^{2}}\sum_{t=1}^{T}\mathbf{%
				E}_{k,t}\left( \widehat{\mathbf{B}}_{k}-\mathbf{B}_{k}\widehat{\mathbf{H}}%
			_{-k}\right) \widehat{\mathbf{H}}_{-k}^{\top }\mathbf{B}_{k}^{\top }\mathbf{E%
			}_{k,t}^{\top }\left( \widetilde{\mathbf{A}}_{k}-\mathbf{A}_{k}\widetilde{%
				\mathbf{H}}_{k}\right) \right\Vert _{F}^{2}  \notag \\
			\lesssim & \frac{1}{T^{2}p_{k}^{3}p_{-k}^{4}}\left\Vert \sum_{t=1}^{T}%
			\mathbf{E}_{k,t}\left( \widehat{\mathbf{B}}_{k}-\mathbf{B}_{k}\widehat{%
				\mathbf{H}}_{-k}\right) \mathbf{B}_{k}^{\top }\mathbf{E}_{k,t}^{\top }%
			\mathbf{A}_{k}\right\Vert _{F}^{2}  \notag \\
			& +\frac{1}{T^{2}p_{k}^{3}p_{-k}^{4}}\left( \sum_{t=1}^{T}\left\Vert \mathbf{%
				E}_{k,t}\right\Vert _{F}^{2}\right) \left\Vert \widehat{\mathbf{B}}_{k}-%
			\mathbf{B}_{k}\widehat{\mathbf{H}}_{-k}\right\Vert _{F}^{2}\left(
			\sum_{t=1}^{T}\left\Vert \mathbf{E}_{k,t}\mathbf{B}_{k}\right\Vert
			_{F}^{2}\right) \left\Vert \widetilde{\mathbf{A}}_{k}-\mathbf{A}_{k}%
			\widetilde{\mathbf{H}}_{k}\right\Vert _{F}^{2}  \notag \\
			=& O_{P}\left( \left( \frac{1}{Tp}+\frac{1}{p^{2}}\right) w_{-k}\right)
			+O_{P}\left( \frac{w_{-k}}{p}\right) \times \left\Vert \widetilde{\mathbf{A}}%
			_{k}-\mathbf{A}_{k}\widetilde{\mathbf{H}}_{k}\right\Vert _{F}^{2},
			\label{eq:Xcal}
		\end{align}%
		since $\sum_{t=1}^{T}\left\Vert \mathbf{E}_{k,t}\right\Vert
		_{F}^{2}=O_{P}(Tp)$ by the same arguments used in the proof of Lemma \ref%
		{lemma1}\textit{(i)}. We conclude by studying $\mathcal{VIII}_{c}$. By (\ref%
		{equ:3.1}) and parts \textit{(i)} and \textit{(iii)} of Lemma \ref{lemma1},
		it follows that
		\begin{align}
			& \frac{1}{p_{k}}\Vert \mathcal{VIII}_{c}\Vert _{F}^{2}  \notag \\
			& \leq \frac{1}{p_{k}}\left\Vert \frac{1}{Tp_{k}p_{-k}^{2}}\sum_{t=1}^{T}%
			\mathbf{E}_{k,t}\widehat{\mathbf{B}}_{k}\left( \widehat{\mathbf{B}}_{k}-%
			\mathbf{B}_{k}\widehat{\mathbf{H}}_{-k}\right) ^{\top }\mathbf{E}%
			_{k,t}^{\top }\left( \widetilde{\mathbf{A}}_{k}-\mathbf{A}_{k}\widetilde{%
				\mathbf{H}}_{k}\right) \right\Vert _{F}^{2}  \notag \\
			& +\frac{1}{p_{k}}\left\Vert \frac{1}{Tp_{k}p_{-k}^{2}}\sum_{t=1}^{T}\mathbf{%
				E}_{k,t}\widehat{\mathbf{B}}_{k}\left( \widehat{\mathbf{B}}_{k}-\mathbf{B}%
			_{k}\widehat{\mathbf{H}}_{-k}\right) ^{\top }\mathbf{E}_{k,t}^{\top }\mathbf{%
				A}_{k}\widetilde{\mathbf{H}}_{k}\right\Vert _{F}^{2}  \notag \\
			\lesssim & \frac{1}{T^{2}p_{k}^{3}p_{-k}^{4}}\sum_{t=1}^{T}\left\Vert
			\mathbf{E}_{k,t}\widehat{\mathbf{B}}_{k}\right\Vert
			_{F}^{2}\sum_{t=1}^{T}\left\Vert \mathbf{E}_{k,t}\left( \widehat{\mathbf{B}}%
			_{k}-\mathbf{B}_{k}\widehat{\mathbf{H}}_{-k}\right) \right\Vert
			_{F}^{2}\left\Vert \widetilde{\mathbf{A}}_{k}-\mathbf{A}_{k}\widetilde{%
				\mathbf{H}}_{k}\right\Vert _{F}^{2}  \notag \\
			& +\frac{1}{T^{2}p_{k}^{3}p_{-k}^{4}}\left\Vert \sum_{t=1}^{T}\mathbf{E}%
			_{k,t}\widehat{\mathbf{B}}_{k}\mathbf{A}_{k}^{\top }\mathbf{E}%
			_{k,t}\right\Vert _{F}^{2}\left\Vert \widehat{\mathbf{B}}_{k}-\mathbf{B}_{k}%
			\widehat{\mathbf{H}}_{-k}\right\Vert _{F}^{2}  \notag \\
			\leq & \frac{1}{T^{2}p_{k}^{3}p_{-k}^{4}}\left( \sum_{t=1}^{T}\left\Vert
			\mathbf{E}_{k,t}\mathbf{B}_{k}\right\Vert _{F}^{2}+\sum_{t=1}^{T}\left\Vert
			\mathbf{E}_{k,t}\left( \widehat{\mathbf{B}}_{k}-\mathbf{B}_{k}\widehat{%
				\mathbf{H}}_{-k}\right) \right\Vert _{F}^{2}\right) \sum_{t=1}^{T}\left\Vert
			\mathbf{E}_{k,t}\left( \widehat{\mathbf{B}}_{k}-\mathbf{B}_{k}\widehat{%
				\mathbf{H}}_{-k}\right) \right\Vert _{F}^{2}\left\Vert \widetilde{\mathbf{A}}%
			_{k}-\mathbf{A}_{k}\widetilde{\mathbf{H}}_{k}\right\Vert _{F}^{2}  \notag \\
			& +\frac{1}{T^{2}p_{k}^{3}p_{-k}^{4}}\left( \left\Vert \sum_{t=1}^{T}\mathbf{%
				E}_{k,t}\mathbf{B}_{k}\mathbf{A}_{k}^{\top }\mathbf{E}_{k,t}\right\Vert
			_{F}^{2}+\sum_{i=1}^{p_{k}}\left\Vert \sum_{t=1}^{T}\mathbf{A}_{k}^{\top }%
			\mathbf{E}_{k,t}\mathbf{e}_{k,t,i\cdot }^{\top }\right\Vert
			_{F}^{2}\left\Vert \widehat{\mathbf{B}}_{k}-\mathbf{B}_{k}\widehat{\mathbf{H}%
			}_{-k}\right\Vert _{F}^{2}\right) \left\Vert \widehat{\mathbf{B}}_{k}-%
			\mathbf{B}_{k}\widehat{\mathbf{H}}_{-k}\right\Vert _{F}^{2}  \notag \\
			=& O_{P}\left( \left( \frac{1}{Tp}+\frac{1}{p^{2}}\right) w_{-k}\right)
			+O_{P}\left( \left( \frac{1}{p_{k}^{2}}+\frac{1}{Tp_{k}}\right)
			w_{-k}^{2}\right) +O_{P}\left( \left( \frac{1}{p_{-k}}+w_{-k}\right) w_{-k}%
			\frac{1}{p_{k}}\right) \times \left\Vert \widetilde{\mathbf{A}}_{k}-\mathbf{A%
			}_{k}\widetilde{\mathbf{H}}_{k}\right\Vert _{F}^{2},  \label{eq:XIcal}
		\end{align}%
		since, by (\ref{equ:b.2}), $\left\Vert \sum_{t=1}^{T}\mathbf{E}_{k,t}\mathbf{%
			B}_{k}\mathbf{A}_{k}^{\top }\mathbf{E}_{k,t}\right\Vert
		_{F}^{2}=O_{P}(Tp^{2}+T^{2}p)$. Hence, combining (\ref{eq:IXcal}), (\ref%
		{eq:Xcal}) and (\ref{eq:XIcal}), it finally follows that
		\begin{align*}
			& \frac{1}{p_{k}}\left\Vert \mathcal{VIII}\widetilde{\mathbf{A}}%
			_{k}\right\Vert _{F}^{2} \\
			& =O_{P}\left( \frac{1}{Tp}+\frac{1}{p^{2}}\right) +O_{P}\left( \left( \frac{%
				1}{p_{k}^{2}}+\frac{1}{Tp_{k}}\right) w_{-k}^{2}\right) \\
			& +O_{P}\left( \frac{1}{p_{-k}^{2}}+\frac{w_{-k}}{p_{-k}}+w_{-k}^{2}\right)
			\times \frac{1}{p_{k}}\left\Vert \widetilde{\mathbf{A}}_{k}-\mathbf{A}_{k}%
			\widetilde{\mathbf{H}}_{k}\right\Vert _{F}^{2},
		\end{align*}%
		thus completing the proof.
	\end{proof}
\end{lemma}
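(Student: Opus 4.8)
The plan is to work off the decomposition $\widetilde{\mathbf{M}}_k = \mathcal{V} + \mathcal{VI} + \mathcal{VII} + \mathcal{VIII}$ recorded in \eqref{dec-m-tilde} and, in each of the three noise-bearing blocks, to substitute $\widehat{\mathbf{B}}_k = (\widehat{\mathbf{B}}_k - \mathbf{B}_k\widehat{\mathbf{H}}_{-k}) + \mathbf{B}_k\widehat{\mathbf{H}}_{-k}$ (and likewise $\widetilde{\mathbf{A}}_k = (\widetilde{\mathbf{A}}_k - \mathbf{A}_k\widetilde{\mathbf{H}}_k) + \mathbf{A}_k\widetilde{\mathbf{H}}_k$ wherever $\widetilde{\mathbf{A}}_k$ appears). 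This separates every term into an ``oracle'' piece, in which only the true $\mathbf{B}_k$ and $\mathbf{A}_k$ enter and which is handled by the moment bounds of Lemma \ref{lemma1}, and ``estimation error'' pieces carrying at least one factor $\widehat{\mathbf{B}}_k - \mathbf{B}_k\widehat{\mathbf{H}}_{-k}$, handled by the sufficient condition \eqref{equ:3.1}. Throughout I will use the a priori facts $\|\mathbf{A}_k\|_F^2 = O(p_k)$, $\|\mathbf{B}_k\|_F^2 = O(p_{-k})$ (Assumption \ref{as-2}), the normalisations $\|\widetilde{\mathbf{A}}_k\|_F^2 = c_0 p_k$ and $\|\widehat{\mathbf{B}}_k\|_F^2 = O_P(p_{-k})$, $\|\widehat{\mathbf{H}}_{-k}\|_F = \|\widetilde{\mathbf{H}}_k\|_F = O_P(1)$, and $\|\widehat{\mathbf{B}}_k^\top\mathbf{B}_k\|_F = O_P(p_{-k})$, $\|\mathbf{A}_k^\top\widetilde{\mathbf{A}}_k\|_F = O_P(p_k)$ (the latter since $\mathbf{A}_k^\top(\widetilde{\mathbf{A}}_k - \mathbf{A}_k\widetilde{\mathbf{H}}_k)$ is lower order).

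For the bilinear blocks $\mathcal{VI}$ and $\mathcal{VII}$ the argument is short. After postmultiplying by $\widetilde{\mathbf{A}}_k$ and absorbing the $\mathbf{A}_k$, $\widetilde{\mathbf{A}}_k$, $\mathbf{B}_k^\top\widehat{\mathbf{B}}_k$ and $\widehat{\mathbf{H}}_{-k}$ factors into the normalising powers of $p_k, p_{-k}$, one reduces $p_k^{-1}\|\mathcal{VI}\widetilde{\mathbf{A}}_k\|_F^2$ (and, symmetrically, $p_k^{-1}\|\mathcal{VII}\widetilde{\mathbf{A}}_k\|_F^2$) to a constant multiple of $p_k^{-1}\big\|(Tp_{-k})^{-1}\sum_t\mathbf{E}_{k,t}\widehat{\mathbf{B}}_k\mathbf{F}_{k,t}^\top\big\|_F^2$. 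Splitting $\widehat{\mathbf{B}}_k$ as above, the error part is precisely the quantity in part (b) of \eqref{equ:3.1}, hence $O_P(m_{-k})$, while the oracle part is a constant times $p_k^{-1}\big\|(Tp_{-k})^{-1}\sum_t\mathbf{E}_{k,t}\mathbf{B}_k\mathbf{F}_{k,t}^\top\big\|_F^2 = O_P(1/(Tp_{-k}))$ by Lemma \ref{lemma1}(ii). This gives the first two claims.

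The bulk of the work is the quadratic block $\mathcal{VIII}$. I would first write $\widehat{\mathbf{B}}_k\widehat{\mathbf{B}}_k^\top = \mathbf{B}_k\widehat{\mathbf{H}}_{-k}\widehat{\mathbf{H}}_{-k}^\top\mathbf{B}_k^\top + (\widehat{\mathbf{B}}_k - \mathbf{B}_k\widehat{\mathbf{H}}_{-k})\widehat{\mathbf{H}}_{-k}^\top\mathbf{B}_k^\top + \widehat{\mathbf{B}}_k(\widehat{\mathbf{B}}_k - \mathbf{B}_k\widehat{\mathbf{H}}_{-k})^\top$, which induces $\mathcal{VIII}\widetilde{\mathbf{A}}_k = \mathcal{VIII}_a + \mathcal{VIII}_b + \mathcal{VIII}_c$, and then, inside each piece, split $\widetilde{\mathbf{A}}_k$, \emph{keeping} $p_k^{-1}\|\widetilde{\mathbf{A}}_k - \mathbf{A}_k\widetilde{\mathbf{H}}_k\|_F^2$ as an explicit multiplicative factor rather than bounding it --- this is what makes the lemma usable in the fixed-point step behind Theorem \ref{th:2}. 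The genuinely new ingredient is a family of fourth-order moment bounds, typified by $\mathbb{E}\big\|\sum_t\mathbf{E}_{k,t}\mathbf{B}_k\mathbf{B}_k^\top\mathbf{E}_{k,t}^\top\mathbf{A}_k\big\|_F^2 = O(Tp^2 + T^2 p)$ and $\mathbb{E}\big\|\sum_t e_{k,t,ij}\,\mathbf{B}_k^\top\mathbf{E}_{k,t}^\top\mathbf{A}_k\big\|_F^2 = O(Tp + T^2)$, which I would establish by a variance/mean split: the centred part is a sum of second-order covariances of products of noise entries and is controlled directly by Assumption \ref{as-3}(iii), while the mean part uses Assumption \ref{as-3}(ii) and \eqref{weak-dep-4}. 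Combining these with parts (i) and (iii) of Lemma \ref{lemma1}, the bounds from \eqref{equ:3.1}, and $\sum_t\|\mathbf{E}_{k,t}\|_F^2 = O_P(Tp)$ (as in the proof of Lemma \ref{lemma1}(i)), each of $\mathcal{VIII}_a,\mathcal{VIII}_b,\mathcal{VIII}_c$ produces a term of order $\tfrac1{Tp}+\tfrac1{p^2}$, a term of order $\big(\tfrac1{Tp_k}+\tfrac1{p_k^2}\big)w_{-k}^2$, and a term with coefficient $\tfrac1{p_{-k}^2}+\tfrac{w_{-k}}{p_{-k}}+w_{-k}^2$ multiplying $p_k^{-1}\|\widetilde{\mathbf{A}}_k - \mathbf{A}_k\widetilde{\mathbf{H}}_k\|_F^2$; adding the three yields the stated bound.

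The main obstacle is exactly this $\mathcal{VIII}$ accounting: one has to keep precise track of the powers of $p_k,p_{-k},T$ emitted by each Frobenius-norm estimate --- for instance the deliberately asymmetric split $\widehat{\mathbf{B}}_k(\widehat{\mathbf{B}}_k - \mathbf{B}_k\widehat{\mathbf{H}}_{-k})^\top$ in $\mathcal{VIII}_c$ is chosen so that the extra $\|\widehat{\mathbf{B}}_k\|_F^2 = O_P(p_{-k})$ it produces cancels against a $p_{-k}^2$ in the normalisation --- and one must check that the coefficient $\tfrac1{p_{-k}^2}+\tfrac{w_{-k}}{p_{-k}}+w_{-k}^2$ of the self-referential term is $o_P(1)$, which holds because $w_{-k}\to0$. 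Bounding $p_k^{-1}\|\widetilde{\mathbf{A}}_k - \mathbf{A}_k\widetilde{\mathbf{H}}_k\|_F^2$ outright at this stage would be circular, which is why it is carried along symbolically.
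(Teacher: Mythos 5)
Your proposal follows essentially the same path as the paper's proof: the same three-term expansion of $\widehat{\mathbf{B}}_k\widehat{\mathbf{B}}_k^\top$ producing $\mathcal{VIII}_a,\mathcal{VIII}_b,\mathcal{VIII}_c$, the same split of $\widetilde{\mathbf{A}}_k$ that carries $p_k^{-1}\|\widetilde{\mathbf{A}}_k-\mathbf{A}_k\widetilde{\mathbf{H}}_k\|_F^2$ along symbolically, and the same variance/mean fourth-moment estimates driven by Assumption~\ref{as-3}\textit{(ii)}--\textit{(iii)}. One small slip in the stated moment bound: $\mathbb{E}\bigl\|\sum_t\mathbf{E}_{k,t}\mathbf{B}_k\mathbf{B}_k^\top\mathbf{E}_{k,t}^\top\mathbf{A}_k\bigr\|_F^2$ is $O\bigl(p_{-k}(Tp^2+T^2p)\bigr)$ rather than $O(Tp^2+T^2p)$ --- the latter is the bound for $\bigl\|\sum_t\mathbf{E}_{k,t}\mathbf{B}_k\mathbf{A}_k^\top\mathbf{E}_{k,t}\bigr\|_F^2$, and the extra $\|\mathbf{B}_k\|^2\asymp p_{-k}$ is exactly what the $1/(T^2p_k^3p_{-k}^4)$ normalisation needs to produce $1/(Tp)+1/p^2$ instead of something smaller.
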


\begin{lemma}
	\label{lemma6}We assume that Assumptions \ref{as-1}-\ref{as-4} hold and that
	$\widehat{\mathbf{A}}_{k}$ is used as projection matrix. Then, as $\min
	\left\{ T,p_{1},...,p_{K}\right\} \rightarrow \infty $, it holds that%
	\begin{eqnarray*}
		&&\left\Vert \frac{1}{p_{k}}\mathbf{A}_{k}^{\top }\left( \widetilde{\mathbf{A%
		}}_{k}-\mathbf{A}_{k}\widetilde{\mathbf{H}}_{k}\right) \right\Vert _{F} \\
		&=&O_{P}\left( \frac{1}{\sqrt{Tp}}\right) +O_{P}\left( \frac{1}{p}\right)
		+O_{P}\left( \sum_{j=1}^{K}\frac{1}{Tp_{-j}}\right) \\
		&&+O_{P}\left( \sum_{j=1,j\neq k}^{K}\left( \frac{1}{Tp_{j}\sqrt{p_{-k}}}+%
		\frac{1}{p_{j}\sqrt{Tp_{k}}}+\frac{1}{p_{k}p_{j}^{2}}+\frac{1}{Tp_{j}^{2}}%
		\right) \right) .
	\end{eqnarray*}
	
	\begin{proof}
		As in the above, we will assume $r_{k}=1$, $1\leq k\leq K$. By equation (\ref%
		{dec-m-tilde}), it holds that
		\begin{equation*}
			\frac{1}{p_{k}}\mathbf{A}_{k}^{\top }\left( \widetilde{\mathbf{A}}_{k}-%
			\mathbf{A}_{k}\widetilde{\mathbf{H}}_{k}\right) =\frac{1}{p_{k}}\mathbf{A}%
			_{k}^{\top }\left( \mathcal{VI}+\mathcal{VII}+\mathcal{VIII}\right)
			\widetilde{\mathbf{A}}_{k}\widetilde{\mathbf{\Lambda }}_{k}^{-1}.
		\end{equation*}%
		We know from (\ref{eq:LambdaHat}) that $\left\Vert \widetilde{\Lambda }%
		_{k}^{-1}\right\Vert _{F}=O_{P}\left( 1\right) $. It holds that%
		\begin{eqnarray*}
			&&\left\Vert \frac{1}{p_{k}}\mathbf{A}_{k}^{\top }\mathcal{VI}\widetilde{%
				\mathbf{A}}_{k}\right\Vert _{F} \\
			&=&\left\Vert \frac{1}{Tp^{2}}\mathbf{A}_{k}^{\top }\sum_{t=1}^{T}\mathbf{E}%
			_{k,t}\widehat{\mathbf{B}}_{k}\widehat{\mathbf{B}}_{k}^{\top }\mathbf{B}_{k}%
			\mathbf{F}_{k,t}^{\top }\mathbf{A}_{k}^{\top }\widetilde{\mathbf{A}}%
			_{k}\right\Vert _{F} \\
			&\lesssim &\frac{1}{\left( Tp\right) ^{1/2}}\left\Vert \frac{1}{T^{1/2}}%
			\sum_{t=1}^{T}\left( \frac{\mathbf{A}_{k}}{p_{k}^{1/2}}\right) ^{\top }%
			\mathbf{E}_{k,t}\left( \frac{\widehat{\mathbf{B}}_{k}}{p_{-k}^{1/2}}\right)
			\mathbf{F}_{k,t}^{\top }\right\Vert _{F}=O_{P}\left( \left( Tp\right)
			^{-1/2}\right) ,
		\end{eqnarray*}%
		by Assumption \ref{as-4}\textit{(i)}, recalling that $\left\Vert \mathbf{A}%
		_{k}\right\Vert _{F}=c_{0}p_{k}^{1/2}$\ and $\left\Vert \mathbf{B}%
		_{k}\right\Vert _{F}=c_{1}p_{-k}^{1/2}$. Similarly%
		\begin{eqnarray*}
			&&\left\Vert \frac{1}{p_{k}}\mathbf{A}_{k}^{\top }\mathcal{VII}\widetilde{%
				\mathbf{A}}_{k}\right\Vert _{F} \\
			&=&\left\Vert \frac{1}{Tp^{2}}\mathbf{A}_{k}^{\top }\mathbf{A}%
			_{k}\sum_{t=1}^{T}\mathbf{B}_{k}^{\top }\widehat{\mathbf{B}}_{k}\widehat{%
				\mathbf{B}}_{k}^{\top }\mathbf{E}_{k,t}^{\top }\mathbf{F}_{k,t}\widetilde{%
				\mathbf{A}}_{k}\right\Vert _{F} \\
			&\lesssim &\left\Vert \frac{1}{Tp}\sum_{t=1}^{T}\widehat{\mathbf{B}}%
			_{k}^{\top }\mathbf{E}_{k,t}^{\top }\mathbf{F}_{k,t}\mathbf{A}_{k}\widetilde{%
				\mathbf{H}}_{k}\right\Vert _{F}+\left\Vert \frac{1}{Tp}\sum_{t=1}^{T}%
			\widehat{\mathbf{B}}_{k}^{\top }\mathbf{E}_{k,t}^{\top }\mathbf{F}%
			_{k,t}\left( \widetilde{\mathbf{A}}_{k}-\mathbf{A}_{k}\widetilde{\mathbf{H}}%
			_{k}\right) \right\Vert _{F},
		\end{eqnarray*}%
		and
		\begin{eqnarray*}
			&&\left\Vert \frac{1}{Tp}\sum_{t=1}^{T}\widehat{\mathbf{B}}_{k}^{\top }%
			\mathbf{E}_{k,t}^{\top }\mathbf{F}_{k,t}\left( \widetilde{\mathbf{A}}_{k}-%
			\mathbf{A}_{k}\widetilde{\mathbf{H}}_{k}\right) \right\Vert _{F} \\
			&\lesssim &\left\Vert \frac{1}{Tp}\sum_{t=1}^{T}\widehat{\mathbf{H}}%
			_{-k}^{\top }\mathbf{B}_{k}^{\top }\mathbf{E}_{k,t}^{\top }\mathbf{F}%
			_{k,t}\right\Vert _{F}\left\Vert \widetilde{\mathbf{A}}_{k}-\mathbf{A}_{k}%
			\widetilde{\mathbf{H}}_{k}\right\Vert _{F}+\left\Vert \frac{1}{Tp}%
			\sum_{t=1}^{T}\left( \widehat{\mathbf{B}}_{k}-\mathbf{B}_{k}\widehat{\mathbf{%
					H}}_{-k}\right) ^{\top }\mathbf{E}_{k,t}^{\top }\mathbf{F}_{k,t}\right\Vert
			_{F}\left\Vert \widetilde{\mathbf{A}}_{k}-\mathbf{A}_{k}\widetilde{\mathbf{H}%
			}_{k}\right\Vert _{F} \\
			&=&O_{P}\left( \sqrt{p_{k}\widetilde{w}_{k}}\right) \left( \left\Vert \frac{1%
			}{Tp}\sum_{t=1}^{T}\mathbf{B}_{k}^{\top }\mathbf{E}_{k,t}^{\top }\mathbf{F}%
			_{k,t}\right\Vert _{F}+\left\Vert \frac{1}{Tp}\sum_{t=1}^{T}\left( \widehat{%
				\mathbf{B}}_{k}-\mathbf{B}_{k}\widehat{\mathbf{H}}_{-k}\right) ^{\top }%
			\mathbf{E}_{k,t}^{\top }\mathbf{F}_{k,t}\right\Vert _{F}\right) ,
		\end{eqnarray*}%
		by Theorem \ref{th:2}. Hence, using the same logic as in the passages above%
		\begin{eqnarray*}
			&&\left\Vert \frac{1}{Tp}\sum_{t=1}^{T}\widehat{\mathbf{B}}_{k}^{\top }%
			\mathbf{E}_{k,t}^{\top }\mathbf{F}_{k,t}\left( \widetilde{\mathbf{A}}_{k}-%
			\mathbf{A}_{k}\widetilde{\mathbf{H}}_{k}\right) \right\Vert _{F} \\
			&=&O_{P}\left( \sqrt{p_{k}\widetilde{w}_{k}}\right) O_{P}\left( \frac{1}{%
				\sqrt{Tp}}+\sqrt{\frac{m_{-k}}{p_{k}}}\right) .
		\end{eqnarray*}%
		Finally, the same logic as in the above yields%
		\begin{eqnarray*}
			&&\left\Vert \frac{1}{p_{k}}\mathbf{A}_{k}^{\top }\mathcal{VIII}\widetilde{%
				\mathbf{A}}_{k}\right\Vert _{F} \\
			&=&\left\Vert \frac{1}{Tp^{2}}\sum_{t=1}^{T}\mathbf{A}_{k}^{\top }\mathbf{E}%
			_{k,t}\widehat{\mathbf{B}}_{k}\widehat{\mathbf{B}}_{k}^{\top }\mathbf{E}%
			_{k,t}^{\top }\widetilde{\mathbf{A}}_{k}\right\Vert _{F} \\
			&\lesssim &\frac{1}{p}\sqrt{\frac{1}{Tp}\sum_{t=1}^{T}\left\Vert \frac{1}{%
					Tp^{2}}\sum_{t=1}^{T}\mathbf{A}_{k}^{\top }\mathbf{E}_{k,t}\widehat{\mathbf{B%
				}}_{k}\right\Vert _{F}\times \frac{1}{Tp}\sum_{t=1}^{T}\left\Vert \frac{1}{%
					Tp^{2}}\sum_{t=1}^{T}\widehat{\mathbf{B}}_{k}^{\top }\mathbf{E}_{k,t}^{\top }%
				\widetilde{\mathbf{A}}_{k}\right\Vert _{F}} \\
			&\lesssim &\frac{1}{p}\sqrt{\left( 1+p_{-k}w_{-k}\right) ^{2}\left( 1+p_{k}%
				\widetilde{w}_{k}\right) }.
		\end{eqnarray*}%
		The desired result now follows from putting all together.
	\end{proof}
\end{lemma}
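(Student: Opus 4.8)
The plan is to start from the eigen-relation that defines $\widetilde{\mathbf A}_k$, namely $\widetilde{\mathbf M}_k\widetilde{\mathbf A}_k=\widetilde{\mathbf A}_k\widetilde{\mathbf\Lambda}_k$ with $\widetilde{\mathbf\Lambda}_k$ the diagonal matrix of the $r_k$ largest eigenvalues of $\widetilde{\mathbf M}_k$, and to substitute the four‑term decomposition $\widetilde{\mathbf M}_k=\mathcal V+\mathcal{VI}+\mathcal{VII}+\mathcal{VIII}$ of \eqref{dec-m-tilde}. The matrix $\widetilde{\mathbf H}_k$ of \eqref{eq:Htildedefinition} is built precisely so that the contribution of the rank‑$r_k$ "signal" part $\mathcal V\widetilde{\mathbf A}_k\widetilde{\mathbf\Lambda}_k^{-1}$ reproduces $\mathbf A_k\widetilde{\mathbf H}_k$ up to already‑controlled terms, so one is left with the identity
\[
\frac1{p_k}\mathbf A_k^\top\bigl(\widetilde{\mathbf A}_k-\mathbf A_k\widetilde{\mathbf H}_k\bigr)=\frac1{p_k}\mathbf A_k^\top\bigl(\mathcal{VI}+\mathcal{VII}+\mathcal{VIII}\bigr)\widetilde{\mathbf A}_k\widetilde{\mathbf\Lambda}_k^{-1}.
\]
Since Lemma \ref{lemma4} (through \eqref{eq:LambdaHat}) gives $\|\widetilde{\mathbf\Lambda}_k^{-1}\|_F=O_P(1)$, it suffices to bound $\frac1{p_k}\|\mathbf A_k^\top\mathcal{VI}\widetilde{\mathbf A}_k\|_F$, $\frac1{p_k}\|\mathbf A_k^\top\mathcal{VII}\widetilde{\mathbf A}_k\|_F$ and $\frac1{p_k}\|\mathbf A_k^\top\mathcal{VIII}\widetilde{\mathbf A}_k\|_F$ separately; as in the other lemmas I would carry out the computation taking $r_k=1$ for notational ease.

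For $\mathcal{VI}$ I would pull out $\|\mathbf A_k\|_F\asymp\sqrt{p_k}$ and one copy of $\|\widehat{\mathbf B}_k\|_F\asymp\sqrt{p_{-k}}$, rewriting what remains as $(Tp)^{-1/2}$ times $T^{-1/2}\sum_t(\mathbf A_k/\sqrt{p_k})^\top\mathbf E_{k,t}(\widehat{\mathbf B}_k/\sqrt{p_{-k}})\mathbf F_{k,t}^\top$; splitting $\widehat{\mathbf B}_k=(\widehat{\mathbf B}_k-\mathbf B_k\widehat{\mathbf H}_{-k})+\mathbf B_k\widehat{\mathbf H}_{-k}$ and applying Assumption \ref{as-4}\textit{(i)} to the leading piece (and \eqref{equ:3.1} with Lemma \ref{lemma1} to the remainder) yields $O_P((Tp)^{-1/2})$. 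For $\mathcal{VII}$ I would use $\mathbf A_k^\top\mathbf A_k\approx p_k\mathbf I_{r_k}$ to reduce to $(Tp_{-k})^{-1}$ times $p_{-k}^{-1}\sum_t\mathbf F_{k,t}\mathbf B_k^\top\widehat{\mathbf B}_k\widehat{\mathbf B}_k^\top\mathbf E_{k,t}^\top\widetilde{\mathbf A}_k$, then expand both $\widehat{\mathbf B}_k$ and $\widetilde{\mathbf A}_k$ about their rotated population versions: the pure‑signal cross term is $O_P((Tp)^{-1/2})$ by Lemma \ref{lemma1}\textit{(ii)}, while the cross terms picking up $\widehat{\mathbf B}_k-\mathbf B_k\widehat{\mathbf H}_{-k}$ (Frobenius size $\sqrt{p_{-k}w_{-k}}$) or $\widetilde{\mathbf A}_k-\mathbf A_k\widetilde{\mathbf H}_k$ (size $\sqrt{p_k\widetilde w_k}$ by Theorem \ref{th:2}/Corollary \ref{cor:1}), together with the $O_P(\sqrt{m_{-k}})$ contribution from $\mathbf E_{k,t}(\widehat{\mathbf B}_k-\mathbf B_k\widehat{\mathbf H}_{-k})\mathbf F_{k,t}^\top$, produce the $O_P(\sqrt{p_k\widetilde w_k}\,((Tp)^{-1/2}+\sqrt{m_{-k}/p_k}))$ terms.

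The main work is the doubly‑quadratic‑in‑noise term $\mathcal{VIII}=\frac1{Tp}\sum_t\mathbf E_{k,t}\widehat{\mathbf B}_k\widehat{\mathbf B}_k^\top\mathbf E_{k,t}^\top$. Here I would apply Cauchy–Schwarz in $t$ to split $\frac1{p_k}\|\mathbf A_k^\top\mathcal{VIII}\widetilde{\mathbf A}_k\|_F$ into a product of a $\frac1{Tp^2}\|\sum_t\mathbf A_k^\top\mathbf E_{k,t}\widehat{\mathbf B}_k\|_F$‑type factor and a $\frac1{Tp^2}\|\sum_t\widehat{\mathbf B}_k^\top\mathbf E_{k,t}^\top\widetilde{\mathbf A}_k\|_F$‑type factor, each expanded about the population loadings using \eqref{equ:3.1}, Theorem \ref{th:2}, and the fourth‑moment bounds of Lemma \ref{lemma1}\textit{(i)}–\textit{(iii)} (in particular the $O(Tp+T^2)$ estimates already established inside Lemma \ref{lemma5}); this gives a bound of order $\frac1p\sqrt{(1+p_{-k}w_{-k})^2(1+p_k\widetilde w_k)}$, whose expansion supplies the $1/p$, $1/(p_kp_j^2)$ and $1/(Tp_j^2)$ pieces. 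The concluding step is purely bookkeeping: plug in $w_{-k}=\sum_{j\neq k}(p_j^{-2}+(Tp_{-j})^{-1})$ and $m_{-k}=\sum_{j\neq k}(T^{-1}p_j^{-2}+T^{-2}p_{-j}^{-2})$ from Lemma \ref{lemma1} and $\widetilde w_k$ from Corollary \ref{cor:1}, and verify that every cross term is dominated by one of the four displayed groups. I expect the principal obstacle to be exactly this last bookkeeping combined with tracking, in the $\mathcal{VIII}$ estimate, how much is gained by the left projection $\mathbf A_k^\top$ (roughly a factor $\sqrt{p_k}$ over the non‑projected bound in Lemma \ref{lemma5}) and by the smallness of $\widetilde{\mathbf A}_k-\mathbf A_k\widetilde{\mathbf H}_k$ on the right; a secondary technicality is the randomness of $\widehat{\mathbf B}_k$ inside Assumption \ref{as-4}\textit{(i)}, which I would handle by always peeling off the deterministic‑span part $\mathbf B_k\widehat{\mathbf H}_{-k}$ before invoking the assumption.
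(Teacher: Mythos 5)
Your plan matches the paper's proof step-for-step: the same decomposition of $\frac{1}{p_k}\mathbf{A}_k^\top(\widetilde{\mathbf{A}}_k-\mathbf{A}_k\widetilde{\mathbf{H}}_k)$ into $\mathcal{VI}+\mathcal{VII}+\mathcal{VIII}$ acting on $\widetilde{\mathbf{A}}_k\widetilde{\mathbf{\Lambda}}_k^{-1}$, the same $O_P((Tp)^{-1/2})$ treatment of $\mathcal{VI}$ via Assumption \ref{as-4}\textit{(i)}, the same $\sqrt{p_k\widetilde{w}_k}$ expansion for $\mathcal{VII}$, and the same Cauchy--Schwarz split of $\mathcal{VIII}$ culminating in $p^{-1}\sqrt{(1+p_{-k}w_{-k})^2(1+p_k\widetilde{w}_k)}$. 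The one place you are slightly more careful than the paper---peeling off the $\mathbf{B}_k\widehat{\mathbf{H}}_{-k}$ part of $\widehat{\mathbf{B}}_k$ before invoking Assumption \ref{as-4}\textit{(i)}, since the assumption is stated for deterministic $\mathbf{w}$---is a genuine subtlety, and your handling of it is correct.
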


\newpage

\section{Simulations\label{sec:4}}

\subsection{Data generation\label{sec:4.1}}

We investigate the finite sample performance of the proposed
iterative projection methods. We compare the
performances of initial estimator (IE), the projected estimator (PE),
iterative projected mode-wise PCA estimation (IPmoPCA) by \cite%
{zhang2022tucker}, and the Time series Outer-Product Unfolding Procedure
(TOPUP) and Time series Inner-Product Unfolding Procedure (TIPUP) with their
iteration procedure (iTOPUP and iTIPUP) by \cite{Chen2021Factor} in terms of
estimating the loading matrices, the common components and the number of
factors. The tensor observations are generated following the order 3 tensor
factor model:
\begin{equation*}
	\cX_{t}=\cF_{t}\times _{1}\Ab_{1}\times _{2}\Ab_{2}\times _{3}\Ab_{3}+\cE%
	_{t}.
\end{equation*}%
We set $r_{1}=r_{2}=r_{3}=3$, draw the entries of $\Ab_{1}$, $\Ab_{2}$ and $%
\Ab_{3}$ independently from a uniform distribution $\cU(-1,1)$, and let
\begin{equation*}
	\begin{gathered} \operatorname{Vec}\left(\cF_{t}\right)=\phi \times
		\operatorname{Vec}\left(\cF_{t-1}\right)+\sqrt{1-\phi^{2}} \times
		\epsilon_{t},\quad \epsilon_{t} \stackrel{i . i . d}{\sim}
		\cN\left(\mathbf{0}, \Ib_{r_1r_2r_3}\right) \\
		\operatorname{Vec}\left(\cE_{t}\right)=\psi \times
		\operatorname{Vec}\left(\cE_{t-1}\right)+\sqrt{1-\psi^{2}} \times
		\operatorname{Vec}\left(\cU_{t}\right) \end{gathered}
\end{equation*}%
where $\cU_{t}$ is drawn from a tensor normal distribution, i.e., $\cT\cN(\cM%
,\bSigma_{1},\bSigma_{2},\bSigma_{3})$, which is equivalent to saying that $%
\text{Vec}{(}\cU_{t})~{\sim }~\cN(\text{Vec}{(}\cM),\bSigma_{3}\otimes %
\bSigma_{2}\otimes \bSigma_{1})$. In our study, we set $\cM=0$, $\bSigma_{k}$
to be the matrix with 1 on the diagonal, and $1/p_{k}$ on the off-diagonal,
for $k=1,2,3$. The parameters $\phi $ and $\psi $ control for temporal
correlations of $\cF_{t}$ and $\cE_{t}$. By setting $\phi $ and $\psi $
unequal to zero, the common factors have cross-correlations, and the
idiosyncratic components have both cross-correlations and weak
autocorrelations.

In sections \ref{sec:4.2} and \ref{sec:4.3}, it is assumed that factor
numbers are known, and the performance of estimating the number of factors
is investigated in section \ref{sec:4.4}. All the following simulation
results are based on 1000 replications.

\begin{table}[!h]
	\caption{Averaged estimation errors and standard errors of loading spaces
		for Settings A, B and D over 1000 replications. ``PE": projection estimation
		method.``IE": initial estimation method. ``TOPUP": Time series Outer-Product
		Unfolding Procedure with $h_0=1$. ``TIPUP": Time series Inner-Product
		Unfolding Procedure with $h_0=1$. ``iTOPUP": Iterative Time series
		Outer-Product Unfolding Procedure with $h_0=1$. ``iTIPUP": Iterative Time
		series Inner-Product Unfolding Procedure with $h_0=1$. ``IPmoPCA": iterative
		projected mode-wise PCA estimation. }
	\label{tab:1}
	
	\footnotesize
	\renewcommand{\arraystretch}{0.5} 
	\centering
	\scalebox{1}{\begin{tabular}{cccccccccccc}
			\toprule[2pt]
			Evaluation	&	$p_1$	&	$p_2$	&	$p_3$	&	$T$	&	PE	&	IE	&	IPmoPCA	&	TOPUP	&	TIPUP	&	iTOPUP	&	iTIPUP	\\
			\midrule
			$\cD(\widehat{\Ab}_1,\Ab_1)$	&	10	&	10	&	10	&	20	&	0.0444	&	0.1970 	&	0.0476 	&	0.2065 	&	0.4711 	&	0.0677 	&	0.3564 	\\
			&		&		&		&	50	&	0.0322 	&	0.1783 	&	0.0286 	&	0.1851 	&	0.4191 	&	0.0544 	&	0.3039 	\\
			&		&		&		&	100	&	0.0248 	&	0.1728 	&	0.0219 	&	0.1776 	&	0.3426 	&	0.0502 	&	0.2357 	\\
			&		&		&		&	200	&	0.0202 	&	0.1713 	&	0.0175 	&	0.1732 	&	0.2695 	&	0.0477 	&	0.1519 	\\
			&	100	&	10	&	10	&	20	&	0.0424 	&	0.0410 	&	0.0401 	&	0.0486 	&	0.3541 	&	0.0584 	&	0.3495 	\\
			&		&		&		&	50	&	0.0266 	&	0.0259 	&	0.0252 	&	0.0348 	&	0.2963 	&	0.0476 	&	0.2941 	\\
			&		&		&		&	100	&	0.0189 	&	0.0186 	&	0.0179 	&	0.0291 	&	0.2328 	&	0.0435 	&	0.2350 	\\
			&		&		&		&	200	&	0.0133 	&	0.0133 	&	0.0126 	&	0.0257 	&	0.1526 	&	0.0405 	&	0.1548 	\\
			&	20	&	20	&	20	&	20	&	0.0203 	&	0.0512 	&	0.0199 	&	0.0536 	&	0.3391 	&	0.0294 	&	0.2551 	\\
			&		&		&		&	50	&	0.0127 	&	0.0456 	&	0.0125 	&	0.0478 	&	0.2656 	&	0.0237 	&	0.1927 	\\
			&		&		&		&	100	&	0.0091 	&	0.0450 	&	0.0090 	&	0.0467 	&	0.1817 	&	0.0218 	&	0.1363 	\\
			&		&		&		&	200	&	0.0064 	&	0.0445 	&	0.0063 	&	0.0452 	&	0.0978 	&	0.0204 	&	0.0758 	\\
			$\cD(\widehat{\Ab}_2,\Ab_2)$	&	10	&	10	&	10	&	20	&	0.0474 	&	0.1873 	&	0.0430	&	0.1943 	&	0.4634 	&	0.0647 	&	0.3456 	\\
			&		&		&		&	50	&	0.0327 	&	0.1843 	&	0.0290 	&	0.1899 	&	0.4198 	&	0.0554 	&	0.3066 	\\
			&		&		&		&	100	&	0.0250 	&	0.1736 	&	0.0221 	&	0.1781 	&	0.3388 	&	0.0505 	&	0.2344 	\\
			&		&		&		&	200	&	0.0205 	&	0.1773 	&	0.0177 	&	0.1782 	&	0.2744 	&	0.0483 	&	0.1530 	\\
			&	100	&	10	&	10	&	20	&	0.0129 	&	0.1778 	&	0.0125 	&	0.1821 	&	0.4170 	&	0.0192 	&	0.2039 	\\
			&		&		&		&	50	&	0.0082 	&	0.1746 	&	0.0079 	&	0.1763 	&	0.3518 	&	0.0159 	&	0.1510 	\\
			&		&		&		&	100	&	0.0059 	&	0.1728 	&	0.0057 	&	0.1732 	&	0.2797 	&	0.0148 	&	0.0958 	\\
			&		&		&		&	200	&	0.0042 	&	0.1642 	&	0.0040 	&	0.1633 	&	0.2162 	&	0.0134 	&	0.0492 	\\
			&	20	&	20	&	20	&	20	&	0.0203 	&	0.0514 	&	0.0199 	&	0.0542 	&	0.3388 	&	0.0293 	&	0.2551 	\\
			&		&		&		&	50	&	0.0127 	&	0.0461 	&	0.0125 	&	0.0481 	&	0.2752 	&	0.0237 	&	0.2007 	\\
			&		&		&		&	100	&	0.0091 	&	0.0450 	&	0.0089 	&	0.0466 	&	0.1812 	&	0.0220 	&	0.1375 	\\
			&		&		&		&	200	&	0.0064 	&	0.0430 	&	0.0063 	&	0.0439 	&	0.0999 	&	0.0205 	&	0.0783 	\\
			$\cD(\widehat{\Ab}_3,\Ab_3)$	&	10	&	10	&	10	&	20	&	0.0482 	&	0.1927 	&	0.0440	&	0.1992 	&	0.4642 	&	0.0654 	&	0.3539 	\\
			&		&		&		&	50	&	0.0325 	&	0.1821 	&	0.0289 	&	0.1880 	&	0.4208 	&	0.0550 	&	0.3049 	\\
			&		&		&		&	100	&	0.0252 	&	0.1716 	&	0.0222 	&	0.1757 	&	0.3443 	&	0.0507 	&	0.2350 	\\
			&		&		&		&	200	&	0.0204 	&	0.1691 	&	0.0175 	&	0.1712 	&	0.2587 	&	0.0478 	&	0.1515 	\\
			&	100	&	10	&	10	&	20	&	0.0128 	&	0.1748 	&	0.0125 	&	0.1797 	&	0.4146 	&	0.0189 	&	0.2050 	\\
			&		&		&		&	50	&	0.0081 	&	0.1685 	&	0.0078	&	0.1707 	&	0.3436 	&	0.0158 	&	0.1500 	\\
			&		&		&		&	100	&	0.0059 	&	0.1719 	&	0.0057 	&	0.1732 	&	0.2770 	&	0.0147 	&	0.0944 	\\
			&		&		&		&	200	&	0.0042 	&	0.1681 	&	0.0041 	&	0.1676 	&	0.2169 	&	0.0138 	&	0.0492 	\\
			&	20	&	20	&	20	&	20	&	0.0203 	&	0.0506 	&	0.0199 	&	0.0534 	&	0.3404 	&	0.0295 	&	0.2617 	\\
			&		&		&		&	50	&	0.0128 	&	0.0462 	&	0.0126 	&	0.0484 	&	0.2618 	&	0.0238 	&	0.1945 	\\
			&		&		&		&	100	&	0.0090 	&	0.0435 	&	0.0089 	&	0.0450 	&	0.1750 	&	0.0219 	&	0.1328 	\\
			&		&		&		&	200	&	0.0064 	&	0.0425 	&	0.0063 	&	0.0433 	&	0.0977 	&	0.0205 	&	0.0772 	\\

			\bottomrule[2pt]
	\end{tabular}}
\end{table}

\begin{figure}[htbp]
	\centering
	\centering
	\includegraphics[width=0.5\linewidth]{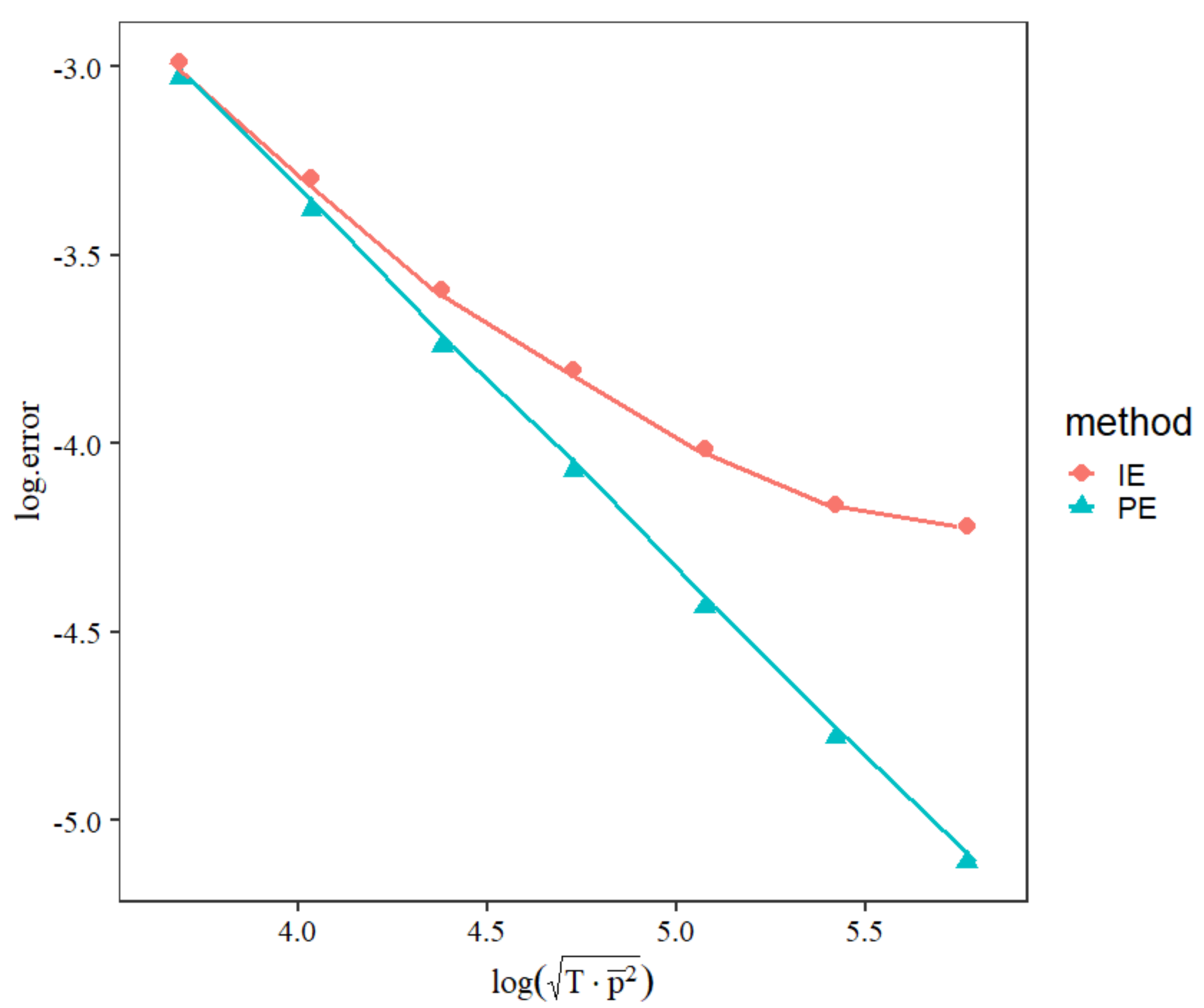}
	\caption{Mean log error for estimating the loading matrices over 1000 replications for $p_1=40,~p_2=p_3=\bar p=10,~T\in (16,32,64,128,256,512,1024)$.}\label{fig:1}
\end{figure}

\begin{figure}[htbp]
\centering
\includegraphics[width=0.5\linewidth]{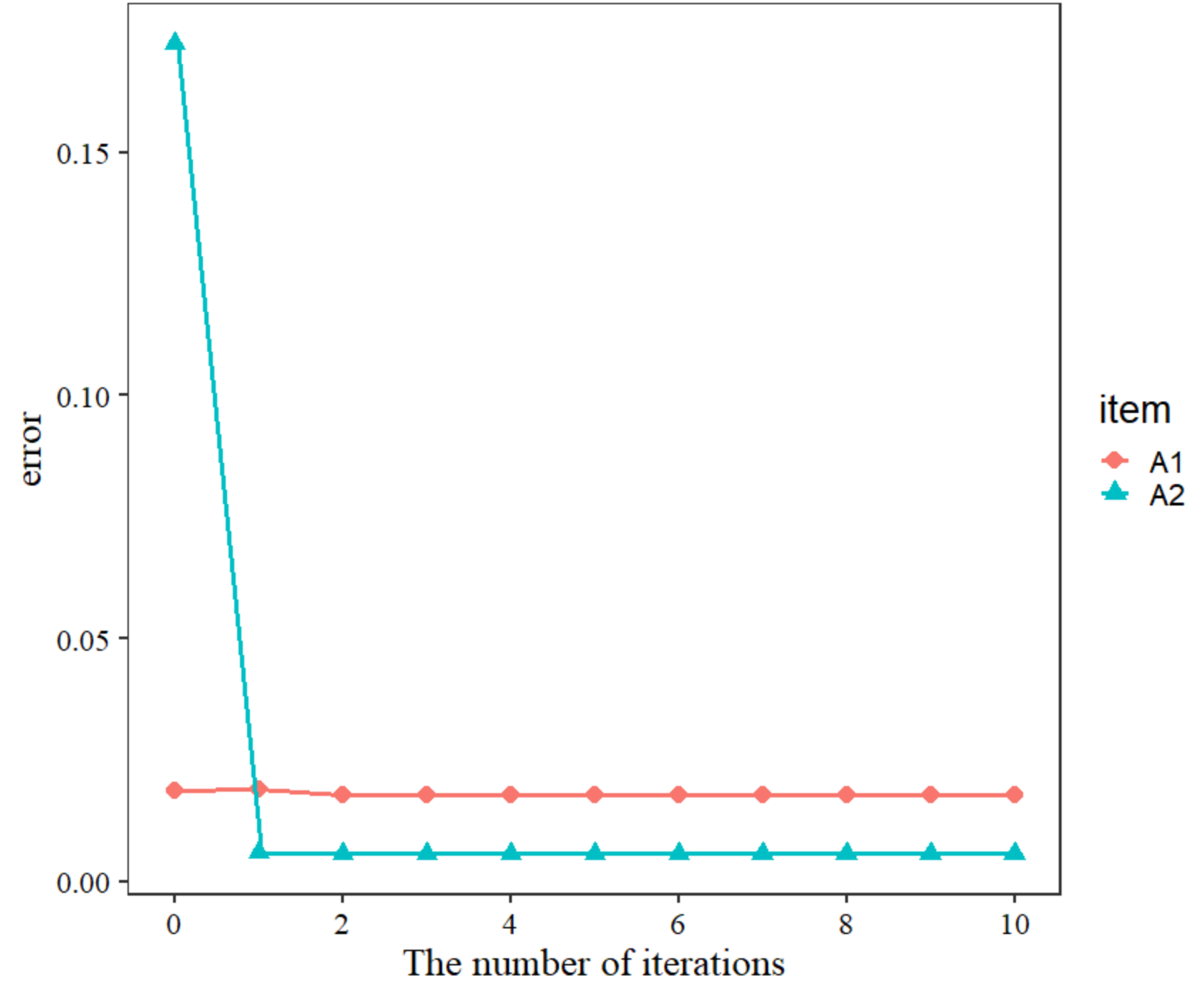}
\caption{Mean estimation error at each step of the recursive procedure over 1000 replications for $T=p_1=100,~p_2=p_3=10$.}\label{fig:2}
\end{figure}

\subsection{Verifying the convergence rates for loading spaces\label{sec:4.2}}

We compare the performance of PE, IE, IPmoPCA, TOPUP, TIPUP, iTOPUP, and
iTIPUP (with $h_{0}=1$ in their implementation) in estimating loading
spaces. We consider the following three settings:

\vspace{0.5em}

\textbf{Setting A:~} $p_1=p_2=p_3=10,~\phi=0.1,~\psi=0.1,~T \in
(20,50,100,200)$

\vspace{0.5em}

\textbf{Setting B:~} $p_1=100,~p_2=p_3=10,~\phi=0.1,~\psi=0.1,~T \in
(20,50,100,200)$

\vspace{0.5em}

\textbf{Setting C:~} $p_1=p_2=p_3=15,~\phi=0.1,~\psi=0.1,~T \in
(20,50,100,200)$

\vspace{0.5em}

\textbf{Setting D:~} $p_1=p_2=p_3=20,~\phi=0.1,~\psi=0.1,~T \in
(20,50,100,200)$

\vspace{0.5em}

\textbf{Setting E:~} $p_1=p_2=p_3=30,~\phi=0.1,~\psi=0.1,~T \in
(20,50,100,200)$

\vspace{0.5em}

\textbf{Setting F:~} $p_1=40,~p_2=p_3=10,~\phi=0.1,~\psi=0.1,~T \in
(16,32,64,128,256,512,1024)$

\vspace{0.5em}

Due to the identifiability issue of factor model, the performance of the
candidates methods is evaluated by comparing the distance between the
estimated loading space and the true loading space, which is
\begin{equation*}
\cD(\widehat{\Ab}_k, \Ab_k)=\left(1-\frac{1}{r_k} \mathop{\mathrm{Tr}}\left(%
\widehat{\Qb}_k\widehat{\Qb}_k^{\top} \Qb_k \Qb_k^{\top}\right)\right)^{1 /
2},~k=1,2,3,
\end{equation*}
where $\Qb_k$ and $\widehat{\Qb}_k$ are the left singular-vector matrices of
the true loading matrix $\Ab_k$ and its estimator $\widehat{\Ab}_k$. The
distance is always in the interval $[0,1]$. When $\Ab_k$ and $\widehat{\Ab}%
_k $ span the same space, the distance $\cD(\widehat{\Ab}_k, \Ab_k)$ is
equal to 0, while is equal to 1 when the two spaces are orthogonal.

Table \ref{tab:1} shows the averaged estimation errors under Settings A, B
and D for tensor normal distribution. Table \ref{tab:1} shows that all these
methods benefit from the increase in dimensions and sample size $T$. For
fixed sample size $T$, when $p_k$s are of the same order, PE performs better
than all other procedures in estimating loading except IPmoPCA (IPmoPCA
adopts a multi-step projection technique of PE). When the tensor dimensions
are not balanced, that is, one mode dimension is much higher than the other
mode dimensions and sample size $T$, the estimation effect of PE for this
dimension is comparable to that of IE, which is consistent with the
conclusion of Theorem \ref{th:1} and Corollary \ref{cor:1} that when $p_k$
is in high order and $p_j \asymp \bar p$ for all $j\neq k$, the convergence
rates of IE and PE are both $1/(T\bar p^2)$. Figure \ref{fig:1} plots mean
log error of PE and IE under setting F, which clearly shows that PE and IE
have different convergence rates. Figure \ref{fig:1} shows that the log
error of the PE method for estimating $\Ab_1$ is almost linear to $\log(%
\sqrt{T\bar p^2})$ with slope -1 while $p_2=p_3=\bar p$, which also matches
the rate in Corollary \ref{cor:1}. For the IE method, the log error
initially decreases as $\log(\sqrt{Tp_2p_3})$ increases and then tends to be
constant. According to Theorem \ref{th:1}, this is because when $\log(\sqrt{%
T\bar p^2})$ is large enough, the error of IE only depends on $p_1$. This
shows that PE is a more accurate estimation method compared IE.

It can be seen from Table \ref{tab:1} that, as one-step iteration and
multi-step iteration versions, PE and IPmoPCA have similar performance under
any setting. And when the dimensions $p_k$s are large enough, they behave
almost exactly the same. We consider whether the multi-step iteration will
further reduce the estimation error and propose the following iterative
estimation Algorithm \ref{alg7}. Figure \ref{fig:2} plots the estimation
error of each step of the 10-step iterative estimators of $\Ab_1$ and $\Ab_2$
under Setting B. Since $\Ab_2$ and $\Ab_3$ are symmetric, the estimation
result of $\Ab_3$ is omitted here. The estimation error of $\Ab_2$ decreases
significantly in the first step, but does not decrease further in the
subsequent multi-step iterations. The error in the estimation of $\Ab_1$ has
been almost constant, which is due to the fact that the size of estimation
error of $\Ab_k$ are all $O_p(1/\sqrt{T\bar p^2})$ for any step while $%
p_j=\bar p$ for all $j\neq k$ and $p^2=O(T\bar p^{2})$. This shows that our
one-step iterative algorithm \ref{alg1} can get comparable results with the
IPmoPCA by \cite{zhang2022tucker} while has much less computational burden.
\begin{algorithm}[htbp]
\caption{Iterative estimation for loading spaces}
\label{alg7}
\begin{algorithmic}[1]

\REQUIRE tensor datas $\{\cX_t,1\le t\le T\}$, factor numbers $r_1,\cdots,r_K$

\ENSURE factor loading matrices $\{\widehat{\Ab}_k^{(s)},1\le k\le K,0\le s\le S\}$

\STATE obtain the initial estimators $\{\widehat{\Ab}_k,1\le k\le K\}$, let $\widehat\Ab_k^{(0)}=\widehat{\Ab}_k$;

\STATE project to reduce dimensions by defining $\widehat{\Yb}_{k,t}^{(s)}:=\frac{1}{p_{-k}}\Xb_{k,t}\widehat{\Bb}_k^{(s)},\text{where~}\widehat{\Bb}_k^{(s)}=\otimes_{j\in[K]\slash \{k\}}\widehat{\Ab}_j^{(s-1)},1\le k\le K$;

\STATE given $\{\widehat{\Yb}_{k,t}^{(s)},1\le k\le K\}$, define $\widetilde{\Mb}_k^{(s)}:=(Tp_{k})^{-1}\sum_{t=1}^T \widehat{\Yb}_{k,t}^{(s)}(\widehat{\Yb}_{k,t}^{(s)})^{\top}$, set $\widehat{\Ab}_k^{(s)}$ as $\sqrt{p_k}$ times the matrix with columns being the first $r_k$ eigenvectors of $\widehat{\Mb}_k^{(s)}$;

\STATE Repeat steps 2 to 3 until up to the maximum number of iterations $s=S$.   Output the projection estimators as $\{\widehat{\Ab}_k^{(s)},1\le k\le K,0\le s\le S\}$.
\end{algorithmic} 	
\end{algorithm}

\begin{table}[!h]
\caption{Averaged estimation errors and standard errors of common components
for Settings A, B and D over 1000 replications.
``PE"``IE"``TOPUP"``TIPUP"``iTOPUP"``iTIPUP" are the same as in Table
\protect\ref{tab:2} }
\label{tab:2}

\footnotesize
\renewcommand{\arraystretch}{0.5} \centering
\scalebox{1}{\begin{tabular}{ccccccccccc}
\toprule[2pt]
$p_1$	&	$p_2$	&	$p_3$	&	$T$	&	PE	&	IE	&	IPmoPCA	&	TOPUP	&	TIPUP	&	iTOPUP	&	iTIPUP	\\
\midrule
10	&	10	&	10	&	20	&	0.032144 	&	0.082885 	&	0.031372 	&	0.088068 	&	0.463034 	&	0.036216 	&	0.287679 	\\
&		&		&	50	&	0.029511 	&	0.078503 	&	0.029042 	&	0.082402 	&	0.402929 	&	0.033516 	&	0.239876 	\\
&		&		&	100	&	0.028914 	&	0.075327 	&	0.028576 	&	0.077963 	&	0.282279 	&	0.032770 	&	0.161073 	\\
&		&		&	200	&	0.028364 	&	0.075043 	&	0.028087 	&	0.076372 	&	0.170435 	&	0.032172 	&	0.083502 	\\
100	&	10	&	10	&	20	&	0.004583 	&	0.032746 	&	0.004404 	&	0.035071 	&	0.377769 	&	0.006369 	&	0.190216 	\\
&		&		&	50	&	0.003459 	&	0.030847 	&	0.003388 	&	0.032154 	&	0.280331 	&	0.005202 	&	0.133925 	\\
&		&		&	100	&	0.003069 	&	0.030864 	&	0.003033 	&	0.031784 	&	0.171590 	&	0.004768 	&	0.079095 	\\
&		&		&	200	&	0.002887 	&	0.029564 	&	0.002869 	&	0.029921 	&	0.083551 	&	0.004511 	&	0.031586 	\\
20	&	20	&	20	&	20	&	0.004394 	&	0.009253 	&	0.004361 	&	0.009900 	&	0.308279 	&	0.005508 	&	0.182627 	\\
&		&		&	50	&	0.003792 	&	0.008399 	&	0.003779 	&	0.008847 	&	0.229848 	&	0.004817 	&	0.119517 	\\
&		&		&	100	&	0.003592 	&	0.008126 	&	0.003585 	&	0.008432 	&	0.119257 	&	0.004587 	&	0.062612 	\\
&		&		&	200	&	0.003486 	&	0.007941 	&	0.003483 	&	0.008077 	&	0.033438 	&	0.004446 	&	0.020716 	\\

\bottomrule[2pt]
\end{tabular}}
\end{table}

\subsection{Verifying the convergence rates for common components\label%
{sec:4.3}}

We compare the performance of PE, IE, TOPUP, TIPUP, iTOPUP and iTIPUP (with $%
h_{0}=1$ in their implementation) in common components under Setting A, B,
D. We use the mean squared error to evaluate the performance of different
estimated procedure, i.e.,
\begin{equation*}
\mathrm{MSE}=\frac{1}{Tp}\sum_{t=1}^{T}\left\Vert \widehat{\cS}_{t}-\cS%
_{t}\right\Vert _{F}^{2}.
\end{equation*}

Table \ref{tab:2} shows the averaged estimation errors under Settings A, B
and D. Table \ref{tab:2} shows that all these methods benefit from the
increase in dimensions and sample size $T$, and PE performs to IPmoPCA and
better than all other producers in estimating common components.

\begin{table}[!h]
\caption{ The frequencies of exact estimation of the numbers of factors
under Settings A, C, D and E over 1000 replications. ``$\backslash$" means
the data is too large to calculate.}
\label{tab:3}

\footnotesize
\renewcommand{\arraystretch}{0.5} \centering
\begin{tabular}{ccccccccc}
\toprule[2pt] $p_1=p_2=p_3$ & $T$ & PE-ER & IE-ER & TCorTh & TOP-ER & TIP-ER
& iTOP-ER & iTIP-ER \\
\midrule 10 & 20 & 0.395 & 0.049 & 0.000 & 0.048 & 0.002 & 0.776 & 0.046 \\
& 50 & 0.424 & 0.071 & 0.034 & 0.065 & 0.003 & 0.789 & 0.097 \\
& 100 & 0.450 & 0.069 & 0.139 & 0.058 & 0.013 & 0.811 & 0.302 \\
& 200 & 0.456 & 0.077 & 0.318 & 0.076 & 0.029 & 0.796 & 0.577 \\
&  &  &  &  &  &  &  &  \\
15 & 20 & 0.932 & 0.309 & 0.125 & 0.288 & 0.007 & 0.992 & 0.169 \\
& 50 & 0.947 & 0.343 & 0.596 & 0.329 & 0.023 & 0.993 & 0.321 \\
& 100 & 0.944 & 0.346 & 0.822 & 0.340 & 0.061 & 0.991 & 0.644 \\
& 200 & 0.948 & 0.362 & 0.920 & 0.347 & 0.134 & 0.999 & 0.940 \\
&  &  &  &  &  &  &  &  \\
20 & 20 & 0.997 & 0.675 & 0.628 & 0.662 & 0.022 & 1.000 & 0.215 \\
& 50 & 0.999 & 0.731 & 0.941 & 0.700 & 0.067 & 1.000 & 0.434 \\
& 100 & 1.000 & 0.728 & 0.990 & 0.705 & 0.153 & 1.000 & 0.766 \\
& 200 & 1.000 & 0.724 & 0.994 & 0.691 & 0.325 & 1.000 & 0.975 \\
&  &  &  &  &  &  &  &  \\
30 & 20 & 1.000 & 0.970 & 0.988 & $\backslash$ & 0.052 & $\backslash$ & 0.321
\\
& 50 & 1.000 & 0.982 & 1.000 & $\backslash$ & 0.138 & $\backslash$ & 0.573
\\
& 100 & 1.000 & 0.981 & 1.000 & $\backslash$ & 0.346 & $\backslash$ & 0.839
\\
& 200 & 1.000 & 0.981 & 1.000 & $\backslash$ & 0.650 & $\backslash$ & 0.991
\\
\bottomrule[2pt] &  &  &  &  &  &  &  &
\end{tabular}%
\end{table}

\subsection{Estimating the numbers of factors\label{sec:4.4}}

We investigate the empirical performances of the proposed projected iteration
procedure based on eigenvalue-ratio (PE-ER), along with the initial version
(IE-ER), total mode-$k$ correlation thresholding (TCorTh) in \cite{lam2021rank}%
, information-criterion/eigenvalue-ratio method that used TOPUP/ TIPUP/
iTOPUP/ iTIPUP in \cite{Han2021Rank}, abbreviated as TOP-IC/ TIP-IC/
iTOP-IC/ iTIP-IC/ TOP-ER/ TIP-ER/ iTOP-ER/ iTIP-ER respectively, while we
fixed $h_{0}=1$ in their implementation.

First of all, we note that the accuracy of IC method is zero under any
setting, so IC method is not an appropriate method to estimate the number of
factors. We will ignore the IC method in the following discussion. Table \ref%
{tab:3} presents the frequencies of exact estimation over 1000 replications
under Settings A, C, D and E by different methods. We set $r_{\max}=8$ for
all estimation procedures. The performance of all three methods improve as
the dimension and sample size increase. Among them, PE-ER, IE-ER, TOP-ER and
ITOP-ER are more sensitive to the increase of dimension, while TCorTH,
TIP-ER and iTIP-ER are more sensitive to the increase of sample size.

Table \ref{tab:3} also shows that the iterative algorithm generally
outperforms the non-iterative algorithm, and TOP-ER and iTOP-ER outperform
TIP-ER and iTIP-ER, respectively, under this setting. The accuracy of PE-ER
is second only to iTOP-ER. Note that iTOP-ER require much longer run time
and large storage space, since the TOPUP procedure needs to deal with a
large high-order tensor, such as $\RR^{p_1\times p_2 \times p_3 \times
p_1\times p_2 \times p_3 \times 1}$ while $h_0=1$. Therefore, iTOP-ER may
have some difficulties in practical calculation.

\end{document}